\documentclass[12pt]{article}

\usepackage{setspace}
\pdfoutput=1
\usepackage[a4paper,top=1.2in,bottom=1in,left=1.2in,right=1.2in]{geometry}

\usepackage{caption}
\usepackage{titlesec}
\renewcommand{\thesubsection}{\thesection.\arabic{subsection}}
\renewcommand{\thesubsubsection}{\thesubsection.\arabic{subsubsection}}
\titleformat{\section}[hang]{\normalfont\bfseries}{\thesection}{.5em}{}
\titlelabel{\subsubsection}{\empty}
\titleformat{\subsection}{\normalfont\bfseries}{\thesubsection.}{.5em}{}[]
\titleformat{\subsubsection}{\normalfont\bfseries}{\thesubsubsection.}{.5em}{}[]

\newcommand{\sectionbreak}{\vspace{1em}}
\renewcommand{\abstract}[1]{{\gdef\thepoabstract{#1}}}

\usepackage{amsthm}

\makeatletter
\renewcommand\maketitle
{
\thispagestyle{plain}
\rmfamily\selectfont
\ifdefined\@title{\noindent\Large\bfseries\centering \@title \par}\else\fi
\vspace{2em}
\ifdefined\@author{\centering\normalfont \@author \par}
\vspace{.5em}
\ifdefined\thepoaffiliation{\noindent\small\thepoaffiliation\par}\fi\vspace{3em}\else\fi
\ifdefined\thepoabstract{\small\noindent{{\bfseries Abstract}.\;\;}\thepoabstract\par\vspace{2em}}\else\fi
\ifdefined\thepokeywords{{\noindent\bfseries Keywords\;\;}\thepokeywords\par\vspace{5em}}\else\fi
\ifdefined\theporuntitle{\fancyhead[L]{\footnotesize\theporuntitle}}\fi
}
\makeatother

\usepackage[usenames,dvipsnames]{xcolor}
\definecolor{RefColor}{rgb}{0,0,.85}
\definecolor{UrlColor}{rgb}{.5,.5,.5}%
\RequirePackage[colorlinks,linkcolor=RefColor,citecolor=RefColor,urlcolor=UrlColor,linktoc=page]{hyperref}
\hypersetup{pdfinfo={Subject={ }}}
\RequirePackage[OT1]{fontenc}
\usepackage{natbib}

\usepackage[english]{babel}

\usepackage{enumitem}
\setlist[itemize]{leftmargin=1.5em}

\usepackage{amsmath,amssymb,amscd,amsfonts,amsthm,mathtools}
\usepackage{aliascnt}
\usepackage[noabbrev,capitalize]{cleveref}
\usepackage{dsfont}
\usepackage{nccmath}
\usepackage{scalerel}

\allowdisplaybreaks

\usepackage{bibentry}

\usepackage{booktabs}
\usepackage{tikz}
\usepackage[low-sup]{subdepth}
\usetikzlibrary{calc}
\usetikzlibrary{decorations.markings,decorations.pathreplacing}
\tikzstyle{mybraces}=[mirrorbrace/.style={
          decoration={brace, mirror},
          decorate},brace/.style={
          decoration={brace},
          decorate}]
\usepackage{tikz-cd}

\usepackage{tocloft}

\cftsetpnumwidth{5cm}
\cftsetrmarg{6cm}

\usepackage{wrapfig}
\usepackage{subcaption}
\usepackage[nottoc]{tocbibind}
\settocbibname{References}

\newtheoremstyle{basic}{8pt}{10pt}{\itshape}{}{\bfseries}{.}{.4em}{}
\theoremstyle{basic}
\newtheorem{theorem}{Theorem}
\newaliascnt{lemma}{theorem}
\newtheorem{lemma}[lemma]{Lemma}
\aliascntresetthe{lemma}
\crefname{lemma}{Lemma}{Lemmas}
\newaliascnt{fact}{theorem}

\aliascntresetthe{fact}
\crefname{fact}{Fact}{Facts}
\newaliascnt{proposition}{theorem}

\aliascntresetthe{proposition}
\crefname{proposition}{Proposition}{Propositions}
\newaliascnt{corollary}{theorem}
\newtheorem{corollary}[corollary]{Corollary}
\aliascntresetthe{corollary}
\crefname{corollary}{Corollary}{Corollaries}
\theoremstyle{definition}
\newaliascnt{definition}{theorem}
\newtheorem{definition}[definition]{Definition}
\aliascntresetthe{definition}
\crefname{definition}{Definition}{Definitions}
\newtheorem{remark}{Remark}
\newtheorem*{remark*}{Remark}
\newtheorem{example}{Example}
\newtheorem{assumption}{Assumption}
\newaliascnt{choice}{theorem}

\aliascntresetthe{choice}
\crefname{choice}{Choice}{Choices}

\crefname{assumption}{Assumption}{Assumptions}

\usepackage{times}

\usepackage{graphicx}

\usetikzlibrary{shapes.geometric}
\usetikzlibrary{arrows.meta}

\usepackage{nccmath,tikz}
\usepackage{stmaryrd,scalerel} 

\newenvironment{proplist}{\begin{enumerate}[
    leftmargin=2.5em,
    labelwidth=0em,
    label=(\alph*),
    topsep=.1em,
    partopsep=0em,
    itemsep=0em
    ]}
{\end{enumerate}}

\def\P{{\mathbb P}}

\def\Var{\text{\rm Var}}

\def\bd{\boldsymbol{d}}

\def\bw{\boldsymbol{w}}
\def\bx{\boldsymbol{x}}
\def\by{\boldsymbol{y}}
\def\bz{\boldsymbol{z}}

\def\bone{\boldsymbol{1}}

\def\A{\mathbb{A}}

\def\G{\mathbb{G}}

\def\N{\mathbb{N}}
\def\O{\mathbb{O}}
\def\P{{\mathbb P}}

\def\R{\mathbb{R}}
\def\S{\mathbb{S}}
\def\T{\mathbb{T}}

\def\X{\mathbb{X}}

\def\Z{\mathbb{Z}}

\def\x{\mathbb{x}}
\def\y{\mathbb{y}}

\def\cC{\mathcal{C}}

\def\cE{\mathcal{E}}

\def\cH{\mathcal{H}}

\def\cM{\mathcal{M}}
\def\cN{\mathcal{N}}

\def\cP{\mathcal{P}}

\def\cS{\mathcal{S}}
\def\cT{\mathcal{T}}

\def\cX{\mathcal{X}}
\def\cY{\mathcal{Y}}

\DeclareMathOperator{\msum}{\medmath\sum}

\DeclareMathOperator{\mint}{\medmath\int}
\newcommand{\argdot}{{\,\vcenter{\hbox{\tiny$\bullet$}}\,}}
\newcommand{\tagaligneq}{\refstepcounter{equation}\tag{\theequation}}

\newcommand{\msup}{\sup\nolimits}
\newcommand{\minf}{\inf\nolimits}
\newcommand{\mmax}{\max\nolimits}

\newcommand{\ind}{\mathbb{I}}

\newcommand{\mean}{\mathbb{E}}

\newcommand{\Rank}{{\rm Rank}}
\newcommand{\Xobs}{{\X_{\rm obs}}}

\newcommand{\CUalpha}{{\cC_\alpha}}

\usepackage{makecell}

\expandafter\def\expandafter\normalsize\expandafter{%
    \normalsize%
    \setlength\abovedisplayskip{4pt}%
    \setlength\belowdisplayskip{4pt}%
    \setlength\abovedisplayshortskip{-8pt}%
    \setlength\belowdisplayshortskip{2pt}%
}

\begin{document}

\title{
    Data augmented bootstrap: Unifying confidence interval construction by approximate invariance
    \\[-.5em]
}

\author{
    Kevin Han Huang
    \\
    Department of Statistics, \,University of Warwick
    \\[-1.8em]
}

\begin{abstract}
    {
        We propose the data augmented bootstrap (DAB), a framework for constructing confidence intervals from approximately invariant transformations of the data. As special cases, DAB recovers popular methods that rely on exact group symmetries, such as conformal prediction, wild bootstrap for Maximum Mean Discrepancy U-statistics and the recently proposed SymmPI. Meanwhile, DAB also recovers the classical bootstrap method, which exploits the dataset's approximate invariance under uniform sampling of data indices as the dataset size grows. For all DAB methods, we establish theoretical coverage results that interpolate between finite-sample and asymptotic guarantees according to the strength of the invariance, and without assuming a group structure. The approximate invariance is measured in the Kolmogorov distance and, for statistics that satisfy Gaussian universality, reduces to conditional mean and variance matching. This allows us to incorporate data augmentation (DA), a widely used machine learning heuristic based on approximate invariances, into known statistical methods. We empirically test the performance of incorporating DA into bootstrap, wild bootstrap and conformal prediction for simulated settings as well as for image, language and scientific data. 
}
\end{abstract}
\ 
\\[-6em]

\maketitle

\vspace{-1em}

\vspace{-1.5em}

\section{Introduction} \label{sec:intro}

\vspace{-.5em}

Many methods now exist for constructing confidence intervals based on certain group invariance structure of the underlying data. A very popular example is conformal prediction \citep{vovk2005algorithmic}: Roughly speaking, a confidence interval (CI) for the $(n+1)$-th output $Y_{n+1}$ for some input $X_{n+1}$ is produced by ranking
\vspace{-.5em}
\begin{align*}
    R_1, \, R_2, \ldots, \, R_{n+1}\;,
\end{align*}
\\[-2.5em]
where $(R_i)_{i \leq n+1}$ are residuals for the i.i.d.~input-output pairs $(X_i, Y_i)_{i \leq n}$; see \cite{angelopoulos2023conformal} among others for detailed reviews. The validity of conformal prediction, at least in its simplest form, rests on the exchangeability of $(R_i)_{i \leq n+1}$, which implies
\vspace{-.25em}
\begin{align*}
    R_i \;\overset{d}{=}\; R_j
    \qquad 
    \text{ for all } 1 \leq i,j \leq n+1\;.
\end{align*}
\\[-2.25em]
A similar invariance structure has been used in other setups: In permutation tests (see e.g.~\cite{good2005permutation}), one considers test statistics that are distributionally invariant under data index permutations; in wild bootstrap methods for kernel two-sample tests, one can simulate certain degenerate U-statistics by exploiting their invariance under reweighting with Rademacher variables \citep{schrab2023mmd}; a recent work \citep{dobriban2023symmpi} extends conformal prediction to general group symmetries, and shows that adding invariances improves performance. The guarantees for these methods all rest on the \emph{exact distributional invariance} of a test statistic $f(X)$ under some group transformations $\Phi_b$:
\vspace{-.25em}
\begin{align*}
    f(X) \;\overset{d}{=}\; f(\Phi_b(X))
    \;,
    \qquad 
    \text{ dataset }  \;\; X \coloneqq (X_i)_{i \leq n}\;.
    \tagaligneq \label{eq:exact:inv}
\end{align*}
\\[-2.25em]
Under \eqref{eq:exact:inv}, all such methods enjoy distribution-free, finite-sample validity, i.e.~their CIs have the correct coverage for every $n$ (up to an $O(\frac{1}{n})$ error) and do not require users to know the distribution of $X$. However, the requirement of exact group invariance can be too stringent: It excludes many practically relevant notions of approximate invariance, such as small-angle random rotations of images, random shuffling of text data, and other data augmentation\footnote{We follow the convention in machine learning and use \emph{data augmentation} to mean synthetic transformations of a dataset. This differs from a separate meaning in statistics, i.e.~the addition of latent variables.} methods from machine learning \citep{shorten2019survey,shorten2021text,chen2020group,huang2022data}. Moreover, the size of the CI, which measures its efficiency, does depend on the distribution of $X$. For conformal prediction, tremendous efforts have been invested in designing problem-specific score functions (i.e.~design of the residual $R_i$'s) to obtain efficient CIs \citep{lei2018distribution,romano2019conformalized,sadinle2019least,romano2020classification,kato23a}.

To motivate how approximate invariance may be incorporated, we turn to the classical method of bootstrap \citep{efron1992bootstrap}, whose link to exchangeability was already noted by \cite{praestgaard1993exchangeably}. Our key observation is that bootstrap operates under a suitable notion of approximate invariance. At a high level, the bootstrap empirical distribution is formed by ranking
\begin{align*}
    \mfrac{1}{\sqrt{n}} \msum_{i \leq n} X_i 
    \,,\, 
    \quad
    \mfrac{1}{\sqrt{n}} \msum_{i \leq n} X_{\pi_{1i}} 
    \,,\,
    \quad
    \ldots 
    \,,\,
    \quad
    \mfrac{1}{\sqrt{n}} \msum_{i \leq n} X_{\pi_{Bi}} 
    \;,
\end{align*}
where $(\pi_{bi})_{b \leq B, i \leq n}$ are i.i.d.~uniform samples from the index set $[n] \coloneqq \{1,\ldots,n\}$. If we view $f(X) = \frac{1}{\sqrt{n}} \sum_{i \leq n} X_i$ and $f(\Phi_b(X)) = \frac{1}{\sqrt{n}} \sum_{i \leq n} X_{\pi_{bi}}$, the transformation $\Phi_b$ replaces the data indices $\{1,\ldots,n\}$ with a uniformly draw element of the $n$-fold product $[n]^n$, and can be identified as a uniformly random permutation on $[n]^n$. The classical guarantee for bootstrap can be expressed informally as the distributional approximation
\vspace{-.25em}
\begin{align*}
    f(\Phi_b(X)) \,| \, X 
    \;\overset{d}{\approx}\;
    f(X')
    \qquad 
    \text{ as } \;\; n \rightarrow \infty\;,
    \tagaligneq \label{eq:asymp:inv}
\end{align*}
\\[-2.25em]
where $X'$ is an i.i.d.~copy of the dataset $X$. Comparing \eqref{eq:asymp:inv} to \eqref{eq:exact:inv}, we see that the validity of bootstrap comes from a notion of approximate invariance: $\frac{1}{\sqrt{n}} \sum_{i \leq n} X_{\pi_{bi}}$ does not have the same marginal distribution as $\frac{1}{\sqrt{n}} \sum_{i \leq n} X_i$ at finite $n$, but the distributions become close asymptotically. In fact, \eqref{eq:asymp:inv} provides a stronger statement due to the conditioning: While $f(\Phi_b(X))$ and $f(X)$ are dependent at finite $n$, they asymptotically decouple.

\begin{table}[t]
    \caption{\label{table:special:cases} \small Known methods that are special cases of DAB.
    }
    \vspace{-.5em}
    \footnotesize
    \centering
    \begin{tabular}{|c|c|c|c|c|}
        \Xhline{2\arrayrulewidth}
        Examples of DAB & $\Phi$ & Verification & Typical assumption \\
        \Xhline{2\arrayrulewidth}
        Bootstrap  & \makecell{uniform draws or \\
        Gaussian samples} & Examples \ref{example:bootstrap},\ref{example:asymptotic:bootstrap},\ref{example:conditional:bootstrap} & approx.~invariance
        \\
        \hline 
        Wild bootstrap & \makecell{Rademacher or \\
        Gaussian samples} & \Cref{example:marginal:wild:bootstrap,example:asymptotic:wild:bootstrap} &  exact or approx.~invariance
        \\
        \hline 
        Leave-one-out  & cyclic group & \Cref{example:jackknife} in appendix & approx.~invariance
        \\
        \hline
        Permutation test  & permutations & \Cref{example:permutation} in appendix  & exact invariance
        \\
        \hline 
        Conformal prediction & cyclic group & \Cref{sec:validity:exact:inv} & exact invariance
        \\
        \hline 
        SymmPI & general compact group & \Cref{sec:validity:exact:inv} & exact invariance
        \\
        \Xhline{2\arrayrulewidth}
    \end{tabular}
    \\[-.8em]
\end{table}

\vspace{.5em}

Generalising the above observations, this article proposes a framework for constructing CIs that are valid under both exact group invariance and approximate non-group invariance. We coin this framework \emph{data augmented bootstrap} (DAB), due to the theoretical inspiration from bootstrap and the practical motivation of incorporating data augmentations from machine learning. \Cref{table:special:cases} lists known inference methods that fall under DAB. We show that, under this unifying perspective, one may share and extend theoretical insights and practical techniques across the entire DAB family. A key implication is that practical data augmentations, which often only satisfy approximate invariance and may not come from a group structure, can be used in conjunction with existing methods.

\vspace{.5em}

\noindent
\textbf{Connecting \eqref{eq:asymp:inv} to Gaussian universality.} To establish the validity of DAB based on approximate invariance, we require distributional approximation of the form \eqref{eq:asymp:inv}, which may appear difficult to verify at first sight. This article establishes \eqref{eq:asymp:inv} for general classes of estimators and transformations by drawing on the growing body of work on Gaussian universality \citep{mossel2005noise,chatterjee2006generalization,korada2011applications,montanari2022universality,hu2022universality}. Gaussian universality is the observation that, for many functions $f: \R^{nd} \rightarrow \R$ and a suitably behaved $\R^{nd}$-valued random vector $X$,
\vspace{-.5em}
\begin{align*}
    f(X) \;\overset{d}{\approx}\; f(Z) \;,\qquad \text{ where } Z \sim \cN(\mean[X], \Var[X])\;, \tagaligneq \label{eq:universality}
    \\[-2.5em]
\end{align*}
and the distributional approximation becomes exact as $n,d \rightarrow \infty$.  When \eqref{eq:universality} holds, the asymptotic distribution of $f(X)$ is completely characterised by the mean and variance of $X$. For i.i.d.~$X_i$'s, Gaussian universality has been observed for a large class of estimators found in high-dimensional statistics and machine learning (e.g.~\cite{montanari2022universality,hu2022universality,gerace2024gaussian} and other works discussed in \Cref{appendix:related:work}, and in \cite{huang2022data,mallory2025universality} for dependent data under data augmentations). The most relevant work to our setup is \cite{austern2020bootstrap}, who rigorously proves Gaussian universality for the bootstrap method and establishes \emph{conditional} approximations of the form \eqref{eq:asymp:inv} for a general $f$. Our key theoretical tool extends their result to a general class of estimators under the dependence structure induced by DAB; the full universality result is found in  \Cref{appendix:universality} and the implied coverage guarantee is found in \Cref{thm:universality} in the main text.

\vspace{.5em}

\begin{figure}
    \centering
    \begin{tikzpicture}[
    every node/.style={align=left},
    Blabel/.style={font=\footnotesize, text width=20cm},
    Clabel/.style={font=\footnotesize, text width=10cm},
    Dlabel/.style={font=\footnotesize, text width=10cm},
    ]
    
    \def\Bcorner{9pt}
    \def\Ccorner{9pt}
    \def\Dcorner{9pt}

    \coordinate (B) at (4.8,2);
    \coordinate (C) at (-0.3,0.32);
    \coordinate (D) at (4.45,0.3);

    \coordinate (Bsw) at (-5.8,-1.35);
    \coordinate (Bne) at (9.1,2.45);
    \coordinate (Csw) at (-5.6,-1.2);
    \coordinate (Cne) at (8.95,1.75);
    \coordinate (Dsw) at (-0.7,-1.05);
    \coordinate (Dne) at (8.8,1.6);


    \fill[green!30, rounded corners=\Bcorner] (Bsw) rectangle (Bne);
    \draw[thick, rounded corners=\Bcorner]  (Bsw) rectangle (Bne);

    \fill[blue!20, rounded corners=\Ccorner] (Csw) rectangle (Cne);
    \draw[thick, rounded corners=\Ccorner]  (Csw) rectangle (Cne);

    \fill[yellow!30, rounded corners=\Dcorner] (Dsw) rectangle (Dne);
    \draw[thick, rounded corners=\Dcorner]  (Dsw) rectangle (Dne);


    \node[Blabel] at (B) {
        \Cref{thm:ex:validity}: Coverage under \textbf{approx.~exchangeability} of $(f(X), f(\Phi_1(X)), \cdots, f(\Phi_B(X)))$
    };
    \node[Clabel] at (C) {
        \Cref{thm:Del:Kol:approx:inv}: Coverage under 
        \\
        \textbf{approx. invariance}:
        \\[.5em]
        $f(\Phi_1(X)) | X  \approx f(X)$ in $d_{\rm Kol}$
    };
    \node[Dlabel] at (D) {
        \Cref{thm:universality} (also \Cref{thm:universality:finite} in the appendix): Coverage under 
        \\[.2em]
        $\bullet$ \textbf{approx.~mean \& variance matching}: 
        \\[.5em]
        \quad $\| \mean[\Phi_1(X) | X] - \mean[X]\|_\infty$, $\|\Var[\Phi_1(X) | X] - \Var[X]\|_\infty$ small  
        \\[.5em]
        $\bullet$ \textbf{universality conditions }
        \\
        \quad e.g.~stability \& local dependence
    };
    \end{tikzpicture}
    \caption{Illustration of our theoretical results.}
    \label{fig:coverage}
    \vspace{-.5em}
\end{figure}

The article is organised as follows: \Cref{sec:method} defines DAB. \Cref{sec:validity} establishes theoretical coverage results under exact and approximate invariance, with guarantees interpolating between finite-sample and asymptotic validity; \Cref{fig:coverage} provides an overview. No group assumption is required, and in the case of approximate invariance, we also show how Gaussian universality simplifies the approximation. \Cref{sec:new:DAB:methods} empirically studies the performance of DAB by comparing it against vanilla bootstrap, wild bootstrap for kernel-based U-statistics and conformal prediction, and across simulated and real-life data. \Cref{sec:discussion} concludes by discussing the efficiency of DAB, connections to invariance tests, open problems and comparison to a concurrent work \citep{paul2026probability}. The supplement 
includes additional examples (\Cref{appendix:additional:examples}), further literature comparisons (\Cref{appendix:related:work}), proofs (\Cref{appendix:validity:exact:inv,appendix:thm:Del:Kol:approx:inv,appendix:universality,appendix:proof:compose:exact:with:approx}) and experiment details (\Cref{appendix:experiments}).

\vspace{-.6em}

\subsection{Notation} \label{sec:notation}
\vspace{-.5em}
\textbf{Data.}  $X = (X_{\rm obs}, X_{\rm unobs})$ is a random dataset formed by concatenating a sequence of random observations $X_{\rm obs}$ and a sequence of unknown random variables $X_{\rm unobs}$, which are possibly dependent. $\cX_n$ denotes a generic measurable space that $X$ takes value in, and $\cX_n^{\rm unobs}$ denotes that of $X_{\rm unobs}$. For the universality result only, we require $X$ to live in the Euclidean space $\R^{nd}$, and will assume $X=(X_i)_{i \leq n}$ is a set of $n$ random vectors in $\R^d$ that are possibly dependent. Throughout, $\X=\sigma(X)$ denotes the $\sigma$-algebra generated by $X$, and similarly $\Xobs =\sigma(X_{\rm obs})$. We also write $\A_{\rm trivial}$ as the trivial $\sigma$-algebra.




\noindent
\textbf{Transformations.} Let $\T$ be the set of all $\cX_n \rightarrow \cX_n$ functions, i.e.~its elements operate on the entire dataset. For $1 \leq b \leq B$, $\Phi_b, \Psi_b$ are taken to be random elements of $\T$. $\T$ is not assumed to be a group in general, and its elements can be non-invertible. Deterministic transformations, such as the cycling operation used in conformal prediction, are represented by taking $\Phi_b, \Psi_b$'s as some almost surely deterministic functions.

\noindent
\textbf{Statistic.} We consider a general statistic $f: \cX_n \rightarrow \cY$, where $\cY$ is some space equipped with a total order. $\cY$ is used whenever no assumption other than the total order is made, and we make it clear that $\cY=\R$ when the restriction is necessary.

\noindent
\textbf{Index sets.} We write $[B]=\{1,\ldots,B\}$ and $[B]_0 = \{0\} \cup [B]$.

\noindent
\textbf{Ranking variables.} In conformal prediction, the prediction interval is formed by ranking the residuals. Similarly in DAB, we need a suitable notion of ranking. We allow a general ranking function $\Rank_T: \cY^{B+1} \rightarrow [0,B+1]$, defined based on the total order of $\cY$ as
\begin{align*}
    \Rank_T\big( v ; (v_b)_{b \in [B]} \big) 
    \coloneqq
     \msum_{b=1}^B 
    \ind_{\{ v > v_b \}} 
    + 
    T \big(\msum_{b=1}^B  \ind_{\{ v = v_b\}} + 1\big)
    \;,
    \\[-2.5em]
    \tagaligneq \label{eq:rank}
\end{align*}
where $T$ is a random $\N \rightarrow \R^+ \cup \{0\}$ tie-breaking function. Common choices are:
\begin{itemize}[
    leftmargin=1em,
    labelwidth=0em,
    topsep=0em,
    partopsep=0em,
    itemsep=-0.25em
    ]
    \item \textbf{Uniform tie-breaking. } $T(n) \sim \text{Uniform}(\{ 0, 1, \ldots, n \})$;
    \item \textbf{Smoothed uniform tie-breaking. } $T(n) \sim \text{Uniform}([0, n])$;
\end{itemize}
In conformal prediction, uniform tie-breaking can incur an $O(\frac{1}{B})$ coverage error, whereas smoothed uniform tie-breaking gives exact coverage (Chapter 2, \cite{vovk2005algorithmic}).
\vspace{-1.25em}

\section{Data augmented bootstrap} \label{sec:method}

\vspace{-.5em}

Given a dataset of observed and unobserved random variables $X = (X_{\rm obs}, X_{\rm unobs})$, a central task of statistics is to provide a valid and efficient coverage for properties of $X_{\rm unobs}$. More formally, for $\alpha \in [0,1]$, we seek a random confidence region $\cC_\alpha$ such that
\vspace{-.5em}
\begin{align*}
    \P(  X_{\rm unobs} \in \cC_\alpha(X_{\rm obs}) ) \;\geq\; 1 - \alpha
    \;.
    \\[-2.5em]
    \tagaligneq \label{eq:C:valid}
\end{align*}
Data augmented bootstrap (DAB) is a family of methods of constructing $\cC_\alpha$ by approximating a statistic $f(X)$ by suitably transformed versions of the statistic. Consider $f: \cX_n \rightarrow \cY$ and a set of $B$ possibly random transformations $\Phi \coloneqq (\Phi_b)_{b \leq B}$. For $\alpha \in [0,1]$, let $S_\alpha$ be a size-$(1-\alpha)$ subset of $[0,1]$. The DAB prediction set is given by
\begin{align}
    \CUalpha
    \coloneqq
    \Big\{ 
        \tilde \bx_n   \in \cX_n^{\rm unobs}  
        \;\Big|\; 
        \mfrac{\Rank_T\big( \, f( X_{\rm obs}, \tilde \bx_n) \,;\, (f(\Phi_b(X_{\rm obs}, \tilde  \bx_n)))_{b=1}^B \, \big)}{B+1}
        \,\in\, S_\alpha
    \Big\}
    \;.
    \label{eq:DABU}
\end{align}
Informally, DAB operates under the assumption that 
\vspace{-.5em}
\begin{align*}
    \big( \, f(X)
    \,,\,
    f(\Phi_1(X))
    \,,\,
    \ldots
    \,,\, 
    f(\Phi_B(X)) 
    \,\big)
    \quad \text{ is approximately exchangeable\,,}
\end{align*} 
\\[-2.5em]
since when exchangeability holds, an appropriately chosen and normalised rank function of the $B+1$ variables above will be distributed as $U \sim {\rm Uniform}[0,1]$, in which case 
\vspace{-.5em}
\begin{align*}
    \P( X_{\rm unobs} \in \CUalpha(X_{\rm obs}) )
    \;\approx\;
    \P( U \in S_\alpha )
    \;=\;
    1 - \alpha\;.
\end{align*} 
\\[-2.5em]
Here, ${\rm Uniform}[0,1]$ plays the role of a pivotal distribution in the sense of \cite{hall1992bootstrap}: The rank variable is an appropriate transformation of our statistics such that its distribution becomes independent of data and allows for the construction of a valid CI. 

\vspace{-.5em}

\begin{example}[Bootstrap for 1d mean estimation] \label{example:bootstrap} Let $\tilde X = (\tilde X_i)_{i \leq n} $ be i.i.d.~1d variables with unknown $\mean[\tilde X_1]$ to be estimated (viewed as part of $X_{\rm unobs}$) and unit variance. To see how bootstrap is a form of DAB, we identify the data $X=(\tilde X_1, \ldots, \tilde X_n, \mean[\tilde X_1]) \in \R^{n+1} \equiv \cX_n$, the statistic $f(x_1, \ldots, x_n, x_*) = \frac{1}{\sqrt{n}} \sum_{i \leq n} (x_i - x_*)$, and the transformations
\vspace{-2.5em}
\begin{align*}
    \Phi_b(x_1, \ldots, x_n, x_*) 
    \;=\; 
    ( 
        x_{\pi_{b1}} \,,\, \ldots\,,\, x_{\pi_{bn}} \,,\, \bar x
    )
    \;,
    \quad 
    \bar x = \mfrac{1}{n} \msum_{i \leq n} x_i\;.
\end{align*}
\\[-2.25em]
where $(\pi_{bi})_{b \leq B, i \leq n}$ are i.i.d.~uniform draws from $\{1, \ldots, n\}$. Choosing $S_\alpha = (\frac{\alpha}{2}, 1 -\frac{\alpha}{2})$, the DAB confidence interval $\CUalpha(\tilde X)$ for $\mean[\tilde X_1]$ is given by 
\vspace{-.1em}
\begin{align*}
    \Big[ \mfrac{1}{n} \msum_{i \leq n} \tilde X_i  -  \mfrac{1}{\sqrt{n}} \hat q^{(B)}_{1-\alpha/2} \;,\; \mfrac{1}{n} \msum_{i \leq n} \tilde X_i  - \mfrac{1}{\sqrt{n}} \hat q^{(B)}_{\alpha/2}  \Big]
\end{align*}
\\[-2em]
where $\hat q_{\omega}^{(B)}$ denotes the $\omega$-th quantile of the empirical distribution 
\vspace{-.25em}
\begin{align*}
    \mfrac{1}{B+1} \Big( \msum_{b \leq B} \delta_{f(\Phi_b(X))} + \delta_{f(X)} \Big)
    \;=\;    
    \mfrac{1}{B+1} \Big( 
        \msum_{b \leq B} \delta_{\frac{1}{\sqrt{n}} \sum_{i \leq n} (X_{\pi_{bi}} - \frac{1}{n}\sum_{j \leq n} X_j )}
        +
        \delta_{f(X)} 
    \Big)
        \;.
\end{align*}
\\[-2.25em]
i.e.~the bootstrap distribution plus an extra point mass at $f(X)$. For $B$ large, $\CUalpha(\tilde X)$ gives the nonparametric bootstrap CI for $\mean[\tilde X_1]$. Similarly, setting $\hat s_x^2 = \frac{1}{n} \sum_{i \leq n} (x_i - \bar x)^2$ and $(\eta_{bi})_{b \leq B, i \leq n}$ as i.i.d.~standard normals, the parametric bootstrap CI can be obtained from
\vspace{-.5em}
\begin{align*}
    \Phi_b(x_1, \ldots, x_n, x_*)
    \;=\; 
    (  n^{-1/2} \hat s_x  \, \eta_{b1} \,,\, \ldots \,,\, n^{-1/2} \hat s_x  \, \eta_{bn} \,,\, 0
    )\;.
\end{align*}
\end{example}

\vspace{-1em}

\begin{remark}[Alternative coverage statements] Our theoretical results present coverage statements of the form \eqref{eq:C:valid} for simplicity. In practice, instead of a confidence region for the unobserved data $X_{\rm unobs}$, it may be desirable to ask for the following:
\\
\noindent
(i) \textbf{Confidence region for some downstream property $\rho(X)$ of $X$.}  The coverage guarantee  \eqref{eq:C:valid} for $\cC_\alpha$ is still relevant, as it implies a coverage guarantee for the induced confidence region $\rho(\cC_\alpha, X_{\rm obs} )\coloneqq \{ \rho(X_{\rm obs}, \tilde\bx_n) \,|\, \tilde \bx_n \in \cC_\alpha(X_{\rm obs}) \}$:
\vspace{-.25em}
\begin{align*}
    \P\big(  \rho(X) \in \rho(\cC_\alpha, X_{\rm obs}) \big) 
    \;\geq\;
    \P( X_{\rm unobs} \in \cC_\alpha(X_{\rm obs}) )
    \;\overset{\eqref{eq:C:valid}}{\geq}\; 
    1-\alpha
    \;.
    \\[-2.5em]
    \tagaligneq \label{eq:C:valid:for:a:property}
\end{align*} 
\noindent 
(ii) \textbf{Conditional coverage.} It is often desirable to obtain guarantees that hold conditionally on the observed information, represented via the $\sigma$-algebra $\Xobs = \sigma(X_{\rm obs})$ as
\vspace{-2.25em}
\begin{align*}
    \P(  X_{\rm unobs}  \in \cC_\alpha(X_{\rm obs}) \,|\, \Xobs ) \geq 1 - \alpha 
    \qquad 
    \text{ almost surely\,.}
    \\[-2.5em]
\end{align*}
Our theoretical results can be applied to obtain conditional coverage, provided that their corresponding assumptions are replaced by the conditional versions given $\Xobs$; see \Cref{remark:thm:ex:cond,remark:thm:Del:Kol:approx:inv:cond}, the discussion and examples in \Cref{sec:validity:ex:to:inv} and the proof of all results in the appendix. Note that for conformal prediction, impossibility results on conditional coverage are well-established \citep{vovk2012conditional,lei2014distribution,foygel2021limits}, and DAB does not alleviate this issue since our assumptions do not generally hold conditionally for conformal prediction.
\end{remark}
\vspace{-2em}
\section{Coverage results} \label{sec:validity}

\vspace{-.8em}

This section presents the theoretical results for DAB. We first obtain coverage guarantees under an approximate exchangeability condition on  $f(X)$, $f(\Phi_1(X))$, $\ldots$, $f(\Phi_B(X))$ (\Cref{sec:validity:exact:inv}). When $(\Phi_b)_{b \leq B}$ are i.i.d., we show that this approximate exchangeability can be quantified by a suitable conditional notion of approximate invariance on $f(X)$ under $\Phi_b$ (\Cref{sec:validity:ex:to:inv}). By exploiting a Gaussian universality assumption, this approximate invariance can be further reduced to differences in only the first two moments (\Cref{sec:validity:universality}). Each of the results generalise known guarantees for existing DAB methods, as we shall illustrate in various examples. Some key differences of our results are: (i) Group assumption is not required; (ii) Additional approximate invariances can be incorporated into known methods; (iii) They naturally interpolate between finite-sample guarantees under exact group invariance and asymptotic guarantees under approximate invariance.

\vspace{-.7em}
\subsection{Validity under approximate exchangeability} \label{sec:validity:exact:inv}
\vspace{-.25em}

DAB involves ranking the sequence $(f(X), f(\Phi_1(X)), \ldots, f(\Phi_B(X)))$. To define approximate exchangeability, we compare this sequence to an exactly exchangeable one:

\vspace{-.25em}
\begin{definition}[Exactly exchangeable surrogate] \label{defn:exchangeability} Let $\Psi_0, \ldots, \Psi_B$ be $B+1$ random elements of $\T$ such that $(f(\Psi_0(X)), f(\Psi_1(X)), \ldots, f(\Psi_B(X)))$ is exchangeable.
\end{definition}
\vspace{-.25em}

Consider the rank variable $R_{\Phi}(X) \coloneqq \Rank_T\big(f(X) \,;\, (f(\Phi_b(X)))_{b \in [B]} \big)$  used in DAB and the analogue $R_{\Psi}(X) \coloneqq \Rank_T\big(f(\Psi_0(X)) \,;\, (f(\Psi_b(X)))_{b \in [B]} \big)$ for the exchangeable surrogate. We measure approximate exchangeability by the c.d.f.~difference 
\vspace{-.5em}
\begin{align*}
    \Delta^r_{\rm Kol}
    \;\coloneqq\; 
    \big| \,
        \P\big( R_{\Phi}(X)/ (B+1) < r  \big)   
        \,-\,&\,
        \P\big( R_{\Psi}(X) / (B+1) < r \big)   
    \, \big|
    \qquad 
    \text{ for } r \in [0,1]\;.
\end{align*}
\\[-2.25em]
Observe that with the same  notation, the coverage probability for the DAB CI, $\CUalpha$, reads
\vspace{-.5em}
\begin{align*}
    \P\big( X_{\rm unobs} \in \CUalpha(X_{\rm obs}) \big)    
    \;=\;
    \P\Big( \mfrac{R_\Phi(X)}{B+1} \,\in\, S_\alpha  \Big) 
    \;.
\end{align*}
\\[-2.5em]
We now control this probability under approximate exchangeability and for two common 

\noindent
tie-breaking choices. See \Cref{appendix:coverage:validity} for the result for general tie-breaking functions.

\vspace{-.5em}
\begin{theorem} \label{thm:ex:validity}  Let $r \in [0,1]$. For uniform tie-breaking, 
\vspace{-.25em}
\begin{align*}
    \Big| \P\Big(\,
    \mfrac{R_{\Phi}(X)}{B+1}
    \,<\,
    r
     \Big)
    - 
    r
    \Big| 
    \;\leq\; \Delta^r_{\rm Kol} + \mfrac{1}{B+1}\;,
\end{align*}
\\[-2.25em]
and for smoothed uniform tie-breaking, 
\vspace{-.25em}
\begin{align*}
    \Big|
    \P\Big(\,
    \mfrac{R_{\Phi}(X)}{B+1}
    \,<\,
    r
     \Big)
    - 
    r
    \Big|
    \;\leq\; \Delta^r_{\rm Kol} \;.
\end{align*}
\end{theorem}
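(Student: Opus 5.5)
By the triangle inequality,
\[
\Big|\P\big(R_\Phi(X)/(B+1)<r\big)-r\Big| \;\le\; \Delta^r_{\rm Kol} + \Big|\P\big(R_\Psi(X)/(B+1)<r\big)-r\Big| ,
\]
where the first term is the definition of $\Delta^r_{\rm Kol}$. The proof therefore reduces to bounding the rank distribution of the exactly exchangeable surrogate: we must show that the second term is at most $\frac{1}{B+1}$ under uniform tie-breaking, and exactly zero under smoothed uniform tie-breaking. This is the standard conformal rank computation, and we carry it out assuming the tie-breaking variable $T$ is drawn independently of $X$ and the $\Psi_b$.

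Write $V_b = f(\Psi_b(X))$ for $b\in[B]_0$, so that $(V_0,\dots,V_B)$ is exchangeable. Condition on the unordered multiset $\{V_0,\dots,V_B\}$. Group its distinct values $u_1<\dots<u_K$ with multiplicities $m_1,\dots,m_K$, where $\sum_k m_k = B+1$. By exchangeability, conditionally on the multiset, the label $0$ is equally likely to be any of the $B+1$ slots. Hence $V_0=u_k$ with conditional probability $m_k/(B+1)$. On that event,
\[
\#\{b\in[B] : V_0 > V_b\} = M_{k-1} \coloneqq \sum_{j<k} m_j ,
\qquad
\#\{b\in[B] : V_0 = V_b\} + 1 = m_k .
\]
So $R_\Psi = M_{k-1} + T(m_k)$.

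\emph{Smoothed tie-breaking.} Here $T(m_k)\sim\mathrm{Uniform}[0,m_k]$ independently, so given $V_0=u_k$, $R_\Psi$ is uniform on $[M_{k-1},M_k]$. Mixing over $k$ with weights $m_k/(B+1)$, which are proportional to the interval lengths, gives $R_\Psi \sim \mathrm{Uniform}[0,B+1]$ conditionally on the multiset. Hence $\P(R_\Psi/(B+1)<r)=r$ exactly, and the second bound of the theorem follows.

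\emph{Uniform tie-breaking.} Here $T(m_k)\sim\mathrm{Uniform}\{0,\dots,m_k\}$. Given $V_0=u_k$, $R_\Psi$ is uniform on the integers $\{M_{k-1},\dots,M_k\}$. The cleanest route is a coupling with the smoothed case. Let $W$ denote the smoothed-tie-breaking rank, uniform on $[M_{k-1},M_k]$ given $V_0=u_k$, so that $W\sim \mathrm{Uniform}[0,B+1]$ by the previous case. The discrete rank $R_\Psi$ differs from a suitable rounding of $W$ within each block of ties. I expect the main obstacle to be getting the constant exactly $\frac{1}{B+1}$ rather than $\frac{2}{B+1}$, since the discrete uniform on $m_k+1$ points puts slightly more mass near the block endpoints.

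I would handle this directly rather than through the coupling. For a threshold $s=r(B+1)$, the block containing $s$ contributes a conditional c.d.f. error of at most a fraction of $\frac{m_k}{B+1}$ times something less than $1$. All other blocks contribute exactly: a block lying entirely below $s$ gives probability $m_k/(B+1)$ in both cases. The exception is the endpoint $M_k$ shared between adjacent blocks, which the discrete law charges from both sides.

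The cleanest way to account for this is to compare the discrete law with $\lfloor W\rfloor$ and $\lceil W\rceil$. This gives the sandwich
\[
\P(W < s-1) \;\le\; \P(R_\Psi < s) \;\le\; \P(W < s+1),
\]
which only yields $\frac{1}{B+1}$ per side. I would try to sharpen it by checking the straddling block explicitly: on $\{0,\dots,m\}$, the discrete uniform c.d.f. at a fractional position differs from the continuous uniform by at most $1/(m+1)\cdot\ldots$, which after weighting by $m/(B+1)$ is at most $\frac{1}{B+1}$. If this refinement fails, I would settle for the sandwich bound, which already gives error at most $\frac{1}{B+1}$ in each direction, i.e. $|\P(R_\Psi/(B+1)<r)-r|\le \frac{1}{B+1}$. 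Taking expectation over the multiset then finishes the proof.
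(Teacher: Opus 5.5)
Your reduction is the same as the paper's. The triangle inequality isolates $\bigl|\P(R_\Psi(X)/(B+1)<r)-r\bigr|$, and the paper (via Theorem~\ref{thm:group:guarantee:general:tie}, adapting Lemma~8.7 of \cite{vovk2005algorithmic}) also conditions on the multiset of $(V_0,\dots,V_B)$ and uses exchangeability to make the position of $V_0$ among the order statistics uniform. Your tie-block mixture argument for smoothed tie-breaking is a cleaner rendering of the paper's identity $C_T(r)=r$.

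The genuine difference is in the uniform tie-breaking case. The paper computes the exact conditional c.d.f.\ $C_T(r)$ for an arbitrary tie-breaker with $T(n)\in[0,n]$, and then checks $|C_T(r)-r|\le\frac{1}{B+1}$ directly. You instead compare with the smoothed rank $W$. Your sandwich already gives exactly the claimed constant, since $\P(W<s+1)\le r+\frac{1}{B+1}$ and $\P(W<s-1)\ge r-\frac{1}{B+1}$. So the worry about $\frac{2}{B+1}$ and the unfinished ``refinement'' should simply be dropped.

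What you must add is a justification of the sandwich, which you only gesture at. Conditionally on $V_0=u_k$, take $U\sim\mathrm{Uniform}[0,1)$ and set $W=M_{k-1}+m_kU$ and $R_\Psi=M_{k-1}+\lfloor (m_k+1)U\rfloor$. Then $R_\Psi-W\in(U-1,U]$, so $W-1<R_\Psi<W+1$, and hence $\{W<s-1\}\subseteq\{R_\Psi<s\}\subseteq\{W<s+1\}$. Equivalently, $\lfloor W\rfloor\preceq R_\Psi\preceq\lceil W\rceil$ stochastically, because $\frac{j}{m_k}\le\frac{j+1}{m_k+1}\le\frac{j+1}{m_k}$.

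The paper's route buys an exact formula valid for every tie-breaking rule. Yours is shorter and makes it transparent that discrete tie-breaking moves the rank by less than one unit.
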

\vspace{-.5em}

\begin{remark}[Conditional coverage] \label{remark:thm:ex:cond} $\P$ can be replaced by $\P(\argdot | \Xobs)$ if the exchangeability condition in \Cref{defn:exchangeability} holds conditionally on $\Xobs$.
\end{remark}

\vspace{-1em}

\Cref{thm:ex:validity} recovers known coverage guarantees under group invariance. To see this, let $\G$ be a compact group with $X \overset{d}{=} g(X) $ for all $g \in \G$. If either (i) $\Phi_0, \Phi_1, \ldots, \Phi_B \overset{\rm i.i.d.}{\sim} \textrm{Uniform}(\G)$ or (ii) $\G$ is finite with size $B+1$, $\Phi_0 = {\rm id}$ and $\Phi_1, \ldots, \Phi_B$ is a deterministic enumeration of the non-identity elements of $\G$, then exactly exchangeability holds for
\vspace{-.5em}
\begin{align*}
    \big( X, \Phi_1(X),\ldots, \Phi_B(X)  \big) 
    \overset{d}{=}&\;
    \big( \Phi_0(X), \Phi_1\Phi_0(X),\ldots, \Phi_B\Phi_0(X) \big) 
    \;.
    \tagaligneq \label{eq:group:inv}
\end{align*}
\\[-2.25em]
In this case, $\Delta^r_{\rm Kol}=0$ for all $r \in [0,1]$, and \Cref{thm:ex:validity} implies $(1-\alpha)$ coverage (up to an  $O(\frac{1}{B})$ error) for any size-$(1-\alpha)$ interval $S_\alpha \subseteq [0,1]$. The same group invariance and corresponding guarantees can be found in conformal prediction with $\G$ as the cyclic group, permutation tests with the permutation group, SymmPI with a general compact group \citep{dobriban2023symmpi}, and the wild bootstrap procedures discussed next.

\vspace{-.6em}

\begin{example}[Exact coverage for wild bootstrap for U-statistics] \label{example:marginal:wild:bootstrap} Let $X = (X_i)_{i \leq n}$ be i.i.d.~mean-zero random vectors in $\R^d$ and $f(X) = \frac{1}{n(n-1)} \sum_{i \neq j} X_i^\top X_j$. The wild bootstrap procedure used for simulating degenerate U-statistics \citep{dehling1994random,fromont2013two,leucht2013dependent,chwialkowski2014wild} is a special case of DAB by drawing i.i.d.~Rademacher variables $(\epsilon_{bi})_{b \leq B, i \leq n}$ and identifying 
\vspace{-.25em}
\begin{align*}
    \Phi_b(x_1, \ldots, x_n) \;\coloneqq\; 
    (\epsilon_{b1} x_1 \,,\, \ldots \,,\, \epsilon_{bn} x_n ) 
    \;.
    \tagaligneq \label{eq:wild:bootstrap:rademacher}
\end{align*}
\\[-2.25em]
Assume sign-flipping symmetry $X_1 \overset{d}{=} -X_1$; this arises, for example, in the Maximum Mean Discrepancy (MMD) statistic in two-sample tests (see \Cref{sec:wild:bootstrap:add}). Then
\vspace{-.25em}
\begin{align*}
    f(\Phi_b(X)) 
    \;=\; \mfrac{1}{n(n-1)} \msum_{i \neq j} \epsilon_{bi} \epsilon_{bj} X_i^\top X_j 
    \;\overset{d}{=}\; 
    f(X)
    \;.
\end{align*}
\\[-2.25em]
Exact exchangeability holds by identifying the group $\G = \{-1, +1\}^n$ in the argument in \eqref{eq:group:inv}, and \Cref{thm:ex:validity} holds with $\Delta^r_{\rm Kol} = 0$. Note that this argument also generalises beyond a linear dot product kernel. This gives the same finite-sample coverage for wild bootstrap for MMD by \cite{schrab2023mmd}, who first exploited the sign-flipping invariance. 
\end{example}

\vspace{-.6em}

When $\Delta^r_{\rm Kol} > 0$, \Cref{thm:ex:validity} provides coverage beyond the exact group invariance discussed above. For conformal prediction, prior works \citep{barber2023conformal,xu2025wasserstein} have measured the errors of exchangeability violation in the total variation or Wasserstein distance. We will see that our c.d.f.~difference allows us to connect approximate invariance to universality results typically established in Kolmogorov distance.
    
\vspace{-.5em}

\begin{remark} \label{remark:weighted:ex}
A well-studied violation of \Cref{defn:exchangeability} is weighted exchangeability. For conformal predictions and permutation tests, it has been shown \citep{tibshirani2019conformal,barber2023conformal,ramdas2023permutation} that coverage can still be obtained by using suitable reweightings; see \Cref{appendix:related:work} for a further literature review and discussion. We expect that with a similar reweighted rank function as \cite{ramdas2023permutation}, one may also obtain coverage for DAB by approximating $(f(X), f(\Phi_1(X)), \ldots, f(\Phi_B(X)))$ with a weighted exchangeable sequence. 
\end{remark}
\vspace{-1.2em}

\subsection{Validity under approximate invariance} \label{sec:validity:ex:to:inv}

\vspace{-.5em}

The approximate exchangeability error $\Delta^r_{\rm Kol}$ involves identifying an exchangeable surrogate sequence of statistics, which is compared against the original sequence. We now show that this error can be reduced to a suitable notion of approximate invariance of the original statistic $f(X)$ under the transformation $\Phi_1$. We use a simplifying assumption:

\vspace{-.5em}
\begin{assumption} \label{asst:iid:smooth}  (i) $\Phi_1, \ldots, \Phi_B$ are i.i.d.~random elements of $\T$ and independent of $X$; 
(ii) smoothed uniform tie-breaking is used; 
(iii) the output space of $f$, $\cY$, is taken to be $\R$.
\end{assumption}
\vspace{-.5em}

The idea is to apply \Cref{thm:ex:validity} with some appropriate $\Psi_b$'s, and control the error via
\vspace{-.5em}
\begin{align*}
    \Delta_{\rm inv}(X)
    \;\coloneqq\;
    \msup_{t \in \R}
    \big|  
        \P( f(\Phi_1(X)) \leq t \,|\, X ) 
        \,-\, 
        \P( f(X) \leq t  ) 
    \big|
    \;.
    \\[-2.5em]
\end{align*} 
$\Delta_{\rm inv}(X)$ compares the conditional distribution of $f(\Phi_1(X)) | X$ to that of $f(X)$ through a Kolmogorov distance. The coverage statement is stated in terms of a control of $\Delta_{\rm inv}(X)$ in the $L_\nu$ norm, $\| \argdot \|_{L_\nu} \coloneqq (\mean[ | \argdot |^\nu ])^{1/\nu}$, defined for $\nu \geq 1$.


\begin{theorem} \label{thm:Del:Kol:approx:inv} Let $r \in [0,1]$, $\nu \geq 1$ and $B \geq 2$. Under Assumption \ref{asst:iid:smooth}, almost surely
\begin{align*}
     \Big| \P\Big(  \mfrac{R_\Phi(X)}{B+1}  < r  \Big)  
        \,-\,
        r
    \Big|
    \;\leq&\;
    \mfrac{12}{4^{1/(\nu+1)}}
    \,
    \big( 3 \, \| \Delta_{\rm inv}(X) \|_{L_\nu} \big)^{\frac{\nu}{\nu+1}} 
    + 
    \mfrac{1}{B-1} 
    +  
    \mfrac{\sqrt{3 \log B}}{\sqrt{B-1}} 
    \;.
\end{align*}
\end{theorem}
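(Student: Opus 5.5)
We compare $R_\Phi(X)$ with an exactly exchangeable surrogate and then apply \Cref{thm:ex:validity}. The constants $\frac{1}{B-1}$ and $\sqrt{3\log B/(B-1)}$ come from concentrating an empirical c.d.f. The term $\|\Delta_{\rm inv}\|_{L_\nu}^{\nu/(\nu+1)}$ comes from a truncation (Markov) argument at a level $\epsilon$, optimised at the end.

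\textbf{Step 1: reduce to a c.d.f. statement.} Conditionally on $X$, the values $Y_b\coloneqq f(\Phi_b(X))$ are i.i.d.\ with c.d.f.\ $G_X(t)=\P(f(\Phi_1(X))\le t\mid X)$, by Assumption~\ref{asst:iid:smooth}(i). Write $F$ for the c.d.f.\ of $f(X)$. Under smoothed uniform tie-breaking, $R_\Phi(X)/(B+1)$ is a smoothed version of the empirical proportion $\hat G_B(f(X))$ of the $Y_b$ lying below $f(X)$, up to an $O(1/B)$ rescaling. The surrogate is $\Psi_0(X)=X'$, an independent copy, together with i.i.d.\ draws $\Psi_b(X)$ whose $f$-values have c.d.f.\ $F$. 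For this surrogate, \Cref{thm:ex:validity} gives $\P(R_\Psi/(B+1)<r)=r$ exactly. It therefore suffices to compare $\P(R_\Phi/(B+1)<r)$ with $\P(F(f(X))<r)$, which is $\approx r$, losing only the stated errors.

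\textbf{Step 2: concentration.} Conditionally on $X$, apply the DKW inequality (or Bernstein plus a union bound over a grid) to $\sup_t|\hat G_B(t)-G_X(t)|$. With deviation $\sqrt{3\log B/(B-1)}$ and failure probability $O(1/B)$, this yields the terms $\frac{1}{B-1}+\sqrt{3\log B}/\sqrt{B-1}$. The shift from $B$ to $B-1$ absorbs the $+1$ and the normalisation by $B+1$.

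\textbf{Step 3: handle $\Delta_{\rm inv}$ by truncation.} On the concentration event, $R_\Phi/(B+1)$ is within the Step~2 deviation of $G_X(f(X))$, and $|G_X(f(X))-F(f(X))|\le\Delta_{\rm inv}(X)$. Split according to whether $\Delta_{\rm inv}(X)\le\epsilon$.
\begin{itemize}
\item On $\{\Delta_{\rm inv}(X)\le\epsilon\}$, the event $\{R_\Phi/(B+1)<r\}$ is sandwiched between $\{F(f(X))<r\mp\epsilon\mp\text{dev}\}$. Since $F(f(X))$ is (sub)uniform, the probabilities of these events differ from $r$ by at most $O(\epsilon)$ plus the deviation.
\item The complement has probability at most $\|\Delta_{\rm inv}\|_{L_\nu}^\nu/\epsilon^\nu$ by Markov's inequality.
\end{itemize}
Optimising $c\,\epsilon+\|\Delta_{\rm inv}\|_{L_\nu}^\nu\epsilon^{-\nu}$ over $\epsilon$ gives a term of order $\|\Delta_{\rm inv}\|_{L_\nu}^{\nu/(\nu+1)}$. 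The explicit constants $12\cdot4^{-1/(\nu+1)}\cdot3^{\nu/(\nu+1)}$ come from this optimisation once the sandwich costs $3\epsilon$, with factors of $2$ from the two-sided bounds.

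\textbf{Main obstacle.} The subtle point is that $G_X$ and $f(X)$ are dependent, since both depend on $X$. We cannot treat $G_X(f(X))$ as uniform; only $F(f(X))$ is (sub)uniform. The argument must therefore route through $F$ using the pointwise bound $|G_X(t)-F(t)|\le\Delta_{\rm inv}(X)$, evaluated at the random point $t=f(X)$. This is legitimate because $\Delta_{\rm inv}(X)$ is a supremum over $t$. Atoms of $F$ need extra care: the smoothing from tie-breaking should replace $F$ by its randomised version, so that a genuinely uniform pivot is available.
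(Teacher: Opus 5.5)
Your argument is correct, but it takes a different route from the paper's proof of this theorem.

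\textbf{The paper's route.} The paper proves \Cref{thm:Del:Kol:approx:inv:general} and specialises it. It writes both rank variables as $\frac{1}{B+1}\sum_b V_b+\frac{U}{B+1}$, with conditionally i.i.d.\ three-point variables of parameters $(P_1,Q_1,U)$ for $\Phi$ and $(P_2,Q_2,U)$ for the surrogate. It then applies the mean value theorem in $(p,q)$ to the c.d.f.\ of $\bar V_B$, using an exact formula that writes $\partial_p,\partial_q$ as $B$ times the probability that $\bar V_{B-1}$ lies in a window of width at most $\frac{1}{B-1}$. These derivatives are bounded by Chernoff when $P_\Theta+UQ_\Theta$ is $\epsilon$-far from the threshold. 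The near-threshold band is controlled by the uniformity of $P_2+UQ_2$, and Markov is applied to $|P_1-P_2|$ and $|Q_1-Q_2|$. The terms $\frac{1}{B-1}$ and $\sqrt{3\log B/(B-1)}$ come from this derivative formula and from killing the factor $B$ in $4B\delta e^{-2(B-1)\epsilon^2}$.

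\textbf{Your route.} You concentrate the rank directly around its conditional mean $\mu_1=P_1+UQ_1$ and compare $\mu_1$ with the uniform pivot $\mu_2=P_2+UQ_2$. This is exactly the mechanism the paper uses for \Cref{lem:fixed:to:random}. It is shorter, needs no derivative lemma, and yields a bound at least as strong as the statement:
\begin{itemize}
\item Given $X$, the value $f(X)$ is fixed, so a pointwise Hoeffding bound at $t=f(X)$ suffices; DKW is unnecessary.
\item At $t=f(X)$ you have $\mu_1=(1-U)G_X(t^-)+UG_X(t)$ and $\mu_2=(1-U)F(t^-)+UF(t)$. Hence $|\mu_1-\mu_2|\le\Delta_{\rm inv}(X)$, better than the paper's $3\Delta_{\rm inv}(X)$.
\item The dependence of $r_U=\frac{(B+1)r-U}{B}$ on $U$ is handled by $r_U\in[\frac{B+1}{B}r-\frac1B,\frac{B+1}{B}r]$.
\item Take $\eta=\sqrt{3\log B/(B-1)}$. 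The Hoeffding failure probability is then at most $B^{-6}$, and $\frac1B+B^{-6}\le\frac{1}{B-1}$.
\item Take $\epsilon=\|\Delta_{\rm inv}(X)\|_{L_\nu}^{\nu/(\nu+1)}$. This gives $2\|\Delta_{\rm inv}(X)\|_{L_\nu}^{\nu/(\nu+1)}+\frac{1}{B-1}+\frac{\sqrt{3\log B}}{\sqrt{B-1}}$, which is below the stated bound because $12\cdot 4^{-1/(\nu+1)}\ge 6$.
\end{itemize}

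\textbf{Corrections and presentation.} Your account of where the constants $12\cdot4^{-1/(\nu+1)}$, the factor $3$ and the shift to $B-1$ ``come from'' is not accurate. This is harmless, since your route gives a stronger bound, but you should not claim it reproduces those constants. The surrogate and \Cref{thm:ex:validity} in your Step~1 play no role once you use the randomised probability integral transform directly. That pivot must use the same $U$ as the tie-breaking, which you correctly flag. The paper's interpolation argument does not buy a better rate: both approaches give $O(\|\Delta_{\rm inv}(X)\|_{L_\nu}^{\nu/(\nu+1)}+\sqrt{\log B/B})$.
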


\begin{remark}[Conditional coverage] \label{remark:thm:Del:Kol:approx:inv:cond} To control $\P(  \frac{R_\Phi(X)}{B+1}  < r \,|\, \Xobs )$, it suffices to replace $\P(f(X) \leq t)$ in the definition of $\Delta_{\rm inv}(X)$ by $\P(f(X) \leq t \,|\, \Xobs)$ and the $L_\nu$ norm by its conditional analogue $(\mean[ | \argdot |^\nu \,|\, \Xobs ])^{1/\nu}$; see \Cref{example:conditional:bootstrap} in the appendix and \Cref{appendix:thm:Del:Kol:approx:inv}.
\end{remark}

\vspace{-.5em}

\begin{remark} \label{remark:thm:Del:Kol:approx:inv}
The first term in \Cref{thm:Del:Kol:approx:inv} adapts to the type of control we have on $\Delta_{\rm inv}(X)$: If we only have $L_1$-control, the first term reads $6\sqrt{3}\| \Delta_{\rm inv}(X) \|^{1/2}_{L_1}$; if $\Delta_{\rm inv}(X) \leq \bar \Delta$ almost surely for some deterministic $\bar \Delta$, taking $\nu \rightarrow \infty$ gives $36 \bar \Delta$; if $X=(X_i)_{i \leq n}$ is a set of $n$ observations and $\Delta_{\rm inv}(X) \xrightarrow{\P} 0$ as $n \rightarrow \infty$, then $\| \Delta_{\rm inv}(X) \|_{L_\nu} \rightarrow 0$ since $\Delta_{\rm inv}(X)$ is almost surely bounded, and we obtain asymptotic coverage. 
\end{remark}

$\Delta_{\rm inv}(X)$ differs from the typical notion of approximate invariance, as it compares the \emph{conditional} distribution $f(\Phi_1(X)) | X$ rather than the marginal distribution $f(\Phi_1(X))$ to the distribution of $f(X)$. The next example shows how this approximate invariance arises in the classical guarantees for bootstrap, and illustrates how $\Delta_{\rm inv}(X)$ can be controlled.

\vspace{-.25em}

\begin{example}[Asymptotic coverage for bootstrap] \label{example:asymptotic:bootstrap} Consider the nonparametric bootstrap setting in \Cref{example:bootstrap}. A classical guarantee of bootstrap \citep{efron1992bootstrap,hall1992bootstrap} can be stated as follows: Let $(\tilde X'_i)_{i \leq n}$ be i.i.d.~copies of $\tilde X_1$. Under mild conditions, as $n \rightarrow \infty$, almost surely
\begin{align*}
    \sup_{t \in \R} \,
    \Big| \P\Big( \mfrac{\sum_{i=1}^n (\tilde X_{\pi_{1i}} - \frac{1}{n} \sum_{j \leq n} \tilde X_j) }{\sqrt{n}} \leq t \,\Big|\, X \Big) - \P\Big( \mfrac{\sum_{i=1}^n (\tilde X_i' - \mean[\tilde X_1]) }{\sqrt{n}} \leq t \Big) \Big| 
    \rightarrow 0
    \;.
\end{align*}
With the notation of \Cref{example:bootstrap}, this is exactly the statement that, as $n \rightarrow \infty$,
\vspace{-.25em}
\begin{align*}
    \Delta_{\rm inv}(X)
    =
    \msup_{t \in \R} 
    \big| \P( f(\Phi_1(X)) \leq t \,|\, X ) - \P( f(X) \leq t ) \big| 
    \rightarrow 0 \; \text{ almost surely},
    \tagaligneq \label{eq:bootstrap:Delta}
    \\[-2.5em]
\end{align*}
which implies asymptotic coverage of bootstrap by \Cref{thm:Del:Kol:approx:inv}. This is a case of \emph{asymptotic invariance}: At finite $n$, even the marginal distribution of $f(\Phi_1(X))$ disagrees with $f(X)$ (consider the ``bad" event when all $(\pi_{bi})_{i \leq n}$ select the same data point), but the large-$n$ asymptotic drives the probability of the ``bad'' events to zero and thus $f(\Phi_1(X)) | X$ and $f(X)$ to the same normal distribution. Meanwhile, the conditioning on $X$ ensures that under \eqref{eq:bootstrap:Delta}, $f(\Phi_1(X)) | X$ and $f(X)$ are asymptotically independent. This makes $(f(X), f(\Phi_1(X)), \ldots, f(\Phi_B(X)))$ asymptotically i.i.d.~and thus exchangeable. 
\end{example}

\Cref{asst:iid:smooth}(i) excludes transformations $\Phi'= (\Phi'_b)_{b \leq B}$ that are a deterministic enumeration of the elements from a finite set, such as the cycling operations in conformal prediction and Jackknife. The next lemma shows that $\Phi'$ is equivalent to uniform random sampling, making \Cref{thm:Del:Kol:approx:inv} still applicable. Again, we do not assume a group structure.

\vspace{-.25em}

\begin{lemma} \label{lem:fixed:to:random} Let $r \in [0,1]$, $\nu \geq 1$, $\cS_B = (\Phi'_1, \ldots, \Phi'_B)$ be a deterministic subset of $\T$ and $(\Phi_b)_{b \leq B}$ be i.i.d.~uniform draws from $\cS_B$. Under \Cref{asst:iid:smooth}(ii) and (iii),
\vspace{-.25em}
\begin{align*}
    \Big| 
    \P\Big( \mfrac{R_{\Phi'}(X)}{B+1} < r \Big)   
    -
    \P\Big( \mfrac{R_{\Phi}(X)}{B+1} < r  \Big)   
    \Big|
    \;\leq\; 
    \mfrac{4 \sqrt{\log (2B)}}{\sqrt{2B}}
    \,+\,
    5 \, \big(3  \big\| \Delta_{\rm inv}(X) \big\|_{L_\nu}\big)^{\nu/(\nu+1)}
    \;,
\end{align*}
\\[-2.25em]
where we have denoted $R_{\Phi'}(X)  \coloneqq \Rank_T\big(f(X) \,;\, (f(\Phi'_b(X)))_{b \in [B]} \big) $.   
\end{lemma}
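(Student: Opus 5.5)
The comparison goes through the conditional distribution functions. Write $Y_0 = f(X)$, let $F(t) \coloneqq \P(f(X) \leq t)$, and let $G_X(t) \coloneqq \P(f(\Phi_1(X)) \leq t \,|\, X)$. For the random draws, $G_X$ is the empirical c.d.f.\ of the deterministic set $\cS_B$, that is, $G_X(t) = \frac{1}{B}\sum_{b \leq B} \ind_{\{f(\Phi'_b(X)) \leq t\}}$. Under smoothed uniform tie-breaking, the conditional law of $R_{\Phi'}(X)$ given $X$ and $T$ is a deterministic function of $Y_0$ and this empirical c.d.f. Ignoring ties, which the smoothing handles, $R_{\Phi'}(X) \approx B\, G_X(Y_0)$. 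For the random version, the i.i.d.\ draws from $\cS_B$ give $R_\Phi(X) = \sum_b \ind_{\{Y_0 > f(\Phi_b(X))\}} + T(\cdot)$, where the sum is conditionally $\mathrm{Binomial}(B, G_X(Y_0^-))$ given $X$.

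The first step is to show that $R_\Phi/(B+1)$ and $R_{\Phi'}/(B+1)$ are both within a small error of the same quantity $G_X(Y_0)$. For $R_{\Phi'}$ this holds deterministically up to $O(1/B)$, including the tie mass. For $R_\Phi$, apply Hoeffding conditionally on $X$: with probability at least $1-\delta$,
\[
\big| \tfrac{1}{B}\textstyle\sum_b \ind_{\{Y_0 > f(\Phi_b(X))\}} - G_X(Y_0^-) \big| \leq \sqrt{\log(2/\delta)/(2B)}.
\]
Taking $\delta \asymp 1/B$ produces the term $4\sqrt{\log(2B)}/\sqrt{2B}$ after adding the $\delta$ failure probability. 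A DKW-type bound would also work if uniformity over $t$ is needed. With $\epsilon$ this deviation, the two c.d.f.s at $r$ then differ by at most
\[
\P\big(G_X(Y_0) \in [r-\epsilon, r+\epsilon]\big) + \delta.
\]

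The second step controls this anti-concentration probability using $\Delta_{\rm inv}$. Replacing $G_X$ by $F$ costs at most $\Delta_{\rm inv}(X)$ pointwise, so
\[
\{G_X(Y_0) \in [r\pm\epsilon]\} \subseteq \{F(Y_0) \in [r \pm (\epsilon + \Delta_{\rm inv}(X))]\}.
\]
Split on the event $\{\Delta_{\rm inv}(X) \leq \eta\}$. By Markov, the complement has probability at most $\|\Delta_{\rm inv}\|_{L_\nu}^\nu/\eta^\nu$. On the event itself, $F(Y_0)$ is stochastically uniform up to atoms, so the interval has probability at most $2(\epsilon+\eta)$. Atoms of $F$ are handled by the smoothed tie-breaking, or by working with the randomized $F(Y_0^-) + U\,(F(Y_0)-F(Y_0^-))$, which is exactly uniform. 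Optimizing $2\eta + \|\Delta\|^\nu_{L_\nu}\eta^{-\nu}$ gives order $\|\Delta\|_{L_\nu}^{\nu/(\nu+1)}$, which is the second term with constants $5$ and $3$. The factor $3$ likely arises because $\Delta_{\rm inv}$ enters several times: once in relating $G_X$ to $F$, and once more through the left limits $G_X(Y_0^-)$.

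The main obstacle is making the step ``$R_{\Phi'}/(B+1) \approx G_X(Y_0)$ and $R_\Phi/(B+1) \approx$ the same'' rigorous in the presence of ties and atoms. Both the smoothed $T$ and the left limits $G_X(Y_0^-)$ versus $G_X(Y_0)$ must be tracked carefully. I would handle this by conditioning on $(X,T)$ and comparing both ranks to the randomized pivot
\[
G_X(Y_0^-) + U\,\big(G_X(Y_0) - G_X(Y_0^-)\big),
\]
with $U$ uniform, mirroring the proof of \Cref{thm:Del:Kol:approx:inv}. Its anti-concentration argument would be reused essentially verbatim.
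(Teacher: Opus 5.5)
Your proposal is correct and is essentially the paper's proof. The paper also writes $R_{\Phi'}(X)/(B+1) = \frac{B}{B+1}\bigl(G_X(Y_0^-)+U(G_X(Y_0)-G_X(Y_0^-))\bigr)+\frac{U}{B+1}$ exactly, concentrates $R_\Phi(X)/(B+1)$ around the same randomized pivot by a Hoeffding/Chernoff bound conditional on $(X,U)$, and controls the remaining anti-concentration term by Markov's inequality on $\Delta_{\rm inv}(X)$ together with exact uniformity of $F(Y_0^-)+U(F(Y_0)-F(Y_0^-))$. The only difference is that you split on the single event $\{\Delta_{\rm inv}(X)\le\eta\}$, using that the two pivots differ by at most $\Delta_{\rm inv}(X)$; the paper instead applies Markov separately to the strict-inequality and tie probabilities, which is where its factor $3$ comes from (via $1+2^\nu\le 3^\nu$), so your route yields the stated bound with somewhat better constants.
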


\vspace{-1em}

\begin{remark} For $\nu=1$, the error in \Cref{lem:fixed:to:random} is $O(\frac{\sqrt{\log B}}{\sqrt{B}} +  \| \Delta_{\rm inv}(X) \|_{L_1}^{\frac{1}{2}})$, which is worse than the error in \Cref{thm:Del:Kol:approx:inv}. This is because our proof uses the concentration of $\frac{R_\Phi(X)}{B+1}$ around the conditional mean $\mean[ \frac{R_\Phi(X)}{B+1} | X, T ] = \frac{R_{\Phi'}(X)}{B+1}$, which incurs an $B^{-\frac{1}{2}}$ error.
\end{remark}

\vspace{-.5em}

\Cref{appendix:additional:examples} contains additional examples of \Cref{thm:Del:Kol:approx:inv}, including asymptotic coverage for wild bootstrap (\Cref{example:asymptotic:wild:bootstrap}) and Jackknife (\Cref{example:jackknife}), and conditional asymptotic coverage for simple cases of bootstrap (\Cref{example:conditional:bootstrap}) and permutation test (\Cref{example:permutation}). \Cref{appendix:thm:Del:Kol:approx:inv:variants} also includes a generalisation of \Cref{thm:Del:Kol:approx:inv} when, instead of \Cref{asst:iid:smooth}(i), $\Phi_b$'s are generated conditionally i.i.d.~given some generic $\sigma$-algebra $\A$.

\color{black}

\vspace{-1em}
\subsection{Quantifying approximate invariance via Gaussian universality} \label{sec:validity:universality}

\vspace{-.5em}
 
\Cref{example:asymptotic:bootstrap} shows how $\Delta_{\rm inv}(X)$ may be controlled for empirical averages, but many practical examples involve highly non-linear $f$'s. One way to control $\Delta_{\rm inv}(X)$ for a more general $f$ is via Gaussian universality: If we can replace $\Phi_1(X)|X$ and $X$ by appropriately chosen Gaussians, $\Delta_{\rm inv}(X)$ may be measured through only moment differences in the data. We now restrict the input space to $\cX_n = (\R^d)^n$. $n$ is the number of data points and $d \equiv d(n)$ is the data dimension, which is allowed to grow with $n$:
\vspace{-.5em}
\begin{align*}
    d\,/\,n \;\rightarrow&\; \gamma
    \qquad 
    \text{ for some } \gamma \in [0,\infty]\;.
    \tagaligneq \label{eq:asymp:regime}
    \\[-2.5em]
\end{align*}
We will use Gaussian universality to control $\Delta_{\rm inv}(X)$ via the variance difference
\vspace{-.25em}
\begin{align*}
    \Delta_{\rm Var}(X) \;\coloneqq\; \|\Var[\Phi_1(X)|X] - \Var[X] \|_\infty\;,
    \\[-2.5em]
\end{align*}
where $\| \argdot \|_\infty$ is the infinity norm on $\R^{nd \times nd}$. The standard ingredients in Gaussian universality are (i) a matching mean condition; (ii) a dependency condition on the data; (iii) a stability condition on $f$; (iv) a moment boundedness condition. We now formulate these for DAB. Since we can always WLOG absorb the mean and variance of $X$ into $f$ by considering the transformed statistic $\bx \mapsto f(  \Var[X]^{1/2}  \bx + \mean[X])$, we first assume:

\vspace{-.5em}

\begin{assumption}[Standardisation] \label{asst:standard} $\mean[X] = 0$ and $\Var[X] = I_{nd}$.
\end{assumption}

\vspace{-.5em}

In the main text, we suppose that $\Phi_1$ is chosen such that the conditional mean of $\Phi_1(X)$ matches that of $X$. This holds, for example, in the case of bootstrap in \Cref{example:bootstrap}.

\vspace{-.5em}

\begin{assumption}[Exactly matching conditional mean] \label{asst:zero:mean} $\mean[\Phi_1(X) | X ]=\mean[X]$ almost surely.
\end{assumption}

\vspace{-.5em}

Since $X$ may consist of dependent coordinates (despite \Cref{asst:standard}) and the transformation $\Phi_1$ can introduce dependence, we need to employ a universality result under local dependence. For an $\R^{nd}$-valued random vector $V$ distributed according to $\mu$, we denote the local dependency neighbourhood of the $(i,j)$-th coordinate $V_{ij}$ as 
\vspace{-.5em}
\begin{align*}
    \cN_{ij}(\mu)
    \coloneqq
    \inf \big\{ \cN \subseteq [n] \times [d] \;\big|\; (i,j) \in \cN\,,\, (V_{ab})_{(a,b) \in \cN} \text{ is independent of } (V_{ab})_{(a,b) \not\in \cN}  \big\}
    .
\end{align*} 
\\[-2.5em]
We denote the maximum size of the local dependency neighbourhood of $V$ as 
\vspace{-.6em}
\begin{align*}
    N(\mu) \;\coloneqq\; \max_{1 \leq i \leq n, \, 1 \leq j \leq d} \, \big| 
        \cN_{ij}(\mu)
    \big|
    \;.
    \tagaligneq \label{eq:defn:local:dependency:measure}
\end{align*}
\\[-2.3em]
Note that $\cN_{ij}(\mu)$ and $N(\mu)$ are deterministic quantities. Let $\mu_{\Phi_1(X)|X}$ be the conditional probability measure for $\Phi_1(X) | X$, whose existence is assumed throughout this paper (see \cite{faden1985existence}), and $\mu_X$ be the probability measure for $X$. We assume the following:

\vspace{-.5em}

\begin{assumption}[Local dependency] \label{asst:local:dependence} $N_{\rm dep} \coloneqq \max\{ N(\mu_{\Phi_1(X) | X }) \,,\,  N(\mu_{X}) \}$ is $o((nd)^{\frac{1}{4}})$.
\end{assumption}

\vspace{-.5em}

Let $X'$ be an i.i.d.~copy of $X$. We seek to use Gaussian universality to approximate $\Phi_1(X)$, conditioning on $X$, by a Gaussian vector $Z' \sim \cN(0, I_{nd})$, and then approximate $Z'$ by $X'$. The approximations are valid under stability conditions on $f$, assumed here to be thrice-differentiable. For $\tau=1,2,3$ and $\nu \geq 1$, we measure the stability of $f$ along a set of suitable interpolation paths between the variables $\Phi_1(X)$, $Z'$ and $X'$ as
\vspace{-.25em}
\begin{align*}
    \varphi_{\tau;\nu} 
    \;\coloneqq\;
    (nd)^{\frac{\tau}{2}} 
    \,
    \sup_{i \leq n, \, j \leq d\,,\, \gamma \in \cC}
    \,
    \Big\|
    \,
    \big\| 
            \partial_{(i,j)}^\tau \, f\big( \gamma( \Phi_1(X), X',  Z' ) \big) 
    \big\|_{L_4 | X}
    \,
    \Big\|_{L_\nu}
    \;.
    \tagaligneq \label{eq:defn:varphi}
\end{align*}
$\partial^\tau_{(i,j)}$ is the $\tau$-th partial derivative with respect to the $j$-th coordinate of the $i$-th observation, and $\| \argdot \|_{L_4 | X} \coloneqq (\mean[ | \argdot |^4 | X] )^{\frac{1}{4}}$. $(nd)^{\frac{\tau}{2}}$ balances typical scaling factors obtained from differentiating the statistic $f$ (consider, for instance, the sample average $\frac{1}{\sqrt{nd}} \sum_{i \leq n, j\leq d} x_{ij}$). $\cC$ is a set of $(\R^{nd})^3 \rightarrow \R^{nd}$ functions that represent the interpolation paths, defined as
\vspace{-.5em}
\begin{align*}
    \cC
    \;\coloneqq\;
    \big\{ 
        \gamma_{s,\theta;i,j} 
        \,,\,
        \gamma'_{s,\theta;i,j} 
    \big\}_{s, \, \theta \in [0,1]\,;\, i \leq n, \, j \leq d}
    \;,
\end{align*}
\\[-2.5em]
where, for $\bx, \bx', \bz' \in \R^{nd}$, the $(a,b)$-th coordinate of each interpolation path is given as
\begin{align*}
    \big( \gamma_{s,\theta;i,j}(\bx,\bx',\bz') \big)_{ab}
    \;\coloneqq&\;
    \begin{cases}
        \theta \big( \sqrt{s} \, x_{ab} \, + \, \sqrt{1-s} \, z'_{ab} \big)\;,
        &\text{ if } \;
        (a,b) \in \cN_{ij}(\mu_{\Phi_1(X)|X})\;,
        \\
        \sqrt{s} \, x_{ab} \, + \, \sqrt{1-s} \, z'_{ab}\;,
        &\text{ if } \;
        (a,b) \not\in \cN_{ij}(\mu_{\Phi_1(X)|X})\;,
    \end{cases}
    \\
    \big( \gamma'_{s,\theta;i,j}(\bx,\bx',\bz') \big)_{ab}
    \;\coloneqq&\;
    \begin{cases}
        \theta \big( \sqrt{s} \, x'_{ab} \, + \, \sqrt{1-s} \, z'_{ab} \big)\;,
        &\text{ if } \;
        (a,b) \in \cN_{ij}(\mu_{X})\;,
        \\
        \sqrt{s} \, x'_{ab} \, + \, \sqrt{1-s} \, z'_{ab}\;,
        &\text{ if } \;
        (a,b) \not\in \cN_{ij}(\mu_{X})\;.
    \end{cases}
\end{align*}
See \Cref{appendix:universality:local:dependence} for more detailed interpretations. The stability terms can depend on $N_{\rm dep}$, which may grow with $n$; the next condition ensures that it does not grow too~fast.

\vspace{-.5em}

\begin{assumption}[Stability] \label{asst:stab} There exists some $\nu \geq 1$ such that,  for $\tau=1,2,3$,  
\vspace{-.25em}
\begin{align*}
   \varphi_{\tau;(7-\tau)\nu}
   \;=\; 
   o\Big(
        \min\Big\{ 
            \Big(\mfrac{nd}{N_{\rm dep}^4}\Big)^{1/(2\tau)}
            \,,\, 
            ( N_{\rm dep} \, \| \Delta_{\rm Var}(X) \|_{L_{3\nu/2}} )^{ - \frac{1}{\tau} }
        \Big\}
    \Big)
   \;.
\end{align*}
\end{assumption}

\vspace{-.3em}

We also require a moment boundedness condition:

\vspace{-.75em}

\begin{assumption}[Moment boundedness] \label{asst:moment:bounded} For $\nu$ given in \Cref{asst:stab}, we have
\vspace{-.5em}
    \begin{align*}
        \mmax_{i \leq n, j \leq d} 
        \,
        \max\big\{ 
            \big\| \, \| (\Phi_1(X))_{ij} \|_{L_4 | X} \,\big\|_{L_{6\nu}}
            \,,\,
            \| X_{ij} \|_{L_4}
        \big\}
        \,
        \;=\;
        O(1)\;.
    \end{align*}
\end{assumption}

\vspace{-.25em}

The next condition is stated for convenience and relaxed in the proof in \Cref{appendix:universality}.

\vspace{-.5em}

\begin{assumption}[Density smoothness] \label{asst:density:smooth} The random variable $f(Z')$ has a Lebesgue density that is uniformly bounded from above by some constant that does not depend on $n$ or $d$.
\end{assumption}

\vspace{-.5em}

Under \Cref{asst:standard,asst:zero:mean,asst:local:dependence,asst:stab,asst:moment:bounded,asst:density:smooth}, we can use Gaussian universality to replace $\Phi_1(X)|X$ and $X$ by the corresponding (conditional) Gaussians, and subsequently control $\Delta_{\rm inv}(X)$ through a variance difference $\Delta_{\rm Var}(X) = \| \Var[\Phi_1(X) | X] - \Var[X] \|_\infty$. We state the asymptotic result here, and refer readers to \Cref{appendix:universality} for a finite-sample bound.

\begin{theorem} \label{thm:universality} Under \Cref{asst:standard,asst:zero:mean,asst:local:dependence,asst:stab,asst:moment:bounded,asst:density:smooth} and the asymptotic \eqref{eq:asymp:regime}, with $\nu \geq 1$ given in \Cref{asst:stab}, if additionally $\| \Delta_{\rm Var}(X) \|_{L_{3\nu/2}} = o( N_{\rm dep}^{-1})$, then 
\vspace{-.5em}
\begin{align*}
    \| \Delta_{\rm inv}(X) \|_{L_\nu} \;\rightarrow\; 0\;. \tagaligneq \label{eq:Del:inv:vanish}
\end{align*}
\\[-2.5em]
If additionally \Cref{asst:iid:smooth} holds, for $B$ allowed to be fixed or growing with $n$, we have
\vspace{-2.2em}
\begin{align*}
    \sup_{r \in [0,1]}
    \Big| \P\Big(  \mfrac{R_\Phi(X)}{B+1}  < r  \Big)  
        \,-\,
        r
    \Big|
    \;\rightarrow\; 0\;.
    \tagaligneq \label{eq:coverage:error:vanish}
\end{align*}
\end{theorem}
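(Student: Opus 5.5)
The proof splits into a universality step, which gives \eqref{eq:Del:inv:vanish}, and a direct appeal to \Cref{thm:Del:Kol:approx:inv}, which gives \eqref{eq:coverage:error:vanish}. For the first part, let $Z' \sim \cN(0, I_{nd})$ be independent of everything else. Conditionally on $X$, also introduce a Gaussian $Z_X \sim \cN(0, \Var[\Phi_1(X)|X])$; it has the same conditional mean as $\Phi_1(X)$ by \Cref{asst:zero:mean,asst:standard}. The triangle inequality then gives
\begin{align*}
    \Delta_{\rm inv}(X) \;\leq\; d_{\rm Kol}\big( f(\Phi_1(X)) | X ,\, f(Z') \big) + d_{\rm Kol}\big( f(Z'),\, f(X') \big)\,,
\end{align*}
where $X'$ is an i.i.d.~copy of $X$. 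The second term is deterministic. I would control it by a Lindeberg/Slepian-type interpolation along the path $\sqrt{s}\,X' + \sqrt{1-s}\,Z'$, adapted to local dependence as in the paths $\gamma'_{s,\theta;i,j}$.

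For the first term, I would work with a smooth test function $h$, an approximation of $\ind_{\{\cdot \leq t\}}$ at bandwidth $\epsilon$, and write
\begin{align*}
    \mean[h(f(\Phi_1(X)))|X] - \mean[h(f(Z'))] \;=\; \int_0^1 \tfrac{d}{ds}\,\mean\big[h\big(f(\sqrt{s}\,\Phi_1(X) + \sqrt{1-s}\,Z')\big)\,\big|\,X\big]\, ds\,.
\end{align*}
Differentiating produces a sum over coordinates $(i,j)$ of terms of the form $\partial_{(i,j)}(h\circ f)$ times $\Phi_1(X)_{ij}/\sqrt{s} - Z'_{ij}/\sqrt{1-s}$. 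To handle these I would use the local dependence of $\mu_{\Phi_1(X)|X}$. Decouple the neighbourhood $\cN_{ij}$ by Taylor expanding in $\theta$ along $\gamma_{s,\theta;i,j}$, so that the coordinates outside $\cN_{ij}$ are independent of $\Phi_1(X)_{ij}$. The first-order terms cancel by mean matching. The second-order terms leave the covariance mismatch $\Var[\Phi_1(X)|X] - I$, which is bounded by $N_{\rm dep}\,\Delta_{\rm Var}(X)$ times $\varphi_{2}$-type derivative bounds. The third-order remainders are bounded by $N_{\rm dep}^2 (nd)\cdot(nd)^{-3/2}\varphi_3$ times the moments from \Cref{asst:moment:bounded}. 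Derivatives of $h$ contribute factors $\epsilon^{-\tau}$. Hölder's inequality, using the conditional $L_4$ norms inside and $L_{(7-\tau)\nu}$, $L_{6\nu}$, $L_{3\nu/2}$ outside, converts these conditional bounds into an $L_\nu$ bound. Finally, \Cref{asst:density:smooth} converts smooth-test error into Kolmogorov error at an additive cost of $O(\epsilon)$. Optimising over $\epsilon$, the rates in \Cref{asst:stab} together with $\|\Delta_{\rm Var}(X)\|_{L_{3\nu/2}} = o(N_{\rm dep}^{-1})$ and \Cref{asst:local:dependence} make every term $o(1)$. This gives $\|\Delta_{\rm inv}(X)\|_{L_\nu} \to 0$.

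For \eqref{eq:coverage:error:vanish}, apply \Cref{thm:Del:Kol:approx:inv} uniformly in $r$. Its bound does not depend on $r$, so for fixed $B$ the $\Delta_{\rm inv}$ term vanishes, and for growing $B$ the $B^{-1}$ and $\sqrt{\log B / B}$ terms vanish as well. There is one gap: for fixed $B$ the terms $\tfrac{1}{B-1}+\sqrt{3\log B/(B-1)}$ do not vanish. I would close it by re-examining the proof of \Cref{thm:Del:Kol:approx:inv}. Under exact invariance with smoothed tie-breaking, the i.i.d.~surrogate sequence gives an exactly uniform rank, so I expect those terms to come from a concentration step that can be replaced by a direct comparison with an i.i.d.~surrogate when $B$ is fixed. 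The comparison would take $(f(X), f(\Phi_b(X)))$ to be $B+1$ asymptotically i.i.d.~draws, conditionally on $X$, and then invoke \Cref{thm:ex:validity}.

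The main obstacle is the first step, namely the conditional Lindeberg argument under local dependence. In particular, the bookkeeping must match the moment exponents so that Hölder's inequality closes with exactly the stated $L_{(7-\tau)\nu}$ and $L_{3\nu/2}$ norms, and the smoothing parameter $\epsilon$ must be balanced against the derivative scales $\varphi_\tau$.
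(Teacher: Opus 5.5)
Your universality step is the paper's route. The paper proves \Cref{thm:universality:local:dependence} via \Cref{lem:universality:smooth}: the path $\sqrt{s}\,V+\sqrt{1-s}\,\xi$, $\theta$-decoupling of each dependency neighbourhood, smoothing by $h$, and anti-concentration of $f(Z')$. It then applies this with $V=\Phi_1(X)\,|\,X$ and $W=X$ to get \Cref{thm:universality:finite}, and closes with H\"older. Your auxiliary $Z_X$ is never needed, since the covariance mismatch enters through the second-order term, as you say.

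On \eqref{eq:coverage:error:vanish} you have caught something the paper's proof skips. It simply feeds \eqref{eq:Del:inv:vanish} into \Cref{thm:Del:Kol:approx:inv}, whose terms $\frac{1}{B-1}+\sqrt{3\log B/(B-1)}$ do not vanish for fixed $B$. Your diagnosis is right: those terms come from the concentration-based derivative bounds of \Cref{lem:derivative:cdf:V:bound}. The fix is short with the paper's own tools.

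\begin{itemize}
\item In Step 2 of the proof of \Cref{thm:Del:Kol:approx:inv:general}, take $\A=\sigma(X)$ and trivial $\Xobs$. Use the crude bounds $|\partial_p|,|\partial_q|\le B$ from \Cref{lem:derivative:cdf:V}, together with the pointwise bounds $|P_1-P_2|\le\Delta_{\rm inv}(X)$ and $|Q_1-Q_2|\le 2\Delta_{\rm inv}(X)$.
\item The mean value theorem then gives $\sup_r|\P(R_\Phi(X)/(B+1)<r)-r|\le 3B\,\|\Delta_{\rm inv}(X)\|_{L_1}$.
\item Take the minimum of this and the bound of \Cref{thm:Del:Kol:approx:inv}. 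Split according to whether $B_n\le\|\Delta_{\rm inv}(X)\|_{L_\nu}^{-1/2}$. This gives \eqref{eq:coverage:error:vanish} for every sequence $B_n\ge 2$.
\end{itemize}

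A qualitative ``asymptotically i.i.d.''\ statement would not suffice by itself. The rank is a discontinuous functional and the law of $f(X)$ moves with $n$. The comparison has to be quantitative, through the three-point summands. Equivalently, a hybrid total-variation argument gives $\Delta^r_{\rm Kol}\le 2B\,\mean[\Delta_{\rm inv}(X)]$, after which you can invoke \Cref{thm:ex:validity}.

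One bookkeeping caveat on the first part concerns the chain rule for $h\circ f$, with bandwidth $\delta$ (your $\epsilon$). The second-order term carries $\varphi_1^2/\delta^2$ as well as $\varphi_2/\delta$. The third-order term carries $\varphi_1^3/\delta^3$ and $\varphi_1\varphi_2/\delta^2$ as well as $\varphi_3/\delta$ (cf.\ \Cref{lem:universality:smooth}). Your sketch lists only $\varphi_2$ and $\varphi_3$.

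Once these terms are included, \Cref{asst:stab} as literally stated does not close the argument. Write $\tilde n=nd/N_{\rm dep}^4$ and $\epsilon_{\rm inv}=N_{\rm dep}\|\Delta_{\rm Var}(X)\|_{L_{3\nu/2}}$. The assumption only gives $\varphi_{1;6\nu}=o(\min\{\tilde n^{1/2},\epsilon_{\rm inv}^{-1}\})$ and $\varphi_{2;5\nu}=o(\tilde n^{1/4})$. That does not force $\tilde n^{-1/2}\varphi_{1;6\nu}^3$, $\tilde n^{-1/2}\varphi_{1;5\nu}\varphi_{2;5\nu}$ or $\epsilon_{\rm inv}\varphi_{1;6\nu}^2$ to vanish. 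A sufficient condition is $\varphi_{\tau;(7-\tau)\nu}=o(\min\{\tilde n^{\tau/6},\epsilon_{\rm inv}^{-\tau/2}\})$.

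The paper's proof asserts the same closing step, so this is a defect in how the stability assumption is stated, not in your strategy. Still, your sentence ``every term is $o(1)$'' cannot be verified as written.
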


 The condition on $\| \Delta_{\rm Var}(X) \|_{L_{3\nu/2}}$ is satisfied for $\| \Delta_{\rm Var}(X) \|_{L_{3\nu/2}}  = o( (nd)^{-\frac{1}{4} })$ under \Cref{asst:local:dependence}. \eqref{eq:Del:inv:vanish} is proved by an interpolation method standard in universality results \citep{montanari2022universality,mallory2025universality}, and \eqref{eq:coverage:error:vanish} follows from \eqref{eq:Del:inv:vanish}. The conditions of \Cref{thm:universality} can be difficult to parse for readers unfamiliar with the universality literature. We make several comments:
\\
(a) \;\; \Cref{asst:local:dependence,asst:stab,asst:moment:bounded} (local dependency, stability and moment boundedness) are standard in the universality literature; see prior works referenced in \Cref{sec:intro} and \Cref{appendix:universality:local:dependence} for a detailed discussion of these conditions. We verify local dependency for different transformations $\Phi_1$ in \Cref{appendix:verify:local}. We also discuss the applicability of stability for examples of $f$ in \Cref{appendix:verify:stability}, such as estimators that can be approximated by a delta method, U-statistics in wild bootstrap and conformal prediction score functions;
\\
(b) \;\; \Cref{asst:standard,asst:density:smooth} appear to be restrictive but can be relaxed, as discussed earlier;
\\
(c) \;\; The strongest condition is the matching mean condition in \Cref{asst:zero:mean}. In \Cref{remark:universality:local:dependence:extension} in \Cref{appendix:universality:local:dependence}, we show that a mean mismatch can be accommodated at the expense of more notation and a first-order stability condition. We also show that, while the $\| \argdot \|_\infty$ norm on $\Delta_{\rm Var}(X)$ can introduce additional dependence on $n$ and $d$, this can be replaced by a quantity that measures the average coordinate of the matrix $\Delta_{\rm Var}(X)$;
\\
(d)\;\; While \Cref{thm:universality} requires conditional moment matching, we discuss in \Cref{appendix:theory:examples} how these can be reduced to marginal moment matching in specific examples.

\Cref{thm:universality} says that, under Gaussian universality, coverage can be established if the transformation $\Phi_1$ ``reproduce the first two moments of the original data". For bootstrap of the 1d empirical average (\Cref{example:bootstrap}),  $\mean[\Phi_1(X)|X] = \mean[X] = 0$, whereas $\Var[\Phi_1(X)|X]$ is a diagonal $\R^{n \times n}$ matrix with equal entries on the diagonal,
\vspace{-.25em}
\begin{align*}
    (\Var[\Phi_1(X)|X])_{11} \;=\; \mfrac{1}{n} \msum_{i \leq n} \Big( X_i - \mfrac{1}{n} \msum_{j \leq n} X_j \Big)^2 
    \;,
\end{align*}
\\[-2.25em]
which is a consistent estimator of $\Var[X_1]$.  In particular, 
\vspace{-.25em}
\begin{align*}
    \|  \Delta_{\rm Var}(X) \|_{L_{3\nu/2}}
    \;=\;
    \big\| \, |  (\Var[\Phi_1(X)|X])_{11}  - \Var[X_1] | \, \big\|_{L_{3\nu/2}}
    \;=\;
    O(n^{-1/2})
    \;\rightarrow\; 0
    \;,
\end{align*}
\\[-2.25em]
so \Cref{thm:universality} holds. Indeed, bootstrap consistency for the empirical average relies on the CLT, which is a special case of Gaussian universality. For bootstrap on a general function $f$, \cite{austern2020bootstrap} use Gaussian universality to obtain finite-sample bounds for a quantity similar to $\Delta_{\rm inv}(X)$. They observe that, without knowledge of the true mean $\mean[X]$, the bootstrapped mean and the data mean are typically mismatched, and the error gets amplified when $f$ is unstable (see their Example 3.4 and Section 3.2), making a first-order stability condition necessary for bootstrap consistency. We make the same observation in \Cref{remark:universality:local:dependence:extension} in \Cref{appendix:universality:local:dependence}. Thus, an alternative way to view \Cref{thm:universality} is that it extends the coverage guarantees for bootstrap to more general DAB confidence intervals, constructed with other approximately invariant transformations. 

\vspace{-1em}

\subsection{Some useful extensions of \Cref{thm:Del:Kol:approx:inv,thm:universality}} \label{sec:extension:universality}

\vspace{-.5em}

\paragraph{Simplified variance-matching criterion.} \Cref{thm:universality} aims to cover a large class of statistics and, at the cost of its generality, requires that $\Var[\Phi_1(X)|X]$ needs to (asymptotically) match $\Var[X]$ uniformly across all observations and coordinates. This can be relaxed for statistics with more structures. For instance, consider a plug-in estimator
\vspace{-.25em}
\begin{align*}
    f(x_1, \ldots, x_n) \;=\; g\Big( \mfrac{1}{n} \msum_{i \leq n} x_i \Big)
    \tagaligneq \label{eq:plug:in}
\end{align*}
\\[-2.25em]
where $g: \R^d \rightarrow \R$ is a smooth function. Provided that $\partial g(\mean[X_1]) \neq 0$ and under regularity conditions, for $d$ fixed, the CLT and the delta method imply that 
\vspace{-.25em}
\begin{align*}
   \mfrac{\sqrt{n}}{ \sqrt{ (\partial g(\mean[X_1]))^\top \Var[X_1] \, \partial g(\mean[X_1])} } \, ( f(X) - g(\mean[X_1])) \;\xrightarrow{d}\; \cN(0,1)
   \qquad 
   \text{ as } n \rightarrow \infty
    \;.
\end{align*}
\\[-2.25em]
For i.i.d.~$\R^d \rightarrow \R^d$ transformations $\phi_{bi}$'s, we consider
\vspace{-.5em}
\begin{align*}
    &\;\hspace{9.5em}\Phi_1(x_1, \ldots, x_n) \;=\; (\phi_{11}(x_1), \ldots, \phi_{1n}(x_n))
    \;,
    \tagaligneq \label{eq:iid:aug}
    \\
    &\;\mu_\phi(X) \;\coloneqq\; \mfrac{1}{n} \msum_{i \leq n} \mean[\phi_{11}(X_i) | X_i] 
    \qquad\text{ and }\qquad
    \Sigma_\phi(X) \;\coloneqq\; \mfrac{1}{n} \msum_{i \leq n} \Var[\phi_{11}(X_i) | X_i] 
    \;.
\end{align*}
\\[-2.5em]
By a similar argument with the central limit theorem for independent but non-identically distributed random variables, and under additional regularity conditions on $\phi_{bi}$'s, the following conditional Kolmogorov distance converges to zero almost surely:
\begin{align*}
    \sup_{t \in \R}
    \Big| 
        \P\Big(
            \mfrac{\sqrt{n} \, ( f(\Phi_1(X)) - g(\mu_\phi(X)) ) }{((\partial g(\mu_\phi(X)))^\top \Sigma_\phi(X) \, \partial g(\mu_\phi(X)) )^{1/2}}  
            \leq t
        \,\Big|\, X
        \Big)
        -
        \P_{Z \sim \cN(0,1)}(Z \leq t)
    \Big|.
\end{align*}
Then, instead of needing to match $(\mean[\Phi_1(X)|X],\Var[\Phi_1(X)|X])$ with $(\mean[X], \Var[X])$ uniformly across coordinates, it suffices to match the empirical averages 
\vspace{-.5em}
\begin{align*}
    (\mu_\phi(X), \Sigma_\phi(X))
    \;\approx\;
    (\mean[X_1], \Var[X_1])
    \;.
    \tagaligneq \label{eq:average:moment:match}
\end{align*}
\\[-2.5em]
This also works for a growing $d$ by the non-classical delta method in \cite{huang2024gaussian}.

\vspace{-1em}

\paragraph{Composing approximately invariant transformations with exactly invariant transformations}

In practice, one may wish to incorporate approximately invariant transformations $(\tilde \Phi_b)_{b \leq B}$ into a DAB method whose original transformations $(\cT_b)_{b \leq B}$ already enjoy exchangeability. Naively applying \Cref{thm:Del:Kol:approx:inv}, the error $\Delta_{\rm inv}(X)$ would be measured with respect to the composite transformation $\cT_1 \circ \tilde \Phi_1$. By adapting the proof of \Cref{thm:Del:Kol:approx:inv}, we show that only the invariance error with respect to $\tilde \Phi_1$ is required:

\vspace{-.5em}

\begin{lemma} \label{lem:compose:exact:with:approx} Suppose $\Phi_1 = \cT_1 \circ \tilde \Phi_1$, where $\cT_1$ and $\tilde \Phi_1$ are independent random elements of $\T$ that are independent of $X$. Assume that $\cT_1(X)$ has the same marginal distribution as $X$. Then, denoting the random function $f_\cT \coloneqq f \circ \cT_1$, we have that almost surely
\vspace{-.25em}
\begin{align*}
    \Delta_{\rm inv}(X)
    \;\leq\;
    \mean\Big[
    \msup_{t \in \R}
    \big|  
        \P( f_\cT ( \tilde \Phi_1(X)) \leq t \,|\, X  , f_\cT ) 
        \,-\, 
        \P( f_\cT(X) \leq t \,|\, f_\cT ) 
    \big|
    \,\Big|\, X
    \Big]
    \;.
\end{align*} 
\\[-2.25em]
If instead $\Phi_1 = \tilde \Phi_1 \circ \cT_1$, for $\nu \geq 1$, almost surely
\vspace{-.25em}
\begin{align*}
    \| \Delta_{\rm inv}(X) \|_{L_\nu}
    \;\leq\;
   \big\|
   \, \msup_{t \in \R}
    \big|  
        \P( f(\tilde \Phi_1(X)) \leq t \,|\, X ) 
        \,-\, 
        \P( f(X) \leq t ) 
    \big|
    \; \big\|_{L_\nu}
    \;.
\end{align*}
\end{lemma}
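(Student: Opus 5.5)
For the first claim ($\Phi_1 = \cT_1 \circ \tilde\Phi_1$), I would condition on $\cT_1$ and use independence. Since $\cT_1$, $\tilde\Phi_1$ and $X$ are mutually independent, the tower property gives, for every $t$,
\begin{align*}
    \P( f(\Phi_1(X)) \leq t \,|\, X ) \;=\; \mean\big[ \P( f_\cT(\tilde\Phi_1(X)) \leq t \,|\, X, f_\cT ) \,\big|\, X \big]\;.
\end{align*}
Here $f_\cT$ is a function of $\cT_1$ alone, so it is independent of $X$.

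Next I would rewrite the reference term using the marginal invariance $\cT_1(X) \overset{d}{=} X$:
\begin{align*}
    \P(f(X) \leq t) \;=\; \P(f(\cT_1(X)) \leq t) \;=\; \mean\big[ \P( f_\cT(X) \leq t \,|\, f_\cT ) \big]\;.
\end{align*}
Because $f_\cT$ is independent of $X$, the inner quantity $\P(f_\cT(X) \leq t \,|\, f_\cT)$ is a function of $\cT_1$ only. Its unconditional expectation therefore equals its conditional expectation given $X$.

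With both terms written as conditional expectations given $X$ of functions of $(X,\cT_1)$, I would subtract them and apply Jensen: $|\mean[\cdot\,|\,X]| \leq \mean[|\cdot|\,|\,X]$. Bounding the pointwise difference by its supremum over $t$ inside the expectation, then taking $\sup_t$ outside, gives the first display. The only technical care is regularity, namely using regular conditional distributions (assumed in the paper) so that the supremum is measurable. That can be handled by restricting to rational $t$ and using right-continuity of c.d.f.s.

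For the second claim ($\Phi_1 = \tilde\Phi_1 \circ \cT_1$), I would set $Y := \cT_1(X)$, which satisfies $Y \overset{d}{=} X$. Conditioning on $(X,\cT_1)$ and using that $\tilde\Phi_1$ is independent of them gives
\begin{align*}
    \P(f(\tilde\Phi_1(Y)) \leq t \,|\, X, \cT_1) \;=\; h(Y,t), \qquad h(x,t) \coloneqq \P(f(\tilde\Phi_1(x)) \leq t)\;.
\end{align*}
Hence
\begin{align*}
    \Delta_{\rm inv}(X) \;=\; \sup_t \big| \mean[h(Y,t) - \P(f(X)\leq t) \,|\, X] \big| \;\leq\; \mean[ G(Y) \,|\, X ]\;,
\end{align*}
where $G(y) \coloneqq \sup_t |h(y,t) - \P(f(X)\leq t)|$.

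To finish, conditional Jensen for the convex map $|\cdot|^\nu$ gives $\mean[\Delta_{\rm inv}(X)^\nu] \leq \mean[G(Y)^\nu]$. Since $Y \overset{d}{=} X$, this equals $\mean[G(X)^\nu]$, and $G(X)$ is exactly the quantity inside the right-hand norm (again using the independence of $\tilde\Phi_1$ and $X$ to identify $h(X,t)$ with the conditional probability). The only subtle point here, and the main one overall, is the conditional-independence bookkeeping. Specifically, $\tilde\Phi_1$ must be independent of $(X,\cT_1)$ jointly, which follows from mutual independence, so that $h$ can be evaluated at the random point $Y$.
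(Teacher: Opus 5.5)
Your proposal is correct and follows essentially the same route as the paper. In the first case you rewrite $\P(f(X)\le t)$ as $\P(f_\cT(X)\le t)$ by invariance, couple the two terms under a common conditioning on $f_\cT$, and apply Jensen. In the second case you condition on $(X,\cT_1)$, move the supremum inside via Jensen, take the $L_\nu$ norm, and replace $\cT_1(X)$ by $X$ using $\cT_1(X)\overset{d}{=}X$. Your remarks on regular conditional distributions and on evaluating $h$ at the random point $\cT_1(X)$ make explicit measure-theoretic details that the paper leaves implicit.
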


\vspace{-.8em}

\vspace{-1.5em}

\section{New DAB methods by incorporating approximately invariant transformations} \label{sec:new:DAB:methods}

\vspace{-.5em}

For simplicity, let $X=(X_i)_{i \leq n}$ be i.i.d.~$\R^d$ random vectors and $(\Phi_b)_{b \leq B}$ be i.i.d.~drawn unless otherwise specified. The coverage results in \Cref{sec:validity} enable us to build DAB methods  from approximately invariant transformations. In this section, we empirically investigate observation-wise data augmentations and bootstrap (\Cref{sec:pure:aug:bootstrap}), wild bootstrap (\Cref{sec:wild:bootstrap:add}) and conformal prediction (\Cref{sec:conformal}). In particular, for known DAB methods with transformations  $(\Phi^{\rm old}_b)_{b \leq B}$ --- i.e.~bootstrap, wild bootstrap and conformal predictions --- we introduce new transformations $(\Phi^{\rm new}_b)_{b \leq B}$ by taking the composition 
\vspace{-.5em}
\begin{align*}
    \Phi_b \;=\; \Phi^{\rm old}_b \circ \Phi^{\rm new}_b    
    \;.
\end{align*}
\\[-2.5em]
For each example, we derive the invariance conditions required for coverage in \Cref{appendix:theory:examples}. Broadly speaking, for those in \Cref{sec:pure:aug:bootstrap,sec:wild:bootstrap:add}, Gaussian universality applies and we only need conditional moment matching, which can be relaxed to marginal moment matching for bootstrap. For the ones in \Cref{sec:conformal}, we show that it suffices to have \emph{marginal} c.d.f.~matching between $f(\Phi_b(X))$ and $f(X)$, which is a relaxation of the requirement in \Cref{sec:validity:ex:to:inv}.

\vspace{-1em}

\subsection{Observation-wise transformations and variants of bootstrap} \label{sec:pure:aug:bootstrap}

\vspace{-.5em}

Let $\phi_{bi}$'s be i.i.d.~$\R^d \rightarrow \R^d$ transformations. We consider observation-wise transforms
\vspace{-.25em}
\begin{align*}
    \Phi^{\rm new}_b(x_1,\ldots,x_n) \coloneqq (\phi_{b1}(x_1), \ldots, \phi_{bn}(x_n) )\;.
    \tagaligneq \label{eq:DA}
\end{align*}
\\[-2.25em]
This is exactly \emph{data augmentations} in the machine learning literature \citep{shorten2019survey,shorten2021text,chen2020group,huang2022data}. We consider two  DAB confidence intervals (CIs): (i) DAB with $\Phi^{\rm new}_b$, i.e.~purely observation-wise data augmentations; (ii) DAB with $\Phi^{\rm bootstrap}_b \circ \Phi^{\rm new}_b$, where $\Phi^{\rm bootstrap}_b$ are the bootstrap transformations defined in \Cref{example:bootstrap}.  Two concrete examples are considered:

\noindent
\textbf{Orthogonal transformations for mean-zero and isotropic random vectors.} Consider the toy setting where $X_1, \ldots, X_n$ are i.i.d.~$\R^d$ random vectors with $d$ fixed, and 
\vspace{-.5em}
\begin{align*}
    \mean[X_1] = 0\;,
    \;\;
    \Var[X_1] = I_d
    \;\;\text{ and }\;\;
    f(X) =\mfrac{1}{\sqrt{n}} \msum_{i \leq n}  X_i^\top \bone_d, 
    \;\;\; 
    \text{ where } \bone_d = (1,\ldots, 1)^\top.
\end{align*}
\\[-2.5em]
Let $\O_d$ be the group of $\R^{d \times d}$ orthogonal matrices, and draw the augmentations in \eqref{eq:DA} as
\vspace{-.5em}
\begin{align*}
    \phi_{11}, \ldots, \phi_{Bn} \;\overset{\rm i.i.d.}{\sim}\; \textrm{Uniform}(\O_d)\;.
    \tagaligneq \label{eq:ortho}
\end{align*}
\\[-2.5em]
Neither exchangeability nor group invariance is guaranteed in this case, since $X_i$'s may not be Gaussian. Nevertheless, asymptotic validity holds under Gaussian universality. 

\noindent
\textbf{Statistics on parallel sampling in AI-for-science.} In modern multi-electron simulations, one models the joint distribution of electron positions in a physical system by some large neural network $p_{\hat \theta}$ \citep{carleo2017solving,hermann2020deep,pfau2020ab}. The true distribution $p_*$ typically satisfies invariance with respect to a group $\G$,
\vspace{-.5em}
\begin{align*}
    X^* \overset{d}{=} g(X^*) 
    \quad 
    \text{ for all } g \in \G 
    \text{ and } X^* \sim p_{\theta^*}\;,
    \tagaligneq \label{eq:AISci:inv}
\\[-2.5em]
\end{align*}
but the trained network may not \citep{huang2025diagonal}. To assess the quality of $p_{\hat \theta}$, one runs $n$ parallel, independent MCMC chains targeting $p_{\hat \theta}$. At step $t$, this yields $n$ i.i.d.~samples $X^{(t)} \coloneqq (X^{(t)}_1, \ldots, X^{(t)}_n)$, where $X^{(t)}_i \sim q_t$ and $q_t$ approximates $p_{\hat \theta}$, and one estimates
\vspace{-.5em}
\begin{align*}
    \mean_{X \sim p_{\hat \theta}}[ h(X) ]
    \qquad 
    \text{ by }
    \qquad 
    f(X^{(t)}) \coloneqq
    \mfrac{1}{n} \msum_{i \leq n} h\big( X^{(t)}_i \big)
    \tagaligneq \label{eq:AISci:est}
\\[-2.5em]
\end{align*}
where $h: \R^d \rightarrow \R$ is a function of interest. A confidence interval for \eqref{eq:AISci:est} helps to understand whether sufficiently many chains have been used, and is typically quantified by Gaussian and bootstrap CIs. In view of the invariance \eqref{eq:AISci:inv}, a natural idea is to incorporate transformations from $\G$. Specifically, we draw the augmentations in \eqref{eq:DA} as
\vspace{-.5em}
\begin{align*}
    \phi_{11}, \ldots, \phi_{Bn} \;\overset{\rm i.i.d.}{\sim}\; \textrm{Uniform}(\G)\;.
    \\[-2.75em]
\end{align*}
Note again that invariance is not guaranteed at a finite step $t$.

For both setups, there are now four candidate CIs: the Gaussian CI, the DAB CI with $\Phi^{\rm new}_b$, the bootstrap CI with $\Phi^{\rm bootstrap}$ and the DAB CI with $\Phi^{\rm bootstrap} \circ \Phi^{\rm new}_b$. \Cref{fig:bootstrap:simulate} reports simulations for $X_i \sim \cN(0,I_2)$, where orthogonal transformations $\O_2$ are used for $\Phi^{\rm new}_b$. We find that (i) all CIs have valid coverage and similar sizes at $n=30$, when the CLT approximates an i.i.d.~average of $n$ random variables well; (ii) at a smaller sample size $n=5$, $\O_2$ transformations improve coverage. The same empirical findings hold for non-Gaussian random vectors, in which case only approximate invariance holds; see \Cref{appendix:experiments:pure:aug:bootstrap}. \Cref{fig:bootstrap:ferminet:nk} confirms the findings for samples from AI-for-science neural nets and with a suitably chosen rotational group $\G$; see \Cref{appendix:experiments:pure:aug:bootstrap} for the full setup.

\begin{figure}[t]
  \centering
  \begin{tikzpicture}
        \coordinate (gaussian) at (0,0);
        \begin{scope}[shift={(gaussian)}]
            \node[inner sep=0pt] at (0,0)
                {\includegraphics[width=\textwidth]{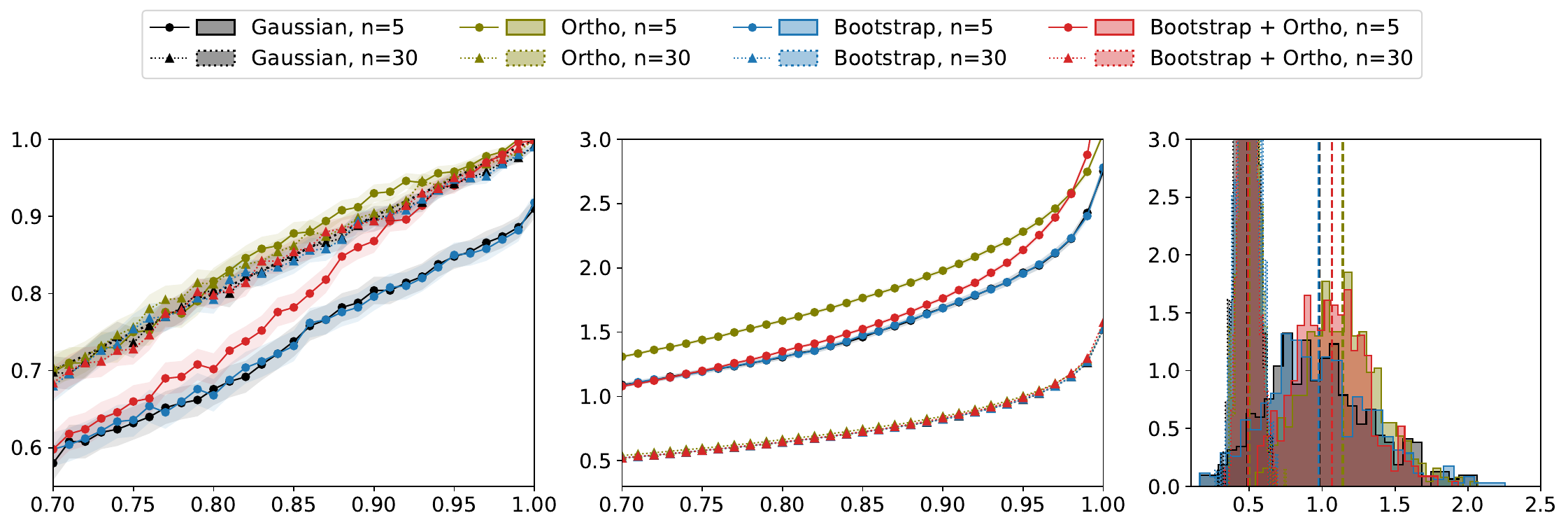}};
            
            \node[inner sep=0pt] at (-4.6,1.45){\scriptsize \textbf{(i) Empirical coverage}};
            \node[inner sep=0pt] at (0.8,1.45){\scriptsize \textbf{(ii) Confidence interval (CI) length}};
            \node[inner sep=0pt] at (5.5,1.45){\scriptsize \textbf{(iii) Distribution of CI lengths}};

            \node[inner sep=0pt] at (-4.6,-2.6){\scriptsize target coverage $1-\alpha$};
            \node[inner sep=0pt] at (0.8,-2.6){\scriptsize target coverage $1-\alpha$};
            \node[inner sep=0pt] at (5.6,-2.6){\scriptsize CI length};
        \end{scope}
  \end{tikzpicture} 
  \caption{Gaussian, bootstrap and DAB CIs for averages of 2d Gaussians over 500 random trials. In (i), the dashed line $y=x$ indicates the desired coverage level.  See \Cref{fig:bootstrap:simulate:Gaussian:nk} for plots with varying $n$ and $B$.
  }
  \vspace{-.5em}
  \label{fig:bootstrap:simulate} 
\end{figure}
\begin{figure}[t]
  \centering
  \begin{tikzpicture}

        \node[inner sep=0pt] at (0,0)
            {\includegraphics[width=.7\textwidth]{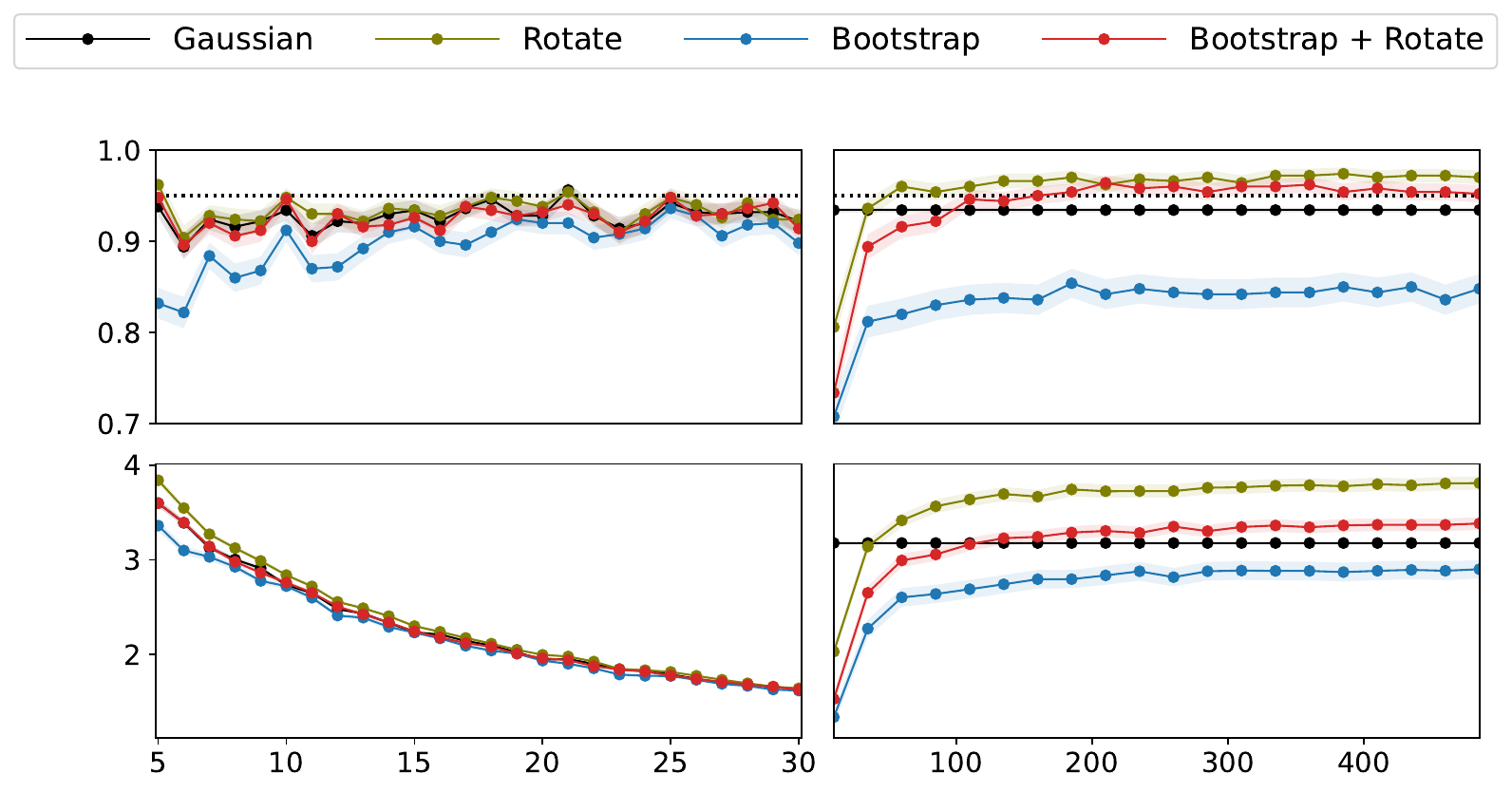}};
        
        \node[inner sep=0pt] at (-1.86,1.9){\scriptsize \textbf{Varying $n$, \;$B=100$, \, $1-\alpha=95\%$}};

        \node[inner sep=0pt] at (2.85,1.9){\scriptsize \textbf{Varying $B$, \;$n=5$, \, $1-\alpha=95\%$}};

        \node[inner sep=0pt] at (-1.85,-2.8){\scriptsize $n$};

        \node[inner sep=0pt] at (2.87,-2.8){\scriptsize $B$};
        
        \node[inner sep=0pt,align=center] at (-7.25,2.4){\scriptsize \textbf{Marginal density of 1 electron}
        \\[-.5em]
        \scriptsize \textbf{in 2d plane}};

        \node at (-7.25,-0.33){\includegraphics[width=.32\textwidth]{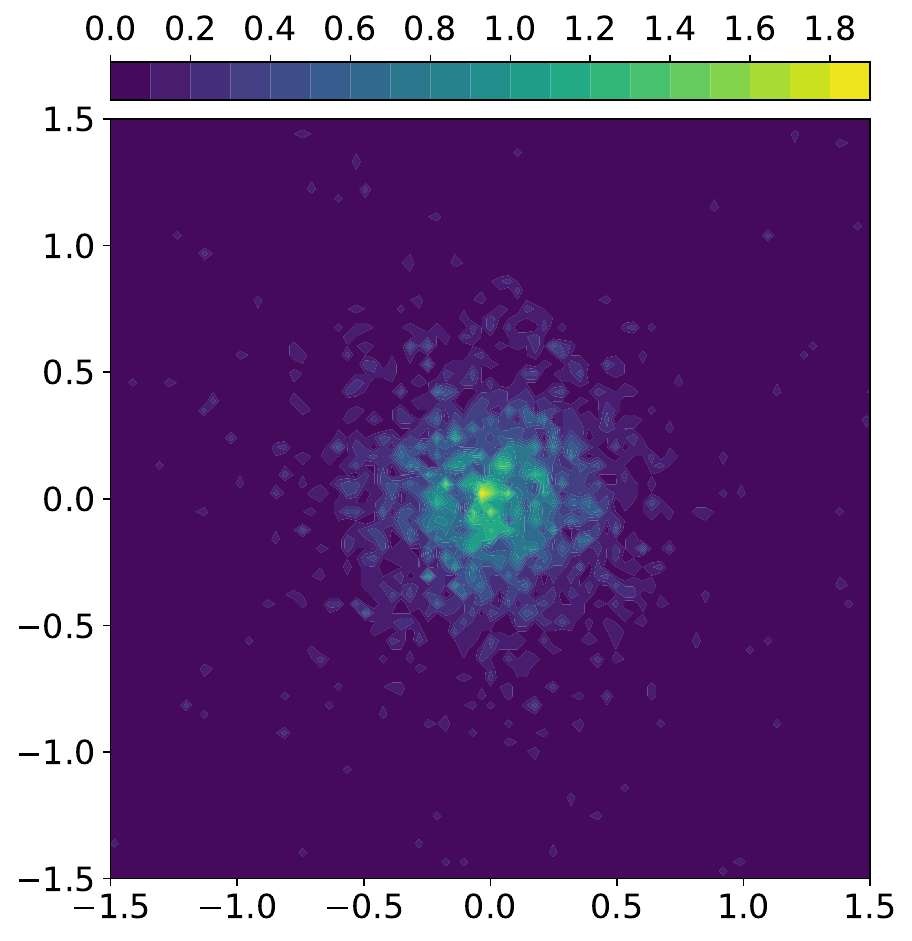}};
        \node[inner sep=0pt,rotate=90] at (-9.6,-0.48){\scriptsize $y$};
        \node[inner sep=0pt] at (-7.05,-2.88){\scriptsize $x$};

        \node[inner sep=0pt,rotate=90] at (-4.72,0.55){\scriptsize coverage};

        \node[inner sep=0pt,rotate=90] at (-4.65,-1.5){\scriptsize CI length};
  \end{tikzpicture} 
  \\[-.5em]
  \caption{CIs for $3$-electron configuration samples from a FermiNet wavefunction trained for the Lithium atom, under varying $n$ (number of Markov chains) and $B$. \emph{Left.} Visualisation of rotational symmetry through the marginal density for $1$-electron in the 2d plane. \emph{Right.} Empirical coverage and length of the CIs for the $x$-component of the electron dipole moment of the samples.
  The error bars are over 500 random trials, with each trial drawing a fresh batch of MCMC samples. See \Cref{fig:bootstrap:ferminet} for plots with varying $\alpha$. 
  } 
  \label{fig:bootstrap:ferminet:nk} 
  \vspace{-1em}
\end{figure}

\vspace{-.5em}

\subsection{Wild bootstrap with additional transformations for degenerate U-statistics} \label{sec:wild:bootstrap:add}

Given i.i.d.~samples $\{Y_i\}_{i=1}^n$ from a distribution $P$ in $\R^{d'}$ and i.i.d.~samples $\{Z_i\}_{i=1}^n$ from a distribution $Q$, we consider testing $H_0: P = Q$ with Maximum Mean Discrepancy (MMD) \citep{gretton2012kernel}. We follow \cite{steinwart2012mercer} to define the kernel function $\kappa(y, y') \coloneqq \langle \varphi(y), \varphi(y') \rangle_{\cH}$ for feature map $\varphi: \R^{d'} \rightarrow \cH$ and a Hilbert space $\cH$. The MMD is defined as
$
    D(Q,P)
    \coloneqq \mean_{Y, Y' \sim P}[\kappa(Y,Y')] - 2 \mean_{Y \sim P, Z \sim Q}[\kappa(Y,Z)] + \mean_{Z, Z' \sim Q}[\kappa(Z,Z')] 
$,
and a standard unbiased estimator is the degree-two U-statistic 
\vspace{-.5em}
\begin{align*}
    D_n
    \;\coloneqq\; \mfrac{1}{n(n-1)} \msum_{1 \leq i \neq j \leq n} 
    \,
    u(Y_i, Y_j, Z_i, Z_j)
    \;,
\end{align*}
\\[-2.3em]
where $u(y,y',z,z') \coloneqq \kappa(y, y') + \kappa(z, z') - \kappa(y, z') - \kappa(y', z)$. Under $H_0$, $D(Q,P)=0$, and $D_n$ is degenerate and asymptotically distributed as an infinite sum of weighted chi-squares, which is intractable to simulate. Wild bootstrap, as illustrated in \Cref{example:marginal:wild:bootstrap}, provides a finite-sample-valid CI for $D_n$ by exploiting sign-flipping symmetry. 

Suppose we know that $P$ is approximately invariant under some random $\R^{d'} \rightarrow \R^{d'}$ transformation $\phi$. Let $(\phi_{ri})_{ r \in \{1,2\}, i \leq n}$ be i.i.d.~copies of $\phi$ and $\epsilon_i$'s be i.i.d.~Rademacher random variables in \eqref{eq:wild:bootstrap:rademacher}. We propose DAB with the transformed statistic
\vspace{-.5em}
\begin{align*}
    f(\Phi_1(X))
    \;=\;
    \mfrac{1}{n(n-1)} \msum_{1 \leq i \neq j \leq n} 
    \epsilon_i \epsilon_j  
    \,
    u\big( 
        \,
        \phi_{1i}(Y_i) 
        ,
        \phi_{1j}(Y_j)
        ,
        \phi_{2i}(Z_i)
        ,
        \phi_{2j}(Z_j)
    \big)
    \;.
    \tagaligneq \label{eq:WB:main:text}
\end{align*}
\\[-2.25em]
\Cref{appendix:condition:wild:bootstrap} shows that for valid coverage under $H_0$, the approximate invariance condition can be formulated via conditional moment matching, but in the space of the feature map $\varphi$ rather than in the original data space. While this conditional moment matching condition is unverifiable for kernels beyond simple linear kernels such as $\kappa(y,y')=y^\top y'$, we empirically compare the DAB method \eqref{eq:WB:main:text} against wild bootstrap under four settings, all with the Gaussian RBF kernel: (a) 2d simulated data under mean shift (\Cref{fig:wb:simulate:rademacher}); (b) Noiseless v.s.~noisy MNIST images (\Cref{fig:wb:mnist:alt}); (c) odd v.s.~even-numbered classes in CIFAR-10 images (\Cref{appendix:experiments:wb}); (d) physical signal v.s.~noise in a HIGGS boson dataset (\Cref{appendix:experiments:wb}). Our DAB variant visibly improves test power in (a), (b) and (c), and has a similar test power as vanilla wild bootstrap in (d).

\begin{figure}[t]
  \centering
  \begin{tikzpicture}
        \coordinate (rademacher) at (0,0);
        \begin{scope}[shift={(rademacher)}]
            \node[inner sep=0pt] at (-3.1,0)
                {\includegraphics[width=.69\textwidth]{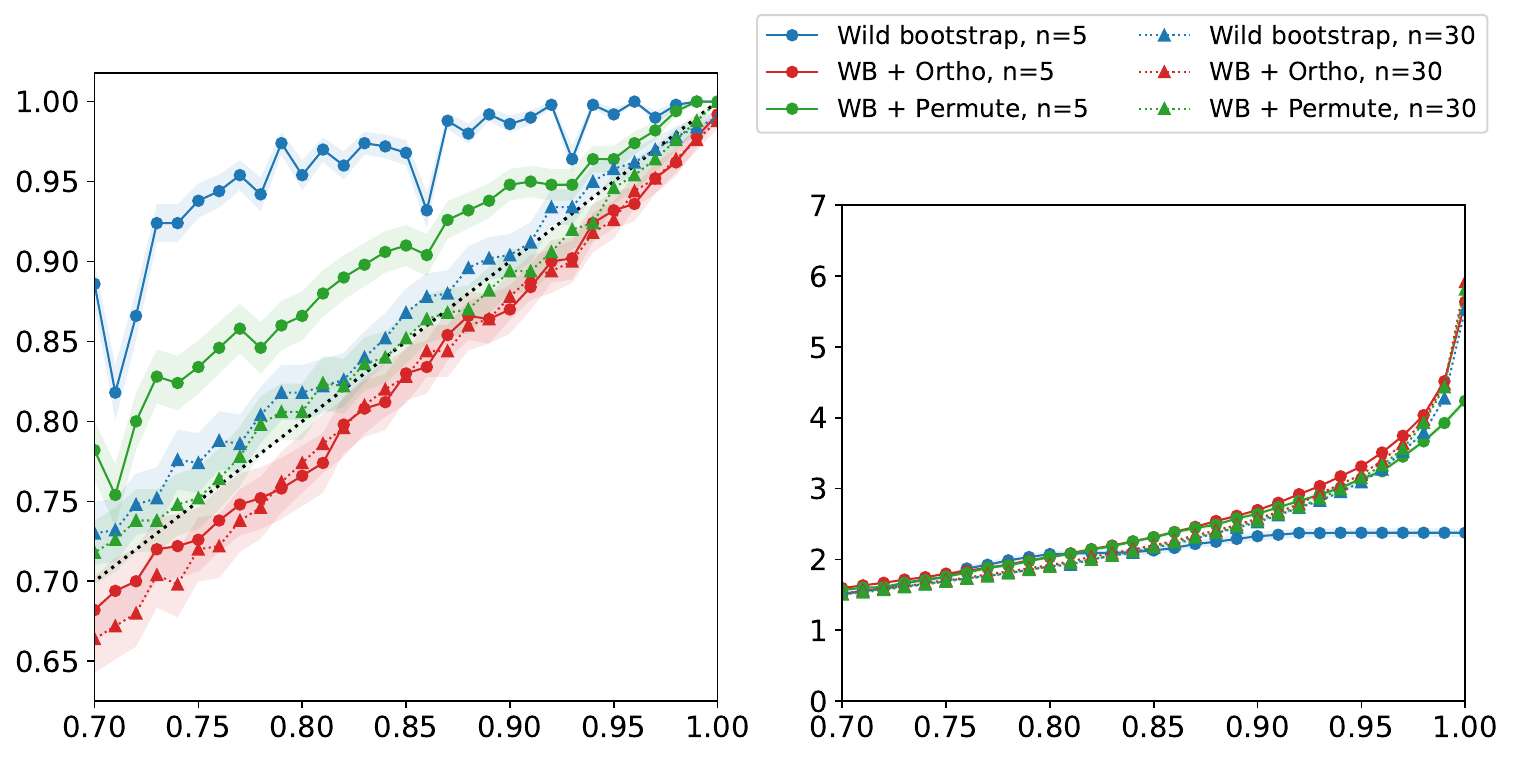}};

            \node[inner sep=0pt] at (4.6,1.25)
                {\includegraphics[width=.32\textwidth]{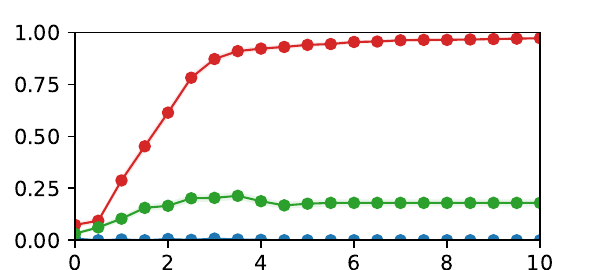}};

            \node[inner sep=0pt] at (4.6,-1.4)
                {\includegraphics[width=.32\textwidth]{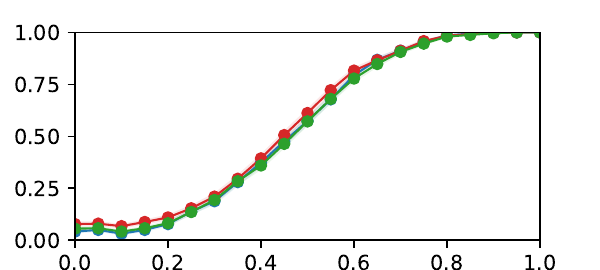}};
            
            \node[inner sep=0pt] at (-5.6,2.3){\scriptsize \textbf{(i) Empirical coverage}};
            \node[inner sep=0pt] at (-0.45,1.4){\scriptsize \textbf{(ii) Confidence interval (CI) length}};
            \node[inner sep=0pt] at (4.7,2.3){\scriptsize \textbf{(iii) Test power, $n=5$}};
            \node[inner sep=0pt] at (4.7,0){\scriptsize mean shift};
            \node[inner sep=0pt] at (4.7,-0.4){\scriptsize \textbf{(iv) Test power, $n=30$}};
            \node[inner sep=0pt] at (4.7,-2.7){\scriptsize mean shift};
 
            \node[inner sep=0pt] at (-5.6,-2.7){\scriptsize target coverage $1-\alpha$};
            \node[inner sep=0pt] at (-0.4,-2.7){\scriptsize target coverage $1-\alpha$};
        \end{scope}
  \end{tikzpicture} 
  \\[-.2em]
  \caption{CIs for MMD statistics with RBF kernel on 2d Rademachers over 500 random trials. \textbf{(i)} Empirical v.s. target coverage. \textbf{(ii)} CI length~v.s. target coverage. \textbf{(iii) and (iv)} Test power against mean shift. 
  }
  \vspace{-.5em}
  \label{fig:wb:simulate:rademacher} 
\end{figure}

\begin{figure}[t]
  \centering
  \begin{tikzpicture}
            \node[inner sep=0pt] at (0,0)
                {\includegraphics[width=.95\textwidth]{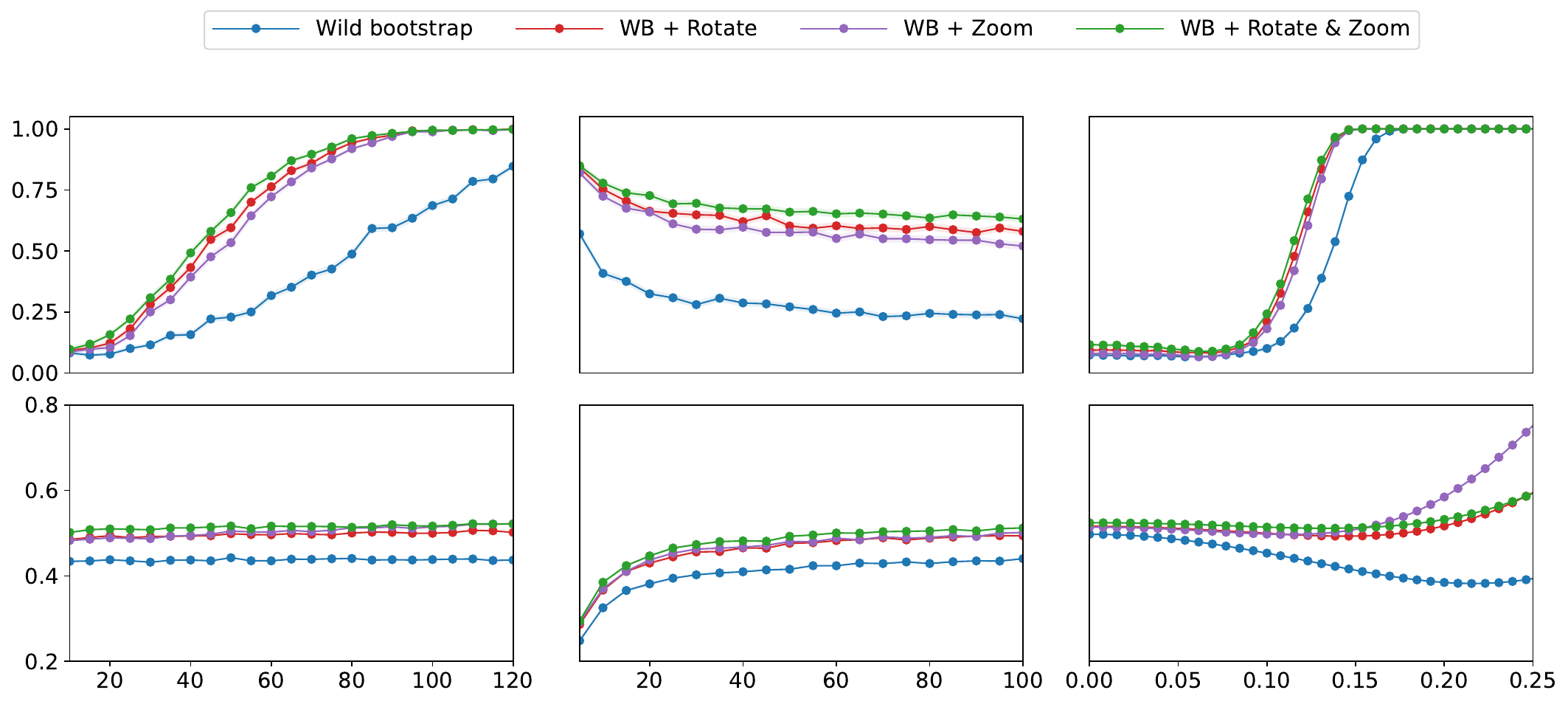}};

            \node[inner sep=0pt] at (-4.5,2.35){\scriptsize \textbf{Varying $n$, \;$B=100$, \, $\sigma=0.12$}};

            \node[inner sep=0pt] at (0.15,2.35){\scriptsize \textbf{Varying $B$, \;$n=50$, \, $\sigma=0.12$}};

            \node[inner sep=0pt] at (4.8,2.35){\scriptsize \textbf{Varying $\sigma$, \;$n=50$, \, $B=100$}};

            \node[inner sep=0pt,rotate=90] at (-7.2,0.9){\scriptsize power};

            \node[inner sep=0pt,rotate=90] at (-7.2,-1.7){\scriptsize CI length};

            \node[inner sep=0pt] at (-4.5,-3.2){\scriptsize $n$};

            \node[inner sep=0pt] at (0.15,-3.2){\scriptsize $B$};

            \node[inner sep=0pt] at (4.8,-3.2){\scriptsize $\sigma$};
  \end{tikzpicture} 
  \\[-.4em]
  \caption{CIs for MMD statistics with RBF kernel for testing noiseless MNIST images against those with pixel-wise additive i.i.d.~$\cN(0,\sigma^2)$ noise. The experiments are over varying $n$, $B$, $\sigma$ and over 1000 random draws of the dataset, with a fixed target Type-I error $\alpha=5\%$. See \Cref{fig:wb:mnist:null} for results under the null.
  }
  \vspace{-1em}
  \label{fig:wb:mnist:alt} 
\end{figure}

\vspace{-1.2em}

\subsection{Conformal prediction with data augmentation} \label{sec:conformal}

\vspace{-.6em}

Consider i.i.d.~random input-output pairs $X_i=(V_i, Y_i)$, such that $(X_i)_{2 \leq i \leq n}$ is the calibration set and $X_1$ is the new data point. To see that split conformal prediction (CP) is an example of DAB, we identify $(\Phi_b)_{b \leq B}$ with $B=n-1$ as an enumeration of all deterministic cycling operations on the $n$ input-output pairs, $(x_i)_{i \leq n} = (v_i, y_i)_{i \leq n}$, i.e.
\vspace{-.5em}
\begin{align*}
    \Phi^{\rm CP}_b(x_1, \ldots, x_n) \;=\; 
    \big( x_{\sigma_b(1)}, \ldots, x_{\sigma_b(n)} )\;,
    \quad 
    \sigma_b(i) \coloneqq (( i + b - 1 ) \;{\rm mod}\; n ) + 1\;.
\\[-2.5em]
\end{align*}
Then, for some score function $h: \R^{d_{\rm out}} \times \R^{d_{\rm out}} \rightarrow \R$ and some estimator $g: \R^{d_{\rm in}} \rightarrow \R^{d_{\rm out}}$ fitted on a hold-out dataset, we can identify the statistic in DAB as the conformity score
\vspace{-.5em}
\begin{align*}
    f(X) \;=\; h( Y_1, \, g(V_1) )\;,
    \qquad 
    \text{ which implies } 
    \qquad 
    f\big(\Phi^{\rm CP}_b(X) \big) \;=\; h(Y_{b+1},\, g(V_{b+1}) )\;.
    \\[-2.5em]
\end{align*}
We again assume approximate invariance under a random transformation $\phi = (\phi^{\rm in}, \phi^{\rm out})$, and form DAB from $(n-1)k$ transformations: Let $\phi_{bji}$ be i.i.d.~copies of $\phi$ and define
\vspace{-.5em}
\begin{align*}
    \Phi^{\rm DAB}_{bj}(x_1, \ldots, x_n) 
    =
    \Phi^{\rm CP}_b
    \big(
        \phi_{bj1}(x_1) 
        \,,\,\ldots\,,\,
        \phi_{bjn}(x_n) 
    \big)
    \;\;
    \text{for } 1 \leq b \leq n-1 \text{ and } 1 \leq j \leq k.
    \\[-2.5em]
\end{align*}
Unlike the earlier examples, universality no longer holds: $f(X)$ is completely determined by $X_1$, and violates the stability condition required for \Cref{thm:universality}. Nonetheless, we still have guarantees from the conditional c.d.f.~matching in \Cref{thm:Del:Kol:approx:inv} which, for $k=1$, can be improved to marginal c.d.f.~matching (\Cref{appendix:condition:conformal}) measured by
\vspace{-.5em}
\begin{align*}
    \Delta^{\rm CP}_{\rm inv} 
    \;\coloneqq\;
    \msup_{t \in \R} \big| 
        \P\big(  h(\phi^{\rm out}(Y_1),\, g(\phi^{\rm in}(V_1))) \leq t\big)
        \,-\, 
        \P\big( h(Y_1, \, g(V_1))  \leq t \big)
    \big| 
    \\[-2.4em]
\end{align*}

\begin{figure}[t]
  \centering
  \begin{tikzpicture}
        \coordinate (mnistgrid) at (0,0);
        \begin{scope}[shift={(mnistgrid)}]
            \node[inner sep=0pt] at (0,0)
                {\includegraphics[width=.9\textwidth]{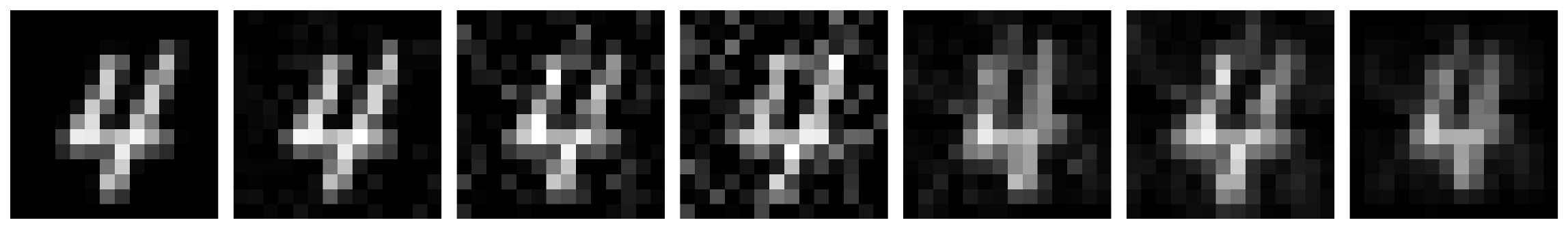}};

            \node[inner sep=0pt] at (-5.7,1.05){\scriptsize \textbf{(i) $\sigma=0$}};
            \node[inner sep=0pt] at (-3.8,1.05){\scriptsize \textbf{(ii) $\sigma=0.05$}};
            \node[inner sep=0pt] at (-1.9,1.05){\scriptsize \textbf{(iii) $\sigma=0.12$}};
            \node[inner sep=0pt] at (0,1.05){\scriptsize \textbf{(iv) $\sigma=0.20$}};
            \node[inner sep=0pt] at (1.9,1.05){\scriptsize \textbf{(v) Rotate}};
            \node[inner sep=0pt] at (3.85,1.05){\scriptsize \textbf{(vi) Zoom}};
            \node[inner sep=0pt] at (5.8,1.05){\scriptsize \textbf{(vii) Rotate \& Zoom}};
        \end{scope}
  \end{tikzpicture}
  \\[-.5em]
  \caption{A noisy MNIST image with varying $\sigma$ in (i) -- (iv) and with different augmentations in (v) -- (vii).  }
  \label{fig:mnist:visual}
  \vspace{-.2em}
\end{figure}

\begin{figure}[t]
  \centering
  \begin{tikzpicture}
        \coordinate (Gaussian) at (-3.8,0);
        \begin{scope}[shift={(Gaussian)}]
            \node[inner sep=0pt] at (0,0)
            {\includegraphics[width=.49\textwidth]{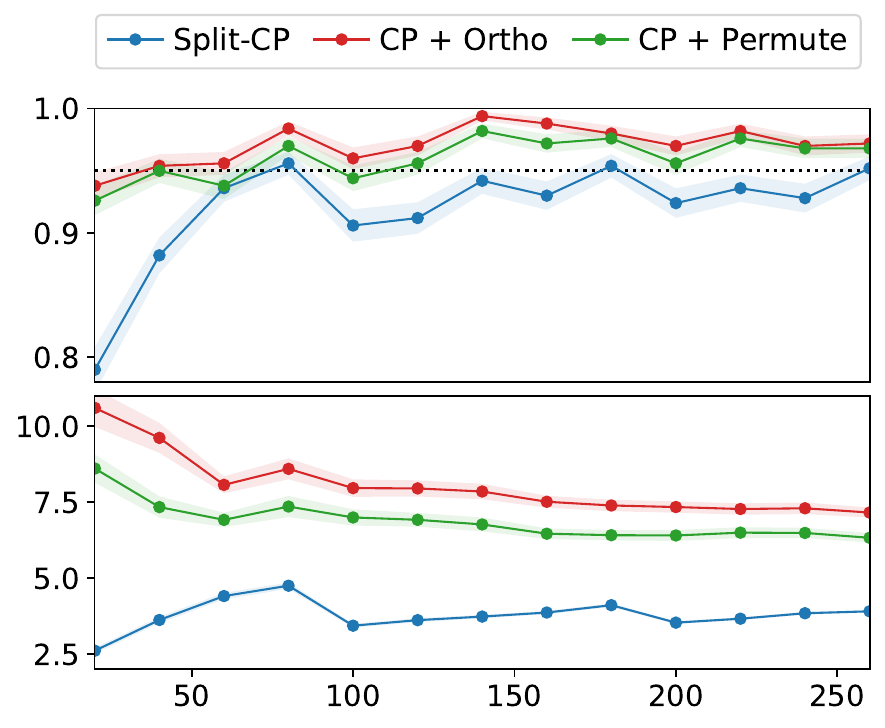}};

        \node[inner sep=0pt, align=left] at (0.2,3.3){
            \scriptsize \textbf{(i) Prediction interval for the outcome of a 2d linear}
            \\[-.5em]
            \scriptsize \quad\  \textbf{model based on linear regression on Gamma vectors}};

            \node[inner sep=0pt] at (0.5,-3.2){\scriptsize $n$};

            \node[inner sep=0pt,rotate=90] at (-3.55,1.05){\scriptsize coverage};

            \node[inner sep=0pt,rotate=90] at (-3.55,-1.45){\scriptsize CI length};
        \end{scope}
        \coordinate (AMP) at (3.8,0); 
        \begin{scope}[shift={(AMP)}] 
        \node[inner sep=0pt] at (0,0) 
            {\includegraphics[width=.49\textwidth]{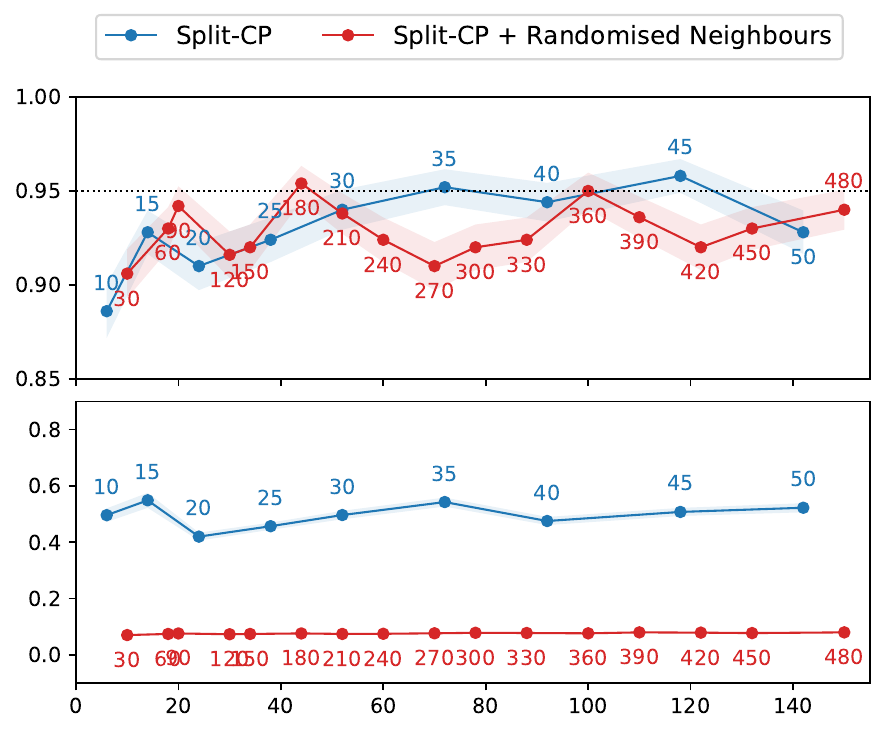}};

        \node[inner sep=0pt, align=left] at (0.2,3.3){
            \scriptsize \textbf{(ii) Prediction interval for the energy of a molecule} 
            \\[-.5em]
            \scriptsize \quad\  \textbf{based on a neural network interatomic potential}};

        \node[inner sep=0pt] at (0.5,-3.2){\scriptsize GPU seconds per random trial};

        \node[inner sep=0pt,rotate=90] at (-3.73,1.05){\scriptsize coverage};

        \node[inner sep=0pt,rotate=90] at (-3.73,-1.45){\scriptsize CI length};
        \end{scope}
  \end{tikzpicture}
  \\[-.5em]
  \caption{DAB-variants of conformal prediction over $500$ random trials and for $1-\alpha=95\%$. (i) In each trial, a random $\beta \in \cN(0,I_2)$ is drawn and a size-$n$ dataset of $X_i = (V_i,Y_i)$ is generated as $Y_i = V_i^\top \beta + \epsilon_i$, where $V_i$ is coordinate-wise i.i.d.~centred Gamma variables with shape and rate $1$, and $\epsilon_i \sim \cN(0,1)$. $0.8 n$ data is used for training and $0.2n$ for calibration. The DAB CI is formed with $k=50$ for predicting the outcome $Y_{\rm new}$ of a fresh i.i.d.~drawn $V_{\rm new}$. (ii) We split the set of C4h-symmetric molecules in the QM-sym dataset of \cite{liang2019qm} into two halves, one for training the GMP+SNN neural network used in \cite{hu2022robust}, and the other for calibration. The DAB CI is formed with $k=1$ for predicting the energy of a randomly chosen molecule. The $x$-axis indicates the GPU seconds required to generate the CI per random trial, whereas the number next to each data point indicates the size of the calibration set used.}
  \label{fig:cp}
  \vspace{-1em}
\end{figure}

\Cref{fig:cp} compares DAB to CP in two setups: (a) For linear regression on simulated data,  DAB has better coverage at small $n$ but is more conservative than CP at large $n$. (b) For neural networks on molecular data, we use the CP score function proposed by \cite{hu2022robust}, which involves computing the distance of the input molecule to the $K$ closest neighbours in the training data. This computation can be costly, as it involves ordering the entire training set by distance from the input molecule. Our DAB variant uses a randomized selection of $K$ training points instead (denoted \texttt{Randomized Neighbours} in \Cref{fig:cp}(ii)), and gives a smaller CI with a small coverage loss within a given amount of compute. \Cref{appendix:experiments:conformal} includes setup details and further experiments with large language models, where DAB shows similar performance as vanilla CP.

\vspace{-1.5em}

\section{Discussion} \label{sec:discussion}

\vspace{-.5em}

The key idea behind DAB is the following: While group symmetries are mathematically elegant as a definition of invariance, for statistical desiderata such as coverage of a confidence interval (CI), we can afford a looser interpretation of invariance ---
		that a transformed statistic has a similar distribution as that of the original statistic. DAB unifies two prevalent families of methods for confidence interval constructions: One that is based on exact group invariance and provides black-box, finite-sample guarantees, and one that is based on asymptotic theories and provides estimator-specific but asymptotically exact guarantees. This provides the ground for incorporating data augmentations, a technique ubiquitous in machine learning (ML) for incorporating approximate and non-group invariances, into uncertainty quantification procedures. In various ML domains, tremendous empirical efforts have been spent on optimising the augmentation choice to improve the accuracy of estimators \citep{shorten2019survey,shorten2021text}; DAB opens the possibilities for doing the same for improving the quality of CIs. Our empirical results are a first step in this direction: We see visible improvements with DAB in many tasks, but also limited gains when not many symmetries are available (Higgs boson and text examples, \Cref{appendix:experiments}). Many interesting questions are open as to what transformations are optimal for each task, and the answers are likely to be domain-specific.

A crucial aspect of confidence intervals is their efficiency. Under the well-known duality of a CI and a hypothesis test \citep{lehmann2005testing}, the efficiency of the DAB CI $C_\alpha(X_{\rm obs})$ is the power of its induced test statistic $\ind\{ \argdot \not\in C_\alpha(X_{\rm obs}) \}$ for testing whether some given random variable $W$ follows the same distribution as $X_{\rm unobs}$. In fact, the indistinguishable hypotheses for DAB are exactly the distributions that satisfy the same invariances built into DAB. As such, DAB can be viewed as a test statistic for an invariance test \citep{lehmann2005testing,dobriban2022consistency,christie2022testing,chiu2023non}. The efficiency question about DAB becomes a question about the test power of DAB against a broad class of non-invariant alternatives, and our work can be viewed as providing Type-I error control and empirical investigations on the Type-II error. An interesting future direction is to make the connection to testing rigorous and obtain the worst-case test power guarantee of DAB under different alternatives.

One of our key tools is Gaussian universality, which substitutes the role of classical CLT in bootstrap and extends DAB to estimators that are not asymptotically normal. The core observation is that, to obtain coverage guarantees for DAB, it suffices to establish that the distributions of two suitably chosen statistics are close, regardless of whether the intermediate distributional approximation simplifies to a tractable one e.g.~Gaussian. Similarly, there is no reason to restrict ourselves to estimators whose intermediate approximations are described by Gaussian universality. By exploiting modern distributional approximation results for heavy-tailed data and random matrices, we conjecture that DAB can be extended to a much wider range of settings than those considered in this article. In particular, the finite-sample bounds of our \Cref{thm:ex:validity,thm:Del:Kol:approx:inv} provide an entry point for any quantitative estimates for suitable distributional approximation results to be plugged in and yield coverage guarantees. Moreover, the Euclidean structures assumed in \Cref{thm:Del:Kol:approx:inv,thm:universality} are specific to our usage of the Kolmogorov metric and Gaussian universality tools. We conjecture that non-Euclidean variants of DAB can be obtained by adapting our proofs to a different metric with suitable non-Euclidean limit theorems.

After the release of an initial preprint, we were alerted of a concurrent work \citep{paul2026probability} that independently notice the connection between conformal prediction and bootstrap. They obtain results similar to our \Cref{thm:ex:validity,thm:Del:Kol:approx:inv} but also notably accommodating a non-uniform pivot and non-identically distributed variables to be ranked. As applications, they study a range of statistical estimators in detail, and show that asymptotically valid CIs can be obtained without infinitely many resampling steps. In contrast, our emphasis is on the interpretation via invariances and the incorporation of data augmentations, with theoretical groundings from Gaussian universality results on dependent data (\Cref{thm:universality}) and empirical validation with augmentations from ML settings (\Cref{sec:new:DAB:methods}). At a technical level, \cite{paul2026probability} consider CIs without tie-breaking and impose a continuity condition (through a L\'evy concentration function); we do not impose this due to our use of smooth tie-breaking. We expect that their results can be combined with our universality results to expand the scope of DAB. 

\vspace{.5em}


\noindent
\textbf{Acknowledgement.} {This work has been partially supported by the Gatsby Charitable Foundation and by the UK Engineering and Physical Sciences Research Council (EPSRC) (Grant No.~EP/Y028783/1, Prob\_AI Hub). The author thanks Peter Orbanz and Vasco Portilheiro for the helpful discussions and support in this work. The author also thanks Antonin Schrab for helpful comments, especially on wild bootstrap, and for suggesting the name of data augmented bootstrap (DAB). We are grateful to Arun Kumar Kuchibhotla, Woonyoung Chang and Manit Paul for pointing out the connection to \cite{paul2026probability} and an issue with an earlier version of \Cref{thm:Del:Kol:approx:inv}. All codes can be found at \url{github.com/KevHH/DAB_code}. The codes for Higgs boson, QM-sym and language data have been assisted by Codex.
}
\vspace{-1em}

\begingroup
\renewcommand{\sectionbreak}{\vspace{1em}} 
\titlespacing*{\section}{0pt}{0.5em}{0.4em}
\bibliography{references}
\endgroup

\newpage

\appendix
\singlespacing
\crefalias{section}{appendix}
\crefalias{subsection}{subappendix}
\crefalias{subsubsection}{subsubappendix}

\clearpage

\begin{center}
\large Supplementary materials to ``Data augmented bootstrap: Unifying confidence interval construction by approximate invariance"
\end{center}

\noindent
The supplementary materials are organised as follows:
\\[.1em]
\textbf{\cref{appendix:additional:examples}} provides additional examples of data augmented bootstrap (DAB). 
\\[.1em]
\textbf{\cref{appendix:related:work}} provides additional discussions on related works.
\\[.1em]
\textbf{\Cref{appendix:validity:exact:inv}} provides additional results and proofs for the approximate exchangeability results in \Cref{sec:validity:exact:inv}.
\\[.1em]
\textbf{\Cref{appendix:thm:Del:Kol:approx:inv}} provides additional results and proofs for the approximate invariance results in \Cref{sec:validity:ex:to:inv}.
\\[.1em]
\textbf{\Cref{appendix:universality}} states and proves additional Gaussian universality results that complement \Cref{sec:validity:universality}, and verifies universality for specific examples.
\\[.1em]
\textbf{\Cref{appendix:proof:compose:exact:with:approx}} proves \Cref{lem:compose:exact:with:approx}, which concerns the composition of approximately invariant transformations and exactly invariant ones.
\\[.1em]
\textbf{\Cref{appendix:theory:examples}} derives the theoretical conditions for coverage guarantees to hold for each example considered in \Cref{sec:new:DAB:methods}.
\\[.1em]
\textbf{\cref{appendix:experiments}} includes experimental details and additional empirical results that complement those reported in \Cref{sec:new:DAB:methods}.

\vspace{-1em}

\section{Additional examples of DAB} \label{appendix:additional:examples}

\begin{example}[Asymptotic coverage for wild bootstrap] \label{example:asymptotic:wild:bootstrap} Let $X$ and $f(X)$ be defined as in \Cref{example:marginal:wild:bootstrap} with the symmetry $X_1 \overset{d}{=} - X_1$. We consider a more general setup of wild bootstrap, where $\Phi_b$ is defined as in  \Cref{example:marginal:wild:bootstrap} but with generic i.i.d.~variables $(\epsilon_{bi})_{b \leq B, i \leq n}$ with zero mean and unit variance. Note that $\Phi_b$ does not come from a group structure in general, e.g.~in the case of $\epsilon_{bi}$'s being standard Gaussians. A classical result of \cite{dehling1994random}, in the case $d=1$ and the U-statistic $f(X)$ is degenerate, gives that
\begin{align*}
    \| \Delta_{\rm inv}(X) \|_{L_2}
    \;=\;
    \big\| \msup_{t \in \R} 
    \big| \P( f(\Phi_1(X)) \leq t \,|\, X ) - \P( f(X) \leq t ) \big| 
    \big\|_{L_2}
    \rightarrow 0\;.
\end{align*}
By \Cref{thm:Del:Kol:approx:inv}, this again implies valid coverage for wild bootstrap. Similar guarantees are given in \cite{janssen1994weighted} for non-degenerate U-statistics, in which case non-negative weights $(\epsilon_{bi})_{b \leq B, i \leq n}$ are used.
\end{example}

\begin{example}[Conditional coverage for bootstrap] \label{example:conditional:bootstrap} We inherit the notation from \Cref{example:asymptotic:bootstrap}, and consider conditioning on $\Xobs = \sigma(\tilde X_1, \ldots, \tilde X_m)$ for some $m < n$. Recall that $\Var[\tilde X_1] =1$ and assume for simplicity that  $\mean |\tilde X_1|^4 < \infty$. Then the Berry-Ess\'een bound applies: Almost surely,
\begin{align*}
    &\;
    \sup_{t \in \R} 
    \,\Big|
        \,\P( f(X) \leq t \,|\, \Xobs ) 
        -
        \P\Big( \mfrac{\sqrt{n-m} \, \eta + \sum_{i=1}^m (\tilde X_i - \mean[\tilde X_1]) }{\sqrt{n}} \leq t \,\Big|\, \Xobs
        \Big) 
    \Big| 
    \,\leq\,
    \mfrac{ \mean |\tilde X_1|^3 }{\sqrt{n-m}}
    \,,
\end{align*}
where $\eta \sim \cN(0,1)$ is independent of $\Xobs$. Similarly, 
\begin{align*}
    \msup_{t \in \R} \Big|
        \,
        \P( f(\Phi_1(X)) \leq t \,|\, X ) 
        \,-\,
        \P( \hat \sigma \eta  \leq t \,|\, X
        ) 
    \Big| 
    \;\leq\;
    \mfrac{\hat \kappa_3}{\sqrt{n} \, \hat \sigma^3}
    \qquad 
    \text{ almost surely, }
\end{align*}
where $\hat \sigma^2 \coloneqq \frac{1}{n} \sum_{i \leq n} (\tilde X_i - \frac{1}{n} \sum_{j \leq n} \tilde X_j)^2$ and $\hat \kappa_3 \coloneqq \frac{1}{n} \sum_{i \leq n} | \tilde  X_i - \frac{1}{n} \sum_{j \leq n} \tilde X_j|^3$. Meanwhile by the Markov inequality, for any $\epsilon > 0$,
\begin{align*}
    \P\Big( \Big| \mfrac{\sum_{i=1}^m (\tilde X_i - \mean[\tilde X_1])}{\sqrt{n}} \Big| > \epsilon \Big)
    =&\,
    O\Big( \mfrac{m}{n \epsilon^2} \Big)
    \;,
    \\
    \P\Big( \Big| \hat \sigma^2 - \mfrac{(n-m) \Var[\tilde X_1] }{n} \Big| > \epsilon \Big)
    =&\,
    O\Big( \mfrac{1}{n (\epsilon + \frac{n}{m})^2} \Big)
\end{align*}
Therefore, provided that $m = o(n)$, we have that as $n \rightarrow \infty$,
\begin{align*}
    \Delta_{\rm inv}(X)
    \;=\;
    \msup_{t \in \R} 
    \big| \P( f(\Phi_1(X)) \leq t \,|\, X ) - \P( f(X) \leq t \,|\, \Xobs ) \big| 
    \;=\; o_\P(1) \;,
\end{align*}
where $o_\P(1)$ denotes convergence to zero in probability over the distribution of $X$. By \Cref{thm:Del:Kol:approx:inv} and \Cref{remark:thm:Del:Kol:approx:inv}, we obtain valid conditional coverage for bootstrap given $(\tilde X_1, \ldots, \tilde X_m)$ provided that $m=o(n)$. 
\end{example}

\begin{example}[Conditional coverage for a permutation test] \label{example:permutation} Consider a stylised setup of testing the difference of two distributions $P_1$ and $P_2$, given their respective i.i.d.~datasets $(X_{1,i})_{i \leq n}$ and $(X_{2,i})_{i \leq n}$, by a permutation test. This can be identified as DAB in \eqref{eq:DABU} with $X=(X_{1,1}, \ldots, X_{2,n})$, $f(X) = \frac{1}{\sqrt{n}} \msum_{i \leq n} \kappa(X_{1,i}, X_{2,i})$ for some kernel function $\kappa: (\R^d)^2 \rightarrow \R$ and, for $1 \leq b \leq B$, 
\begin{align*}
    \Phi_b(x_{1,1}, \ldots, x_{2,n}) \;=\; ( x_{\pi_b(1,1)}, \ldots, x_{\pi_b(2,n)} )\;,
\end{align*}
where $\pi_b$'s are uniformly drawn permutations on the index set $\{(1,1), \ldots, (2,n)\}$. Under the null hypothesis $P_1=P_2$, $(f(X), f(\Phi_1(X)), \ldots, f(\Phi_B(X)))$ is marginally exchangeable, so \Cref{thm:ex:validity} implies exact marginal coverage. However, this exchangeability is generally violated when we condition on part of the dataset. On the other hand, assuming mild moment conditions on $\kappa(X_{11}, X_{21})$ under $P_1= P_2$, the same argument as \Cref{example:conditional:bootstrap} implies that asymptotic conditional guarantee holds with high probability, if the conditioning is on $\sigma(X_{11}, \ldots, X_{2m})$ for $m = o(n)$. As a consequence, one may choose the kernel function $\kappa$ based on the observations $X_{11}, \ldots, X_{2m}$.
\end{example}

\begin{example}[Coverage for leave-one-out methods] \label{example:jackknife} Consider a set ~$X=(X_i, Y_i)_{i \leq n+1}$ of i.i.d.~covariate-label pairs in $(\R^d \times \R)^{n+1}$. Let $g: (\R^d \times \R)^n \times \R^d \rightarrow \R$ be a regressor such that $g((x_i,y_i)_{i \leq n}, x_0)$ gives the prediction on $x_0$ by the regressor trained on $(x_i,y_i)_{i \leq n}$ and $g$ is invariant under permutations of the $n-1$ training data pairs. We also use $*$ to denote missing data and use $g((*, *),(x_i,y_i)_{i \leq n-1}, x_{n+1})$ to denote a regressor that is only trained on $n-1$ data. We consider the leave-one-out method studied in \cite{steinberger2016leave} (referred to as the Jackknife method by \cite{barber2021predictive}), and identify the method as DAB \eqref{eq:DABU} for predicting $Y_{n+1}$. Let $B=n$ and $(\Phi_b)_{b \leq n}$ be a deterministic enumeration induced by cyclic shifts on $[n]$:
\begin{align*}
    \Phi_b( v_1, \ldots, v_n, v_{n+1}) \;=\; ( (*,*), v_{(b \textrm{ mod } n)+1}, \ldots, v_{(b-2 \textrm{ mod } n)+1}, v_b)\;.
\end{align*}
The estimator $f$ for the leave-one-out method can be identified as 
\begin{align*}
    f(X) \;=\;  | Y_{n+1} - g((X_i,Y_i)_{i \leq n}, X_{n+1} ) |\;.
\end{align*} 
Exact exchangeability does not hold in this case since, for example,
\begin{align*}
    f(\Phi_1(X)) \;=\; \big|\, Y_1 - g((\star,\star), (X_i,Y_i)_{2 \leq i \leq n}, X_1 ) \,\big|
    \;\overset{d}{\neq}\; f(X)\;.
\end{align*} 
Nevertheless, \cite{steinberger2016leave} establishes asymptotic guarantees for stable estimators that, for example, include high-dimensional linear predictors. \cite{steinberger2016leave} shows, in a similar style as \Cref{example:bootstrap}, that $f(\Phi_1(X)) | X$ and $f(X)$ both converge to the same distribution. Under our notation, their result implies 
\begin{align*}
    \msup_{t \in \R} 
    \big| \P( f(\Phi_1(X)) \leq t \,|\, X ) - \P( f(X) \leq t ) \big| 
    \rightarrow 0 \;\; \text{ in probability}\;.
\end{align*}
This statement implies asymptotic exchangeability of $(f(X), f(\Phi_1(X)), \ldots, f(\Phi_B(X)))$ --- for example, in the sense of \Cref{thm:Del:Kol:approx:inv}, with an additional application of \Cref{lem:fixed:to:random} to connect the deterministic transformations to random transformations --- which leads to valid asymptotic coverage. On the other hand,  \cite{barber2021predictive} introduces Jackknife+, a modification of the leave-one-out method above, which is also a special case of DAB and makes use of exact exchangeability to obtain finite-sample guarantees similar to \Cref{thm:ex:validity} and the examples discussed after.
\end{example}
\vspace{-1em}

\section{An extended discussion of related work} \label{appendix:related:work}

The proposed framework of data augmented bootstrap (DAB) shares connections with a diverse range of literature, and we include a further discussion in this section. 

\vspace{.5em}

\noindent
\textbf{Data augmentation.} Data augmentations have been widely employed in machine learning tasks \citep{shorten2019survey,shorten2021text}. In terms of theoretical results in statistics, \cite{chen2020group} is one of the first theoretical works to study the effect of group-based augmentations on a class of statistical estimators, where \cite{huang2022data} is the first to study non-group augmentations without requiring invariance. Notably, \cite{huang2022data} also employs Gaussian universality with dependence, and their theoretical results have inspired this work. Yet, both \cite{chen2020group} and \cite{huang2022data} focus on the effect of data augmentations on estimator quality, and neither has considered its application to uncertainty quantification. SymmPI \citep{dobriban2023symmpi} makes the novel observation that proof techniques for conformal prediction can be extended to general groups, and applies group-based transformations for CI constructions under the condition of exact invariance or equivariance. However, both the group assumption and the invariance assumption exclude many commonly used augmentations. An interesting recent work \citep{wu2024posterior} explores incorporating data augmentations for uncertainty quantification via the Bayesian posteriors. While their Bayesian approach is fundamentally different from DAB, an interesting follow-up direction could be exploring how DAB may be combined with Bayesian approaches for uncertainty quantification, especially in domain-specific problems.

\vspace{.5em}

\noindent
\textbf{Invariance-based inference.} There is already a large body of literature on invariance-based inference and in particular on conformal prediction; a non-exhaustive list of literature has been referenced in \Cref{sec:intro}. Of particular relevance to our work are those that have considered a violation of exchangeability: For conformal prediction, many works (e.g.~\cite{tibshirani2019conformal,barber2023conformal,ramdas2023permutation,guan2023localized,prinster2024conformal}) have considered addressing the violation of exact exchangeability with reweighting techniques, with successful applications to local heterogeneous effects, distributional shifts and many more settings. As discussed in \Cref{remark:weighted:ex}, an interesting avenue of future work is to explore the combination of DAB and weighted ranks for similar setups. When not enough distributional knowledge is known to design exact and computable weights, the incurred coverage error has been controlled in terms of total variation distance \citep{barber2023conformal} or Wasserstein distance \citep{xu2025wasserstein}. In contrast, our \Cref{thm:ex:validity} has controlled the coverage error through a c.d.f.~difference that can be turned into a Kolmogorov distance comparison between $f(\Phi_1(X)) | X$ and $f(X)$. This is what enables Gaussian universality tools to be applied and the connection to bootstrap results to be established. The application of Gaussian universality to compare  $f(\Phi_1(X)) | X$ and $f(X)$ is also heavily inspired by the Gaussian universality toolkits in \cite{austern2020bootstrap}, which establishes bootstrap consistency for estimators beyond empirical averages and can be viewed as a special case of our \Cref{thm:universality}.

\vspace{.5em}

In the context of permutation tests, the application of asymptotic theory to study coverage validity under the violation of exchangeability is not new: For example, \cite{janssen1994weighted} studies this for particular classes of studentised statistics, whereas \cite{canay2017randomization} considers this under approximate group invariance assumptions. In contrast, our results do not assume a group structure, and address a broader class of estimators, i.e.~those for which Gaussian universality applies; see a recent work \citep{huang2024gaussian} for a characterisation of such estimators as well as results on when universality fails for high-dimensional data. Meanwhile, as discussed in \Cref{sec:discussion}, we conjecture that the DAB framework can be extended to settings beyond Gaussian universality.

\vspace{.5em}

\noindent
\textbf{Gaussian universality.} Over the past decade, a wave of results have established Gaussian universality theoretically and empirically for many estimators found in high-dimensional statistics and machine learning, including but not limited to: random feature models \citep{hu2022universality},  regularised regression \citep{han2023universality}, generalised linear models \citep{dandi2023universality}, perceptron models \citep{gerace2024gaussian}, max-margin classifiers \citep{montanari2023universality} and general classes of empirical risk minimisers \citep{montanari2022universality}; see \cite{huang2025universality} for a review of the founding probability works in this area as well as recent results. Our universality result adapts the techniques of \cite{mallory2025universality} to accommodate local dependence as well as the conditioning introduced in our measurement of approximation invariance error. A related work \citep{gibbs2025characterizing} obtains training-conditional coverage of conformal prediction by concentration results related to the training data in the proportional regime. We note that their result addresses a different problem from ours, since our dataset $X$ of interest corresponds to the calibration data and the new data point in the context of conformal prediction. In particular, as discussed in \Cref{sec:conformal}, our universality result does \emph{not} apply to conformal prediction due to a lack of stability, even though our approximate invariance result does.

\section{Additional results and proofs for \Cref{sec:validity:exact:inv}: Approximate exchangeability} \label{appendix:validity:exact:inv}

\subsection{Coverage result under general tie-breaking function} \label{appendix:coverage:validity}

We provide a generalisation of \Cref{thm:ex:validity} that accommodates arbitrary tie-breaking functions. This requires additional notation, which we state next. 

\vspace{.5em}

\noindent
\textbf{Random quantities to be ranked. } Given a subset $\cY' \subseteq  \cY$, we denote its associated multiset by $\lbag \cY' \rbag$, where only the unique values in $\cY'$ and the associated multiplicities matter. As shorthands, we denote
\begin{align*}
    V \;\coloneqq\; (V_0, V_1, \ldots, V_B)\;, 
    \qquad 
    V_b \;\coloneqq\; f(\Psi_b(X)) 
    \qquad 
    \text{ for } b \in [B]_0\;,
\end{align*}
which concern the $\cY$-valued random variables to be ranked in the definition of $R_\Psi(X)$.
Also write $V^*_0 \leq \ldots \leq V^*_B$ as the order statistics of $V_0, \ldots, V_B$. Note that $(V^*_b)_{b \in [B]_0}$ is $\lbag V \rbag$-measurable. 

\vspace{.5em}

\noindent
\textbf{Ranks of $(V^*_b)_{b \in [B]_0}$ to account for ties. } It is convenient to define functions that indicate the range of indices of $V^*_0, \ldots, V^*_B$ that are at a tie. To this end, for any fixed $k \in [B]_0$, we consider the random variables 
\begin{align*}
    R^*_-(k) 
    \;\coloneqq&\;  
    \min\{ b \in [B]_0 \;|\; V^*_b = V^*_k \} 
    \;=\;
    \msum_{b \in [B]_0} \ind\{ V^*_k > V^*_b \} 
    \;,
    \\
    R^*_+(k) \;\coloneqq&\; 
    \max\{ b \in [B]_0 \;|\; V^*_b = V^*_k \} 
    \;=\;
    \msum_{b \in [B]_0} \ind\{ V^*_k \geq  V^*_b \} - 1
    \;.
    \tagaligneq \label{eq:rank:fn:alt}
\end{align*}
$R^*_-(k)$ can be viewed as a zero-indexed rank of $V^*_k$ among $(V^*_b)_{b \in [B]_0}$ that breaks ties always by preferring the first variable that occurs at a tie, whereas $R^*_+(k)$ can be viewed as a similar rank except that ties are always broken by preferring the last variable that occurs at a tie. $R^*_-(k)$ and $R^*_+(k)$ are completely determined by $\lbag V \rbag$. As an example, when $V=(10,10,30,40,40)$, $R^*_-(2) = R^*_+(2) = 2$ --- the index of $30$ which is not at tie with any other numbers --- whereas $R^*_-(0) = 0 < R^*_+(0) = 1$ --- the starting and ending indices of $10$, as there are two numbers $10$ at a tie. Also note the useful property that for $k, k' \in [B]_0$,
\begin{align*}
    R^*_+(k) \;<\; k'
    \quad\Leftrightarrow&\quad 
    &V^*_k \;<\; V^*_{k'}&&
    \quad\Leftrightarrow&\quad 
    k \;<\; R^*_-(k')
    \tagaligneq \label{eq:rank:fn:property:one}
    \;.
\end{align*}
The next result generalises \Cref{thm:ex:validity} to an arbitrary tie-breaking function $T$ in \eqref{eq:rank}.

\begin{theorem} \label{thm:group:guarantee:general:tie} Let $f$ be an $\cX_n \rightarrow \cY$ function for a generic measurable space $\cY$ equipped with total order, and $T$ be a random $\N \cup \{0\} \rightarrow \R^+ \cup \{0\}$ tie-breaking function independent of all other variables such that $T(n) \in [0,n]$ for all $n \in \N \cup \{0\}$. For $r \in (0,1]$, additionally denote $B_r = \lceil r (B+1) \rceil - 1$ and 
\begin{align*}
    C_T(r)
    \;\coloneqq&\; 
    \mfrac{R^*_-(B_r)}{B + 1}
    +
    \P \Big( T(  R^*_+(B_r) - R^*_-(B_r) + 1 ) <  r(B+1) - R^*_-(B_r) \,\big|\,  \textstyle{\lbag V \rbag} \Big) 
    \\
    &\hspace{5em}\,\times\,
    \mfrac{R^*_+(B_r) - R^*_-(B_r) + 1}{B + 1}
    \;,
\end{align*}
and use the convention that $B_0 = C_T(0) = 0$.
If $(f(\Psi_0(X)), \ldots, f(\Psi_B(X)))$ is conditionally exchangeable given $\Xobs$, then for any $r \in [0,1]$, almost surely,
\begin{align*} 
    \Big|
    \,
    \P\Big(
    \mfrac{R_{\Phi}(X)}{B+1}
    \,<\,
    r
    \,\Big|\, \Xobs
    \Big)
    \,-\, 
    \mean[ C_T(r)  \,|\, \Xobs]
    \,
    \Big|
    \;\leq&\;  
    \Delta^r_{\rm Kol}(\Xobs)
    \;.
\end{align*}
\end{theorem}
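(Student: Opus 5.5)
First, I will split off the approximation error by the triangle inequality:
\begin{align*}
    \Big| \P\Big(\mfrac{R_\Phi(X)}{B+1} < r \,\Big|\, \Xobs\Big) - \mean[C_T(r)\,|\,\Xobs] \Big|
    \;\leq\;
    \Delta^r_{\rm Kol}(\Xobs)
    + \Big| \P\Big(\mfrac{R_\Psi(X)}{B+1} < r \,\Big|\, \Xobs\Big) - \mean[C_T(r)\,|\,\Xobs] \Big|\;.
\end{align*}
Here $\Delta^r_{\rm Kol}(\Xobs)$ is the conditional version of the c.d.f.\ difference. It then suffices to prove the exact identity $\P(R_\Psi(X)/(B+1) < r \,|\, \Xobs) = \mean[C_T(r)\,|\,\Xobs]$ under conditional exchangeability. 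I will condition further on the multiset $\lbag V \rbag$ and show that $\P(R_\Psi(X) < r(B+1) \,|\, \lbag V \rbag, \Xobs) = C_T(r)$. Taking $\mean[\argdot \,|\, \Xobs]$ then gives the claim, using that $\sigma(\Xobs)\subseteq\sigma(\lbag V\rbag,\Xobs)$ and that $C_T(r)$ is $\lbag V\rbag$-measurable.

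The key step is the distribution of the position of $V_0$ among the order statistics. By conditional exchangeability of $(V_0,\dots,V_B)$ given $\Xobs$, conditionally on $(\lbag V\rbag,\Xobs)$ the vector $V$ is a uniformly random arrangement of the multiset. Equivalently, if we attach a uniformly random label among $[B]_0$ to $V_0$ (breaking ties in the ordering uniformly), then $V_0 = V^*_K$ with $K$ uniform on $[B]_0$. The construction is to take $V$ and an independent uniform permutation $\sigma$. Then $(V_{\sigma(b)})$ has the same conditional law, and among tied values the ordering can be randomised. I will make this rigorous by noting that $\P(V_0 = v \,|\, \lbag V\rbag,\Xobs) = m(v)/(B+1)$, where $m(v)$ is the multiplicity of $v$, which follows from symmetry of indices. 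Then for $V_0$ equal to the tie block of $V^*_k$, the rank reads
\begin{align*}
    R_\Psi(X) \;=\; R^*_-(k) + T\big(R^*_+(k)-R^*_-(k)+1\big)\;,
\end{align*}
since $\sum_{b\ge1}\ind\{V_0>V_b\}=R^*_-(k)$ and the tie count including $V_0$ is the block size $m=R^*_+(k)-R^*_-(k)+1$. Here $T$ is independent of everything else.

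Next I split on the block. Let $k$ index blocks, let $B_r=\lceil r(B+1)\rceil-1$, and let $m$ be the size of the block containing $V^*_{B_r}$. For blocks lying entirely below, i.e.\ $R^*_+(k) < B_r$, equivalently $R^*_+(k)\le B_r-1$, we have $R^*_-(k)+T(m)\le R^*_-(k)+m = R^*_+(k)+1 \le B_r < r(B+1)$. So the event holds; strict inequality requires $B_r < r(B+1)$, which holds by the definition of the ceiling. For blocks with $R^*_-(k) > B_r$ we have $R^*_-(k)\ge B_r+1\ge r(B+1)$, so the event fails since $T\ge0$. These blocks contribute total probability $R^*_-(B_r)/(B+1)$ and $0$ respectively, via \eqref{eq:rank:fn:property:one}. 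The block containing $B_r$ has probability $m/(B+1)$ and contributes $\P(T(m) < r(B+1)-R^*_-(B_r) \,|\, \lbag V\rbag)$, which gives exactly $C_T(r)$. The edge case $r=0$ is trivial.

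The main obstacle is the careful handling of ties and of boundary cases. This covers justifying $\P(V_0=v\,|\,\lbag V\rbag,\Xobs)=m(v)/(B+1)$ from exchangeability, which I will do via the sum over permutations $\sum_\pi \P(V_{\pi(0)}=v,\dots)$. It also covers checking the strict versus non-strict inequalities at $r(B+1)$ integer. Once these are in place, \Cref{thm:ex:validity} follows by computing $C_T(r)$. For uniform smoothed $T$ it equals $r$ exactly by linearity. For discrete uniform $T$ it is within $m/(B+1)$ of $r$, though $\mean$ must be bounded using only a $1/(B+1)$ slack, which I expect to hold via integer rounding.
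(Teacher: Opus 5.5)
Your proposal is correct and follows essentially the paper's route. You use the triangle inequality to isolate $\Delta^r_{\rm Kol}(\Xobs)$. Conditional exchangeability then makes the position of $V_0$ among the order statistics uniform given $(\Xobs, \lbag V \rbag)$, which yields the exact identity $\P(R_\Psi(X) < r(B+1) \,|\, \Xobs, \lbag V \rbag) = C_T(r)$. Your split into tie blocks below, containing, and above $V^*_{B_r}$ matches the paper's split into the events $R_+ + 1 < r(B+1)$, $R_- < r(B+1) \leq R_+ + 1$ and $R_- \geq r(B+1)$ via \eqref{eq:rank:fn:property:one}.

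Your closing remark on uniform tie-breaking lies outside this statement. For reference, the paper obtains the $1/(B+1)$ slack there from the identity $R^*_-(B_r) + \lceil r(B+1) - R^*_-(B_r) \rceil = \lceil r(B+1) \rceil$.
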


\subsection{Proof of \cref{thm:ex:validity}} 
We shall apply \Cref{thm:group:guarantee:general:tie} with $\cY=\R$ and calculate $C_T$ under the two choices of tie-breaking. Note that for either tie-breaking function, when $r=0$, we trivially have 
\begin{align*}
    \mean[C_T(0) | \Xobs] \;=\; 0 \;=\; r 
    \quad 
    \text{ almost surely. }
\end{align*}

\vspace{.5em}
For uniform tie-breaking, $T(n) \sim \text{Uniform}\{0, \ldots, n\}$. Then for $r \in (0,1]$,
    \begin{align*}
        C_T(r) 
        \;=&\; 
        \mfrac{R^*_-(B_r)}{B + 1}
        +
        \mfrac{\lceil r(B+1) - R^*_-(B_r) \rceil}{R^*_+(B_r) - R^*_-(B_r) + 2}
        \,
        \mfrac{R^*_+(B_r) - R^*_-(B_r) + 1}{B + 1}
        \\
        \;\overset{(a)}{=}&\;
        \mfrac{\lceil r(B+1) \rceil}{B + 1}
        -
        \mfrac{B_r + 1 - R^*_-(B_r)}{R^*_+(B_r) - R^*_-(B_r) + 2}
        \,
        \mfrac{1}{B + 1}
        \qquad \text{ almost surely }\;.
    \end{align*}
In $(a)$ above, we have noted that since $R^*_-(B_r)$ takes integer values, $R^*_-(B_r) + \lceil r(B+1) - R^*_-(B_r)  \rceil = \lceil r(B+1) \rceil$, as well as the notation that $B_r =\lceil r(B+1) \rceil - 1$. Now since, by construction, $R^*_-(B_r) \leq B_r \leq R^*_+(B_r)$, we get that 
\begin{align*}
    \mfrac{B_r + 1 - R^*_-(B_r)}{R^*_+(B_r) - R^*_-(B_r) + 2} \;\in\; [0,1]\;,
\end{align*}
and also note that $\lceil r(B+1) \rceil - r(B+1) \in [0,1]$. This implies that 
\begin{align*}
    \big| C_T(r) - r \big| \;\leq\; \mfrac{1}{B+1}
    \qquad \text{ almost surely }\;,
\end{align*}
and by  \Cref{thm:group:guarantee:general:tie}, we obtain
\begin{align*}
    &\;
    \Big|
    \,
    \P\Big(
    \mfrac{R_{\Phi}(X)}{B+1}
    \,<\,
    r
    \,\Big|\, \Xobs
    \Big)
    \,-\, 
    r
    \,
    \Big|
    \\
    &\;\leq\; 
    \Big| 
    \,
    \P\Big(
    \mfrac{R_{\Phi}(X)}{B+1}
    \,<\,
    r
    \,\Big|\, \Xobs
    \Big)
    \,-\, 
    \mean[ C_T(r) \,|\, \Xobs]
    \,
    \Big| 
    + \mfrac{1}{B+1}
    \\
    &\;\leq\; \Delta^r_{\rm Kol}(\Xobs) + \mfrac{1}{B+1}\;. 
\end{align*}

\vspace{.5em}

For smoothed uniform tie breaking, $T(n) \sim \text{Uniform}[0,n]$, so 
\begin{align*}
    C_T(r) 
    \;=&\; 
    \mfrac{R^*_-(B_r)}{B + 1}
    +
    \mfrac{
         r(B+1) - R^*_-(B_r)
    }{
        R^*_+(B_r) - R^*_-(B_r) + 1
    }
    \,
    \mfrac{R^*_+(B_r) - R^*_-(B_r) + 1}{B + 1}
    \\
    \;=&\;
    r
    \quad \text{ almost surely.}
\end{align*}
By  \Cref{thm:group:guarantee:general:tie}, we obtain
\begin{align*}
    \Big|
    \,
    \P\Big(
    \mfrac{R_{\Phi}(X)}{B+1}
    \,<\,
    r
    \,\Big|\, \Xobs
    \Big)
    \,-\, 
    r
    \,
    \Big|
    \;=&\; 
    \Big| 
    \,
    \P\Big(
    \mfrac{R_{\Phi}(X)}{B+1}
    \,<\,
    r
    \,\Big|\, \Xobs
    \Big)
    \,-\, 
    \mean[ C_T(r) \,|\, \Xobs]
    \,
    \Big| 
    \\
    \;\leq&\; \Delta^r_{\rm Kol}(\Xobs)\;. 
\end{align*}
This finishes the proof. 
\color{black}
\qed

\subsection{Proof of \Cref{thm:group:guarantee:general:tie}} Recall the shorthands that 
\begin{align*}
    R_{\Phi}(X)  
    \;\coloneqq&\;
    \Rank_T\big(f(X) \,;\, (f(\Phi_b(X)))_{b \in [B]} \big) 
    \;,
    \\
    R_{\Psi}(X) 
    \;\coloneqq&\;
    \Rank_T\big(f(\Psi_0(X)) \,;\, (f(\Psi_b(X)))_{b \in [B]} \big)\;,
    \\ 
    \Delta^r_{\rm Kol}(\Xobs)
    \;\coloneqq&\; 
    \Big| \,
        \P\Big( \mfrac{R_{\Phi}(X)}{B+1} < r \,\Big|\, \Xobs \Big)   
        \,-\,
        \P\Big( \mfrac{R_{\Psi}(X)}{B+1} < r \,\Big|\, \Xobs \Big)   
    \, \Big|\;.
\end{align*}
Therefore by the triangle inequality, the quantity to control can be bounded as 
\begin{align*}
    \Big|
    \,
    \P\Big(
    \mfrac{R_{\Phi}(X)}{B+1}
    \,<\,&
    r
    \,\Big|\, \Xobs
    \Big)
    \,-\, 
    \mean\big[ C_T(r) \,|\, \Xobs \big] 
    \,
    \Big|
    \\
    \;\leq&\;
    \;\Delta^r_{\rm Kol}(\Xobs) 
    \,+\,
    \Big|
    \,
    \P\Big(
    \mfrac{R_{\Psi}(X)}{B+1}
    \,<\,
    r
    \,\Big|\, \Xobs
    \Big)
    \,-\, 
    \mean\big[ C_T(r) \,|\, \Xobs \big] 
    \,
    \Big|
    \\
    \;=&\;
    \Delta^r_{\rm Kol}(\Xobs) 
    \,+\,
    \Big|
    \mean\Big[
    \,
    \P\Big(
    \mfrac{R_{\Psi}(X)}{B+1}
    \,<\,
    r
    \,\Big|\, \Xobs \,,\, \textstyle{\lbag V \rbag}
    \Big)
    \,-\, 
    C_T(r) 
    \,
    \,\Big|\, \Xobs  \Big]
    \,
    \Big| 
    \\
    \;\leq&\;
    \Delta^r_{\rm Kol}(\Xobs) 
    \,+\,
    \mean\Big[
        \;
    \Big|
    \,
    \P\Big(
    \mfrac{R_{\Psi}(X)}{B+1}
    \,<\,
    r
    \,\Big|\, \Xobs \,,\, \textstyle{\lbag V \rbag}
    \Big)
    \,-\, 
    C_T(r) 
    \,
    \Big| 
    \;
    \,\Big|\, \Xobs  \Big]
    \;,
\end{align*}
where we noted that $C_T(r)$ is $\lbag V \rbag$-measurable. The main efforts of this proof focus on showing that, almost surely
\begin{align*}
    \P\Big(
    \mfrac{R_{\Psi}(X)}{B+1}
    \,<\,
    r
    \,\Big|\, \Xobs \,,\, \textstyle{\lbag V \rbag}
    \Big)
    \;=\; 
    C_T(r) 
    \;.
    \tagaligneq 
    \label{eq:group:general:tie:key:result}
\end{align*}
The proof strategy adapts the idea of Lemma 8.7 of \cite{vovk2005algorithmic} and extends it from smoothed uniform tie-breaking to general tie-breaking. To this end, it is convenient to denote the ranks 
\begin{align*}
    R_- \;\coloneqq&\; 
    \msum_{b \in [B]_0} \ind_{\{ V_0 > V_b \}} 
    \;=\;
    \msum_{b \in [B]} \ind_{\{ V_0 > V_b \}} 
    \;,
    \\
    R_+ 
    \;\coloneqq&\; 
    \msum_{b \in [B]_0} \ind_{\{ V_0 \geq V_b \}}  - 1
    \;=\;
    \msum_{b \in [B]} \ind_{\{ V_0 \geq V_b \}} 
    \;,
\end{align*}
where we have used $[B]_0 = [B] \cup \{0\}$ above. With this notation, we can express
\begin{align*}
    R_\Psi(X)
    \;=\; 
    \Rank_T\big(V_0 \,;\, (V_b)_{b \in [B]} \big)
    \;=&\;
    \msum_{b \in [B]}
    \ind_{\{ V_0 > V_{b} \}} 
    + 
    T 
    \Big( \msum_{b \in [B]}  \ind_{\{ V_0 = V_{b}\}}  + 1 \Big)
    \\
    \;=&\; 
    R_- + T ( R_+ - R_- + 1)\;.
\end{align*}
Since $T(n) \leq n$ by assumption, $R_\Psi(X) \in [R_-, R_+ + 1]$. The goal is to compute, for $r \in [0,1]$, the quantity
\begin{align*}
    \P( R_\Psi(&X) < r(B+1) \,|\, \Xobs , \textstyle{\lbag V \rbag}  ) 
    \\
    \;\overset{(a)}{=}&\;
    \P\big(\,
    R_\Psi(X) \leq R_+ + 1 < r(B+1)
    \;\big|\,  \Xobs, \textstyle{\lbag V \rbag} \big)
    \\
    &\;
    +
    \P\big(\,
    R_- \leq R_\Psi(X) < r(B+1) \leq R_+ + 1
    \;\big|\,  \Xobs , \textstyle{\lbag V \rbag} \big) 
    \\
    \;\overset{(b)}{=}&\; 
    \P\big(\,
    R_+ < r(B+1) - 1
    \;\big|\,  \Xobs, \textstyle{\lbag V \rbag} \big)
    \\
    &\;
    +
    \mean\Big[
    \P( R_\Psi(X) < r (B+1) \,|\,  \Xobs, \textstyle{\lbag V \rbag}, R_-, R_+ )
    \, 
    \ind_{\{R_- < r(B+1) \leq R_+ + 1\}}
    \;\Big|\,  \Xobs , \textstyle{\lbag V \rbag} \Big]
    \\
    \;\overset{(c)}{=}&\; 
    \P\big(\,
    R_+ < r(B+1) - 1
    \;\big|\,  \Xobs, \textstyle{\lbag V \rbag} \big)
    \\
    &\; \qquad +
    \mean\Big[
    \; p_T\big(r(B+1) - R_-\,,\, R_+ - R_- + 1 \big)
    \, 
    \ind_{\{R_- < r(B+1) \leq R_+ + 1\}}
    \;\Big|\,  \Xobs , \textstyle{\lbag V \rbag} \Big]\;. \tagaligneq \label{eq:rank:bound:intermediate}
\end{align*}
Note that we have used $R_- \leq R_\Psi(X) \leq R_+ + 1$ almost surely in $(a)$ and the tower rule in $(b)$; in $(c)$, we have taken an expectation over the randomness in $T$ and written 
\begin{align*}
    p_T( x \,,\, n ) \;\coloneqq\; \P( T(n) < x )\;.
\end{align*}
Also note that in the case $r=0$, we have that trivially
\begin{align*}
    \P( R_\Psi(X) < r(B+1) \,|\, \Xobs , \textstyle{\lbag V \rbag}  ) 
    \;=\;
    0
    \;=\;
    \mean[C_T(r) | \Xobs ] 
    \quad 
    \text{ almost surely,}
\end{align*}
so it suffices to focus on the case $r \in (0,1]$.

\vspace{.5em}

To bound \eqref{eq:rank:bound:intermediate}, we shall provide a finer characterisation of the conditional distribution of $R_+$ and $R_-$ given $\Xobs$ and $\lbag V \rbag = \lbag V_0, \ldots, V_B \rbag$.  Denote $V^*=(V^*_0, \ldots, V^*_B)$, the sequence of ordered statistics. Write $V_{-b}$ as the sequence $V$ but with the $b$-th element omitted, and $V^*_{-b}$ as the sequence $V^*$ but with the $b$-th element omitted. Since $\Psi_0, \ldots, \Psi_B$ are exchangeable, $(V_0, \ldots, V_B)$ are exchangeable given $\Xobs$ and thereby exchangeable given $\Xobs$ and $\lbag V \rbag$. In particular, $(V_0, \lbag V_{-0} \rbag) \,|\, (\Xobs, \lbag V \rbag)$ is conditionally uniformly distributed over the size-$(B+1)$ list $\{(V^*_0, \lbag V^*_{-0} \rbag) \,,\, \ldots \,,\, (V^*_B, \lbag V^*_{-B} \rbag) \}$. This implies that
\begin{align*}
    (R_-, R_+)
    \;=\;
    \big( \msum_{v \in \lbag V \rbag} \ind\{ V_0 > v\}  \,,\,  \msum_{v \in \lbag V \rbag} \ind\{ V_0 \geq v\} - 1 \big) 
\end{align*}
can be generated by sampling $U \sim \textrm{Uniform}\{0, \ldots, B\}$ and setting 
\begin{align*}
    R_- 
    \;=&\; 
    \msum_{v \in \lbag V \rbag} \ind_{\{ V^*_U > v \}} 
    \;=\;
    \msum_{b \in [B]_0} \ind_{\{ V^*_U > V^*_b \}} 
    \;\overset{\eqref{eq:rank:fn:alt}}{=}\;
    R^*_-(U)
    \;,
    \\
    R_+ 
    \;=&\; 
    \msum_{v \in \lbag V \rbag} \ind_{\{ V^*_U \geq v \}}  -  1
    \;=\;
    \msum_{b \in [B]_0} \ind_{\{ V^*_U \geq V^*_b \}}  - 1
    \;\overset{\eqref{eq:rank:fn:alt}}{=}\;
    R^*_+(U)
    \;.
\end{align*}
This gives the following distributional controls on $R_+$ and $R_-$: Almost surely,
\begin{align*}
    \P\big(\,
    R_+ < r(B+1) - 1
    \;\big|\,  \Xobs, \textstyle{\lbag V \rbag} \big)
    \;=&\;
    \P\big(\,
    R^*_+(U)  < \lceil r(B+1) \rceil - 1
    \;\big|\,  \Xobs, \textstyle{\lbag V \rbag} \big)
    \\
    \;\overset{\eqref{eq:rank:fn:property:one}}{=}&\; 
     \P\big(\,
    U < R^*_-(B_r)
    \;\big|\,  \Xobs, \textstyle{\lbag V \rbag} \big)
    \\
    \;=&\; 
    \mfrac{R^*_-(B_r)}{B + 1} \;,  \tagaligneq \label{eq:rank:rplus}
    \\
    \P\big(\,
    R_- \geq r(B+1)
    \;\big|\,  \Xobs, \textstyle{\lbag V \rbag} \big)
    \;=&\;
    \P\big(\,
    R^*_-(U) > \lceil r(B+1) \rceil - 1
    \;\big|\,  \Xobs, \textstyle{\lbag V \rbag} \big)
    \\
    \;\overset{\eqref{eq:rank:fn:property:one}}{=}&\; 
    \P\big(\,
    U  > R^*_+(B_r)
    \;\big|\,  \Xobs, \textstyle{\lbag V \rbag} \big)
    \\
    \;=&\; \mfrac{B - R^*_+(B_r)}{B + 1} \;.  \tagaligneq \label{eq:rank:rminus} 
\end{align*}
We have used that $B_r = \lceil r(B+1) \rceil - 1$ and that $R^*_-(B_r)$ and $R^*_+(B_r)$ are both $\lbag V \rbag$-measurable above. Notice that \eqref{eq:rank:rplus} gives the expression for the first term of \eqref{eq:rank:bound:intermediate}. To compute the second term of \eqref{eq:rank:bound:intermediate}, note that on the event $R^*_-(U) < r(B+1) \leq R^*_+(U) + 1$, 
\begin{align*}
    \min\{ b \in [B]_0 \;|\; V^*_b = V^*_U \} - 1
    \;<\;
    r(B+1) - 1
    \;\leq\;
    \max\{ b \in [B]_0 \;|\; V^*_b = V^*_U \}
\end{align*}
and therefore 
\begin{align*}
    \min\{ b \in [B]_0 \;|\; V^*_b = V^*_U \}
    \;\leq\;
    \lceil r(B+1) \rceil - 1 \;=\; B_r 
    \;\leq\;
    \max\{ b \in [B]_0 \;|\; V^*_b = V^*_U \}
    \;,
\end{align*}
which implies 
\begin{align*}
    R^*_-(U) 
    \;=&\;
    R^*_-(B_r)
    &\text{ and }&&
    R^*_+(U) 
    \;=&\; 
    R^*_+(B_r)
    \;,
\end{align*}
both of which are constant almost surely given $\lbag V \rbag$. Therefore, almost surely
\begin{align*}
    \mean\Big[ &
    \, p_T(r(B+1) - R_-\,,\, R_+ - R_- + 1)
    \, 
    \ind_{\{R_- < r(B+1) \leq R_+ + 1\}}
    \;\Big|\,  \Xobs , \textstyle{\lbag V \rbag} \Big]
    \\
    \;=&\; 
    p_T \big(r(B+1) - R^*_-( B_r)\,,\, R^*_+(B_r) - R^*_-(B_r) +1  \big) 
    \\
    &\qquad \qquad  
    \,\times\,
    \P\big( R_- < r (B+1) \leq R_+ + 1 \,\big|\,  \Xobs , \textstyle{\lbag V \rbag} \big)
    \\
    \;\overset{(a)}{=}&\; 
    p_T \big(r(B+1) - R^*_-( B_r)\,,\, R^*_+(B_r) - R^*_-(B_r) + 1 \big) 
    \\
    &\qquad  \times
    \Big( 1 
    - 
    \P\big( R_- \geq r(B+1)  \,\big|\, \Xobs , \textstyle{\lbag V \rbag} \big)
    -
    \P\big( R_+ < r(B+1) - 1 \,\big|\,  \Xobs , \textstyle{\lbag V \rbag} \big)
    \Big)
    \\
    \;\overset{(b)}{=}&\; 
    \P \Big( T\big( R^*_+(B_r) - R^*_-(B_r) + 1\big) <  r(B+1) - R^*_-(B_r) \,\big|\,  \textstyle{\lbag V \rbag} \Big) \,
    \mfrac{R^*_+(B_r) - R^*_-(B_r) + 1}{B + 1}
    \;.
\end{align*}
In $(a)$, we have noted that $R_- \leq R_+$ almost surely, and in $(b)$, we have used \eqref{eq:rank:rplus} and \eqref{eq:rank:rminus} as well as the definition of $p_T$. Combining this with \eqref{eq:rank:rplus} again, almost surely
\begin{align*}
    &\P( R_\Psi(X) < rB \,|\, \Xobs , \textstyle{\lbag V \rbag}  ) 
    \;=\; 
    \eqref{eq:rank:bound:intermediate}
    \\
    &\;=\; 
    \mfrac{R^*_-(B_r)}{B + 1}
    \\
    &\quad
    +
    \P \Big( T(  R^*_+(B_r) - R^*_-(B_r) + 1 ) <  r(B+1) - R^*_-(B_r) \,\big|\,  \textstyle{\lbag V \rbag} \Big) \,
    \mfrac{R^*_+(B_r) - R^*_-(B_r) + 1}{B + 1}
    \\
    &\;=\; C_T(r)
    \;,
\end{align*}
which proves \eqref{eq:group:general:tie:key:result} and therefore finishes the proof. \qed 

\section{Additional results and proofs for \Cref{sec:validity:ex:to:inv}: Approximate invariance} 
\label{appendix:thm:Del:Kol:approx:inv}

\subsection{Variants of \Cref{thm:Del:Kol:approx:inv}} \label{appendix:thm:Del:Kol:approx:inv:variants}

\vspace{.5em}

As mentioned at the end of \Cref{sec:validity:ex:to:inv}, \Cref{thm:Del:Kol:approx:inv} can be extended to allow for obtaining conditional coverage on $\Xobs$, and we can additionally relax the conditioning on $X$ in $\P(f(\Phi_1(X)) \leq t | X )$ to a general $\sigma$-algebra. In this section, we state a result that includes both extensions. We first state a relaxation of \Cref{asst:iid:smooth}:

\begin{assumption} \label{asst:iid:smooth:general} 
(i) $f(\Phi_1(X)), \ldots, f(\Phi_B(X))$ are conditionally i.i.d.~and independent of $f(X)$ given a $\sigma$-algebra $\A$ with $\Xobs \subseteq \A \subseteq \sigma(X)$; (ii) smoothed uniform tie-breaking is used;  (iii) the output space of $f$, $\cY$, is taken to be $\R$.
\end{assumption}

\noindent
The corresponding measure of approximate invariance is then
\begin{align*}
    \Delta_{\rm inv}(\A)
    \;=\;
    \msup_{t \in \R}
    \big|  
        \P( f(\Phi_1(X)) \leq t \,|\, \A ) 
        \,-\, 
        \P( f(X) \leq t \,|\, \Xobs )  
    \big| 
    \;.
\end{align*} 
We also define the conditional $L_\nu$ norm as 
\begin{align*}
    \| \argdot \|_{L_\nu | \Xobs} \;\coloneqq\; (\mean[ | \argdot |^\nu \,|\, \Xobs ])^{1/\nu}\;.
\end{align*}
The corresponding relaxation of \Cref{thm:Del:Kol:approx:inv} reads: 

\begin{theorem} \label{thm:Del:Kol:approx:inv:general} Let $r \in [0,1]$, $\nu \geq 1$ and $B \geq 2$. Under Assumption \ref{asst:iid:smooth:general}, almost surely
\begin{align*}
     &\Big| \P\Big(  \mfrac{R_\Phi(X)}{B+1}  < r \,\Big|\, \Xobs \Big)  
        \,-\,
        r
    \Big|
    \\
    &
    \;\leq\;
    \mfrac{12}{4^{1/(\nu+1)}}
    \,
    \big( 3 \, \| \Delta_{\rm inv}(\A) \|_{L_\nu \,|\, \Xobs} \big)^{\frac{\nu}{\nu+1}} 
    + 
    \mfrac{1}{B-1} 
    +  
    \mfrac{\sqrt{3 \log B}}{\sqrt{B-1}} 
    \;.
\end{align*}
\end{theorem}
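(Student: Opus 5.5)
The plan is to condition on $\A$ and reduce the rank statistic to a randomised probability integral transform of $V_0 \coloneqq f(X)$. Then compare the conditional c.d.f.\ of the augmented statistics with the target c.d.f.\ $F(t) \coloneqq \P(f(X)\le t \,|\, \Xobs)$, paying $\Delta_{\rm inv}(\A)$ on a good event and a Markov-type price off it.

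\textbf{Step 1 (concentration given $\A$).} Write $V_b \coloneqq f(\Phi_b(X))$ and $G_\A(t)\coloneqq \P(V_1\le t\,|\,\A)$, with left limit $G_\A(t-)$. By \Cref{asst:iid:smooth:general}(i), given $\A$ the $V_b$ are i.i.d.\ with c.d.f.\ $G_\A$ and independent of $V_0$. With smoothed uniform tie-breaking,
\begin{align*}
R_\Phi(X) \;=\; \msum_{b\le B}\ind\{V_0>V_b\} + W\big(\msum_{b\le B}\ind\{V_0=V_b\}+1\big),
\end{align*}
where $W\sim{\rm Uniform}[0,1]$ is independent of everything else. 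Conditionally on $(\A,V_0,W)$, the two counts are binomial with means $B\,G_\A(V_0-)$ and $B\,(G_\A(V_0)-G_\A(V_0-))$. Set
\begin{align*}
\hat U_\A \;\coloneqq\; G_\A(V_0-)+W\,\big(G_\A(V_0)-G_\A(V_0-)\big).
\end{align*}
A Hoeffding/DKW-type bound, applied to the empirical c.d.f.\ of $(V_b)_{b\le B}$ at the two points $V_0-$ and $V_0$, then gives the following: outside an event of conditional probability $O(1/B)$, we have $|R_\Phi(X)/(B+1)-\hat U_\A|\le \epsilon_B$ with $\epsilon_B \asymp \sqrt{3\log B/(B-1)}$. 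The normalisation by $B+1$ rather than $B$, and the $+1$ inside the tie count, contribute the $\frac{1}{B-1}$ term. Consequently
\begin{align*}
\P(\hat U_\A<r-\epsilon_B\,|\,\Xobs)-O(\tfrac1B)\;\le\;\P\big(R_\Phi(X)/(B+1)<r\,\big|\,\Xobs\big)\;\le\;\P(\hat U_\A<r+\epsilon_B\,|\,\Xobs)+O(\tfrac1B).
\end{align*}

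\textbf{Step 2 (replacing $G_\A$ by $F$).} Define the randomised PIT $U\coloneqq F(V_0-)+W\,(F(V_0)-F(V_0-))$. Given $\Xobs$, $V_0$ has c.d.f.\ $F$, so $U\,|\,\Xobs\sim{\rm Uniform}[0,1]$ exactly, including when $F$ has atoms. This is where smoothed tie-breaking is essential. Since both $G_\A(t)$ and $G_\A(t-)$ are within $\Delta_{\rm inv}(\A)$ of $F(t)$ and $F(t-)$ respectively, $|\hat U_\A-U|\le\Delta_{\rm inv}(\A)$ pointwise. Fix $\delta>0$. Then
\begin{align*}
\big|\P(\hat U_\A<s\,|\,\Xobs)-s\big| \;\le\; \P\big(U\in[s-\delta,s+\delta)\,\big|\,\Xobs\big)+\P\big(\Delta_{\rm inv}(\A)>\delta\,\big|\,\Xobs\big)\;\le\; 2\delta+\frac{\|\Delta_{\rm inv}(\A)\|^\nu_{L_\nu|\Xobs}}{\delta^\nu}.
\end{align*}
Optimising over $\delta$ gives a bound of order $\|\Delta_{\rm inv}(\A)\|_{L_\nu|\Xobs}^{\nu/(\nu+1)}$. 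The explicit constants $\frac{12}{4^{1/(\nu+1)}}\,3^{\nu/(\nu+1)}$ should come out of this optimisation after bookkeeping for the two comparisons at $s=r\pm\epsilon_B$. The shift $\pm\epsilon_B$ itself costs $\epsilon_B$ because $U$ is uniform.

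\textbf{Main obstacle.} Step 1 is the delicate part. The concentration is needed at the random, $\A$-dependent point $V_0$, which is not independent of $\A$. It must also be uniform enough to cover both the strict count and the tie count when $G_\A$ has atoms. The first thing I would try is to condition on $(\A,V_0,W)$, where the $V_b$ are i.i.d.\ and independent of $V_0$, and apply Hoeffding pointwise with deviation $\sqrt{3\log B/(B-1)}$. This makes the failure probability $O(B^{-3/2})$ per count, which is absorbed into $\frac{1}{B-1}$. That route avoids needing a DKW bound uniform in $t$. I would also check that the tie term $W(\cdot)$ is controlled by the same event, since it is a monotone interpolation between the two counts.
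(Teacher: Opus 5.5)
Your argument is correct, but it is not the route the paper takes for this theorem.

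\textbf{The paper's route.} It introduces an exchangeable surrogate: the $f(\Psi_b(X))$ are conditionally i.i.d.\ copies of $f(X)\,|\,\Xobs$. It writes both rank c.d.f.'s as c.d.f.'s of averages of conditionally i.i.d.\ three-point variables, with parameters $(P_1,Q_1,U)$ and $(P_2,Q_2,U)$ respectively. It interpolates between the two by the mean value theorem in $(p,q)$, using the exact derivative formula of \Cref{lem:derivative:cdf:V} together with Chernoff bounds, so the derivatives are exponentially small off a window around $r$. The mass of that window is controlled by the uniformity of $P_2+UQ_2$ given $\Xobs$. Finally it invokes \Cref{thm:ex:validity} for the surrogate.

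\textbf{Your route.} You concentrate $R_\Phi(X)/(B+1)$ directly around $\hat U_\A=P_1+WQ_1$, which is essentially the device the paper reserves for \Cref{lem:fixed:to:random}. You then compare $\hat U_\A$ with the randomised PIT $P_2+WQ_2$. This is shorter and needs neither the surrogate nor the derivative lemma. Your convex-combination bound $|\hat U_\A-(P_2+WQ_2)|\le\Delta_{\rm inv}(\A)$ is also sharper than the paper's separate bounds $|P_1-P_2|\le\Delta_{\rm inv}(\A)$ and $|Q_1-Q_2|\le2\Delta_{\rm inv}(\A)$.

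\textbf{Constants.} The stated constants will not literally ``come out'' of your optimisation; you get smaller ones, which is enough. One-sided Hoeffding at $\epsilon=\sqrt{\log B/B}\le\sqrt{3\log B}/\sqrt{B-1}$ fails with probability at most $2/B^2$, and $2/B^2\le\frac{1}{B-1}-\frac{1}{B+1}$. Taking $\delta=\|\Delta_{\rm inv}(\A)\|_{L_\nu|\Xobs}^{\nu/(\nu+1)}$ gives a first term of at most $3\|\Delta_{\rm inv}(\A)\|_{L_\nu|\Xobs}^{\nu/(\nu+1)}$, even with your $2\delta$. This lies below the stated first term, since $\frac{12}{4^{1/(\nu+1)}}3^{\nu/(\nu+1)}\ge6\sqrt3$ for $\nu\ge1$.

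\textbf{What the paper's interpolation buys.} It degrades gracefully in $B$. It never compares the rank with a quantity fixed by $(\A,V_0,W)$, so its $B$-dependent loss sits only in the window term and in $4B\delta e^{-2(B-1)\epsilon^2}$, which vanishes as $\delta\to0$. Letting $\epsilon$ depend on $\delta$ in \eqref{eq:approx:inv:almost:there} (e.g.\ $\epsilon=0$ once $\delta\le B^{-2}$) recovers an $O(1/B)$ bound as $\Delta_{\rm inv}(\A)\to0$, matching the exact-exchangeability rate. Your concentration step instead pays order $\sqrt{\log B/B}$ even when $\Delta_{\rm inv}(\A)=0$.

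For the theorem as stated, with $\epsilon$ fixed, both routes suffice.
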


\Cref{thm:Del:Kol:approx:inv} follows as an immediate corollary by setting $\A = \sigma(X)$ and $\Xobs = \A_{\rm trivial}$. To prove \Cref{thm:Del:Kol:approx:inv:general}, in view of \Cref{thm:ex:validity}, the starting point is to construct an exchangeable sequence $(f(\Psi_b(X)))_{0 \leq b \leq B}$, which are then compared to the original sequence $(f(X), f(\Phi_1(X)), \ldots, f(\Phi_B(X)))$. Our choice in this proof is 
\begin{align*}
    (f(\Psi_b(X)))_{0 \leq b \leq B} 
    \qquad \text{ are conditionally i.i.d.~distributed as } \qquad  f(X) \,|\, \Xobs\;.
    \tagaligneq \label{eq:choice:Psi}
\end{align*}
Another key element of the proof is the observation that, under \Cref{asst:iid:smooth:general}(ii), we can write the tie-breaking function in \eqref{eq:rank} as 
\begin{align*}
    T(n) \;=\; U n 
    \qquad \text{ for }\; U \sim \textrm{Uniform}[0,1]\;.
\end{align*}
In particular, this allows us to write the normalised rank variable as
\begin{align*}
    \mfrac{R_\Phi(X)}{B+1} 
    \;=&\;
    \mfrac{1}{B+1} \msum_{b=1}^B 
    \Big( 
        \ind_{\{ f(X) > f(\Phi_b(X)) \}} 
        +
        U \,
        \ind_{\{ f(X) = f(\Phi_b(X)) \}} 
    \Big)
    +
    \mfrac{U}{B+1}
    \;.
\end{align*}
Conditioning on the quantities 
\begin{align*}
    U\;,
    \qquad 
    \P(f(X) > f(\Phi_1(X)) \,|\, \A, f(X))\;,
    \qquad 
    \P(f(X) = f(\Phi_1(X)) \,|\, \A, f(X))\;,
\end{align*}
this is an empirical average of conditionally i.i.d.~random variables supported on three points. \Cref{appendix:three:point} provides tools that carefully controls the distribution of an empirical average of three-point random variables, and \Cref{proof:thm:Del:Kol:approx:inv:general} employs those tools to prove \Cref{thm:Del:Kol:approx:inv:general}.

\subsection{Distribution of sums of three-point variables} \label{appendix:three:point}

The key technical ingredient of the proof is the family of 3-point random variables, parameterised by $p, q, u \in [0,1]$ and defined as 
\begin{align*}
    V
    \;\coloneqq\; 
    \begin{cases}
        1 &\text{ with probability }
        p \;,
        \\
        u &\text{ with probability }
        q\;,
        \\
        0 &\text{ with probability } 1-p-q \;.
    \end{cases}
\end{align*}
We write $V_1, V_2, \ldots$ as i.i.d.~copies of $V$, and 
\begin{align*}
    \bar V_B \;\coloneqq\; \mfrac{1}{B} \msum_{b=1}^B  V_b\;.
\end{align*}
The next lemma concerns the derivatives of the c.d.f.~of $\bar V_B$ with respect to $p$ and $q$.  Throughout the proof, we also denote the multinomial coefficient
\begin{align*}
    \mbinom{B}{m,\, k} \;\coloneqq&\; \mfrac{B!}{m! k! (B-m-k)!}
    \;.
\end{align*}

\begin{lemma} \label{lem:derivative:cdf:V} Let $r \in \R$ and $B \geq 2$. Then 
\begin{align*}
    \partial_p \, \P(\bar V_B < r )
    \;=&\;
    - B \, \P\Big( \bar V_{B-1} \in \Big[ \mfrac{Br - 1}{B-1} , \mfrac{Br}{B-1} \Big) \Big)
    \;,
    \\
    \partial_q \, \P(\bar V_B < r )
    \;=&\;
    - B \, \P\Big( \bar V_{B-1} \in \Big[ \mfrac{Br- u }{B-1}, \mfrac{Br}{B-1} \Big) \Big)
    \;.
\end{align*}
    
\end{lemma}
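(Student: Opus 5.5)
The plan is to differentiate the c.d.f.\ directly through a one-variable decomposition, isolating $V_B$ from the first $B-1$ variables. Conditioning on $V_B$ and writing $\bar V_B = \frac{(B-1)\bar V_{B-1} + V_B}{B}$, we have
\begin{align*}
    \P(\bar V_B < r)
    \;=\;
    p\, \P\Big(\bar V_{B-1} < \mfrac{Br-1}{B-1}\Big)
    + q\, \P\Big(\bar V_{B-1} < \mfrac{Br-u}{B-1}\Big)
    + (1-p-q)\, \P\Big(\bar V_{B-1} < \mfrac{Br}{B-1}\Big)
    \;.
\end{align*}
This expression alone does not give the derivative cleanly, because $\bar V_{B-1}$ also depends on $(p,q)$. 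The standard remedy is symmetry: $\P(\bar V_B<r)$ is a polynomial in $(p,q)$, since summing over multinomial counts gives
\begin{align*}
    \P(\bar V_B<r) \;=\; \msum_{m,k} \mbinom{B}{m,\,k} p^m q^k (1-p-q)^{B-m-k}\, \ind\{m + ku < Br\}\;.
\end{align*}

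To differentiate, I would treat the $B$ variables as having their own parameters $(p_b,q_b)$ and apply the chain rule along the diagonal $p_b=p$, $q_b=q$. The total derivative $\partial_p$ is then the sum over $b$ of the partial derivatives with respect to $p_b$. By exchangeability, each of these $B$ terms equals the derivative with respect to the last coordinate's parameter $p_B$.

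For that last-coordinate derivative, the first $B-1$ variables are held fixed. Differentiating the displayed decomposition in $p_B$, with $(1-p_B-q_B)$ carrying the dependence, gives
\begin{align*}
    \P\Big(\bar V_{B-1} < \mfrac{Br-1}{B-1}\Big) - \P\Big(\bar V_{B-1} < \mfrac{Br}{B-1}\Big)
    \;=\;
    -\P\Big(\bar V_{B-1}\in\Big[\mfrac{Br-1}{B-1},\mfrac{Br}{B-1}\Big)\Big)\;.
\end{align*}
Multiplying by $B$ yields the first identity. The $q$ case is identical with $1$ replaced by $u$; the interval is still correctly oriented because $u\ge 0$.

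The only care needed is justifying the chain-rule/exchangeability step. I would make it rigorous either through the polynomial representation above, which makes differentiability trivial, or by a direct multinomial computation using $\partial_p \mbinom{B}{m,\,k}p^m q^k s^{B-m-k}$ with $s=1-p-q$. The direct route reindexes via $m\mbinom{B}{m,\,k}=B\mbinom{B-1}{m-1,\,k}$ and the analogous identity for $B-m-k$, and the resulting telescoping difference reproduces the interval probability.

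I expect the main obstacle to be bookkeeping, namely the boundary conventions (strict versus non-strict inequalities) that produce the half-open intervals. Since $m+ku<Br$ is strict on both sides, the lower endpoint is included and the upper endpoint is excluded, as stated.
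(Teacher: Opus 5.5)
Your proof is correct, but your main argument takes a different route from the paper's. The paper works entirely with the multinomial representation $\P(\bar V_B<r)=\msum_{m,k}\mbinom{B}{m,\,k}p^kq^m(1-p-q)^{B-m-k}\,\ind_{\{k+um<Br\}}$. It differentiates term by term and shifts $k\mapsto k+1$ in the sum coming from $\partial_p p^k$. It then observes that both resulting sums carry the common weight $B\,\mbinom{B-1}{m,\,k}p^kq^m(1-p-q)^{B-1-m-k}$, so only the indicator difference $\ind_{\{k+1+um<Br\}}-\ind_{\{k+um<Br\}}=-\ind_{\{k+um\in[Br-1,Br)\}}$ survives. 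This is essentially the ``direct route'' you mention as a fallback.

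Your main argument instead separates the parameters across coordinates. The probability $\P(\sum_b V_b<Br)$ is affine in each pair $(p_b,q_b)$ and symmetric under permuting coordinates, so the diagonal derivative is $B$ times the derivative in $p_B$ alone. You then read that derivative off the one-step conditioning on $V_B$.

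What your approach buys:
\begin{itemize}
\item It explains the factor $B$: each coordinate contributes one term.
\item It explains the half-open interval: it is exactly the window in which switching that coordinate from $0$ to $1$ (or from $0$ to $u$) changes the event.
\item It avoids the index bookkeeping.
\item It extends verbatim to non-identically distributed coordinates.
\end{itemize}
The paper's computation, in exchange, is a fully explicit algebraic check that needs no chain-rule or symmetry argument. Both arguments rest on the c.d.f.\ being a polynomial in the parameters, so differentiability is not an issue in either.
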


\begin{proof}[Proof of \Cref{lem:derivative:cdf:V}] We first express the c.d.f.~as a combinatorial sum:
\begin{align*}
    \P(\bar V_B < r )
    \;=&\;
    \msum_{m=0}^B 
    \msum_{k=0}^{B-m}
    \mbinom{B}{m,k}
    p^k q^m (1-p-q)^{B-m-k} 
    \, \ind_{\{ k + um < Br \}}
    \;.
\end{align*}
Differentiating with respect to $p$, we obtain 
\begin{align*}
    \partial_p \, &\, \P(\bar V_B < r )
    \\
    \;=&\;
    \sum_{m=0}^{B-1} 
    \sum_{k=1}^{B-m}
    \mfrac{B!}{m! (k-1)! (B-m-k)!}
    \, p^{k-1} q^m (1-p-q)^{B-m-k} 
    \, \ind_{\{ k + um < B r \}}
    \\
    &\;
    -
    \sum_{m=0}^{B-1} 
    \sum_{k=0}^{B-m-1}
    \mfrac{B!}{m! k! (B-m-k-1)!}
    \, p^k q^m (1-p-q)^{B-m-k-1} 
    \, \ind_{\{ k + um < B r \}}
    \\
    \;\overset{(a)}{=}&\;
    \sum_{m=0}^{B-1} 
    \sum_{k=0}^{B-m-1}
    \mfrac{B!}{m! k! (B-m-k-1)!}
    \, p^k q^m (1-p-q)^{B-m-k-1} 
    \, \ind_{\{ k + 1 + um < B r \}}
    \\
    &\;
    -
    \sum_{m=0}^{B-1} 
    \sum_{k=0}^{B-m-1}
    \mfrac{B!}{m! k! (B-m-k-1)!}
    \, p^k q^m (1-p-q)^{B-m-k-1} 
    \, \ind_{\{ k + um < B r \}}
    \\
    \;=&\;
    -
    \sum_{m=0}^{B-1} 
    \sum_{k=0}^{B-m-1}
    \mfrac{B!}{m! k! (B-m-k-1)!}
    \, p^k q^m (1-p-q)^{B-m-k-1} 
    \, \ind_{\{ k + um \in [B r-1, B r)  \}}
    \\
    \;=&\;
    - B \, \P\Big( \bar V_{B-1} \in \Big[ \mfrac{Br - 1}{B-1} , \mfrac{Br}{B-1} \Big) \Big)
    \;.
\end{align*}
In $(a)$ above, we have shifted the index of $k$ by $1$ in the first sum.
For the derivative with respect to $q$, we instead express 
\begin{align*}
    \P(\bar V_B < r )
    \;=&\;
    \msum_{m=0}^B 
    \msum_{k=0}^{B-m}
    \mbinom{B}{m,k}
    q^k p^m (1-p-q)^{B-m-k} 
    \, \ind_{\{ uk + m < Br \}}
    \;.
\end{align*}
Then the exact same calculation applies, except that the shift of index in $(a)$ causes a shift of $u$ in the indicator rather than $1$. This implies 
\begin{align*}
    \partial_q \, \P(\bar V_B < r )
    \;=\;
    - B \, \P\Big( \bar V_{B-1} \in \Big[ \mfrac{Br - u}{B-1}, \mfrac{Br}{B-1} \Big) \Big)
    \;.
\end{align*}
\end{proof}

The next lemma provides two concentration inequalities for $\bar V_B$.

\begin{lemma} \label{lem:V:conc} Let $r \in [0,1]$. If $r \geq p + u q$, then
\begin{align*}
    \P( \bar V_B \geq r )  \;\leq\; 
    e^{ - 2 B \, ( r - (p+u q))^2}
    \;.
\end{align*}
If instead $r \leq p + u q$, then
\begin{align*}
    \P( \bar V_B \leq r ) 
    \;\leq\; 
    e^{- 2 B\, (r-(p+uq))^2}
    \;.
\end{align*}

\end{lemma}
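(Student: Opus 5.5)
This is Hoeffding's inequality applied to the i.i.d.\ bounded summands $V_1,\ldots,V_B$. Each $V_b$ takes values in $\{0,u,1\}\subseteq[0,1]$, so the range is at most $1$. The mean is
\begin{align*}
    \mean[V_b] \;=\; p\cdot 1 + q\cdot u + (1-p-q)\cdot 0 \;=\; p+uq\;.
\end{align*}
Hoeffding's inequality for the average of $B$ independent $[0,1]$-valued variables gives, for any $s\ge 0$,
\begin{align*}
    \P\big(\bar V_B - (p+uq) \geq s\big) &\;\leq\; e^{-2Bs^2}\;,
    \\
    \P\big(\bar V_B - (p+uq) \leq -s\big) &\;\leq\; e^{-2Bs^2}\;.
\end{align*}

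For the first claim, take $r\ge p+uq$ and set $s=r-(p+uq)\ge 0$. Then the event $\{\bar V_B\ge r\}$ is exactly $\{\bar V_B-(p+uq)\ge s\}$, and the upper-tail bound gives $e^{-2B(r-(p+uq))^2}$. For the second claim, take $r\le p+uq$ and set $s=(p+uq)-r\ge 0$. The event $\{\bar V_B\le r\}$ equals $\{\bar V_B-(p+uq)\le -s\}$, and the lower-tail bound gives the same exponent, since $s^2=(r-(p+uq))^2$.

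The only step that needs care is checking that Hoeffding's hypotheses hold: the summands are independent, since the $V_b$ are i.i.d.\ copies of $V$, and they are almost surely bounded in an interval of length at most $1$, which holds because $u\in[0,1]$. The degenerate case $s=0$ is trivial, as the bound then equals $1$. No step is a real obstacle; the lemma is stated in this form only so that it can be used later to concentrate the normalised rank around its conditional mean.
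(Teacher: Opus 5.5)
Your proof is correct. The $V_b$ are i.i.d., take values in $\{0,u,1\}\subseteq[0,1]$ and have mean $p+uq$, so Hoeffding's inequality with range $1$ gives both tails directly.

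The paper reaches the same bound by essentially re-deriving Hoeffding's inequality from scratch, in a self-contained way:
\begin{itemize}
\item it applies the Chernoff bound;
\item it uses the convexity estimate $e^{ut}\le 1-u+ue^t$ for $u\in[0,1]$ to dominate the moment generating function of $V$ by that of a Bernoulli$(p+uq)$ variable;
\item it optimises over $t$ to obtain the relative-entropy bound $e^{-B\,{\rm D}(r\,\|\,p+uq)}$;
\item it finishes with Pinsker's inequality ${\rm D}(r\,\|\,p')\ge 2(r-p')^2$.
\end{itemize}
Along the way it treats the boundary cases $p+uq\in\{0,1\}$ and $r\in\{0,1\}$ separately.

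Citing Hoeffding buys brevity and removes the need for any edge-case analysis. The paper's route passes through the sharper Kullback--Leibler (Chernoff) bound. However, only the quadratic exponent is used later, in \Cref{lem:derivative:cdf:V:bound} and in the proof of \Cref{lem:fixed:to:random:conditional}.
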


\begin{proof}[Proof of \Cref{lem:V:conc}] First consider the case $r > p+ u q $. By the Chernoff bound and that $u \leq 1$, we have 
\begin{align*}
    \P( \bar V_B \geq r ) 
    \;\leq&\;
    \minf_{t > 0} 
    \, 
    e^{- B rt }
    \,
    \mean\big[ e^{B \bar V_B t} \big]
    \\
    \;=&\;
    \minf_{t > 0} 
    \, 
    e^{-B rt }
    \,
    \big( 1 - p - q + q e^{ut} + p e^t \big)^B
    \;.
\end{align*}
For $t > 0$ and $u \in [0,1]$, by the fundamental theorem of calculus, we get that 
\begin{align*}
    e^{u t} 
    \;=\;
    \big( 1 + (e^t - 1)  \big)^u
    \;=&\; 
    1 + u  (e^t - 1) \, 
    \mint_0^1 \mfrac{1}{(1+\theta(e^t-1))^{1-u}} d\theta
    \\
    \;\leq&\;
    1 -  u + u e^t \;.
\end{align*}
This yields 
\begin{align*}
    \P( \bar V_B \geq r ) 
    \;\leq&\;
    \minf_{t > 0} 
    \, 
    e^{-B rt }
    \,
    \big( 1 - p - u q +  (p+ u q) e^t  \big)^B
    \;.
\end{align*}
Suppose $0 < p + uq < r < 1$. Since the function $x / (1-x)$ is increasing for $x < 1$, we have $\frac{p+ u q}{1-p-uq} < \frac{r}{1-r}$ and therefore
\begin{align*}
    \mfrac{r(1-p-uq)}{(1-r)(p+uq)} > 1 \;.
\end{align*}
Choosing $t = \log \frac{r(1-p-uq)}{(1-r)(p+uq)} \, > 0$, we obtain
\begin{align*}
    \P( \bar V_B \geq r ) 
    \;\leq&\;
    \Big( \mfrac{(1-r)(p+uq)}{r(1-p-uq)} \Big)^{Br}
    \,
    \Big( \mfrac{1-p-uq}{1-r} \Big)^B 
    \\
    \;=&\;
    \exp\Big( 
        B \Big( r \log \mfrac{p+uq}{r} + (1-r) \log \mfrac{1-p-uq}{1-r}   \Big)
    \Big)
    \\
    \;=&\;
    e^{-B\, {\rm D}( r \| p+uq )}
    \;,
\end{align*}
where we have defined the binary relative entropy for $r, p' \in [0,1]$ as 
\begin{align*}
    {\rm D}(r \,\|\,  p' ) \;\coloneqq\; r \log \mfrac{r}{p'} + (1-r) \log \mfrac{1-r}{1-p'} \;\geq\; 0 \;.
\end{align*}
We take $D(r \| 0 ) = +\infty$ for $r > 0$ and $D(1 \| p' ) = \log \frac{1}{p'}$, following the convention in information theory. In the edge case $0 = p+ u q < r \leq 1$, we have 
\begin{align*}
    \P( \bar V_B \geq r ) 
    \;=\; 
    0 
    \;=\;
    e^{-B\, {\rm D}( r \| 0 )}
    \;,
\end{align*}
In the other edge case $0 < p+uq < r = 1$,  we have 
\begin{align*}
    \P( \bar V_B \geq r ) 
    \;\leq&\;
    \minf_{t > 0} 
    \,
    \big( (1 - p - u q)e^{-t} + (p+ u q)  \big)^B
    \\
    \;=&\;
    (p+ u q)^B 
    \\
    \;=&\;
    e^{ - B \log \frac{1}{p+ u q} }
    \;=\;
    e^{ - B \, {\rm D}( 1 \| p+ u q )}
    \;,
\end{align*}
In all cases, this proves that, if $r > p+uq $,
\begin{align*}
    \P( \bar V_B \geq r ) 
    \;\leq\;
    e^{ - B \, {\rm D}( r \| p+u q )}
    \;.
\end{align*}
Now by Pinsker's inequality, we have 
\begin{align*}
    {\rm D}( r \| p+ u q ) 
    \;\geq\; 
    2 ( r - (p+ u q))^2
    \;,
\end{align*}
which gives the desired inequality that 
\begin{align*}
    \P( \bar V_B \geq r ) 
    \;\leq\;
    e^{ - 2 B \, ( r - p - u q)^2}
    \;.
\end{align*}

\vspace{.5em}

For the other case where $r < p + u q$, we use the Chernoff bound again and $u \geq 0$ to obtain 
\begin{align*}
    \P( \bar V_B \leq r ) 
    \;=&\;
    \P( - \bar V_B \geq - r ) 
    \\
    \;\leq&\;
    \minf_{t > 0} 
    \, 
    e^{B rt }
    \,
    \mean\big[ e^{ - B \bar V_B t} \big]
    \\
    \;=&\;
    \minf_{t > 0} 
    \, 
    e^{B rt }
    \,
    \big( 1 - p - q + q e^{-ut} + p e^{-t} \big)^B
    \;.
\end{align*}
For $t > 0$ and $u \in [0,1]$, we can again use the fundamental theorem of calculus to obtain 
\begin{align*}
    e^{-ut} 
    \;=\; 
    \big( 1 - (1 - e^{-t}) \big)^u
    \;=&\;
    1 - u (1-e^{-t}) \mint_0^1 \mfrac{1}{(1-\theta(1-e^{-t}))^{1-u}} d \theta 
    \\
    \;\leq&\;
    1 - u + u e^{-t}\;.
\end{align*}
This implies 
\begin{align*}
    \P( \bar V_B \leq r ) 
    \;\leq&\;
    \minf_{t > 0} 
    \, 
    e^{B rt }
    \,
    \big( 1 - p - u q + u q e^{-t} + p e^{-t} \big)^B
\end{align*}
The rest of the proof proceeds similarly to the case $r < p+uq$: If $0 < r < p + uq < 1$, $\frac{r}{1-r} < \frac{p+uq}{1-p-uq}$, and choosing $t = \log \frac{(1-r)(p+uq)}{r(1-p-uq) } > 0$ gives 
\begin{align*}
    \P( \bar V_B \leq r ) 
    \;\leq\;
    \Big(
        \mfrac{(1-r) (p+uq) }{r(1-p-uq)} 
    \Big)^{Br} 
    \Big( \mfrac{1-p-uq}{1-r} \Big)^B
    \;=\;
    e^{-B\, {\rm D}( r \| p+uq )}
    \;.
\end{align*}
In the edge case where $0 \leq r < p+uq = 1$,
\begin{align*}
    \P( \bar V_B \leq r ) 
    \;=\;
    0
    \;=\;
    e^{-B\, {\rm D}( 1- r \| 0 )}
    \;=\;
    e^{-B\, {\rm D}( r \| 1 )}
    \;,
\end{align*}
and in the other edge case where $0 = r < p < 1$, 
\begin{align*}
    \P( \bar V_B \leq r ) 
    \;\leq&\;
    \minf_{t > 0} 
    \, 
    \big( 1 - p - uq + uq e^{-t} + p e^{-t} \big)^B
    \\
    \;=&\;
    (1-p - u q)^B 
    \;=\; 
    e^{-B \, {\rm D}(1 \| 1-p-uq )}
    \;=\;
    e^{-B \, {\rm D}(0 \| p + u q )}
    \;.
\end{align*}
This proves that, if $r < p + u q$, we have 
\begin{align*}
    \P( \bar V_B \leq r ) 
    \;\leq\; 
    e^{-B\, {\rm D}( r \| p + u q )}
    \;\leq\;
    e^{- 2 B\, (r-p-uq)^2}
    \;,
\end{align*}
where we have again used Pinsker's inequality.

\vspace{.5em}

Finally in the case $r=p+uq$, both bounds hold trivially since 
\begin{align*}
    \max\{ \P( \bar V_B \geq r )  \,,\, \P( \bar V_B \leq r ) \}  \;\leq\; 1 \;=\; e^{- 2 B\, (r-p-uq)^2}\;.
\end{align*}
\end{proof}

Combining \Cref{lem:derivative:cdf:V,lem:V:conc} gives the following control on the derivatives:

\begin{lemma}  \label{lem:derivative:cdf:V:bound} Define 
\begin{align*}
    \rho_r \;\coloneqq&\; \max\Big\{ \min\Big\{ \mfrac{Br - 1}{B-1}, 1 \Big\} \,,\, 0 \Big\} 
    &\text{ and }&&
    \rho'_r \;\coloneqq&\; \max\Big\{ \min\Big\{ \mfrac{Br }{B-1}, 1 \Big\} \,,\, 0 \Big\} 
    \;.
\end{align*}
If $\rho_r > p+ u q$,
\begin{align*}
    \max\big\{ \big| \partial_p \, \P(\bar V_B < r )\big|\,,\, \big| \partial_q \, \P(\bar V_B < r )\big| \big\}
    \;\leq&\;
    B \, e^{ -  2 (B-1) \, ( p+ u q - \rho_r )^2  }
    \;,
\end{align*}
and if $\rho'_r < p + uq$,
\begin{align*}
    \max\big\{ \big| \partial_p \, \P(\bar V_B < r )\big|\,,\, \big| \partial_q \, \P(\bar V_B < r )\big| \big\}
    \;\leq&\; 
    B \, e^{ - 2 (B-1) \, ( p + uq - \rho'_r )^2 }
    \;.
\end{align*}
\end{lemma}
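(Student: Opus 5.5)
Use \Cref{lem:derivative:cdf:V} to express both derivatives as $B$ times the probability that $\bar V_{B-1}$ lies in a half-open interval. Then bound that probability by a one-sided tail of $\bar V_{B-1}$ and apply \Cref{lem:V:conc} with $B-1$ in place of $B$. The parameters $p,q,u$ are unchanged, so the mean of each summand is still $p+uq$.

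\textbf{Containment.} Since $u\in[0,1]$, both intervals in \Cref{lem:derivative:cdf:V}, namely $[\frac{Br-1}{B-1},\frac{Br}{B-1})$ for $\partial_p$ and $[\frac{Br-u}{B-1},\frac{Br}{B-1})$ for $\partial_q$, are contained in $[\frac{Br-1}{B-1},\frac{Br}{B-1})$. Moreover $\bar V_{B-1}\in[0,1]$ almost surely. Hence each derivative is bounded in absolute value by
\begin{align*}
B\,\P\Big(\bar V_{B-1}\in\Big[\mfrac{Br-1}{B-1},\mfrac{Br}{B-1}\Big)\cap[0,1]\Big).
\end{align*}
Clipping the endpoints to $[0,1]$ gives the following two one-sided bounds:
\begin{align*}
\P\big(\bar V_{B-1}\geq \rho_r\big) \qquad\text{and}\qquad \P\big(\bar V_{B-1}\leq \rho'_r\big).
\end{align*}
For the first, if $\frac{Br-1}{B-1}<0$ then $\rho_r=0$ and the event $\{\bar V_{B-1}\ge 0\}$ is trivially implied. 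If $\frac{Br-1}{B-1}>1$, the interval misses $[0,1]$ and the probability is $0$. The second bound is analogous, with the half-open endpoint handled by weakening $<$ to $\le$.

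\textbf{Case $\rho_r>p+uq$.} Apply the first branch of \Cref{lem:V:conc} with $r$ replaced by $\rho_r\in[0,1]$ and $B$ by $B-1$. This gives $\P(\bar V_{B-1}\ge\rho_r)\le e^{-2(B-1)(\rho_r-p-uq)^2}$, which is the claimed bound. The degenerate sub-case where the interval lies above $1$ has probability $0$ and is also covered.

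\textbf{Case $\rho'_r<p+uq$.} Apply the second branch of \Cref{lem:V:conc} at level $\rho'_r$ to get $\P(\bar V_{B-1}\le\rho'_r)\le e^{-2(B-1)(p+uq-\rho'_r)^2}$.

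\textbf{Main obstacle.} The only real care is the clipping: checking that replacing the interval endpoints by $\rho_r$ and $\rho'_r$ preserves the inclusion in every regime of $r$. This covers $r$ near $0$ or $1$, where $\frac{Br}{B-1}$ can exceed $1$ or $\frac{Br-1}{B-1}$ can be negative. It also covers the requirement that the level passed to \Cref{lem:V:conc} lies in $[0,1]$, as that lemma assumes. Both are handled by the case checks above.
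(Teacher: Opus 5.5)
Your proposal is correct and follows the paper's proof essentially step for step. Both arguments write the derivatives via \Cref{lem:derivative:cdf:V}, dominate $|\partial_q|$ by $|\partial_p|$ using $u\le 1$ (your interval containment), pass to one-sided tails of $\bar V_{B-1}$ at the clipped levels $\rho_r,\rho'_r$, and apply \Cref{lem:V:conc} with $B-1$ in place of $B$. Your case check on the clipping is slightly more careful than the paper's, which writes $\P(\bar V_{B-1}\ge \frac{Br-1}{B-1})=\P(\bar V_{B-1}\ge\rho_r)$ as an equality even though only ``$\le$'' holds when $\frac{Br-1}{B-1}>1$.
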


\begin{proof}[Proof of \Cref{lem:derivative:cdf:V:bound}] In the case $\rho_r > p+ u q$, we can apply \Cref{lem:derivative:cdf:V,lem:V:conc} while noting that $\bar V_{B-1} \leq 1$ almost surely to obtain
\begin{align*}
    \big| \partial_p \, \P(\bar V_B < r )\big|
    \;=&\;
    B \, \P\Big( \bar V_{B-1} \in \Big[ \mfrac{Br - 1}{B-1}, \mfrac{Br}{B-1} \Big) \Big)
    \\
    \;\leq&\;
    B \, \P\Big( \bar V_{B-1} \geq \mfrac{Br - 1}{B-1} \Big)
    \\
    \;=&\;
    B \, \P\Big( \bar V_{B-1} \geq \max\Big\{ \min\Big\{ \mfrac{Br - 1}{B-1}, 1 \Big\} \,,\, 0\Big\} \Big)
    \\
    \;=&\;
    B \, \P( \bar V_{B-1} \geq \rho_r )
    \\
    \;\leq&\;
    B \, e^{ - 2 (B-1) \, ( p+ u q - \rho_r )^2  }
    \;.
\end{align*}
Moreover since $u \leq 1$, we also have 
\begin{align*}
    \big| \partial_q \, \P(\bar V_B < r )\big|
    \;=&\;
    B \, \P\Big( \bar V_{B-1} \in \Big[ \mfrac{Br- u }{B-1}, \mfrac{Br}{B-1} \Big) \Big)
    \\
    \;\leq&\; 
    \big| \partial_p \, \P(\bar V_B < r )\big|
    \;\leq\; 
    B \, e^{ - 2 (B-1) \, ( p+ u q - \rho_r )^2  }
    \;.
\end{align*}
This proves the first bound. In the case $\rho'_r < p+uq$, a similar argument gives 
\begin{align*}
    \max\big\{ \big| \partial_p \, \P(\bar V_B < r )\big|\,,\, \big| \partial_q \, \P(\bar V_B < r )\big| \big\}
    \;\leq&\;
    B \, \P\Big( \bar V_{B-1} \leq \mfrac{Br}{B-1} \Big)
    \\
    \;=&\;
    B \, \P\Big( \bar V_{B-1} \leq \max\Big\{ \min\Big\{ \mfrac{Br}{B-1}, 1 \Big\} \,,\, 0\Big\}  \Big)
    \\
    \;=&\;
    B \, \P( \bar V_{B-1} \leq \rho'_r )
    \\
    \;\leq&\; 
    B \, e^{ - 2 (B-1) \, ( p + uq - \rho'_r )^2  }
    \;.
\end{align*}
\end{proof}

\subsection{Proof of \Cref{thm:Del:Kol:approx:inv:general}} \label{proof:thm:Del:Kol:approx:inv:general}

\noindent 
\textbf{Step 1: Setup.}
We first express the quantities of interest in terms of those defined in \Cref{appendix:three:point}. We are concerned with the normalised rank variable 
\begin{align*}
    \mfrac{R_\Phi(X)}{B+1} 
    \;=&\;
    \mfrac{1}{B+1} \msum_{b=1}^B 
    \big( 
        \ind_{\{ f(X) > f(\Phi_b(X)) \}} 
        +
        U \,
        \ind_{\{ f(X) = f(\Phi_b(X)) \}} 
    \big)
    +
    \mfrac{U}{B+1} 
    \\
    \;=&\;
    \mfrac{1}{B+1} \msum_{b=1}^B  V_b(P_1, Q_1, U)  + \mfrac{U}{B+1}   \;,
\end{align*}
where we have defined the random variables $V_b(p,q,u)$ as in \Cref{appendix:three:point}, i.e.~
\begin{align*}
    V_b(p,q,u) \,|\, p,q,u  \;\overset{\rm i.i.d.}{\sim}\; 
    \begin{cases}
        1  &\text{ with probability }
        p \;,
        \\
        u   &\text{ with probability }
        q\;,
        \\
        0 &\text{ with probability } 1-p-q \;,
    \end{cases}
\end{align*}
and also defined 
\begin{align*}
    P_1 \;\coloneqq&\; \P( f(X) > f(\Phi_1(X)) \,|\, \A, \, f(X) )
    &\text{ and }&&
     Q_1 \;\coloneqq&\; \P( f(X) = f(\Phi_1(X)) \,|\, \A, \, f(X) )
     \;.
\end{align*}
$V_b(P_1, Q_1, U)$'s are conditionally i.i.d.~given $P_1$, $Q_1$ and $U$.
Since $\A \supseteq \Xobs$, by the tower rule, 
\begin{align*}
    \P\Big(  \mfrac{R_\Phi(X)}{B+1} < r \,\Big|\, \Xobs \Big)
    \;=\;
    \mean\Big[ \, \P\Big(  \mfrac{1}{B} \msum_{b=1}^B  V_b(P_1, Q_1, U) < 
    r_U \,\Big|\, P_1, Q_1, U \Big) \,\Big|\, \Xobs \Big]
    \;,
\end{align*}
where we have denoted 
\begin{align*}
    r_U \;\coloneqq\; \mfrac{B+1}{B} r - \mfrac{U}{B} \;.
\end{align*}
Conditioning on $P_1$, $Q_1$ and $U$, we can identify the average above with $\bar V_B$ in \Cref{appendix:three:point}, where $p,q,u$ are replaced by $P_1$, $Q_1$ and $U$ respectively.

\vspace{.5em}

On the other hand, recall  from \eqref{eq:choice:Psi} that $(f(\Psi_b(X)))_{0 \leq b \leq B}$ are conditionally i.i.d.~copies of $f(X)$ given $\Xobs$. We can WLOG couple $(f(\Psi_b(X)))_{0 \leq b \leq B}$ and $f(X)$ such that 
\begin{align*}
    f(\Psi_0(X)) = f(X) \quad \text{ almost surely. }
\end{align*}
This allows us to express, as before,
\begin{align*}
    \P\Big(  \mfrac{R_\Psi(X)}{B+1} < r \,\Big|\, \Xobs \Big)
    \;=\;
    \mean\Big[ \, \P\Big(  \mfrac{1}{B} \msum_{b=1}^B  V_b(P_2, Q_2, U) < r_U \,\Big|\, P_2, Q_2, U \Big) \,\Big|\, \Xobs \Big]
    \;,
\end{align*}
where we have defined 
\begin{align*}
    P_2 \;\coloneqq&\; \P( f(X) > f(\Psi_1(X)) \,|\, \Xobs, \, f(X) )
    \;,
    \\
    Q_2 \;\coloneqq&\; \P( f(X) = f(\Psi_1(X)) \,|\, \Xobs, \, f(X) )
     \;.
\end{align*}

\vspace{.5em}

\noindent 
\textbf{Step 2: Taylor expansion and apply derivative bounds from \Cref{lem:derivative:cdf:V:bound}.} Let $\Theta \sim \textrm{Uniform}[0,1]$ be independent of all other random variables, and denote the interpolation 
\begin{align*}
    P_\Theta \;\coloneqq&\; \Theta P_1 + (1-\Theta) P_2 
    &\text{ and }&&
    Q_\Theta \;\coloneqq&\; \Theta Q_1 + (1-\Theta) Q_2 \;.
\end{align*}
We will also be applying \Cref{lem:derivative:cdf:V:bound}. To this end, recall the notation
\begin{align*}
    \rho_{r_U} \;\coloneqq&\; \max\Big\{ \min\Big\{ \mfrac{Br_U - 1}{B-1}, 1 \Big\} \,,\, 0 \Big\} 
    &\text{ and }&&
    \rho'_{r_U} \;\coloneqq&\; \max\Big\{ \min\Big\{ \mfrac{Br_U}{B-1}, 1 \Big\} \,,\, 0 \Big\} 
    \;.
    \tagaligneq \label{eq:rho:r:defn}
\end{align*}
Taking a difference of the c.d.f. above and applying the mean value theorem, we obtain that 
\begin{align*}
        \P\Big(  \mfrac{R_\Phi(X)}{B+1}  \,&\, < r \,\Big|\, \Xobs \Big)  
        \,-\,
        \P\Big(  \mfrac{R_\Psi(X)}{B+1} < r \,\Big|\, \Xobs \Big) 
    \\
    \;=&\;
        \mean\Big[ \, 
        \P\Big(  \mfrac{1}{B} \msum_{b=1}^B  V_b(P_1, Q_1, U) < r_U \,\Big|\, P_1, Q_1, U \Big)
    \\
    &\hspace{2em}
        -
        \P\Big(  \mfrac{1}{B} \msum_{b=1}^B  V_b(P_2, Q_2, U) < r_U  \,\Big|\, P_2, Q_2, U \Big)
     \,\Big|\, \Xobs \Big] 
    \\
    \;=&\;
        \mean\Big[ \, 
        (P_1 - P_2)
        \,
         \partial_p \P\Big(  \mfrac{1}{B} \msum_{b=1}^B  V_b(P_\Theta, Q_\Theta, U) < r_U \,\Big|\, P_\Theta, Q_\Theta, U \Big)
    \\
    &\hspace{2em}
        +
        (Q_1 - Q_2)
        \,
         \partial_q \P\Big(  \mfrac{1}{B} \msum_{b=1}^B  V_b(P_\Theta, Q_\Theta, U) <  r_U  \,\Big|\, P_\Theta, Q_\Theta, U \Big)
     \,\Big|\, \Xobs \Big] 
     \\
     \;\eqqcolon&\;
     \mean[ (\star) \,|\, \Xobs ]
     \;.
\end{align*}
Let $\delta > 0$. We first write 
\begin{align*}
    | \mean[ (\star) \,|\, \Xobs ] |
    \;\leq&\; 
    \mean\big[
        |(\star)| \, \big( 
        \ind_{\{ |P_1 - P_2| \geq \delta  \}} 
        + 
        \ind_{\{ |Q_1 - Q_2| \geq \delta  \}} 
        + 
        \ind_{\{ |P_1 - P_2| < \delta \,,\, |Q_1 - Q_2| < \delta   \}}
    \big)
    \,\big|\, \Xobs \big]
    \;.
\end{align*}
Note that $(\star)$ is a difference between two quantities taking values in $[0,1]$, and therefore $|(\star)| \leq 1$ almost surely. By the Markov inequality, we can bound, for $\nu \geq 1$,
\begin{align*}
    \mean\big[
        |(\star)| \,
        \ind_{\{ |P_1 - P_2| \geq \delta  \}} 
    \,\big|\, 
    \Xobs \big]
    \;\leq&\;
    \P\big(  |P_1 - P_2| \geq \delta  \,\big|\, \Xobs \big)
    \;\leq\;
    \mfrac{\mean[ |P_1 - P_2|^\nu \,|\, \Xobs ]}{\delta^\nu} 
    \;,
    \\
    \mean\big[
        |(\star)| \,
        \ind_{\{ |Q_1 - Q_2| \geq \delta  \}} 
    \,\big|\, 
    \Xobs \big]
    \;\leq&\;
    \P\big(  |Q_1 - Q_2| \geq \delta   \,\big|\, \Xobs \big)
    \;\leq\;
    \mfrac{\mean[ |Q_1 - Q_2|^\nu \,|\, \Xobs ]}{\delta^\nu} 
    \;,
    \tagaligneq \label{eq:approx:inv:Markov:step}
\end{align*}
and note also that, for any $\epsilon > 0$,
\begin{align*}
    &\;
    \mean\Big[ |(\star)| \, \ind_{\big\{\rho_{r_U}  - \epsilon \leq P_\Theta + U Q_\Theta \leq \rho'_{r_U} + \epsilon \,,\, |P_1-P_2| < \delta \,,\, |Q_1 - Q_2| < \delta \big\} } \,\Big|\, \Xobs \Big]
    \\
    &\;\leq\;
    \P\big( \rho_{r_U} - \epsilon - 2\delta \leq  P_2 + U Q_2 \leq \rho'_{r_U} + \epsilon + 2\delta \,\big|\, \Xobs \big)\;.
\end{align*}
This allows us to write 
\begin{align*}
     &\;
     \mean[ (\star) \,|\, \Xobs ] 
     \\
     &\;\leq\;
     \mfrac{\mean[ |P_1 - P_2|^\nu \,|\, \Xobs ]}{\delta^\nu} 
     +
     \mfrac{\mean[ |Q_1 - Q_2|^\nu \,|\, \Xobs ]}{\delta^\nu} 
     \\
     &\;\qquad
     +
     \P\big( \rho_{r_U} - \epsilon - 2\delta \leq  P_2 + U Q_2 \leq \rho'_{r_U} + \epsilon + 2\delta \,\big|\, \Xobs \big)
     \\
     &\;\qquad
     +
     \mean\Big[
        |(\star)| \,
        \ind_{\{ |P_1 - P_2| < \delta \,,\, |Q_1 - Q_2| < \delta    \}} 
        \Big( 
             \ind_{\{ \rho_{r_U} - \epsilon > P_\Theta + U Q_\Theta  \}}  
            +  
            \ind_{\{ \rho'_{r_U} + \epsilon < P_\Theta + U Q_\Theta \}}
        \Big)
    \,\Big|\, \Xobs \Big]
    \;.
\end{align*}
To handle the last term, we write as shorthand
\begin{align*}
    M \;\coloneqq&\;  \max\Big\{ 
            \Big| 
            \partial_p \P\Big(  \mfrac{1}{B} \msum_{b=1}^B  V_b(P_\Theta, Q_\Theta, U) < r_U \,\Big|\, P_\Theta, Q_\Theta, U \Big)
            \Big| 
            \,,\,
    \\
    &\hspace{4em}
            \Big| \partial_q \P\Big(  \mfrac{1}{B} \msum_{b=1}^B  V_b(P_\Theta, Q_\Theta, U) < r_U \,\Big|\, P_\Theta, Q_\Theta, U \Big)\Big|
         \Big\}
         \;.
\end{align*}
This allows us to write
\begin{align*}
     &\;
     \mean\big[
        |(\star)| \,
        \ind_{\{ |P_1 - P_2| < \delta \,,\, |Q_1 - Q_2| < \delta   \}} 
        \Big( 
             \ind_{\{ \rho_{r_U} - \epsilon > P_\Theta + U Q_\Theta  \}}  
            +  
            \ind_{\{ \rho'_{r_U} + \epsilon < P_\Theta + U Q_\Theta \}}
        \Big)
    \,\big|\, \Xobs \big]
    \\
    &\;\leq\;
    \mean\Big[
        (|P_1 - P_2| + |Q_1 - Q_2| ) \, M\,
        \ind_{\{ |P_1 - P_2| < \delta \,,\, |Q_1 - Q_2| < \delta   \}} 
    \\
    &\hspace{5em}
        \Big( 
             \ind_{\{ \rho_{r_U} - \epsilon > P_\Theta + U Q_\Theta  \}}  
            +  
            \ind_{\{ \rho'_{r_U} + \epsilon < P_\Theta + U Q_\Theta \}}
        \Big)
          \,\Big|\,\Xobs \Big]
    \\
    &\;\leq\;
    2\delta  \, \mean\Big[ M 
        \Big( 
             \ind_{\{ \rho_{r_U} - \epsilon > P_\Theta + U Q_\Theta  \}}  
            +  
            \ind_{\{ \rho'_{r_U} + \epsilon < P_\Theta + U Q_\Theta \}}
        \Big)
    \,\Big|\,\Xobs \Big]
    \\
    &\;\leq\; 
    4 \delta B e^{ - 2(B-1) \epsilon^2 } 
     \;,
\end{align*}
where, in the last line, we have recalled that $P_\Theta = \Theta(P_1 - P_2) + P_2$  and  $Q_\Theta = \Theta(Q_1 - Q_2) + Q_2$, and  applied the derivative bounds in \Cref{lem:derivative:cdf:V:bound} on $M$. Combining all the bounds above, we obtain 
\begin{align*}
    &\;
      \Big| \P\Big(  \mfrac{R_\Phi(X)}{B+1}  < r \,\Big|\, \Xobs \Big)  
        \,-\,
        \P\Big(  \mfrac{R_\Psi(X)}{B+1} < r \,\Big|\, \Xobs \Big) 
    \Big|
    \\
    &\;\leq\; 
    \mfrac{\mean[ |P_1 - P_2|^\nu \,|\, \Xobs ]}{\delta^\nu} 
    +
    \mfrac{\mean[ |Q_1 - Q_2|^\nu \,|\, \Xobs ]}{\delta^\nu} 
    \\
    &\hspace{2em} 
    +
    \P\big( \rho_{r_U} - \epsilon - 2\delta \leq  P_2 + U Q_2 \leq \rho'_{r_U} + \epsilon + 2\delta \,\big|\, \Xobs \big)
    +
    4 B \delta 
        e^{ - 2(B-1) \epsilon^2 } 
    \;.
    \tagaligneq \label{eq:approx:inv:after:derivatives}
\end{align*}

\vspace{.5em}

\noindent 
\textbf{Step 3: Use uniformity of $P_2+UQ_2 \,|\, \Xobs$.} To control the remaining terms, we first recall that
\begin{align*}
    P_2 + U Q_2 
    \;=&\;
    \P\big(  f(\Psi_1(X)) < f(X) \,\big|\, \Xobs, f(X) \big)
    +
    U \,
     \P\big(f(\Psi_1(X)) = f(X)  \,\big|\, \Xobs, \, f(X) \big)
     \;,
\end{align*}
where $f(\Psi_1(X))$ is a conditionally i.i.d.~copy of $f(X)$ given $\Xobs$. We seek to prove that $P_2+UQ_2 \,|\, \Xobs$ is uniformly distributed. To see this, for $u \in [0,1]$, define 
\begin{align*}
    y_u 
    \;\coloneqq&\; 
    \inf\big\{ 
        y \in \R 
    \,\big|\;
        \P\big(  f(\Psi_1(X)) \leq y \,\big|\, \Xobs \big)
        \geq u
    \big\}
    \;.
\end{align*}
Then by definition, 
\begin{align*}
    \P\big(  f(\Psi_1(X)) < y_u \,\big|\, \Xobs \big)
    \;<\; 
    u 
    \;\leq\;
    \P\big(  f(\Psi_1(X)) \leq y_u \,\big|\, \Xobs \big)
    \;.
\end{align*}
Consider three cases depending on the value of $f(X)$:
\begin{itemize}
    \item On the event $\{f(X) < y_u\}$, almost surely
    \begin{align*}
        P_2 + Q_2 
        \;=&\;
        \P\big( f(\Psi_1(X)) \leq f(X) \,\big|\, \Xobs, \, f(X) \big)
        \\
        \;\leq&\;
        \P\big( f(\Psi_1(X)) < y_u \,\big|\, \Xobs, \, f(X) \big)
        \;\leq\; u\;;
    \end{align*}
    \item On the event $\{f(X) > y_u\}$, almost surely 
    \begin{align*}
        P_2 
        \;=&\;
        \P\big( f(\Psi_1(X)) < f(X) \,\big|\, \Xobs, \, f(X) \big)
        \\
        \;\geq&\;
        \P\big( f(\Psi_1(X)) \leq y_u \,\big|\, \Xobs, \, f(X) \big)
        \;\geq\; u\;;
    \end{align*}
    \item On the event $\{f(X) = y_u\}$, almost surely 
    \begin{align*}
        P_2 + U Q_2 
        \;\leq\; u 
        \quad 
        \text{ if and only if }
        \quad 
        U 
        \;\leq\; 
        \mfrac{u- \P( f(\Psi_1(X)) < y_u | \Xobs )}{\P( f(\Psi_1(X)) = y_u | \Xobs )}\;.
    \end{align*}
\end{itemize}
This implies that 
\begin{align*}
    \P(P_2 + U Q_2 \leq u \,|\, \Xobs )
    \;=&\;
    \P( f(X) < y_u \,|\, \Xobs)
    \\
    &\;
    +
    \P\Big( f(X) = y_u \,,\, U 
        \leq
        \mfrac{u- \P( f(\Psi_1(X)) < y_u | \Xobs )}{\P( f(\Psi_1(X)) = y_u | \Xobs )}  \,\Big|\, \Xobs\Big)
    \;.
\end{align*}
Since $U \sim \textrm{Uniform}[0,1]$ is independent of $f(X)$, we can compute 
\begin{align*}
    \P(P_2 + U Q_2 \leq u \,|\, \Xobs )
     \;=&\;
    \P( f(X) < y_u \,|\, \Xobs)
    \\
    &\;
    +
    \P( f(X) = y_u \,|\, \Xobs) 
    \;
        \mfrac{u- \P( f(\Psi_1(X)) < y_u | \Xobs )}{\P( f(\Psi_1(X)) = y_u | \Xobs )}  
    \\
    \;=&\; u\;.
\end{align*}
Since this holds for all $u$, we obtain that $P_2 + U Q_2 \,|\, \Xobs$ is distributed as $\textrm{Uniform}[0,1]$. Plugging this into \eqref{eq:approx:inv:after:derivatives}, and noting that $0 \leq \rho'_{r_U} - \rho_{r_U} \leq \frac{1}{B-1}$ by their definitions in \eqref{eq:rho:r:defn}, we obtain 
\begin{align*}
    &\;
      \Big| \P\Big(  \mfrac{R_\Phi(X)}{B+1}  < r \,\Big|\, \Xobs \Big)  
        \,-\,
        \P\Big(  \mfrac{R_\Psi(X)}{B+1} < r \,\Big|\, \Xobs \Big) 
    \Big|
    \\
    &\;\leq\; 
    \mfrac{\mean[ |P_1 - P_2|^\nu \,|\, \Xobs ]}{\delta^\nu} 
    +
    \mfrac{\mean[ |Q_1 - Q_2|^\nu \,|\, \Xobs ]}{\delta^\nu} 
    +
    \big( 
        \mfrac{1}{B-1} + 2\epsilon + 4 \delta
    \big)
    +
    4 B  \delta  e^{ - 2(B-1) \epsilon^2 } 
    \;.
    \tagaligneq \label{eq:approx:inv:almost:there}
\end{align*}

\vspace{.5em}

\noindent 
\textbf{Step 4: Extract approximate invariance term.} We now control $|P_1 - P_2|$ and $|Q_1 - Q_2|$. Recall  
\begin{align*}
    \Delta_{\rm inv}(\A)
    \;=\;
    \msup_{t \in \R}
    \big|  
        \P( f(\Phi_1(X)) \leq t \,|\, \A ) 
        \,-\, 
        \P( f(X) \leq t \,|\, \Xobs )  
    \big|
    \;.
\end{align*} 
This allows us to bound 
\begin{align*}
    &\;
    \mean[ |P_1 - P_2|^\nu \,|\, \Xobs ]
    \;=\;
    \mean\big[ 
    \;
    \big|  
    \,
    \P(  f(\Phi_1(X)) < f(X) \,|\, \A, f(X) ) 
    \\
    &\;\hspace{12em}
    - \;
    \P(  f(\Psi_1(X)) < f(X) \,|\, \Xobs, f(X) )  
    \,
     \big|^\nu \; \,\big|\,\Xobs\big]
     \\
    &\;\leq\;
     \mean\big[ 
    \msup_{t \in \R}
    \big|  
        \P( f(\Phi_1(X)) \leq t \,|\, \A ) 
        \,-\, 
        \P( f(\Psi_1(X)) \leq t \,|\, \Xobs )  
    \big|^\nu \,\big|\,\Xobs\big]
    \\
    &\;=\;
    \big\| \Delta_{\rm inv}(\A) \big\|_{L_\nu \,|\, \Xobs}^\nu
    \;,
    \tagaligneq \label{eq:approx:inv:P:control} 
\end{align*}
where we have used that $f(\Psi_1(X))$ is a conditionally i.i.d.~copy of $f(X)$ given $\Xobs$. Similarly,
\begin{align*}
    &\;\mean[ |Q_1 - Q_2|^\nu \,|\, \Xobs ]
    \;=\;
    \mean\big[ 
    \;
    \big|  
    \,
    \P(  f(\Phi_1(X)) = f(X) \,|\, \A, f(X) ) 
    \\
    &\;\hspace{12em}
    - \;
    \P(  f(\Psi_1(X)) = f(X) \,|\, \Xobs, f(X) )  
    \,
     \big|^\nu \; \,\big|\,\Xobs\big]
     \\
    &\;\leq\;
    2^\nu \,
     \mean\big[ 
    \msup_{t \in \R}
    \big|  
        \P( f(\Phi_1(X)) \leq t \,|\, \A ) 
        \,-\, 
        \P( f(\Psi_1(X)) \leq t \,|\, \Xobs )  
    \big|^\nu \,\big|\,\Xobs\big]
    \\
    &\;=\;
    2^\nu \, \big\| \Delta_{\rm inv}(\A) \big\|_{L_\nu \,|\, \Xobs}^\nu
    \;.
\end{align*}
Plugging these into \eqref{eq:approx:inv:almost:there} gives
\begin{align*}
    &\;
     \Big| \P\Big(  \mfrac{R_\Phi(X)}{B+1}  < r \,\Big|\, \Xobs \Big)  
        \,-\,
        \P\Big(  \mfrac{R_\Psi(X)}{B+1} < r \,\Big|\, \Xobs \Big) 
    \Big|
    \\
    &\;\leq\; 
    \mfrac{3^\nu  \, \| \Delta_{\rm inv}(\A) \|_{L_\nu \,|\, \Xobs}^\nu}{\delta^\nu} 
    +
    \big( 
        \mfrac{1}{B-1} + 2\epsilon + 4 \delta
    \big)
    +
    4 B  \delta  e^{ - 2(B-1) \epsilon^2 } 
    \;.
\end{align*}

\vspace{.5em}

\noindent
\textbf{Step 5: Clean-up.} The bound above holds for any $\epsilon, \delta > 0$. Choosing 
\begin{align*}
    \delta 
    \;=\; 
    \Big( \mfrac{ (3 \, \| \Delta_{\rm inv}(\A) \|_{L_\nu \,|\, \Xobs})^\nu}{4} \Big)^{1/(\nu+1)}
    \;,
    \quad 
    \epsilon 
    \;=\;
    \mfrac{\sqrt{3 \log B}}{ 2 \,\sqrt{B-1}}
    \;,
    \tagaligneq \label{eq:approx:inv:choice:delta:eps}
\end{align*}
we obtain 
\begin{align*}
     &\;\Big| \P\Big(  \mfrac{R_\Phi(X)}{B+1}  < r \,\Big|\, \Xobs \Big)  
        \,-\,
        \P\Big(  \mfrac{R_\Psi(X)}{B+1} < r \,\Big|\, \Xobs \Big) 
    \Big|
    \\
    &\;\leq\;  
    4 \delta + \Big( \mfrac{1}{B-1} +  \mfrac{\sqrt{3 \log B}}{\sqrt{B-1}}  + 4 \delta \Big) + 4 \delta \mfrac{1}{\sqrt{B}}
    \\
    &\;\leq\;
    12 \delta + \mfrac{1}{B-1} +  \mfrac{\sqrt{3 \log B}}{\sqrt{B-1}} 
    \\
    &\;=\; 
    \mfrac{12}{4^{1/(\nu+1)}}
    \,
    \big( 3 \, \| \Delta_{\rm inv}(\A) \|_{L_\nu \,|\, \Xobs} \big)^{\frac{\nu}{\nu+1}} 
    + 
    \mfrac{1}{B-1} 
    +  
    \mfrac{\sqrt{3 \log B}}{\sqrt{B-1}} 
    \;.
\end{align*}
The desired bound is obtained by noting that, by construction, $(f(\Psi_b(X)))_{0 \leq b \leq B}$ is a conditionally exchangeable sequence given $\Xobs$, so by \Cref{thm:ex:validity} with \Cref{remark:thm:ex:cond}, we have $\P\big(  \frac{R_\Psi(X)}{B+1} < r \,\big|\, \Xobs \big) = r$.
\qed

\subsection{Proof of \Cref{lem:fixed:to:random}} 

We prove instead a more general version of \Cref{lem:fixed:to:random}, which allows for conditioning on $\Xobs$:

\begin{lemma} \label{lem:fixed:to:random:conditional}  Let $r \in [0,1]$, $\nu \geq 1$ and $(\Phi'_1, \ldots, \Phi'_B)$ be a deterministic enumeration of a size-$B$ set $\cS_B$. Let $(\Phi_b)_{b \leq B}$ be i.i.d.~uniform draws from the same set $\cS_B$ independent of all other variables. Under \Cref{asst:iid:smooth}(ii) and (iii),
\begin{align*}
    &\;\Big| 
    \P\Big( \mfrac{R_{\Phi'}(X)}{B+1} < r \,\Big|\, \Xobs \Big)   
    -
    \P\Big( \mfrac{R_{\Phi}(X)}{B+1} < r \,\Big|\, \Xobs \Big)   
    \Big|
    \\
    &\hspace{10em}\;\leq\; 
    \mfrac{4 \sqrt{\log (2B)}}{\sqrt{2B}}
    \,+\,
    5 \, \big(3  \big\| \Delta_{\rm inv}(X) \big\|_{L_\nu \,|\, \Xobs}\big)^{\nu/(\nu+1)}
    \;,
\end{align*}
where we have denoted $R_{\Phi'}(X)  \coloneqq \Rank_T\big(f(X) \,;\, (f(\Phi'_b(X)))_{b \in [B]} \big) $.   
\end{lemma}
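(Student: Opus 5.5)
Write $T(n)=Un$ with $U\sim{\rm Uniform}[0,1]$ (smoothed tie-breaking) and set $W_b \coloneqq \ind_{\{f(X)>f(\Phi'_b(X))\}} + U\ind_{\{f(X)=f(\Phi'_b(X))\}}$. Then
\begin{align*}
    \mfrac{R_{\Phi'}(X)}{B+1} \;=\; \mfrac{1}{B+1}\Big(\msum_{b=1}^B W_b + U\Big) \;=\; \mean\Big[\mfrac{R_\Phi(X)}{B+1}\,\Big|\, X, U\Big],
\end{align*}
because each $\Phi_b$ is a uniform draw from $\cS_B$. Conditionally on $(X,U)$, the quantity $\frac{R_\Phi(X)}{B+1}-\frac{U}{B+1}$ is $\frac{1}{B+1}$ times a sum of $B$ i.i.d.\ $[0,1]$-valued variables. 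Hoeffding's inequality therefore gives
\begin{align*}
    \P\Big(\Big|\mfrac{R_\Phi(X)}{B+1}-\mfrac{R_{\Phi'}(X)}{B+1}\Big|>\eta \,\Big|\, X,U\Big)\;\le\; 2e^{-2B\eta^2}.
\end{align*}
Sandwiching then yields
\begin{align*}
    \P\Big(\tfrac{R_{\Phi'}}{B+1}<r-\eta\Big)-2e^{-2B\eta^2}\;\le\;\P\Big(\tfrac{R_{\Phi}}{B+1}<r\Big)\;\le\;\P\Big(\tfrac{R_{\Phi'}}{B+1}<r+\eta\Big)+2e^{-2B\eta^2}.
\end{align*}
The conditional version, given $\Xobs$, follows by the tower rule. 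Taking $\eta=\sqrt{\log(2B)/(2B)}$ makes $2e^{-2B\eta^2}=1/B$, which is absorbed into $\frac{4\sqrt{\log 2B}}{\sqrt{2B}}$.

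It remains to bound how much the c.d.f.\ of $R_{\Phi'}(X)/(B+1)$, or equivalently of $R_{\Phi}(X)/(B+1)$, changes over a window of width $2\eta$ around $r$. Our key bound is
\begin{align*}
    \Big|\P\Big(\tfrac{R_{\Phi'}}{B+1}<r\pm\eta\Big)-\P\Big(\tfrac{R_{\Phi'}}{B+1}<r\Big)\Big| \;\le\; \P\Big(\tfrac{R_{\Phi'}}{B+1}\in[r-\eta,r+\eta)\Big),
\end{align*}
and a similar anti-concentration bound for $R_\Phi$. The natural tool is the comparison already set up in the proof of \Cref{thm:Del:Kol:approx:inv:general}. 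Without the factor $U/(B+1)$, $R_{\Phi'}/(B+1)\approx \frac{B}{B+1}(P_1+UQ_1)$ with $P_1,Q_1$ the conditional probabilities given $X$ and $f(X)$ (here $\A=\sigma(X)$). In Step 3 of that proof it was shown that $P_2+UQ_2\,|\,\Xobs$ is exactly uniform, and Step 4 gave $\|P_1-P_2\|_{L_\nu}+\|Q_1-Q_2\|_{L_\nu}\le 3\|\Delta_{\rm inv}(X)\|_{L_\nu|\Xobs}$. We will use these to compare $P_1+UQ_1$ with the exactly uniform variable $P_2+UQ_2$. Splitting on the event $\{|P_1-P_2|+|Q_1-Q_2|\ge\delta\}$ and applying Markov's inequality as in \eqref{eq:approx:inv:Markov:step}, the probability that $P_1+UQ_1$ lies in an interval of length $\ell$ is at most $\ell+2\delta+(3\|\Delta_{\rm inv}\|)^\nu/\delta^\nu$. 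Optimising over $\delta$ gives a term of order $(3\|\Delta_{\rm inv}\|_{L_\nu|\Xobs})^{\nu/(\nu+1)}$, with constant at most $5$ once the two window terms are collected.

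The main obstacle is that $R_{\Phi'}(X)$ is an empirical average over the fixed set $\cS_B$, not the population quantity $P_1+UQ_1$. Relating the two again costs a concentration argument, which reintroduces $\sqrt{\log B/B}$ terms; they must be controlled with total constant $4$. If that route is too lossy, the fallback is to apply the window bound to $R_\Phi$ rather than $R_{\Phi'}$. For $R_\Phi$, which is conditionally a three-point average, \Cref{lem:V:conc} gives concentration around $P_1+UQ_1$ directly, at the price of a slightly looser constant.
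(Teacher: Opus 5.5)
Your argument is correct and is essentially the paper's proof. Both rest on the same three ingredients:
\begin{enumerate}
\item the exact identity $\frac{R_{\Phi'}(X)}{B+1}=\frac{B}{B+1}(P_1+UQ_1)+\frac{U}{B+1}$, with $\A=\sigma(X)$;
\item Hoeffding-type concentration of $R_\Phi(X)/(B+1)$ around this quantity given $(X,U)$ (the paper invokes \Cref{lem:V:conc});
\item anti-concentration of $P_1+UQ_1$ near the threshold, from Markov's inequality on $|P_1-P_2|,|Q_1-Q_2|$ together with the exact uniformity of $P_2+UQ_2$ given $\Xobs$.
\end{enumerate}
The paper packages the first two steps as a direct bound on the event where $\ind\{P_1+UQ_1<r_U\}$ and $\ind\{\frac1B\sum_b V_b(P_1,Q_1,U)<r_U\}$ disagree, split according to whether $|r_U-(P_1+UQ_1)|$ exceeds $\epsilon$. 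Your Hoeffding sandwich followed by a window bound is equivalent. Because you split on $|P_1-P_2|+|Q_1-Q_2|\ge\delta$, it even gives $3\delta$ instead of $5\delta$.

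The ``main obstacle'' in your last paragraph does not exist, so neither the extra concentration step nor the fallback is needed. Since $\Phi_1$ is uniform on $\cS_B$ and independent of $X$, you have $P_1=\P(f(\Phi_1(X))<f(X)\,|\,X)=\frac1B\sum_{b\le B}\ind\{f(X)>f(\Phi'_b(X))\}$, and similarly for $Q_1$. So $R_{\Phi'}(X)$ \emph{is} the population quantity. This is exactly what your own first display $\frac{R_{\Phi'}(X)}{B+1}=\mean[\frac{R_\Phi(X)}{B+1}\,|\,X,U]$ already says.

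The one detail you should write out is the $U$-dependence. The window $[r-\eta,r+\eta)$ for $R_{\Phi'}(X)/(B+1)$ corresponds to the interval $[\frac{B+1}{B}(r-\eta)-\frac UB,\frac{B+1}{B}(r+\eta)-\frac UB)$ for $P_1+UQ_1$, which moves with $U$. Meanwhile $P_2+UQ_2$ is uniform only after integrating over $U$ and is not independent of it. Enlarge to the deterministic interval $[\frac{B+1}{B}(r-\eta)-\frac1B,\frac{B+1}{B}(r+\eta))$, as the paper does through $r_U-\frac{B+1}{B}r\in[-\frac1B,0]$. Your total then becomes $\frac2B+\frac{2(B+1)}{B}\eta+3\delta$ with $\delta=(3\|\Delta_{\rm inv}(X)\|_{L_\nu|\Xobs})^{\nu/(\nu+1)}$. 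This is at most $4\eta+5\delta$ whenever $4\eta<1$, and the statement is trivial otherwise.
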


\begin{proof}[Proof of \Cref{lem:fixed:to:random:conditional}] Since $\Phi' = (\Phi'_1, \ldots, \Phi'_B)$ is a deterministic enumeration of $\cS_B$ and $\Phi_b$'s are i.i.d.~drawn from $\cS_B$, we can express 
\begin{align*}
    \mfrac{R_{\Phi'}(X)}{B+1}
    \;=&\;
    \mfrac{B}{B+1} \Big(
    \mfrac{1}{B} \, \msum_{b=1}^B \ind_{\{ f(X) > f(\Phi'_b(X)) \}} 
    + 
    U \, \mfrac{1}{B} \, \msum_{b=1}^B  \ind_{\{ f(X) = f(\Phi'_b(X)) \}} 
    \Big) 
    + \mfrac{U}{B+1}
    \\
    \;=&\;
    \mfrac{B}{B+1} \Big( \mean\big[ \ind_{\{ f(X) > f(\Phi_b(X)) \}}  \,|\, X\big]
    + 
    U \, \mean\big[ \ind_{\{ f(X) = f(\Phi_b(X)) \}}  \,|\, X\big]\Big)
    + \mfrac{U}{B+1}
    \\
    \;=&\; 
    \mfrac{B}{B+1} ( P_1 + U \, Q_1 )+ \mfrac{U}{B+1}\;,
\end{align*}
where we have inherited the notation $P_1$ and $Q_1$ from the proof of \Cref{thm:Del:Kol:approx:inv:general} in \Cref{proof:thm:Del:Kol:approx:inv:general} with $\A=\sigma(X)$. We also inherit the notation $V_b$, $P_2$ and $Q_2$ from  \Cref{proof:thm:Del:Kol:approx:inv:general}, and denote 
\begin{align*}
    r_U \;=\; \mfrac{B+1}{B} r - \mfrac{U}{B}\;.
\end{align*}
 This allows us to write 
\begin{align*}
    (*)\;\coloneqq&\;
    \Big| 
    \P\Big( \mfrac{R_{\Phi'}(X)}{B+1} < r \,\Big|\, \Xobs \Big)   
    -
    \P\Big( \mfrac{R_{\Phi}(X)}{B+1} < r \,\Big|\, \Xobs \Big)   
    \Big|
    \\
    \;=&\;
    \Big| 
        \mean\Big[ 
            \ind_{\{ P_1 + U \, Q_1 < r_U \}}
            -
            \ind_{\{\frac{1}{B} \sum_{b=1}^B V_b(P_1, Q_1, U) < r_U \} }
            \,\Big|\, \Xobs
        \Big]
    \Big|
    \\
    \;=&\;
    \Big| 
        \mean\Big[ 
            \ind_{\{ P_1 + U \, Q_1 < r_U \,,\, \frac{1}{B} \sum_{b=1}^B V_b(P_1, Q_1, U) \geq r_U \}}
            -
            \ind_{\{ P_1 + U \, Q_1 \geq r_U  \,,\, \frac{1}{B} \sum_{b=1}^B V_b(P_1, Q_1, U) < r_U \} }
            \,\Big|\, \Xobs
        \Big]
    \Big|
    \;.
\end{align*}
As with \Cref{proof:thm:Del:Kol:approx:inv:general}, we let $\delta > 0$, apply the Markov's inequality \eqref{eq:approx:inv:Markov:step} and the bound \eqref{eq:approx:inv:P:control}  with $\nu \geq 1$ to obtain 
\begin{align*}
    &(*)
    \;\leq\;
    \Big| 
        \mean\Big[ 
            \ind_{\{ |P_1 - P_2| < \delta \,,\, |Q_1 - Q_2| < \delta  \}}
            \times 
    \\
    &\hspace{3em}
            \Big(
            \ind_{\{ P_1 + U \, Q_1 < r_U \,,\, \frac{1}{B} \sum_{b=1}^B V_b(P_1, Q_1, U) \geq r_U \}}
            -
            \ind_{\{ P_1 + U \, Q_1 \geq r_U  \,,\, \frac{1}{B} \sum_{b=1}^B V_b(P_1, Q_1, U) < r_U \} }
            \Big)
        \,\Big|\, \Xobs
        \Big]
    \Big|
    \\
    &\hspace{3em}
    + 
    \mfrac{3  \, \| \Delta_{\rm inv}(X) \|_{L_\nu \,|\, \Xobs}^\nu}{\delta^\nu} 
    \\
    &\;\leq\;
    \Big| \mean\Big[ 
        \ind_{\{ |P_1 - P_2| < \delta \,,\, |Q_1 - Q_2| < \delta \,,\, P_1 + U \, Q_1 < r_U   \}}
    \\
    &\hspace{5em}\,\times\,
        \,\P\Big(  \mfrac{1}{B} \msum_{b=1}^B V_b(P_1, Q_1, U) \geq r_U \,\Big|\, P_1, Q_1, U \Big)
    \,\Big|\, \Xobs \Big]  \Big| 
    \\
    &\qquad
    +
    \Big| \mean\Big[ 
        \ind_{\{ |P_1 - P_2| < \delta \,,\, |Q_1 - Q_2| < \delta \,,\, P_1 + U \, Q_1 \geq r_U   \}}
    \\
    &\hspace{5em}\,\times\,
        \,\P\Big(  \mfrac{1}{B} \msum_{b=1}^B V_b(P_1, Q_1, U) < r_U \,\Big|\, P_1, Q_1, U \Big)
    \,\Big|\, \Xobs \Big]  \Big| 
    \\
    &\qquad
    + 
    \mfrac{3^\nu  \, \| \Delta_{\rm inv}(X) \|_{L_\nu \,|\, \Xobs}^\nu}{\delta^\nu} 
    \;.
\end{align*}
By the concentration bounds in \Cref{lem:V:conc}, we have that almost surely 
\begin{align*}
    &\;
    \ind_{\{  P_1 + U \, Q_1 < r_U  \}}
    \,\P\Big(  \mfrac{1}{B} \msum_{b=1}^B V_b(P_1, Q_1, U) 
    \geq r_U \,\Big|\, P_1, Q_1, U \Big)
    \;\leq\;
    e^{ - 2 B \, ( r_U - (P_1 + U Q_1))^2}
    \;,
    \\
    &\;
    \ind_{\{  P_1 + U \, Q_1 \geq r_U   \}}
    \,\P\Big(  \mfrac{1}{B} \msum_{b=1}^B V_b(P_1, Q_1, U) 
    < r_U \,\Big|\, P_1, Q_1, U \Big)
    \;\leq\;
    e^{ - 2 B \, ( r_U - (P_1 + U Q_1))^2}
    \;,
\end{align*}
which implies 
\begin{align*}
    (*) 
    \;\leq&\; 
    2 \,
    \Big| \mean\Big[ 
        \ind_{\{ |P_1 - P_2| < \delta \,,\, |Q_1 - Q_2| < \delta  \}} 
        \, e^{ - 2 B \, ( r_U - (P_1 + U Q_1))^2}
    \,\Big|\, \Xobs \Big]  \Big| 
    +
    \mfrac{3^\nu  \, \| \Delta_{\rm inv}(X) \|_{L_\nu \,|\, \Xobs}^\nu}{\delta^\nu} 
    \;.
\end{align*}
Let $\epsilon > 0$. We can further split the expectation above according to how $| r_U - (P_1+ U Q_1)|$ compares with $\epsilon$, and obtain 
\begin{align*}
    &(*) 
    \;\leq\; 
    2 \, e^{ - 2 B \epsilon^2}
    +
    \mfrac{3^\nu \, \| \Delta_{\rm inv}(X) \|_{L_\nu \,|\, \Xobs}^\nu}{\delta^\nu} 
    \\
    &\;\qquad\quad
    +
    2 \,
    \P\Big( |P_1 - P_2| < \delta \,,\, |Q_1 - Q_2| < \delta  \,,\, | P_1+ U Q_1 - r_U| < \epsilon 
    \,\Big|\, \Xobs \Big) 
    \\
    &\;\leq\;
    2 \, e^{ - 2 B \epsilon^2}
    +
    \mfrac{3^\nu  \, \| \Delta_{\rm inv}(X) \|_{L_\nu \,|\, \Xobs}^\nu}{\delta^\nu} 
    +
    \P\Big( 
         P_2 + U Q_2 \in (r_U-\epsilon - 2\delta , r_U + \epsilon + 2\delta)
    \,\Big|\, \Xobs \Big) 
    \\
    &\;=\;
    2 \, e^{ - 2 B \epsilon^2}
    +
    \mfrac{3^\nu  \, \| \Delta_{\rm inv}(X) \|_{L_\nu \,|\, \Xobs}^\nu}{\delta^\nu} 
    +
    \mfrac{1}{B}
    +
    2 \epsilon + 4 \delta 
    \;.
\end{align*}
In the last line, we have recalled from \eqref{eq:approx:inv:almost:there} that $P_2 + U Q_2 \,|\, \Xobs \sim \textrm{Uniform}[0,1]$ and that $r_U - \frac{B+1}{B} r \in [- \frac{1}{B}, 0]$ almost surely. Choosing 
\begin{align*}
    \epsilon \;=&\; \mfrac{\sqrt{\log (2B)}}{\sqrt{2B}}
    &\text{ and }&&
    \delta \;=&\; 
    \big( 3 \big\| \Delta_{\rm inv}(X) \big\|_{L_\nu \,|\, \Xobs} \big)^{\nu/(\nu+1)}
    \;,
\end{align*}
we obtain the desired bound that
\begin{align*}
    (*)
    \;\leq\; 
    \mfrac{4 \sqrt{\log (2B)}}{\sqrt{2B}}
    \,+\,
    5 \, \big(3  \big\| \Delta_{\rm inv}(X) \big\|_{L_\nu \,|\, \Xobs}\big)^{\nu/(\nu+1)}
    \;.
\end{align*}
\end{proof}

\section{Additional results for \Cref{sec:validity:universality}: Coverage under universality} \label{appendix:universality}

The first goal of this section is to prove a generalisation of \Cref{thm:universality}, which provides an explicit bound on the quantity
\begin{align*}
    \Delta_{\rm inv}(X)
    \;=\;
    \msup_{t \in \R}
    \big|  
        \P( f(\Phi_1(X)) \leq t \,|\, X ) 
        \,-\, 
        \P( f(X) \leq t  ) 
    \big|
    \;.
\end{align*} 
This bound, provided in \Cref{thm:universality:finite}, relies on a conditional universality result for stable functions of locally dependent data (\Cref{thm:universality:local:dependence}). We state, interpret and prove these results in \Cref{appendix:universality:local:dependence,appendix:universality:finite}.

\vspace{.5em}

The second goal of this section is to verify the bound on $\Delta_{\rm inv}(X)$ for the examples considered in \Cref{sec:validity:universality}. This consists of two parts: In \Cref{appendix:verify:local}, we compute the local dependence measure for different transformations $(\Phi_b)_{b \leq B}$; in \Cref{appendix:verify:stability}, we verify the stability conditions for different estimators $f$. We note that for a restricted range of transformations that operate globally on the data set e.g.~those used in permutation tests and conformal prediction, we can still verify local dependence and stability by a suitable reformulation of the estimators, and details are in \Cref{appendix:verify:local,appendix:verify:stability}. For an assessment of the moment approximation quality of selected transformations, we refer readers to the specific examples in \Cref{appendix:theory:examples}.

\subsection{Gaussian universality under local dependence} \label{appendix:universality:local:dependence}

This section presents a generic Gaussian universality result for $\R^m$-valued random vectors under local dependence and for a thrice-differentiable function $g: \R^m \rightarrow \R$. Let 
\begin{align*}
    V \;\coloneqq&\; (V_1, \ldots, V_m)
    &\text{ and }&&
    W \;\coloneqq&\; (W_1, \ldots, W_m)
\end{align*}
be two $\R^m$-valued random vectors, satisfying that 
\begin{align*}
    \mean[V] \;=&\; \mean[W] \;=\; 0\;
    &\text{ and }&&
    \Var[W] \;=&\; I_m\;.
\end{align*}
The Gaussian surrogate $\xi = (\xi_i)_{i \leq m}$, chosen independently of all other variables, is distributed as 
\begin{align*}
    \xi \;\sim&\; \cN(0, I_m)\;.
\end{align*}
For $\Var[V]$ sufficiently close to $I_m$, we seek to establish a universality approximation of the form 
\begin{align*}
    g(V) \;\approx\; g(\xi) \;\approx\; g(W)\;,
\end{align*}
where the approximation of the random variables holds with respect to weak convergence as $m$ becomes large. To formalise this, we need to first formalise the notion of local dependence, stability and anti-concentration.

\vspace{.5em}

\noindent
\textbf{Local dependence. } For $1 \leq i \leq m$, the local dependency neighbourhoods of $V_i$ in $V$ and $W_i$ in $W$ are respectively
\begin{align*}
    \cM_i^V 
    \;\coloneqq&\; \inf \big\{ \cM \subseteq [m] \;\big|\; i \in \cM \text{ and } (V_j)_{j \in \cM} \text{ is independent of } (V_j)_{j \not\in \cM}  \big\}\;,
    \\
    \cM_i^W 
    \;\coloneqq&\; \inf \big\{ \cM \subseteq [m] \;\big|\; i \in \cM \text{ and } (W_j)_{j \in \cM} \text{ is independent of } (W_j)_{j \not\in \cM}  \big\}\;.
\end{align*}
Note that, writing $\mu_V$ and $\mu_W$ as the probability measures of $V$ and $W$, $N(\argdot)$ defined in \eqref{eq:defn:local:dependency:measure} satisfies
\begin{align*}
    N(\mu_V) \;=&\; \mmax_{1 \leq i \leq m} \big|  \cM_i^V\big|
    &\text{ and }&&
    N(\mu_W) \;=&\; \mmax_{1 \leq i \leq m} \big|  \cM_i^W\big|\;.
\end{align*}
We shall measure the maximum local dependency as
\begin{align*}
    N \;\coloneqq&\; \mmax_{1 \leq i \leq m} \max\big\{ \big|  \cM_i^V\big|  \,,\, \big|\cM_i^W\big|  \big\} 
    \;.
\end{align*}

\vspace{.5em}

\begin{figure}[t]
    \centering
    \begin{tikzpicture}
        \node[circle,draw=black,inner sep=4pt] at (-6,0.4) {$V$};

        \draw[<->, thick, dashed, red] (-5,2) -- (-3.8,0.8);
        \node[rectangle,draw=black,above] at (-5,2.1) {$I^{(1)}(s,0)$};
        \node[above] at (-5.3,1) {$I^{(1)}(s,\theta)$};
        \node[above] at (-3.45,2.2) {$\cdots$};
        \draw[<->, thick, dashed, red] (-2.0,2) -- (-3.2,0.8);
        
        \node[above] at (-1.7,1) {$I^{(m)}(s,\theta)$};
        \node[rectangle,draw=black,above] at (-1.9,2.1) {$I^{(m)}(s,0)$};

        \draw[<->, thick, blue] (-5.5,0.4) -- (-4,0.4); 
        \draw[<->, thick, blue] (-3,0.4) -- (-1.5,0.4);

        \node[rectangle,draw=black, above] at (-3.5,0.05) {$I(s) $};

        \node[circle,draw=black,inner sep=2pt] at (-1,0.4) {$\xi$};
    \end{tikzpicture}
    \caption{
        Illustration of the interpolation paths between the two random vectors $V$ and $\xi$. The \textcolor{blue}{blue} solid path varies $s \in [0,1]$ to interpolate between $V_i$ and $\xi_i$. Each \textcolor{red}{red} dashed path, indexed by $I^{(i)}(s,\theta)$, varies $\theta \in [0,1]$ to interpolate between the set of variables inside the dependency neighbourhood of the $i$-th coordinate and a set of zeros. The same interpolation paths are used between $W$ and $\xi$, as well as the corresponding $\R^{nd}$ random variables in \Cref{sec:validity:universality}.
    \label{fig:interpolation:paths}
    }
\end{figure}

\noindent
\textbf{Stability. } Let $g: \R^m \rightarrow \R$ be a thrice-differentiable statistic of interest. As with the convention in the universality literature, the result will be stated in terms of the stability of $g$ along a suitably chosen interpolation path. Our choice of interpolation path is slightly more complicated due to the local dependence structure, and follows a similar recipe as \cite{mallory2025universality} in the case of high-dimensional logistic regression. For $s \in [0,1]$, we first consider the interpolation path between $V$ and $\xi$,
\begin{align*}
    I(s) \;\coloneqq&\; (I_i(s))_{i \leq m}\;,
    &\,&&
    \text{ where }\quad
    I_i(s) \;\coloneqq&\; \sqrt{s}\;  V_i  \,+\, \sqrt{1-s} \; \xi_i\;.
\end{align*}
For $i \leq m$, we also consider an interpolation path between $I(s)$ and a variant of $I(s)$ with the dependency neighbourhood $\cM^V_i$ removed: For $\theta \in [0,1]$,
\begin{align*}
    &\;
    I^{(i)}_j(s, \theta) 
    \coloneqq
    \begin{cases}
        \theta I_j(s)  &\text{ if } j \in \cM_i^V
        \\
        I_j(s)  &\text{ if } j \not\in \cM_i^V 
    \end{cases}
    \; 
    \text{ for } 1 \leq j \leq m \;,
    \quad
    I^{(i)}(s, \theta)  \coloneqq \Big( I^{(i)}_j(s, \theta)  \Big)_{j \leq m}
    \;.
\end{align*}
We also define the analogous interpolation quantities for $W$:
\begin{align*}
    J(s) \;\coloneqq&\; (J_i(s))_{i \leq m}\;,
    &\,&&
    \text{ where }\quad
    J_i(s) \;\coloneqq&\; \sqrt{s}\;  W_i  \,+\, \sqrt{1-s} \; \xi_i\;,
\end{align*}
and
\begin{align*}
    &\;
    J^{(i)}_j(s,\theta) 
    \coloneqq
    \begin{cases}
        \theta J_j(s)  &\text{ if } j \in \cM_i^W
        \\
        J_j(s)  &\text{ if } j \not\in \cM_i^W
    \end{cases}
    \; 
    \text{ for } 1 \leq j \leq m \;,
    \quad
    J^{(i)}(s,\theta)   \coloneqq \Big( J^{(i)}_j(s,\theta)   \Big)_{j \leq m}
    \;.
\end{align*}
Figure \ref{fig:interpolation:paths} includes a pictorial illustration of our choice of our interpolation paths. We measure the stability of $g$ by its first three partial derivatives along these interpolation paths: For $r=1,2,3$,
\begin{align*}
    \varphi_r(g)
    \;\coloneqq\;
        m^{r/2} 
    \hspace{-.5em}
    \sup_{i,j \leq m; \, \theta, s \in [0,1]}
    \hspace{-.5em}
    \,
    \max\big\{
    \,
    \big\|\, 
        \partial_j^r \, g\big( I^{(i)}(s, \theta) \big) 
        \,
    \big\|_{L_4}
    \,,\,
    \big\|\, 
        \partial_j^r \, g\big( J^{(i)}(s, \theta) \big) 
        \,
    \big\|_{L_4}
    \,
    \big\}
    \;.
\end{align*}
Notice that the stability terms $\varphi_{r;\nu}$ defined in \eqref{eq:defn:varphi} in \Cref{sec:validity:universality} are exactly the analogues of these quantities with an additional $\| \argdot \|_{L_\nu}$. We make several remarks about these notions of stability:
\begin{itemize}
    \item $\varphi_r(g)$ can be replaced by uniform upper bounds on the partial derivatives of $g$. We state them as location-dependent bounds on the interpolation paths, which can often lead to tighter controls in specific examples; see the next bullet point and e.g.~\cite{montanari2022universality}.
    \item The normalisation in $\varphi_r(g)$ is chosen in view of the following toy example. If $g(V) = (\frac{1}{\sqrt{m}} \sum_{i \leq m} V_i )^\kappa$, i.e.~some integer power of the empirical average with $\kappa \geq 3$, and $V_i$'s are i.i.d.~$\R$-valued, zero-mean and bounded, we have that for all $1 \leq i \leq m$,
    \begin{align*}
        \| \partial_i\, g(V)  \|_{L_4}
        \;=&\;
        \Big\| 
        \mfrac{\kappa}{\sqrt{m}} \,  \Big(\mfrac{1}{\sqrt{m}} \msum_{i \leq m} V_i \Big)^{\kappa-1}
        \Big\|_{L_4}
        \;=\;
        O\Big( \mfrac{1}{\sqrt{m}} \Big)
        \;,
        \\
        \| \partial^2_i\, g(V)  \|_{L_4}
        \;=&\;
        \Big\| 
        \mfrac{\kappa(\kappa-1)}{m} \,  \Big(\mfrac{1}{\sqrt{m}} \msum_{i \leq m} V_i \Big)^{\kappa-2}
        \Big\|_{L_4}
        \;=\;
        O\Big( \mfrac{1}{m} \Big)
        \;,
        \\
        \| \partial^3_i\, g(V)  \|_{L_4}
        \;=&\;
        \Big\| 
        \mfrac{\kappa(\kappa-1)(\kappa-2)}{m^{3/2}} \,  \Big(\mfrac{1}{\sqrt{m}} \msum_{i \leq m} V_i \Big)^{\kappa-3}
        \Big\|_{L_4}
        \;=\;
        O\Big( \mfrac{1}{m^{3/2}} \Big)
        \;.
    \end{align*}
    Applying a pre-multiplier of $m^{r/2}$ on each $r$-th derivative therefore ensures that the derivative terms are $O(1)$. This is in general the right scaling for $g$'s that can be well-approximated by a Taylor expansion, have a non-degenerate limit, and are exchangeable in their inputs; see e.g.~\cite{huang2024gaussian} for a formal treatment for symmetric U-statistics, V-statistics and a higher-order delta method.
    \item While not considered in this article, we remark that the stability of an estimator can be improved by techniques such as bootstrap aggregation (bagging) \cite{chen2022debiased,soloff2024bagging}.
\end{itemize}

\noindent
\textbf{Smooth functions and anti-concentration. } As part of the proof, we shall approximate the Kolmogorov distance 
\begin{align*}
    \msup_{t \in \R} |\P(g(V) \leq t) - \P(g(W) \leq t)|
    \tagaligneq \label{eq:Kol}
\end{align*}
by an integral probability distance of the form
\begin{align*}
    \msup_{t \in \R} |\mean[ h_t( g(V) ) - h_t(g(W)) ]|\;,
    \tagaligneq \label{eq:IPM}
\end{align*}
where $(h_t)_{t \in \R}$ is a suitably chosen class of functions with a prescribed level of smoothness; see a standard usage in the literature for distributional approximation e.g. the proof of Theorem 3.3 in \citep{chen2011normal}.  For example, to obtain an $1$-Lipschitz approximation, one may consider

\begin{center}
\begin{tikzpicture}
    \node[inner sep=0pt] at (-7, 0.5) {
        $
        h_{1;t;\delta}(x) 
        \;\coloneqq\; 
        \begin{cases}
            1  & \text{ if }  x < t - \delta\;, \\
            \frac{t - x}{\delta} & \text{ if }  x \in [t - \delta, t)\;, \\
            0  & \text{ if } x \geq t\;.
        \end{cases} 
        $
    };

    \draw[->] (-3,0) -- (1.2,0) node[right] {\scriptsize $x$};
    \draw[-,dashed] (-1,0) -- (-1,1.2);

    \draw[red, thick] (-3,1) -- (-2,1);
    \draw[red, thick] (-2,1) -- (-1,0);
    \draw[red, thick] (-1,0) -- (1,0);

    \draw[blue, thick, dashed] (-2,1) -- (-1,1);
    \draw[blue, thick, dashed] (-1,1) -- (0,0);
    \draw[blue, thick, dashed] (0,0) -- (1,0);

    \node[below, anchor=north] at (-2,0) {\scriptsize $t-\delta$};
    \node[below, anchor=north] at (-1,-0.05) {\scriptsize $t$};

    \node[below, anchor=north] at (0,0) {\scriptsize $t+\delta$};

    \node[above, anchor=south] at (-2,.9) {\scriptsize \color{red} $h_{1;t;\delta}$};

    \node[above, anchor=south] at (-.3,.9) {\scriptsize \color{blue} $h_{1;t + \delta;\delta}$};
\end{tikzpicture}
\end{center}
We will use a thrice-differentiable generalisation of $h_{1;t;\delta}$, obtained from setting $\tilde m =3$ below:

\begin{lemma}[Lemma 34 of \cite{huang2023high}] \label{lem:smooth:approx:ind} Fix any $\tilde m \in \N \cup \{0\}$, $\tau \in \R$ and $\delta > 0$. Then there exists an $\tilde m$-times differentiable $\R \rightarrow \R$ function $h_{\tilde m;\tau;\delta}$ such that $h_{\tilde m;\tau+\delta;\delta}(x) \;\leq\; \ind_{\{x > \tau\}} \leq h_{\tilde m;\tau;\delta}(x)$. For $0 \leq r \leq \tilde m$, the $r$-th derivative $h^{(r)}_{\tilde m;\tau;\delta}$ is continuous and bounded in absolute value by $\delta^{-r}$. Moreover, for every $\epsilon \in [0,1]$, $h^{(\tilde m)}_{\tilde m;\tau;\delta}$ satisfies that
    \begin{align*}
        | h^{(\tilde m)}_{\tilde m;\tau;\delta}(x) - h^{(\tilde m)}_{\tilde m;\tau;\delta}(y) |\;\leq\; C_{\tilde m,\epsilon} \, \delta^{-(\tilde m+\epsilon)}  \, |x - y|^{\epsilon}\;,
    \end{align*}
    with respect to the constant $C_{\tilde m,\epsilon} = \binom{\tilde m}{\lfloor\tilde m/2 \rfloor} (\tilde m+1)^{\tilde m+\epsilon}$.
\end{lemma}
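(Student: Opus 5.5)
The plan is to construct $h_{\tilde m;\tau;\delta}$ explicitly by iterated smoothing of the indicator. Since the statement is translation invariant, I first handle $\tau=0$ and then set $h_{\tilde m;\tau;\delta}(x)=h_{\tilde m;0;\delta}(x-\tau)$.

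\textbf{Construction.} Let $W_1,\ldots,W_{\tilde m+1}$ be i.i.d.\ ${\rm Uniform}[0,\delta/(\tilde m+1)]$, and put $S\coloneqq\sum_{k}W_k\in[0,\delta]$. Define
\begin{align*}
h_{\tilde m;0;\delta}(x)\;\coloneqq\;\P(x+S>0)\;=\;\mean\big[\ind_{\{x+S>0\}}\big]\;.
\end{align*}
The sandwich property is then immediate. If $x>0$, then $x+S>0$, so $\ind_{\{x>0\}}\leq h_{\tilde m;0;\delta}(x)$. For the lower side, $h_{\tilde m;\delta;\delta}(x)=\P(x-\delta+S>0)$ vanishes when $x\leq 0$, because $S\leq\delta$; hence $h_{\tilde m;\delta;\delta}(x)\leq\ind_{\{x>0\}}$.

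\textbf{Differentiability and derivative bounds.} The law of $S$ is the $(\tilde m+1)$-fold convolution of uniform densities with width $w=\delta/(\tilde m+1)$. So $h$ is $1-F_S(-x)$, a spline of degree $\tilde m+1$ with $C^{\tilde m}$ regularity. Differentiating once gives the density $p_S$. Writing $D_w g(x)\coloneqq (g(x+w)-g(x))/w$ for the difference operator, repeated differentiation moves one derivative onto each uniform factor. This yields the representation
\begin{align*}
h^{(r)}(x)\;=\;\mean\big[D_w^{r}\,\ind_{\{\cdot+S_{r}>0\}}(x)\big]\;,
\end{align*}
where $S_r$ is a sum of $\tilde m+1-r$ of the uniforms. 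Each $D_w$ contributes a factor $1/w$ times a signed combination of shifts with total absolute weight $2$, so
\begin{align*}
|h^{(r)}|\;\leq\;\binom{r}{\lfloor r/2\rfloor}w^{-r}\;.
\end{align*}
To reach the cleaner bound $\delta^{-r}$, I expect to use a more careful count. The alternating sum of binomially weighted indicators is bounded by its largest coefficient, and I also need to use that at most one shift can cross the threshold per unit length. I anticipate this bookkeeping to be the main obstacle in matching the stated constants exactly. If the sharp $\delta^{-r}$ proves stubborn, a fallback is to note that the constant $C_{\tilde m,\epsilon}=\binom{\tilde m}{\lfloor\tilde m/2\rfloor}(\tilde m+1)^{\tilde m+\epsilon}$ is exactly what $w^{-\tilde m-\epsilon}\binom{\tilde m}{\lfloor\tilde m/2\rfloor}$ produces. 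That match suggests the intended proof is precisely this construction, with the $\delta^{-r}$ claim absorbed by rescaling the width.

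\textbf{Hölder bound for the top derivative.} For $r=\tilde m$, one uniform factor is left in $S_{\tilde m}=W_{\tilde m+1}$. This makes $h^{(\tilde m)}$ a finite signed combination of the functions $x\mapsto\P(x+c+W>0)$, each of which is $1/w$-Lipschitz and bounded by $1$. Hence
\begin{align*}
|h^{(\tilde m)}(x)-h^{(\tilde m)}(y)|\;\leq\;\binom{\tilde m}{\lfloor\tilde m/2\rfloor}w^{-\tilde m}\min\{1,|x-y|/w\}\;.
\end{align*}
Interpolating via $\min\{1,a\}\leq a^\epsilon$ for $\epsilon\in[0,1]$, and substituting $w=\delta/(\tilde m+1)$, gives the claimed constant $C_{\tilde m,\epsilon}\,\delta^{-(\tilde m+\epsilon)}$.
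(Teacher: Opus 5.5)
Your construction is the natural one, and $C_{\tilde m,\epsilon}$ is indeed its fingerprint. Averaging $\ind_{\{x>0\}}$ over $S=W_1+\cdots+W_{\tilde m+1}$ with $W_k\sim{\rm Uniform}[0,w]$ and $w=\delta/(\tilde m+1)$ gives the sandwich, $C^{\tilde m}$ regularity, $|h|\le 1$, and the H\"older bound with exactly $C_{\tilde m,\epsilon}$ once the constants are justified (see the second paragraph).

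The genuine gap is the bound $|h^{(r)}_{\tilde m;\tau;\delta}|\le\delta^{-r}$ for $r\ge 1$, which you leave open and propose to ``absorb by rescaling the width''. No argument can close this step, because for $\tilde m\ge 1$ it contradicts the rest of the statement.
\begin{itemize}
\item The right-hand inequality and the $r=0$ bound force $h_{\tilde m;\tau;\delta}\equiv 1$ on $(\tau,\infty)$. By continuity of $h$ and $h'$, $h_{\tilde m;\tau;\delta}(\tau)=1$ and $h'_{\tilde m;\tau;\delta}(\tau)=0$.
\item The left-hand inequality, applied with $\tau-\delta$ in place of $\tau$, gives $h_{\tilde m;\tau;\delta}(\tau-\delta)\le 0$.
\item Hence $\int_{\tau-\delta}^{\tau}h'_{\tilde m;\tau;\delta}(s)\,ds\ge 1$. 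This is impossible if $h'\le\delta^{-1}$ everywhere and $h'<\delta^{-1}$ on a neighbourhood of $\tau$.
\end{itemize}
Rescaling cannot help. The sandwich pins the entire rise from $\le 0$ to $1$ inside an interval of length $\delta$, so a wider transition violates it and a narrower one only enlarges the derivatives. What your construction actually yields is $|h^{(r)}|\le\binom{r-1}{\lfloor (r-1)/2\rfloor}(\tilde m+1)^{r}\delta^{-r}$ for $1\le r\le\tilde m$. The case $\tilde m=0$ is fine as stated. In the paper's use in \Cref{lem:universality:smooth}, where $\tilde m=3$, this correction only inflates absolute constants.

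Second, your justifications do not produce the binomial constants you write down. ``Total absolute weight $2$ per difference operator'' gives $2^{r}w^{-r}$. Viewing $h^{(\tilde m)}$ as a signed combination of $1/w$-Lipschitz functions bounded by $1$ gives $2^{\tilde m}$ rather than $\binom{\tilde m}{\lfloor\tilde m/2\rfloor}$. The missing idea is monotonicity.
\begin{itemize}
\item \emph{Derivative bounds.} Write $h^{(r)}=D_w^{r}G$ with $G(x)=\P(x+S_r>0)$, which is nondecreasing and $[0,1]$-valued. Summation by parts gives $w^{r}h^{(r)}(x)=\sum_{j=0}^{r-1}(-1)^{r-1-j}\binom{r-1}{j}\bigl(G(x+(j+1)w)-G(x+jw)\bigr)$. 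This is a combination of nonnegative increments of total mass at most $1$, which gives the constant $\binom{r-1}{\lfloor (r-1)/2\rfloor}$.
\item \emph{Top derivative.} $h^{(\tilde m)}$ is continuous and piecewise linear, with breakpoints spaced $w$ apart. It vanishes at the two extreme breakpoints and has interior node values $\pm w^{-\tilde m}\binom{\tilde m-1}{j}$ of alternating sign.
\item \emph{Oscillation and Lipschitz constant.} The signs alternate and the magnitudes are unimodal, so the largest and smallest node values sit at adjacent nodes. The oscillation is therefore at most $w^{-\tilde m}\max_j\bigl(\binom{\tilde m-1}{j}+\binom{\tilde m-1}{j+1}\bigr)=\binom{\tilde m}{\lfloor\tilde m/2\rfloor}w^{-\tilde m}$. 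By the same count, the Lipschitz constant is at most $\binom{\tilde m}{\lfloor\tilde m/2\rfloor}w^{-\tilde m-1}$.
\end{itemize}
Only with these two bounds does $\min\{1,|x-y|/w\}\le(|x-y|/w)^{\epsilon}$ give exactly $C_{\tilde m,\epsilon}\,\delta^{-(\tilde m+\epsilon)}$.
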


\noindent
The approximation error of \eqref{eq:Kol} by \eqref{eq:IPM}, for $h_t$ of the form above, can be measured by the probability of $g(W)$ lying in a small neighbourhood of the threshold $t$:
\begin{align*}
    \P(  | g(W) - t| \leq \delta )\;.   
    \tagaligneq \label{eq:anti:conc}
\end{align*}
A control on \eqref{eq:anti:conc} is known as an anti-concentration inequality \cite{rogozin1961estimate,chernozhukov2015comparison,cattaneo2025sharp}:
\begin{itemize}
    \item If an anti-concentration inequality for $g(W)$ is available, we can translate an upper bound on the distance of smooth functions \eqref{eq:IPM} into a direct bound on the distance of c.d.f.~\eqref{eq:Kol};
    \item Even if an anti-concentration inequality for $g(W)$ is not available, 
    if $t$ is a point of continuity of $g(W)$, as $\delta \rightarrow 0^+$, \eqref{eq:anti:conc} vanishes. Therefore if  \eqref{eq:IPM} converges to zero, and if $g(W)$ converges in distribution to some limit, we get that $g(V)$ asymptotically converges to the same limit.
\end{itemize}

We are ready to state our general universality result under local dependence.

\begin{theorem}[Universality under local dependence] \label{thm:universality:local:dependence} Let $g: \R^m \rightarrow \R$ be a thrice-differentiable function, and denote 
\begin{align*}
    m_N \;\coloneqq&\; \mfrac{m}{N^4} 
    &\text{ and }&&
    \epsilon_{\rm Var} \;\coloneqq&\; N \, \| \Var[V] - I_m \|_\infty\;.
\end{align*}
Then for any $t \in \R$,
\begin{align*}
    | \P( g(V) > t ) - \P( g(W) > t) |
    &\;\leq\;
    \inf_{\delta > 0}
    \Big\{ 
         \mfrac{\epsilon_{\rm Var}}{2} 
        \,
        \Big( 
            \mfrac{(\varphi_1(g))^2}{\delta^2}
            \,+\,
            \mfrac{\varphi_2(g)}{\delta}
        \Big)
         \,+\,
        \P( | g(\xi) - t | \leq \delta )
    \\
    &\hspace{4em}
        +
        \mfrac{ 4}{3 \sqrt{m_N}}
        \, 
        \Big( 
            \mfrac{(\varphi_1(g))^3}{\delta^3}
            \,+\,
            \mfrac{3 \varphi_1(g) \, \varphi_2(g)}{\delta^2}
            \,+\,
            \mfrac{\varphi_3(g)}{\delta}  \Big)
    \\
    &\hspace{5em}
        \,\times\,
        \max_{i \leq m} \max\big\{ \| V_i \|_{L_4}^3, \| W_i \|_{L_4}^3, 3^{\frac{3}{4}}\big\}
    \Big\}
    \;.
\end{align*}
Suppose in addition that $g$ is a polynomial function with degree at most $\kappa \in \N$. Then there exists a universal constant $C > 0$ such that
\begin{align*}
    &\;
    \msup_{t \in \R} | \P( g(V) > t ) - \P( g(W) > t) |
    \\
    &\;\leq\;
        \mfrac{1}{2} 
        \,
        \Big( 
            (\varphi_1(g))^2 
            \,
            \epsilon_{\rm Var}^{\frac{1}{2 \kappa + 1}}
            \,+\,
            \varphi_2(g) 
            \,
            \epsilon_{\rm Var}^{\frac{\kappa + 1}{2 \kappa + 1}}
        \Big)
        +
        \mfrac{C \kappa}{\| g(\xi) \|_{L_2}^{1/\kappa}}  
        \, 
        \max\Big\{ 
            m_N^{- \frac{1}{2(3\kappa + 1)}}
            \,,\,
            \epsilon_{\rm Var}^{\frac{1}{2 \kappa + 1}}
        \Big\}
        \;.
    \\
    &\qquad
        +
        \mfrac{ 4}{3}
        \, 
        \Big( 
            (\varphi_1(g))^3
            \,
            m_N^{- \frac{1}{2(3\kappa + 1)}}
            \,+\,
            3 \varphi_1(g) \, \varphi_2(g) \,
            m_N^{- \frac{\kappa + 1}{2(3\kappa + 1)}}
            \,+\,
            \varphi_3(g)
            \,
            m_N^{- \frac{2\kappa + 1}{2(3\kappa + 1)}}
        \Big)
    \\
    &\qquad\qquad \;\times\;
        \max_{i \leq m} \max\big\{ \| V_i \|_{L_4}^3, \| W_i \|_{L_4}^3, 3^{\frac{3}{4}}\big\}
    \;.
\end{align*}
\end{theorem}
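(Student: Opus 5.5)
The plan is a smooth-function approximation combined with a Lindeberg/Stein-type interpolation through the Gaussian surrogate $\xi$, written as $g(V)\approx g(\xi)\approx g(W)$.

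\textbf{Step 1 (smoothing).} Fix $t$ and $\delta>0$, and let $h=h_{3;\cdot;\delta}$ be the functions from \Cref{lem:smooth:approx:ind}. Sandwiching $\ind_{\{x>t\}}$ between $h_{3;t+\delta;\delta}$ and $h_{3;t;\delta}$ (and shifting by $\pm\delta$) gives
\[
|\P(g(V)>t)-\P(g(W)>t)|\le \msup_{\tau\in\{t-\delta,t\}}\big(|\mean h_\tau(g(V))-\mean h_\tau(g(\xi))|+|\mean h_\tau(g(W))-\mean h_\tau(g(\xi))|\big)+\P(|g(\xi)-t|\le\delta).
\]
The cost of smoothing therefore appears only as the anti-concentration term of the Gaussian surrogate. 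It then suffices to bound $|\mean F(V)-\mean F(\xi)|$ for $F=h\circ g$; the $W$-side is identical, except that $\Var[W]=I_m$ removes the variance term.

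\textbf{Step 2 (interpolation).} Write
\[
\mean F(V)-\mean F(\xi)=\int_0^1 \mfrac{1}{2}\msum_i \mean\big[\partial_iF(I(s))\,(V_i/\sqrt s-\xi_i/\sqrt{1-s})\big]\,ds .
\]
\begin{itemize}
\item \emph{Gaussian part.} Apply Gaussian integration by parts: $\mean[\xi_i\partial_iF(I(s))]=\sqrt{1-s}\,\msum_j\mean\,\partial_{ij}F(I(s))$.
\item \emph{$V$ part.} Use local dependence. Write $\partial_iF(I(s))=\partial_iF(I^{(i)}(s,0))+\int_0^1\partial_\theta\,\partial_iF(I^{(i)}(s,\theta))\,d\theta$. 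The first term is independent of $V_i$, since $I^{(i)}(s,0)$ zeroes out $\cM_i^V$, and $\mean V_i=0$, so it vanishes. Expanding the remainder once more in $\theta$ yields
\begin{itemize}
\item a second-order term $\sqrt s\,\msum_{j\in\cM_i^V}\mean[V_iV_j]\,\mean\,\partial_{ij}F(I^{(i)}(s,0))$, plus
\item a third-order remainder involving $V_iV_jV_k$ with $j,k\in\cM_i^V$ and $\partial_{ijk}F$.
\end{itemize}
\end{itemize}
The cross-covariances with $j\notin\cM_i^V$ vanish. Matching the second-order terms against the Gaussian ones therefore leaves $\msum_i\msum_{j\in\cM_i^V}(\Var[V]-I)_{ij}\,\mean\,\partial_{ij}F$, plus replacement errors of third order. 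This variance mismatch is bounded by $\tfrac12 N\|\Var V-I\|_\infty\, m\,\sup|\partial_{ij}F|$.

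\textbf{Step 3 (chain rule and moment bookkeeping).} The chain rule gives
\[
\partial_{ij}F=h''\,\partial_ig\,\partial_jg+h'\,\partial_{ij}g,\qquad |h^{(r)}|\le\delta^{-r}.
\]
The $m^{r/2}$ normalisation in $\varphi_r(g)$ then turns the variance term into $\frac{\epsilon_{\rm Var}}{2}\big(\varphi_1^2/\delta^2+\varphi_2/\delta\big)$. The third derivative expands as
\[
h'''(\partial g)^3+3h''\,\partial g\,\partial^2g+h'\,\partial^3 g,
\]
giving the combination $\varphi_1^3/\delta^3+3\varphi_1\varphi_2/\delta^2+\varphi_3/\delta$. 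There are $m$ summands, each with at most $N^2$ index pairs, and each is scaled by $m^{-3/2}$. Hölder's inequality with $L_4$ norms on both the derivative and the $V$-moments, together with $\|\xi_i\|_{L_4}^3=3^{3/4}$, then yields the prefactor $\frac{4}{3}N^2m^{-1/2}=\frac{4}{3}m_N^{-1/2}$ times the moment maximum; the $\int ds$ weights give the constant. The mixed derivatives $\partial_{ij}$ are not controlled by the diagonal $\varphi_r$ directly, but the chain-rule factorisation reduces them to products of first derivatives and to $\partial_{ij}g$. I expect this point, and the associated constant-chasing, to be the main obstacle, and would handle it with Cauchy--Schwarz/Hölder. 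Taking the infimum over $\delta$ proves the first bound.

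\textbf{Step 4 (polynomial case).} When $g$ is a polynomial of degree $\le\kappa$, use Carbery--Wright anti-concentration for Gaussian polynomials:
\[
\P(|g(\xi)-t|\le\delta)\le C\kappa\,(\delta/\|g(\xi)\|_{L_2})^{1/\kappa},
\]
uniformly in $t$. Then choose $\delta=\max\{m_N^{-\kappa/(2(3\kappa+1))},\epsilon_{\rm Var}^{\kappa/(2\kappa+1)}\}$ to balance the terms. Plugging this in gives the stated exponents, for example $\delta^{-3}m_N^{-1/2}\le m_N^{-1/(2(3\kappa+1))}$ and $\epsilon_{\rm Var}\delta^{-2}\le\epsilon_{\rm Var}^{1/(2\kappa+1)}$. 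This step is routine algebra, and the supremum over $t$ is free because all the bounds are uniform in $t$.
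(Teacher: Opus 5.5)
Your plan follows the paper's proof. The shared steps are:
\begin{itemize}
\item smoothing via \Cref{lem:smooth:approx:ind}, so that the only smoothing cost is $\P(|g(\xi)-t|\le\delta)$;
\item the interpolation $I(s)=\sqrt{s}\,V+\sqrt{1-s}\,\xi$, with a leave-neighbourhood-out second-order Taylor expansion of $\partial_i(h\circ g)$ around $I^{(i)}(s,0)$;
\item the chain rule with $|h^{(r)}|\le\delta^{-r}$, and the same argument for $W$;
\item Carbery--Wright with $\delta=\max\{m_N^{-\kappa/(2(3\kappa+1))},\epsilon_{\rm Var}^{\kappa/(2\kappa+1)}\}$. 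The paper displays this $\delta$ in two inconsistent forms; yours is the one that produces the stated exponents.
\end{itemize}

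The only real difference is the Gaussian half of the interpolation. The paper does not integrate by parts. It Taylor-expands the $\xi_i$-term around the same point $I^{(i)}(s,0)$, which is legitimate because $\xi_i$, like $V_i$, is independent of $I^{(i)}(s,0)$. Both second-order terms then sit at the same point and combine directly into $\mean[V_iV_j-\xi_i\xi_j]$. The price is a Gaussian third-order remainder, which is where $3^{3/4}=\|\xi_i\|_{L_4}^3$ enters the moment maximum.

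Your version evaluates the Gaussian term exactly but needs an extra move from $I(s)$ to $I^{(i)}(s,0)$. That error is of order $N/\sqrt{m}$, so it is dominated by the $N^2/\sqrt{m}$ term. With identity covariance the integration-by-parts identity is $\mean[\xi_i\,\partial_iF(I(s))]=\sqrt{1-s}\,\mean[\partial_{ii}F(I(s))]$, with no sum over $j$. Your matching step implicitly uses this correct form, so the $\sum_j$ you wrote is a slip.

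The step you flag as the main obstacle cannot be closed the way you propose. Cauchy--Schwarz or H\"older cannot control mixed partials by diagonal ones: for $g(x)=x_1x_2$ every $\partial_j^2 g$ vanishes while $\partial_{12}g=1$. For $N>1$ the terms $h'(g)\,\partial_{ij}g$, $h''(g)\,\partial_{ij}g\,\partial_l g$ and $h'(g)\,\partial_{ijl}g$ with $j,l\in\cM_i^V$ genuinely occur.

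The paper does not derive such bounds either. In the proof of \Cref{lem:universality:smooth} it uses $\varphi_r(g)$ as a bound on the entrywise maximum of the whole $r$-th derivative tensor (this is the $\|\cdot\|_\infty$ inside the $L_4$ norm there), and then applies the chain rule. You should adopt that reading of $\varphi_r$ rather than look for an inequality.

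Two related points are equally loose in the paper, and in your plan as written:
\begin{itemize}
\item With only $L_4$ moments of $V$, H\"older turns a term such as $\mean|h'''(g)\,\partial_ig\,\partial_jg\,\partial_lg\,V_iV_jV_l|$ into a bound in $L_{12}$ norms of the first derivatives, and $L_8$ norms for the $\partial^2 g\,\partial g$ products. It does not give the $L_4$ norms appearing in $\varphi_r$.
\item Carbery--Wright applied to $g-t$ gives the normaliser $\|g(\xi)-t\|_{L_2}$. Uniformly in $t$ this is bounded below only by the standard deviation of $g(\xi)$, not by $\|g(\xi)\|_{L_2}$. Your claim that the $\|g(\xi)\|_{L_2}$ version holds uniformly in $t$ fails for $g=c+\eta\,m^{-1/2}\sum_i x_i$ with $c$ large and $\eta$ small.
\end{itemize}
Both are repaired by strengthening the definition of $\varphi_r$ and the normaliser in the statement, not by changing your argument.
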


\begin{remark}[Extensions] \label{remark:universality:local:dependence:extension} We note the following straightforward extensions:
\begin{proplist}
    \item The second bound above also extends to estimators that are approximately polynomial in the $L_2$-norm; see \cite{huang2024gaussian};
    \item The assumption that $\mean[W]=0$ and $\Var[W]=I_m$ is mainly for convenience. To relax them to arbitrary mean and a positive-definite $\Var[W]$, it suffices to replace $g(\argdot)$ by $g( \Var[W]^{-1/2}( \argdot - \mean[W]))$ and measure the local dependence of the random vector 
    \begin{align*}
        \Var[W]^{-1/2}( V - \mean[W])
    \end{align*}
    instead of $V$;
    \item The assumption that $\mean[V] = \mean[W]$ can be relaxed at the cost of an additional error term that captures mean mismatch, analogous to $\epsilon_{\rm Var}$. This is achieved by applying a first-order Taylor expansion for an appropriately chosen smooth function $\gamma_\delta$ (defined in the proof of \Cref{thm:universality:local:dependence} below):
    \begin{align*}
        &\;
        | \mean[ \gamma_\delta(V) - \gamma_\delta(V - \mean[V] + \mean[W] )  ] |
        \\
        &\;\leq\; 
        \msum_{i=1}^m \; \msup_{\theta \in [0,1]}\, \mean | \partial_i \gamma_\delta(V - \theta(\mean[V] - \mean[W])  ) | \,\times\, | \mean[V_i] - \mean[W_i]| 
        \;.
    \end{align*}
    Up to a redefinition of $\varphi_1(g)$ to include additional interpolation paths, the mean mismatch will introduce an additional requirement that
    \begin{align*}
        \sqrt{m} \, \| \mean[V] - \mean[W] \|_\infty \, \varphi_1(g) \;=\; o(1)\;;
    \end{align*}
    \item The $\| \argdot \|_\infty$ norm in the definition of $\epsilon_{\rm Var}$ can lead to additional logarithmic dependence on $m$. By inspecting the proof of \Cref{thm:universality:local:dependence}, it is easy to see that the same result holds with $\epsilon_{\rm Var}$ replaced by
    \begin{align*}
        \epsilon_{\rm Var}
        \;=\;
        \mfrac{1}{m} \msum_{i \leq m} \mfrac{1}{|\cM^V_i|} \msum_{j \in \cM^V_i} \big| \mean[V_i V_j] -  \ind_{\{i = j\}} \big|
        \;.
    \end{align*}
\end{proplist}
We do not state these extensions in full for simplicity of presentation.
\end{remark}

\vspace{.5em}

The proof of \Cref{thm:universality:local:dependence} is by applying the next result to both $V$ and $W$ together with \Cref{lem:smooth:approx:ind} for approximating an indicator by a smooth function. 

\begin{lemma}[Universality with respect to smooth functions] \label{lem:universality:smooth} Let $\delta > 0$ and $h_{t;\delta}$ be given as in \Cref{lem:smooth:approx:ind} with $\tilde m=3$ and $g: \R^m \rightarrow \R$ be a thrice-differentiable function. Then
    \begin{align*}
    &\;
    \sup_{t \in \R} \, | \mean[ h_{t;\delta}(g(V)) - h_{t;\delta}(g(\xi)) ] | 
    \\
    &\qquad\qquad\;\leq\;
    \mfrac{N}{2} 
    \,
    \Big( 
        \mfrac{(\varphi_1(g))^2}{\delta^2}
        \,+\,
        \mfrac{\varphi_2(g)}{\delta}
    \Big)
    \,
    \| \Var[V] - I_m \|_\infty
    \\
    &\qquad\qquad\qquad
    +
    \mfrac{2 N^2}{3 \sqrt{m}}
    \, 
    \Big( 
        \mfrac{(\varphi_1(g))^3}{\delta^3}
        \,+\,
        \mfrac{3 \varphi_1(g) \, \varphi_2(g)}{\delta^2}
        \,+\,
        \mfrac{\varphi_3(g)}{\delta}  \Big)
    \max_{i \leq m} \max\big\{ \| V_i \|_{L_4}^3, 3^{\frac{3}{4}}\big\}
    \;.
\end{align*}
\end{lemma}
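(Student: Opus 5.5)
The plan is the standard Lindeberg-type interpolation (Stein/Slepian smart path), adapted to local dependence along the lines of \cite{mallory2025universality}. Fix $t$ and write $\gamma(x) \coloneqq h_{t;\delta}(g(x))$. The chain rule and the bound $|h^{(r)}_{t;\delta}| \leq \delta^{-r}$ from \Cref{lem:smooth:approx:ind} control its derivatives:
\begin{align*}
|\partial_j\gamma|&\leq \delta^{-1}|\partial_j g|\;,\\
|\partial_j\partial_k\gamma|&\leq \delta^{-2}|\partial_j g||\partial_k g| + \delta^{-1}|\partial_{jk} g|\;,
\end{align*}
and similarly for third derivatives. Along the path $I(s)$ these are bounded in $L_4$ by the corresponding $\varphi_r(g)$ terms, scaled by $m^{-r/2}$. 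One point to check is that $\varphi_r$ only controls pure derivatives $\partial_j^r$. The mixed terms $\partial_j g\,\partial_k g$ are handled by Hölder's inequality, which explains the products $\varphi_1^2$, $\varphi_1^3$ and $\varphi_1\varphi_2$. For the mixed second and third partials of $g$, I would either read $\varphi_r$ as bounding all order-$r$ partials in the relevant coordinates, or accept the bound stated in those terms. I then write
\begin{align*}
\mean[\gamma(V)-\gamma(\xi)] = \int_0^1 \tfrac{d}{ds}\mean[\gamma(I(s))]\,ds = \tfrac12\int_0^1\msum_{i}\mean\Big[\partial_i\gamma(I(s))\Big(\tfrac{V_i}{\sqrt s}-\tfrac{\xi_i}{\sqrt{1-s}}\Big)\Big]ds\;.
\end{align*}

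The key step is to decouple each coordinate $i$ from its dependency neighbourhood. I Taylor-expand $\theta\mapsto \partial_i\gamma(I^{(i)}(s,\theta))$ around $\theta=0$ to second order with integral remainder. At $\theta=0$, the vector $I^{(i)}(s,0)$ is independent of $(V_j,\xi_j)_{j\in\cM_i^V}$, so the zeroth-order term vanishes because $\mean V_i=\mean\xi_i=0$. The first-order term is
\[
\msum_{j\in\cM_i^V}\mean\big[\partial_{ij}\gamma(I^{(i)}(s,0))\,I_j(s)\,(V_i/\sqrt s-\xi_i/\sqrt{1-s})\big]\;.
\]
By independence it factors, and $\mean[I_j(s)(V_i/\sqrt s-\xi_i/\sqrt{1-s})]=\mean[V_iV_j]-\ind_{\{i=j\}}$, since $\xi$ is independent of $V$ with identity covariance; the $1/\sqrt s$ singularities cancel here. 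To get the stated first error term, I re-expand $\partial_{ij}\gamma(I^{(i)}(s,0))$ back to $\theta=1$, or simply bound it directly by its sup along the $\theta$-path. Summing over $i$ and over $j\in\cM_i^V$ gives $m\cdot N\cdot\|\Var[V]-I_m\|_\infty\cdot m^{-1}(\varphi_1^2\delta^{-2}+\varphi_2\delta^{-1})$, and the factor $\tfrac12$ from the derivative in $s$ gives $\tfrac N2(\cdots)\|\Var[V]-I_m\|_\infty$.

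The second-order remainder involves third derivatives of $\gamma$ along $I^{(i)}(s,\theta)$, contracted with $I_jI_k(V_i/\sqrt s-\xi_i/\sqrt{1-s})$ for $j,k\in\cM_i^V$. That gives $N^2$ terms per $i$. I bound each by Hölder's inequality: $L_4$ for the derivative, and the cube of the maximum $L_4$ norm for the three data factors. The $3^{3/4}$ accounts for the Gaussian coordinates, since $\|\xi_i\|_{L_4}^4=3$. The weights $1/\sqrt s$ and $1/\sqrt{1-s}$ are integrable over $[0,1]$, and the resulting constants give the factor $\tfrac23$. Each term carries $m^{-3/2}$, so summing over the $m$ coordinates yields $\tfrac{2N^2}{3\sqrt m}(\varphi_1^3\delta^{-3}+3\varphi_1\varphi_2\delta^{-2}+\varphi_3\delta^{-1})\max\|\cdot\|_{L_4}^3$. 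Since every bound is uniform in $t$, taking the supremum finishes the argument.

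I expect the main obstacle to be the bookkeeping that makes the $s$-singularities and the dependence structure interact correctly. Two things need care. First, the data factors multiplying the remainder must be bounded without losing independence. Second, $I_j(s)$ mixes $V_j$ and $\xi_j$, and $\xi$ has trivial dependency neighbourhoods, so I must verify that using $\cM_i^V$ for the Gaussian part is harmless; it is a superset, so it is. Getting the exact constants $\tfrac12$ and $\tfrac23$ is routine once these are settled.
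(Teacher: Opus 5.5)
Your plan follows the paper's proof almost line for line. It uses the same path $I(s)$, the same expansion of $\theta\mapsto\partial_i\gamma(I^{(i)}(s,\theta))$ about $\theta=0$, the same vanishing zeroth-order term, the same factorisation giving $\tfrac12(\mean[V_iV_j]-\ind_{\{i=j\}})$, Hölder on the remainder, and the chain rule. The variance term is fine: only $|\mean[\partial_{ij}\gamma(I^{(i)}(s,0))]|$ enters, and $\mean|\partial_ig\,\partial_jg|\le\|\partial_ig\|_{L_2}\|\partial_jg\|_{L_2}$ is controlled by $L_4$ norms. Your caveat about mixed versus pure partials is legitimate, and the paper glosses over it the same way.

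The remainder, however, has two real gaps, and the paper's written proof makes the same two moves. The first is Hölder. The data factors $I_j(s)$, $I_l(s)$ and $W_i(s)\coloneqq\frac{V_i}{2\sqrt s}-\frac{\xi_i}{2\sqrt{1-s}}$ are controlled only in $L_4$. So you need $\|\partial^3_{ijl}\gamma(I^{(i)}(s,\theta))\|_{L_4}$, and this contains $h'''(g)\,\partial_ig\,\partial_jg\,\partial_lg$. Its $L_4$ norm is bounded by $L_{12}$ norms of the first derivatives, not $L_4$ norms. Likewise $h''(g)\,\partial_{ij}g\,\partial_lg$ needs $L_8$ norms. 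No choice of exponents rescues this: six factors each controlled only in $L_4$ give $\sum_k 1/p_k\ge\tfrac64>1$. With $\varphi_r(g)$ defined through $L_4$ norms, the products $\varphi_1^3$ and $\varphi_1\varphi_2$ are therefore not justified. You must define $\varphi_1,\varphi_2$ through $L_{12}$ and $L_8$ norms, or read $\varphi_r$ as almost-sure bounds.

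The second gap is the constant $\tfrac23$, which is not "routine". Set $F(\theta)\coloneqq\partial_i\gamma(I^{(i)}(s,\theta))$. Expanding to second order with integral remainder gives $F(1)=F(0)+F'(0)+\int_0^1(1-\theta)F''(\theta)\,d\theta$, so the remainder carries weight $\int_0^1(1-\theta)\,d\theta=\tfrac12$, not $\tfrac16$. Let $M\coloneqq\max_i\max\{\|V_i\|_{L_4},3^{1/4}\}$. Combine that weight with the following bounds:
\begin{itemize}
\item $\|I_j(s)\|_{L_4}\|I_l(s)\|_{L_4}\le 2M^2$;
\item $\|W_i(s)\|_{L_4}\le\tfrac M2\big(s^{-1/2}+(1-s)^{-1/2}\big)$;
\item $\int_0^1\big(s^{-1/2}+(1-s)^{-1/2}\big)\,ds=4$.
\end{itemize}
The result is the prefactor $\frac{2N^2}{\sqrt m}$, three times the stated $\frac{2N^2}{3\sqrt m}$. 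The paper reaches $\tfrac23$ only by writing the remainder as $\tfrac12\int_0^1(1-\theta)^2F''(\theta)\,d\theta$. That is the third-order remainder form and is not a valid identity here. Carried out correctly, and with the strengthened $\varphi_r$, your argument proves the bound with $2$ in place of $\tfrac23$, not the statement as written.
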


The proof of \Cref{lem:universality:smooth} follows a similar recipe as \cite{mallory2025universality}, who prove Gaussian universality of the risk of high-dimensional logistic regression on data that exhibit data-wise block dependent and mixing structure and coordinate-wise local dependence structure. The main difference is that our random variables for comparison do not have exactly matching variance nor local dependency structures.

\begin{proof}[Proof of \Cref{lem:universality:smooth}] Denote the $\R^m \rightarrow \R$ function $\gamma_\delta = h_{t;\delta} \circ g$ and $e_i$ as the $i$-th standard basis vector in $\R^m$. Also denote $\langle \argdot, \argdot \rangle$ as the Euclidean inner product defined in the corresponding Euclidean space of the arguments. We first use the interpolation path $I(s)= \sqrt{s} V + \sqrt{1-s} \xi$ to write
\begin{align*}
    | \mean[ \gamma_\delta(V) - \gamma_\delta(\xi) ] | 
    \;=&\;
    \Big| \mint_0^1  \mean[ \partial_s \gamma_\delta( I(s) ) ] \big| \, ds \Big|
    \\
    \;=&\;
    \Big|
        \mint_0^1   
        \,
        \mean\Big[ 
            \msum_{i=1}^m 
            \big\langle \partial \gamma_\delta( I(s) ) \,,\, e_i \big\rangle 
            \Big( 
                \mfrac{V_i}{2\sqrt{s}} 
                - 
                \mfrac{\xi_i}{2\sqrt{1-s}}
            \Big) 
    \Big] \, ds 
    \Big| 
    \\
    \;\leq&\;
    \mint_0^1   
    \,
    \underbrace{
    \Big|
        \mean\Big[ 
            \msum_{i=1}^m 
            \,
            \big\langle \partial \gamma_\delta( I(s) ) \,,\, e_i \big\rangle 
            \Big( 
                \mfrac{V_i}{2\sqrt{s}} 
                - 
                \mfrac{\xi_i}{2\sqrt{1-s}}
            \Big) 
        \Big] 
    \Big| 
    }_{\eqqcolon Q(s)}
    \, ds 
    \;.
\end{align*}
For convenience we shall define 
\begin{align*}
    I^{\cM^V_i}(s)
    \;\coloneqq\; 
    I(s) 
    -
     I^{(i)}(s, 0)
    \;=\;
    I^{(i)}(s, 1)
    -
    I^{(i)}(s, 0)
     \;.
\end{align*}
By a second-order Taylor expansion with the integral remainder, we obtain that almost surely
\begin{align*}
    \big\langle & \partial \gamma_\delta( I(s) ) \,,\, e_i \big\rangle 
    \;=\;
    \Big\langle  \partial \gamma_\delta\Big(  I^{(i)}(s, 0) + I^{\cM^V_i}(s) \Big)\,,\, e_i \Big\rangle
    \\
    &\;=\;
    \big\langle  \partial \gamma_\delta\big( I^{(i)}(s, 0)  \big) \,,\, e_i \big\rangle
    +
    \Big\langle
    \partial^2 \gamma_\delta\big( I^{(i)}(s, 0)  \big) \,,\, e_i \otimes I^{\cM^V_i}(s)  \Big\rangle
    \\[.5em]
    &\qquad
    +
    \mfrac{1}{2}
    \mint_0^1
    \Big\langle
        \partial^3 \gamma_\delta\big( I^{(i)}(s, \theta) \big) 
        \,,\, 
        e_i \otimes I^{\cM^V_i}(s)  \otimes I^{\cM^V_i}(s) 
    \Big\rangle
    \,(1-\theta)^2
    \, d\theta
    \\
    &\;=\;
    \big\langle  \partial \gamma_\delta\big( I^{(i)}(s, 0)  \big) \,,\, e_i \big\rangle
    \\
    &\qquad 
    +
    \msum_{j \in \cM^V_i}  
    \big\langle
    \partial^2 \gamma_\delta\big( I^{(i)}(s, 0)  \big)  \,,\, e_i \otimes e_j \big\rangle
    \,
    \big( \sqrt{s}\;  V_j  \,+\, \sqrt{1-s} \; \xi_j  \big)
    \\
    &\qquad 
    +
    \mfrac{1}{2}
    \msum_{j,l \in \cM^V_i}  
    \,
    \bigg(
    \mint_0^1
    \Big\langle
    \partial^3 \gamma_\delta\big( I^{(i)}(s, \theta) \big)  \,,\, e_i \otimes e_j \otimes e_l \Big\rangle
    \,(1-\theta)^2 \,
     d\theta
    \\
    &\hspace{8.5em}
        \,\times\, 
        \big( \sqrt{s}\;  V_j  \,+\, \sqrt{1-s} \; \xi_j  \big)
        \big( \sqrt{s}\;  V_l  \,+\, \sqrt{1-s} \; \xi_l  \big)
    \bigg)
    \;. 
\end{align*}
Plugging this into $Q(s)$, while noting that $\mean[V_i] = \mean[\xi_i] = 0$ and $(V_i, \xi_i)$ is independent of $I^{(i)}(s,0)$ by construction, we obtain 
\begin{align*}
    Q(s)
    \;=&\;
    \mfrac{1}{2}
    \msum_{i=1}^m
    \msum_{j \in \cM^V_i}  
    \mint_0^1
    \Big|
    \mean\big[
        \big\langle
        \partial^2 \gamma_\delta\big( I^{(i)}(s, 0)  \big)  \,,\, e_i \otimes e_j \big\rangle
    \big]
        \,
    \mean\big[ V_i V_j  \,-\, \xi_i \xi_j  \big]
    \Big| 
    d s
    \\
    &\;
    +
        \mfrac{1}{2}
        \msum_{i=1}^m
        \msum_{j,l \in \cM^V_i}  
        \,
        \mean
        \bigg[
        \mint_0^1
        \Big\langle
    \partial^3 \gamma_\delta\big( I^{(i)}(s, \theta) \big)  \,,\, e_i \otimes e_j \otimes e_l \Big\rangle
        \,(1-\theta)^2 \,
        d\theta
    \\
    &\hspace{11em}
        \times 
            \big( \sqrt{s}\;  V_j  \,+\, \sqrt{1-s} \; \xi_j  \big)
            \big( \sqrt{s}\;  V_l  \,+\, \sqrt{1-s} \; \xi_l  \big)
    \\
    &\hspace{11em}
        \times 
            \Big( 
                \mfrac{V_i}{2\sqrt{s}} 
                - 
                \mfrac{\xi_i}{2\sqrt{1-s}}
            \Big) 
        \bigg]
    \bigg|
    \\
    \;\overset{(a)}{\leq}&\;
     \mfrac{m N}{2} \, \Delta_2(\gamma_\delta) \, \| \Var[V] - I_m \|_\infty
     \\
     &\;
     +
     \mfrac{m N^2}{6} 
    \, 
    \Delta_3(\gamma_\delta)
    \,
    \max_{i \leq m} \max\{ \| V_i \|_{L_4}^3 \,,\, \| \xi_i \|_{L_4}^3 \}
    \,
    \Big( \mfrac{4}{\sqrt{s}} + \mfrac{4}{\sqrt{1-s}} \Big)
    \\
    \;\overset{(b)}{\leq}&\;
    \mfrac{m N}{2} \, \Delta_2(\gamma_\delta) \, \| \Var[V] - I_m \|_\infty
    \\
    &\;+
    \mfrac{m N^2}{6}
    \, 
    \Delta_3(\gamma_\delta)
    \,
    \max_{i \leq m} \max\{ \| V_i \|_{L_4}^3, 3^{\frac{3}{4}}\}
    \,
    \Big( \mfrac{1}{\sqrt{s}} + \mfrac{1}{\sqrt{1-s}} \Big)
     \;.
\end{align*}
In $(a)$, we have used the triangle inequality and H\"older's inequality, and denoted
\begin{align*}
    \Delta_2(\gamma_\delta)
    \;\coloneqq&\;
    \sup_{i,j,l \leq m;\, \theta, s \in [0,1]}
    \,
    \big\|
            \big\langle
                \partial^2 \gamma_\delta\big( I^{(i)}(s, \theta) \big) 
                \,,\, e_i \otimes e_j 
            \big\rangle
    \big\|_{L_4}
    \;,
    \\
    \Delta_3(\gamma_\delta)
    \;\coloneqq&\;
    \sup_{\substack{i,j,l \leq m \\ \theta, s \in [0,1]}}
    \,
    \big\|
            \big\langle
                \partial^3 \gamma_\delta\big( I^{(i)}(s, \theta) \big) 
                \,,\, e_i \otimes e_j \otimes e_l
            \big\rangle
    \big\|_{L_4}
    \;.
\end{align*}
In $(b)$, we have used the moment formula of $\cN(0,1)$ to compute $\|\xi_i\|_{L_4}^3$. Integrating $Q(s)$ over $s \in [0,1]$, we obtain 
\begin{align*}
    | \mean[ \gamma_\delta(V) - \gamma_\delta(\xi) ] | 
    \;\leq&\;
    \mfrac{m N}{2} \, \Delta_2(\gamma_\delta) \, \| \Var[V] - I_m \|_\infty
    +
     \mfrac{2 m N^2 }{3}
    \, 
    \Delta_3(\gamma_\delta)
    \,
    \max_{i \leq m} \max\{ \| V_i \|_{L_4}^3, 3^{\frac{3}{4}}\}
    \;.
\end{align*}
Recall that $\gamma_\delta =  h_{t;\delta} \circ g$, where $h_{t;\delta}$ is thrice differentiable with $|\partial^r h_{t;\delta}|$ uniformly bounded by $\delta^{-r}$ for $r=1,2,3$. Also recall that by definition,  for $r=1,2,3$,
\begin{align*}
        m^{r/2} 
    \hspace{-.5em} 
    \sup_{i,j \leq m; \, \theta, s \in [0,1]}
    \,
    \big\|\, 
        \big\| \partial_j^r g\big( I^{(i)}(s, \theta) \big)  \big\|_\infty
        \,
    \big\|_{L_4}
    \;\leq\;
    \varphi_r(g)
    \;.
\end{align*}
By the chain rule, we can compute 
\begin{align*}
    \Delta_2(\gamma_\delta)
    \;\leq&\;
    m^{-1}
    \,
    \big(
    \,
    \delta^{-2} \, (\varphi_1(g))^2
    \,+\,
    \delta^{-1} \, \varphi_2(g) 
    \big)
    \;,
    \\
    \Delta_3(\gamma_\delta)
    \;\leq&\;
    m^{-3/2}
    \,
    \big(
    \,
    \delta^{-3} \, (\varphi_1(g))^3
    \,+\,
    3 \delta^{-2} \, \varphi_1(g) \, \varphi_2(g)
    \,+\,
    \delta^{-1} \, \varphi_3(g) 
    \big)
    \;.
\end{align*}
This implies
\begin{align*}
    | \mean[ \gamma_\delta(V) - \gamma_\delta(\xi) ] | 
    \;\leq&\;
    \mfrac{N}{2} 
    \,
    \Big( 
        \mfrac{(\varphi_1(g))^2}{\delta^2}
        \,+\,
        \mfrac{\varphi_2(g)}{\delta}
    \Big)
    \,
    \| \Var[V] - I_m \|_\infty
    \\
    &\;
    +
    \mfrac{2 N^2}{3 \sqrt{m}}
    \, 
    \Big( 
        \mfrac{(\varphi_1(g))^3}{\delta^3}
        \,+\,
        \mfrac{3 \varphi_1(g) \, \varphi_2(g)}{\delta^2}
        \,+\,
        \mfrac{\varphi_3(g)}{\delta}  \Big)
    \\
    &\;\hspace{1em} \,\times\,
    \max_{i \leq m} \max\big\{ \| V_i \|_{L_4}^3, 3^{\frac{3}{4}}\big\}
    \;.
\end{align*}
Recalling that  $\gamma_\delta = h_{t;\delta} \circ g$ and taking a supremum over $t \in \R$ proves the desired bound.
\end{proof}

\vspace{.5em}

We are now ready to prove  \Cref{thm:universality:local:dependence}.

\vspace{.5em}

\begin{proof}[Proof of \Cref{thm:universality:local:dependence}] Let $t \in \R$, $\delta > 0$ and $h_{t;\delta}$ be given as in \Cref{lem:smooth:approx:ind} with $\tilde m=3$. By \Cref{lem:smooth:approx:ind}, we can write 
\begin{align*}
    &\;
    | \P( g(V) > t ) - \P( g(W) > t) |
    \\
    &\;=\;
    \max\{ \P( g(V) > t ) - \P( g(W) > t) \,,\,
    \P( g(W) > t) - \P( g(V) > t ) \}
    \\
    &\;\leq\;
    \max\big\{ 
        \mean[ h_{t;\delta}(g(V)) - h_{t+\delta;\delta}(g(W))]
        \,,\,
        \mean[ h_{t;\delta}(g(W)) - h_{t+\delta;\delta}(g(V))]
    \big\}
    \\
    &\;\leq\;
    \msup_{t \in \R}\big| \mean[ h_{t;\delta}(g(V)) - h_{t;\delta}(g(\xi))] \big|
    +
    \msup_{t \in \R}\big| \mean[ h_{t;\delta}(g(W)) - h_{t;\delta}(g(\xi))] \big|
    \\
    &\qquad
    + 
    \underbrace{
        |\mean[ h_{t;\delta}(g(\xi)) - h_{t+\delta;\delta}(g(\xi))]
        |
    }_{
        \eqqcolon \, (\star)
    }\;.
\end{align*}
The first two quantities are controlled by \Cref{lem:universality:smooth}, whereas by \Cref{lem:smooth:approx:ind} again, 
\begin{align*}
    (\star) 
    \;\leq\;
    \big|
            \P(g(\xi) > t - \delta) 
            - 
            \P( g(\xi) > t + \delta)
    \big|
    \;\leq\;
    \P( | g(\xi) - t | \leq \delta )
    \;.
\end{align*}
Combining the bounds, taking an infimum over $\delta > 0$ and recalling that $\Var[W] = I_m$, we obtain 
\begin{align*}
    &\;
    | \P( g(V) > t ) - \P( g(W) > t) |
    \\
    &\;\leq\;
    \inf_{\delta > 0}
    \Big\{ 
         \mfrac{N}{2} 
        \,
        \Big( 
            \mfrac{(\varphi_1(g))^2}{\delta^2}
            \,+\,
            \mfrac{\varphi_2(g)}{\delta}
        \Big)
        \,
        \| \Var[V] - I_m \|_\infty
         \,+\,
        \P( | g(\xi) - t | \leq \delta )
    \\
    &\hspace{4em}
        +
        \mfrac{ 4 N^2}{3 \sqrt{m}}
        \, 
        \Big( 
            \mfrac{(\varphi_1(g))^3}{\delta^3}
            \,+\,
            \mfrac{3 \varphi_1(g) \, \varphi_2(g)}{\delta^2}
            \,+\,
            \mfrac{\varphi_3(g)}{\delta}  \Big)
    \\
    &\hspace{6em} \,\times\,
        \max_{i \leq m} \max\big\{ \| V_i \|_{L_4}^3, \| W_i \|_{L_4}^3, 3^{\frac{3}{4}}\big\}
    \Big\}
    \;.
\end{align*}
Recalling that $m_N = \frac{m}{N^4}$ and $\epsilon_{\rm Var} = N \, \| \Var[V] - I_m \|_\infty$ gives the first desired bound. 

\vspace{.5em}

Now suppose $g$ is a polynomial function with degree at most $\kappa \in \N$. Since $\xi \sim \cN(0,I_m)$, we can apply the Carbery-Wright inequality (Theorem 8 of \cite{carbery2001distributional}) to obtain that, for some universal constant $C > 0$,
\begin{align*}
     \P( | g(\xi) - t | \leq \delta )
     \;\leq\;
     C \kappa  \, \Big(\mfrac{ \delta}{\| g(\xi) \|_{L_2}} \Big)^{1/\kappa} \;.
\end{align*}
This allows us to choose 
\begin{align*}
    \delta
    \;=\;
    \max\Big\{  
            m_N^{- \frac{1}{2(3\kappa + 1)}}
            \,,\,
            \epsilon_{\rm Var}^{\frac{1}{2 \kappa + 1}}
        \Big\}
    \;=\; 
    \max\Big\{ 
        \Big( \mfrac{N^2}{\sqrt{m}} \Big)^{\frac{\kappa}{3\kappa + 1}}
        \,,\,
        \big( N \, \| \Var[V] - I_m \|_\infty \big)^{\frac{\kappa}{2 \kappa + 1}}
    \Big\}
\end{align*}
to obtain 
\begin{align*}
    &\;
    | \P( g(V) > t ) - \P( g(W) > t) |
    \\
    &\;\leq\;
        \mfrac{1}{2} 
        \,
        \Big( 
            (\varphi_1(g))^2 
            \,
            \epsilon_{\rm Var}^{\frac{1}{2 \kappa + 1}}
            \,+\,
            \varphi_2(g) 
            \,
            \epsilon_{\rm Var}^{\frac{\kappa + 1}{2 \kappa + 1}}
        \Big)
        +
        \mfrac{C \kappa}{\| g(\xi) \|_{L_2}^{1/\kappa}}  
        \, 
        \max\Big\{ 
            m_N^{- \frac{1}{2(3\kappa + 1)}}
            \,,\,
            \epsilon_{\rm Var}^{\frac{1}{2 \kappa + 1}}
        \Big\}
        \;.
    \\
    &\qquad
        +
        \mfrac{ 4}{3}
        \, 
        \Big( 
            (\varphi_1(g))^3
            \,
            m_N^{- \frac{1}{2(3\kappa + 1)}}
            \,+\,
            3 \varphi_1(g) \, \varphi_2(g) \,
            m_N^{- \frac{\kappa + 1}{2(3\kappa + 1)}}
            \,+\,
            \varphi_3(g)
            \,
            m_N^{- \frac{2\kappa + 1}{2(3\kappa + 1)}}
        \Big)
    \\
    &\qquad\qquad \;\times\;
        \max_{i \leq m} \max\big\{ \| V_i \|_{L_4}^3, \| W_i \|_{L_4}^3, 3^{\frac{3}{4}}\big\}
    \;.
\end{align*}
Taking a supremum over $t \in \R$ proves the second bound.
\end{proof}

\subsection{Finite-sample and conditional variant of \Cref{thm:universality}} \label{appendix:universality:finite}

We inherit the notation and assumptions outlined in  \Cref{sec:validity:universality} for \Cref{thm:universality}, with one exception: Instead of imposing \Cref{asst:density:smooth} on density smoothness, we explicitly include the following anti-concentration term in the bound:
\begin{align*}
        \cP_{\rm ac}(\delta)
        \;\coloneqq\;
        \msup_{t \in \R} \P_{Z' \sim \cN(0, I_{nd})}\big( \,\big| f(Z') - t \big| \leq  \delta \, \,\big|\, \Xobs \big)
        \quad 
        \text{ for } \;\; 
        \delta > 0\;.
\end{align*}
We also denote the constant in \Cref{asst:moment:bounded} as 
\begin{align*}
    c_{4;6\nu}
    \;\coloneqq\;
     \mmax_{i \leq n, j \leq d} 
        \,
        \max\big\{ 
            \big\| \, \| (\Phi_1(X))_{ij} \|_{L_4 | X} \,\big\|_{L_{6\nu}}
            \,,\,
            \| X_{ij} \|_{L_4}
        \big\}
\end{align*}

Notice that \Cref{asst:density:smooth} implies the existence of some universal constant $K > 0$ such that
$
    \cP_{\rm ac}(\delta) \leq K \delta
$ 
for all $\delta \geq 0$. In contrast, our bound below allows $\cP_{\rm ac}(\delta)$ to decay sub-linearly in $\delta$ as $\delta \rightarrow 0^+$, which happens e.g.~when $f$ is a bounded-degree polynomial. See \Cref{appendix:universality:local:dependence} for a further discussion on anti-concentration bounds. To state the bound, we also use the notation 
\begin{align*}
    \tilde n \;\coloneqq&\; \mfrac{nd}{(N_{\rm dep})^4} 
    &\text{ and }&&
    \epsilon_{\rm inv} \;\coloneqq&\; N_{\rm dep} \,\times\, \big\| \, \| \Var[\Phi_1(X) | X ] - I_m \|_\infty \,\big\|_{L_{3\nu/2 }}\;.
\end{align*}

\begin{theorem} \label{thm:universality:finite} 
Then under \Cref{asst:standard,asst:zero:mean}, we have
\begin{align*}
    \| \Delta_{\rm inv}(X) \|_{L_\nu}
    \;\leq\;
    \inf_{\delta > 0}
    \big\{ 
        \cE_{\rm inv}(\delta)
        +
        \cE_{\rm uni}(\delta)
        +
        \cP_{\rm ac}(\delta)
    \big\} 
    \;,
\end{align*}
where 
\begin{align*}
    \cE_{\rm inv}(\delta)
    \;\coloneqq&\;
         \mfrac{\epsilon_{\rm inv}}{2} 
        \,
        \Big( 
            \mfrac{(\varphi_{1;6\nu}(g))^2}{\delta^2}
            \,+\,
            \mfrac{\varphi_{2;5\nu}(g)}{\delta}
        \Big)
    \;,
    \\
    \cE_{\rm uni}(\delta)
    \;\coloneqq&\;
        \mfrac{ 4}{3 \sqrt{\tilde n}}
        \, 
        \Big( 
            \mfrac{(\varphi_{1;6\nu}(g))^3}{\delta^3}
            \,+\,
            \mfrac{3 \varphi_{1;5\nu}(g) \, \varphi_{2;5\nu}(g)}{\delta^2}
            \,+\,
            \mfrac{\varphi_{3;4\nu}(g)}{\delta}  \Big)
            \max\big\{ c_{4;6\nu}^3\,,\, 3^{\frac{3}{4}}\big\}
    \;.
\end{align*}
Suppose in addition that $f$ is a polynomial function with degree at most $\kappa \in \N$. Then there exists an universal constant $C > 0$ such that
\begin{align*}
    \| \Delta_{\rm inv}(X) \|_{L_\nu}
    \;\leq\;
    \overline \cE_{\rm inv}
    +
    \overline \cE_{\rm uni}
    +
    C \, \overline \cP_{\rm ac}
    \;,
\end{align*}
where 
\begin{align*}
    \overline \cE_{\rm inv}
    \;\coloneqq&\;
    \mfrac{1}{2} 
        \,
        \Big( 
            (\varphi_{1;6\nu}(g))^2 
            \,
            \epsilon_{\rm inv}^{\frac{1}{2 \kappa + 1}}
            \,+\,
            \varphi_{2;5\nu}(g) 
            \,
            \epsilon_{\rm inv}^{\frac{\kappa + 1}{2 \kappa + 1}}
        \Big)
    \;,
    \\
    \overline \cE_{\rm uni}
    \;\coloneqq&\;
    \mfrac{ 4}{3}
        \, 
        \Big( 
            (\varphi_{1;6\nu}(g))^3
            \,
            \tilde n^{- \frac{1}{2(3\kappa + 1)}}
            \,+\,
            3 \varphi_{1;5\nu}(g) \, \varphi_{2;5\nu}(g) \,
            \tilde n^{- \frac{\kappa + 1}{2(3\kappa + 1)}}
            \,+\,
            \varphi_{3;4\nu}(g)
            \,
            \tilde n^{- \frac{2\kappa + 1}{2(3\kappa + 1)}}
        \Big)
    \\
    &\;\quad\;
    \;\times\;
        \max_{i \leq m} \max\big\{  c^3_{4;6\nu}, 3^{\frac{3}{4}}\big\}
    \;,
    \\
    \overline \cP_{\rm ac}
    \;\coloneqq&\;
    \mfrac{\kappa}{\| f(Z') \|_{L_2}^{1/\kappa}}  
        \, 
        \max\Big\{ 
            \tilde n^{- \frac{1}{2(3\kappa + 1)}}
            \,,\,
            \epsilon_{\rm inv}^{\frac{1}{2 \kappa + 1}}
        \Big\}
    \;.
\end{align*}
\end{theorem}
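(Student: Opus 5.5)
We would prove \Cref{thm:universality:finite} by applying the smooth-function comparison of \Cref{lem:universality:smooth} twice: once conditionally on $X$, and once marginally. The two applications are glued through the Gaussian surrogate $Z' \sim \cN(0, I_{nd})$, and the conditional bounds are then averaged in $L_\nu$.

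\textbf{Step 1: conditional comparison.} Fix $t \in \R$ and $\delta > 0$, and let $h_{t;\delta}$ be as in \Cref{lem:smooth:approx:ind} with $\tilde m = 3$. Exactly as in the proof of \Cref{thm:universality:local:dependence}, sandwiching the indicator gives
\begin{align*}
    |\P(f(\Phi_1(X)) > t \,|\, X) - \P(f(X) > t)|
    \;\leq\; A(X) + A' + \P(|f(Z') - t| \leq \delta)\;.
\end{align*}
Here $A(X) \coloneqq \sup_t |\mean[h_{t;\delta}(f(\Phi_1(X))) - h_{t;\delta}(f(Z')) \,|\, X]|$ and $A' \coloneqq \sup_t |\mean[h_{t;\delta}(f(X')) - h_{t;\delta}(f(Z'))]|$. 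The last term is at most $\cP_{\rm ac}(\delta)$.

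\textbf{Step 2: the marginal term $A'$.} This is deterministic. We bound it by \Cref{lem:universality:smooth} with $V = X'$ and $m = nd$. By \Cref{asst:standard}, $\Var[X'] = I_{nd}$, so $A'$ contributes only to the $\cE_{\rm uni}$ part. The derivative terms along the paths $\gamma'$ are controlled by the $\varphi_{\tau;\cdot}$ of \eqref{eq:defn:varphi}, since $L_4 \leq \|\,\|\cdot\|_{L_4|X}\|_{L_\nu}$ after taking expectations.

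\textbf{Step 3: the conditional term $A(X)$.} Apply \Cref{lem:universality:smooth} under the regular conditional law $\mu_{\Phi_1(X)|X}$. This is legitimate because, by \Cref{asst:zero:mean}, $\Phi_1(X)$ is conditionally mean zero, which is what that proof uses. The variance mismatch produces $\frac{N_{\rm dep}}{2}\|\Var[\Phi_1(X)|X] - I\|_\infty \cdot(\ldots)$, with the conditional stability quantities $\|\partial^\tau f(\gamma(\cdot))\|_{L_4|X}$ in place of $\varphi_\tau$. The third-order term similarly carries conditional fourth moments $\|(\Phi_1(X))_{ij}\|_{L_4|X}^3$.

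\textbf{Step 4: taking $L_\nu$ norms.} Take $\|\cdot\|_{L_\nu}$ of $A(X)$ and use H\"older to split each product; this is the main obstacle. For the variance term, pair $\|\Delta_{\rm Var}\|_{L_{3\nu/2}}$ with the stability factor in $L_{3\nu}$. For $(\varphi_1)^2$ this needs $L_{6\nu}$ on each factor, since $\frac{1}{3\nu/2} + \frac{2}{6\nu} = \frac{1}{\nu}$, giving $\varphi_{1;6\nu}$. For $\varphi_2$ we use $\varphi_{2;5\nu}$, which is at least $\varphi_{2;3\nu}$. For the cubic term, pair the conditional moment factor $c_{4;6\nu}^3$ (in $L_{2\nu}$) with derivative products whose exponents are chosen to fit the $\varphi_{\tau;(7-\tau)\nu}$ indices. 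Bookkeeping is required here, but monotonicity of $L_p$ norms allows rounding indices up.

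\textbf{Step 5: assembling.} Combining Steps 1–4 gives $\cE_{\rm inv}(\delta) + \cE_{\rm uni}(\delta) + \cP_{\rm ac}(\delta)$, with the factor $4/3$ arising from adding the $2/3$ contributions of the two comparisons. Taking the infimum over $\delta$ proves the first bound. The polynomial case follows exactly as in the second half of \Cref{thm:universality:local:dependence}: bound $\cP_{\rm ac}$ by Carbery–Wright, then choose $\delta = \max\{\tilde n^{-\kappa/(2(3\kappa+1))}, \epsilon_{\rm inv}^{\kappa/(2\kappa+1)}\}$.
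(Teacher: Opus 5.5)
Your proposal is correct and is essentially the paper's argument. The paper proves this in one line: it applies \Cref{thm:universality:local:dependence} with $g=f$, $V=\Phi_1(X)\,|\,X$ and $W=X$, then takes $\|\cdot\|_{L_\nu}$ and uses H\"older. Your Steps 1--3 simply unpack that theorem's proof: the indicator sandwich, \Cref{lem:universality:smooth} applied conditionally and marginally through $Z'$, and the anti-concentration term.

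The unpacking is actually useful. Per-term Minkowski and H\"older is what puts the sup outside the $L_p$ norms, giving exactly $\varphi_{\tau;(7-\tau)\nu}$ and $c_{4;6\nu}$. Your choice $\delta=\max\{\tilde n^{-\kappa/(2(3\kappa+1))},\epsilon_{\rm inv}^{\kappa/(2\kappa+1)}\}$ is also the one that yields the stated exponents.

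One small quibble concerns Step 2. The $\gamma'$-path terms are controlled because they do not depend on $X$, so their conditional $L_4$ norm is a constant. The inequality you cite, $\|\cdot\|_{L_4}\le\|\,\|\cdot\|_{L_4|X}\|_{L_\nu}$, can fail for $\nu<4$.
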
 

\begin{proof}[Proof of \Cref{thm:universality:finite}] This follows directly from \Cref{thm:universality:local:dependence} by replacing $g$ with $f$, $V$ with $\Phi_1(X)|X$ and $W$ with $X$, followed by taking $\| \argdot \|_{L_\nu}$ and H\"older's inequality.
\end{proof}

\begin{proof}[Proof of \Cref{thm:universality}] 
Note that under the conditions of \Cref{thm:universality}, $\tilde n \rightarrow \infty$ and $\epsilon_{\rm inv} \rightarrow 0$. \Cref{asst:stab} and \Cref{asst:moment:bounded} respectively provide the required upper bounds on the stability terms and $c_{4;6\nu}$, and \Cref{asst:density:smooth} implies that $\| f(Z') \|_{L_2} = \Omega(1)$. Therefore \Cref{thm:universality:finite} proves that $\| \Delta_{\rm inv}(X) \|_{L_\nu} \rightarrow 0$. When additionally \Cref{asst:iid:smooth} holds, we can apply this to the bound of \Cref{thm:Del:Kol:approx:inv}, which is independent of $r$, to obtain the desired uniform convergence.
\end{proof}

\vspace{.5em}

 The bound in \Cref{thm:universality:finite} consist of two parts: the stability error $\cE_{\rm uni}(\delta)$ and anti-concentration bound $\cP_{\rm ac}(\delta)$ arise from the universality approximation, whereas $\cE_{\rm inv}(\delta)$ is an approximate invariance error. To interpret the bounds, we observe the following:
\begin{itemize}
    \item \textbf{Interpretation of the universality error. } Suppose that $X$ has i.i.d.~data with i.i.d.~coordinates, and $\Phi_1(X)|X$ also has i.i.d.~data with i.i.d.~coordinates (e.g.~if $\Phi_1$ performs coordinate-wise i.i.d.~noise injection on the entire dataset). In this case, $N_{\rm dep}=1$. Suppose further that $f$ is a degree-$\kappa$ polynomial. Ignoring the approximate invariance error $\epsilon_{\rm inv}$, the bound in \Cref{thm:universality:finite} decays at a rate $(nd)^{-\frac{1}{6\kappa+2}}$, which gets slower for larger values of $\kappa$. The dependence on $n$ may appear sub-optimal since, if e.g.~$f$ is a degree-two polynomial, the rate scales as $O(n^{-\frac{1}{14} })$. \cite{huang2024gaussian} shows that this rate is in fact  nearly optimal: They construct degree-$\kappa$ polynomials of random vectors such that the lower bound for the universality approximation is $\Omega(n^{-\frac{1}{6\kappa} })$.
    A main difference of our bound from \cite{huang2024gaussian} is the local dependency measure $N_{\rm dep}$: This allows for more general dependence across observations and across coordinates, provided that 
    \begin{align*}
        N_{\rm dep} \;\ll\; (nd)^{1/4}\;.
    \end{align*}
    As considered in \cite{mallory2025universality} for high-dimensional logistic regression and \cite{huang2022data} for general estimators, this dependence via $N_{\rm dep}$ will prove to be crucial for analysing the effects of general data augmentations.
    \item  \textbf{Interpretation of the approximate invariance error. } Up to a vanishing universality approximation error, \Cref{thm:universality:finite} controls the approximate invariance error in Kolmogorov distance, $\Delta_{\rm inv}(X)$, via the approximate invariance error in the variance, $\epsilon_{\rm inv}$. This can be much easier to compute, as considered in \Cref{appendix:theory:examples}. We also note that $\epsilon_{\rm inv}$ trades off against the stability terms $\varphi_{1;6}$ and $\varphi_{2;5}$, which is similar to the observation in \cite{austern2020bootstrap} for bootstrap of general functions.
\end{itemize}

\subsection{Computing local dependence measure} \label{appendix:verify:local}

The measure of local dependency in \Cref{thm:universality:finite} is given by
\begin{align*}
    N_{\rm dep} \;\equiv\; N_{\rm dep}(X) \;=\; \max\{ N(\mu_{\Phi_1(X)|X}) \,,\, N(\mu_X) \}\;.
\end{align*}
$N(\mu_X)$ is determined purely by the original dataset, and imposes conditions on observation-wise and coordinate-wise dependence. In this section we focus on computing the value of 
\begin{align*}
     \msup_{\bx \in \R^{nd}} \, N(\mu_{\Phi_1(\bx)}) \;,
\end{align*}
which is a deterministic upper bound to $ N(\mu_{\Phi_1(X)|X})$ and determined entirely by the distribution of $\Phi_1$. Throughout, given a generic vector $\bw \in \R^b$, we denote $(\bw)_l$ as the $l$-th coordinate of $\bw$ and write $(\bw)_A \coloneqq (\bw)_{l \in A}$ for a subset $A \subseteq [b]$.

\vspace{.5em}

\paragraph{``Local" transformations. } We first consider transformations that are independent across many small subsets of coordinates.

\begin{lemma}[Independent transformations in many small subsets] \label{lem:aug:local} Suppose there exists a partition $(A_t)_{t \leq T}$ of the index set $[nd]$ such that for all $\bx \in \R^{nd}$, the collection of variables $\{ (\Phi_1(\bx))_{A_t} \}_{t \leq T}$ are independent. Then $N_* \coloneqq \sup_{\bx \in \R^{nd}} N( \mu_{\Phi_1(\bx)}) \leq \sup_{t \leq T} | A_t| $.
\end{lemma}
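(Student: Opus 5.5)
The plan is to work directly from the definition of the local dependency neighbourhood $\cN_{ij}(\mu)$ in \eqref{eq:defn:local:dependency:measure}. Fix $\bx \in \R^{nd}$ and write $V \coloneqq \Phi_1(\bx)$, so that $\mu_{\Phi_1(\bx)}$ is the law of $V$. Take any coordinate index $l \in [nd]$, identified with a pair $(i,j)$. Since $(A_t)_{t \leq T}$ is a partition, there is a unique $t(l)$ with $l \in A_{t(l)}$. The goal is to show that $A_{t(l)}$ is itself an admissible set in the infimum defining $\cN_{ij}(\mu_{\Phi_1(\bx)})$. This gives $|\cN_{ij}(\mu_{\Phi_1(\bx)})| \leq |A_{t(l)}| \leq \sup_{t} |A_t|$.

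For admissibility, two things must hold. First, $l \in A_{t(l)}$, which is immediate. Second, $(V)_{A_{t(l)}}$ must be independent of $(V)_{[nd] \setminus A_{t(l)}}$. By hypothesis the blocks $\{(V)_{A_t}\}_{t \leq T}$ are mutually independent. Because the complement $[nd] \setminus A_{t(l)}$ is exactly $\bigcup_{t \neq t(l)} A_t$, the vector $(V)_{[nd]\setminus A_{t(l)}}$ is (up to reordering coordinates) the concatenation of the blocks $(V)_{A_t}$ for $t \neq t(l)$. Mutual independence of a family implies that any one member is independent of the joint vector of the remaining members. So $(V)_{A_{t(l)}}$ is independent of the complementary coordinates.

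Next comes the step needing care, the interpretation of ``$\inf$'' over sets in the definition of $\cN_{ij}$. I read it as a minimal-cardinality (or minimal, inclusion-wise) admissible set. In either reading, its cardinality is at most that of any admissible set, which is all the argument needs. The trivial set $[nd]$ is always admissible, so the infimum is well defined. If one insists on intersections of admissible sets, I would note that we only ever use the upper bound $|\cN_{ij}| \leq |A|$ for an admissible $A$. The paper's bound on $N(\mu)$ uses the infimum exactly in this sense, so no structural lemma about intersections is required.

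Finally, take the maximum over $(i,j)$ to get $N(\mu_{\Phi_1(\bx)}) \leq \sup_{t \leq T}|A_t|$ for each fixed $\bx$. Then take the supremum over $\bx \in \R^{nd}$; the right-hand side does not depend on $\bx$, and this yields $N_* \leq \sup_t |A_t|$. I expect no real obstacle. The only point to state explicitly is the reduction from mutual independence of the blocks to independence of one block from the union of the others.
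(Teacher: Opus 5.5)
Your proposal is correct and follows the paper's one-line proof: take the block $A_t$ containing the coordinate as an admissible dependency neighbourhood, then bound its size by $\sup_t |A_t|$. Your added details, namely that mutual independence makes one block independent of the union of the others and that any reading of the infimum gives $|\cN_{ij}| \leq |A_{t(l)}|$, are sound elaborations of the same argument.
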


\begin{proof} The result follows from the definition of $ N( \mu_{\Phi_1(\bx)})$ by taking the local dependency neighbourhood of each coordinate $l$ to be the $A_t$ that contains $l$.
\end{proof}

\Cref{lem:aug:local} covers a range of ``local'' transformations used in practice. To see this, for simplicity we first assume that transformations are i.i.d.~across the $n$ different observations, i.e.~
\begin{align*}
    \Phi_1(X) \;=\; (\phi_1(X_1), \ldots, \phi_n(X_n) )\;,
\end{align*}
where $X = (X_1, \ldots, X_n)$ are $n$ $\R^d$-valued random vectors and $\phi_1, \ldots, \phi_n$ are i.i.d.~$\R^d \rightarrow \R^d$ transformations. Then \Cref{lem:aug:local} covers the following specifications of $\phi_1$:
\begin{proplist}
    \item \textbf{Random shuffling within small subsets of coordinates.} This is useful, for instance, in genomics data where the data vector can be divided into different groups of homogeneous coordinates \cite{zrimec2020deep,sanchez2021deep}. Formally, this is specified by partitioning $[d]$ into disjoint subsets $(\tilde A_t)_{t \leq \tilde T}$ and defining, for $x \in \R^d$ and $t \leq \tilde T$,
    \begin{align*}
        (\phi_1 (x))_{\tilde A_t} \coloneqq ( x_{\pi_{1t}(i)} )_{i \in \tilde A_t} 
        \tagaligneq \label{eq:random:permute}
    \end{align*}
    where each $\pi_{1t}$ is an independent uniform permutation on $\tilde A_t$ and $(\pi_{1t})_{t \leq \tilde T}$ are independent.
    In this case, the upper bound in \Cref{lem:aug:local} is exactly $N_*  \leq \sup_{t \leq \tilde T} |\tilde A_t|$;
    \item \textbf{Pixel-wise i.i.d.~image augmentations, such as jittering of brightness, hue, contrast and saturation. } An image 
    with 3 colour channels can be modelled as an $\R^{3d'}$ vector $x'$. Denote $(x')_l$ as the $l$-th pixel, which is a 3-dimensional vector. Then the transformations considered can then be formally specified as, for $l \in [d']$,
    \begin{align*}
        (\phi_1(x'))_l \;\coloneqq\; \phi_{1l}((x')_l)\;,
    \end{align*}
    where $(\phi_{1l})_{l \leq d'}$ are i.i.d.~$\R^3 \rightarrow \R^3$ transformations. In this case, $N_* = 3$;
    \item \textbf{Coordinate-wise i.i.d.~noise injection. } This can be formally specified as, for $x \in \R^d$ and $l \leq d$,
    \begin{align*}
        (\phi_1(x) )_l \;\coloneqq\; \psi_{1l}(x_l)\;,
    \end{align*}
    where $(\psi_{1l})_{l \leq d}$ are i.i.d.~$\R \rightarrow \R$ noise injection mechanisms, such as additive noise or multiplicative noise. In this case, $N_* = 1$.
\end{proplist}

\vspace{.5em}

\paragraph{``Global" transformations. } It is also common to consider transformations that introduce dependence across many coordinates, and we state some examples below. For simplicity, we use $x \in \R^d$ to represent an image with 1 colour channel, as an analogous formulation holds for the case with 3 colour channels.
\begin{proplist}
    \item \textbf{Random reflection of images.} Consider $x \in \R^d$ with $d$ even. Upon an appropriate ordering of the $d$ pixels, a random reflection can be formulated as the transformation $\phi$ such that, for $l \leq d$, 
    \begin{align*}
        (\phi_1(x))_l 
        \;\coloneqq\; 
        x_{\pi_1(l)}\;,
        \tagaligneq \label{eq:formulate:reflection}
    \end{align*}
    where $\pi_1: [d] \rightarrow [d]$ is either the identity map or the map $l \mapsto d+1-l$ with equal probabilities. In this case, $N_*=d$;
    \item \textbf{Random rotation of images.} Consider a stylised setup where rotation only cycles the pixels of an image $x \in \R^d$. Pixel rotation can be formulated by partitioning $[d]$ into disjoint equal-sized subsets $(\tilde A_t)_{t \leq \tilde T}$,  and defining $\phi_1$ as in \eqref{eq:formulate:reflection}, where $\pi$ is now a uniform cycling operation that preserves the partition and acts in the same way on all $\tilde A_t$. For instance when $\tilde T=2$, $\pi_1$ is defined by drawing uniformly a permutation $\pi'_1$ on $\{1, \ldots, d/2\}$ and setting 
    \begin{align*}
        \pi_1(l) \;\coloneqq&\; \pi'_1(l) \text{ for } l \leq d/2 
        &\text{ and }&&
        \pi_1(l) \;\coloneqq&\; \pi'_1(d-l) \text{ for } l > d/2 
        \;.
    \end{align*} 
    In this case, $N_*=d$;
    \item \textbf{Random permutations within large subsets of coordinates. } The formulation is identical to \eqref{eq:random:permute} except that $\sup_{t \leq \tilde T} |\tilde A_t|$ is large in this case. For instance, random permutations of $n$ data points lead to $N_* = n$;
    \item \textbf{Random resized cropping, random subset selection and random projection. }  Another common image augmentation technique is \texttt{RandomResizedCrop}, which randomly selects a window from the image for cropping and resizes the obtained pixels to the original image size with interpolation. For simplicity we only consider a toy setup: We partition $[d]$ into disjoint equal-sized subsets $(\tilde A_t)_{t \leq \tilde T}$, draw $\tau_1 \sim \textrm{Uniform}[\tilde T]$, and fill the $d$ pixels by $(x)_{\tilde A_{\tau_1}}$ by repeating the pixels. Formally, writing $k= d / \tilde T$, we define
    \begin{align*}
        (\phi_1(x))_{(l-1)\tilde T + s}
        \;\coloneqq\;
        \text{ $l$-th element of $(x)_{\tilde A_{\tau_1}}$ }
    \end{align*}
    for $1 \leq l \leq k$ and $1 \leq s \leq \tilde T$. Note that this formulation also covers random disjoint subset selection and random projections onto disjoint subsets of data. In this case, $N_*=d$;
    \item \textbf{Mixup augmentation.} Mixup is an augmentation technique that has seen popularity across a large range of data domains, including images, text, audio and video data \citep{chen2020mixtext,meng2021mixspeech,yoon2021ssmix,sahoo2021contrast,jin2024survey}. The exact implementation varies across applications, but the shared core idea is to synthesise artificial data points that interpolate between the original data points. We again consider a simplified formulation, which mirrors the random convex combination adopted in the original mixup paper \citep{zhang2018mixup}: For each $1 \leq i \leq n$, we draw $\omega_i \sim \textrm{Uniform}[0,1]$ and $(\tau_{i1}, \tau_{i2})_{i \leq n}$ be i.i.d.~uniformly drawn two-tuples of distinct numbers from $[n]$. A ``bootstrapped'' version of mixup can then be described as 
    \begin{align*}
        \Phi_1(X)
        \;=\; 
        (\tilde \phi_1(X), \ldots, \tilde \phi_n(X) )
        \;,
        \quad
        \text{ where }
        \qquad
        \tilde \phi_i(X)
            \;=\; 
            \omega_i X_{\tau_{i1}}
            +
            (1-\omega_i) X_{\tau_{i2}}
        \;.
    \end{align*}
    In this case, $N_* = d$, since the mixup weight $\omega_i$ and the data indices $(\tau_{i1}, \tau_{i2})$ are all chosen independently across the $n$ observations but shared across all coordinates.
\end{proplist}

\vspace{1em}
 
For the transformations above, $N_*$ is large, which hurts the bound in \Cref{thm:universality:finite}. There are two remedies:
\begin{proplist}
    \item For $N_*=d$, $N_{\rm dep}^2 / \sqrt{nd} = O(d^{3/2} n^{-1/2})$. This introduces a constraint on the dimensionality $d = o(n^{1/3})$;
    \item For $N_* = n$ e.g.~in (c) above, $N_{\rm dep}^2 / \sqrt{nd} = O(n^{3/2} d^{-1/2})$, which does not converge when $d$ is fixed. Nevertheless, we will only consider this in the set up of a composite transformation 
    \begin{align*}
        \Phi_1 \;=\;  \cT_1 \circ \tilde \Phi_1 \;,
    \end{align*}
    where $\tilde \Phi_1$ and $\cT_1$ are independent random transformations: $\cT_1$ is an exactly exchangeable, inter-observation transformation e.g.~permutations and cycling of $n$ exchangeable data points, whereas 
    \begin{align*}
        \tilde \Phi_1 \;=\; (\tilde \phi_i)_{i \leq n}
    \end{align*}
    consists of $n$ i.i.d.~$\R^d \rightarrow \R^d$ random transformations each acting on a data point. In this case, we will first apply \Cref{lem:compose:exact:with:approx} in \Cref{sec:new:DAB:methods}, which reduces the analysis of $\Delta_{\rm inv}(X)$ to the sole effect of $\tilde \Phi_1$. The local dependency measure corresponding to $\tilde \Phi_1$ can then be bounded by a quantity completely determined by $\tilde \phi_1$, and the corresponding computation of $N_*$ is now independent of $n$. 
\end{proplist}

\vspace{.5em}

\noindent 
\textbf{Composing two approximately invariant transformations when one of them is linear.} Denote $\Phi_1= \cT_1 \circ \tilde \Phi_1$ as above.  When both $\cT_1$ and $\tilde \Phi_1$ satisfy only approximate invariance, \Cref{lem:compose:exact:with:approx} does not apply, and the local dependency measure $N(\mu_{\Phi_1(X)|X})$ can be troublesome to compute. Nevertheless, when $\cT_1$ takes value in the set of linear transformations, it suffices to consider the local dependency measures
\begin{align*}
    \sup_{\bx \in \R^{nd}} N(\mu_{\Var[\cT_1(\bx)]^{-1/2} \cT_1(\bx)})
    \qquad\qquad 
    \text{ and }
    \qquad\qquad 
    \sup_{\bx \in \R^{nd}}  N(\mu_{\Var[\tilde \Phi_1(\bx)]^{-1/2} \tilde \Phi_1(\bx)})\;,
\end{align*}
which are quantities that separately concern the dependencies introduced by $\cT_1$ and $\tilde \Phi_1$. This is achieved by a two-step application of our general universality result (\Cref{thm:universality:local:dependence} with \Cref{remark:universality:local:dependence:extension}):
\begin{itemize}
    \item We first condition on $\cT_1$ and apply universality to the function $f(\cT_1( \argdot ))$, whose stability properties are the same as $f$ by the linearity of $\cT_1$. This allows us to obtain, conditionally on $\cT_1$ and $X$, the distributional approximation
\begin{align*}
    f( \cT_1 ( \tilde \Phi_1(X) ) )  
    \;\overset{d}{\approx}&\;
    f\big( 
        \cT_1 \big( \mean[\tilde \Phi_1(X) | X] + \sqrt{\Var[\tilde \Phi_1(X)| X]} \, \xi   \big) 
    \big)
    \\
    \;=&\;
    f\big( 
        \underbrace{
        \cT_1 \big( \mean[\tilde \Phi_1(X) | X] \big) 
        \,+\, 
        \cT_1 \big( \sqrt{\Var[\tilde \Phi_1(X)| X]} \, \xi \big)   
        }_{\eqqcolon (\star\star)}
    \big)
    \;,
\end{align*}
where $\xi \sim \cN(0, I_{nd})$ is independent of all other variables and the second line above follows by linearity of $\cT_1$. This step incurs an error that can be controlled purely by the local dependency measure $\sup_{\bx \in \R^{nd}} N(\mu_{\Var[\cT_1(\bx)]^{-1/2} \cT_1(\bx)})$;
    \item Next, we apply universality conditionally on $X$ to approximate $(\star\star)$ by a Gaussian, whose conditional mean and variance can be computed by the linearity of $\cT$ as 
\begin{align*}
    \mean[(\star\star) | X ]
    \;=&\;
    \mean\big[ \cT_1 \big( \mean[\tilde \Phi_1(X) | X] \big)  \big]
    \\
    \;=&\;
    \mean[ \cT_1 \circ \tilde \Phi_1(X) \,|\, X ]
    \;=\;
    \mean[ \Phi_1(X) | X]
    \;,
    \\
    \Var[(\star\star) | X ]
    \;=&\;
    \Var\big[ \cT_1 \big( \mean[\tilde \Phi_1(X) | X] \big)  + \cT_1 \big( \sqrt{\Var[\tilde \Phi_1(X)| X]} \, \xi \big)  \,\big|\, X   \big]
    \\
    \;\overset{(a)}{=}&\;
    \Var\big[ \cT_1 \big( \mean[\tilde \Phi_1(X) | X] \big) \,\big|\, X  \big]
    +
    \mean\big[ \cT_1 \Var[\tilde \Phi_1(X) | X] \cT_1^\top  \,\big|\, X \big]
    \\
    \;=&\;
    \mean\big[ \cT_1  \mean[\tilde \Phi_1(X) | X] \, \mean[\tilde \Phi_1(X) | X]^\top \cT_1^\top \big]
    \\
    &\;
    -
    \mean\big[ \cT_1 \circ \tilde \Phi_1(X) |X ] \, \mean\big[ \cT_1 \circ \tilde \Phi_1(X) |X ]^\top 
    \\
    &\;
    +
    \mean\big[ \cT_1 \Var[\tilde \Phi_1(X) | X] \cT_1^\top  \,\big|\, X \big]
    \\
    \;=&\;
    \mean\big[ \cT_1  \mean[\tilde \Phi_1(X) \Phi_1(X)^\top | X] \cT_1^\top \big]
    \\
    &\;-\;
    \mean\big[ \cT_1 \circ \tilde \Phi_1(X) |X ] \, \mean\big[ \cT_1 \circ \tilde \Phi_1(X) |X ]^\top 
    \\
    \;=&\;
    \Var[ \cT_1 \circ \tilde \Phi_1(X) \,|\, X ]
    \;=\;
    \Var[\Phi_1(X) \,|\, X]\;.
\end{align*}
In $(a)$ above, we have used the total law of variance. In other words, the second application of universality gives the correct Gaussian surrogate for $\Phi_1(X) | X$. Moreover, it incurs an error controllable purely by the local dependency measure $\sup_{\bx \in \R^{nd}} N(\mu_{\Var[\tilde \Phi_1(\bx)]^{-1/2} \tilde \Phi_1(\bx)})$.
\end{itemize}
In summary, for composite transformations $\cT_1 \circ \tilde \Phi_1$ where $\cT_1$ is linear, it suffices to consider the local dependency measures related to $\cT_1$ and $\tilde \Phi_1$ individually, and all computations in this section still apply. A formal statement of the universality result requires stating the stability interpolation paths $\cC$ for the two applications of universality, which is notationally cumbersome and hence omitted here.

\subsection{Verification of stability conditions} \label{appendix:verify:stability}

We verify stability conditions for two examples in this section: A polynomial estimator, which covers most estimators considered in this work, and the squared loss statistic used in conformal prediction, which demonstrates the lack of stability for vanilla conformal prediction residual.

\vspace{.5em}

\noindent 
\textbf{Polynomial estimators. } We consider estimators in the class of symmetric and polynomial $\R^{nd} \rightarrow \R$ functions. This can be extended to cover estimators that can be approximated by an $\kappa$-order Taylor expansion and that are symmetric under permutations of data entries. For the next result only, we view $\bx \in \R^{nd}$ as an $nd$-dimensional vector and use the single index $(\bx)_i$ to denote the $i$-th coordinate of $\bx$ for $1 \leq i \leq nd$.

\begin{lemma} \label{lem:stability:poly} Fix $\nu \geq 1$. Suppose $f=p_\kappa$, where  $p_\kappa: \R^{nd} \rightarrow \R$ is a degree-$\kappa$ polynomial that takes the form
\begin{align*}
    p_\kappa(x_1, \ldots, x_{nd})
    \;\coloneqq\; 
    \alpha_0
    + 
    \msum_{l=1}^\kappa  \Big( \mfrac{\alpha_l}{(nd)^{l/2}} \msum_{1 \leq i_1, \ldots, i_l \leq nd} \, (x_{i_1} \times \cdots \times \, x_{i_l})  \Big)
\end{align*}
for $\alpha_0, \ldots, \alpha_\kappa \in \R$ with $\alpha_{\rm max} \coloneqq \max_{1 \leq l \leq \kappa} |\alpha_l|$. Then for $r=1,2,3$,
\begin{align*}
    \varphi_{r;\nu}
    \;\leq\;
    \alpha_{\rm max}\,
    \msum_{l=1}^\kappa
    \Big(
        2^l
        \,
        \max\Big\{ 
        &\;
            \Big\|
            \,
            \big\| 
                    \tilde p_{l-r} \big(\Phi_1(X) \big) 
            \big\|_{L_4 | X}
            \,
            \Big\|_{L_\nu}
            \;,\;
            \big\|
            \,
                    \tilde p_{l-r} (X') 
            \,
            \big\|_{L_\nu}
    \\
    &\; 
            \Big\|
            \,
            \big\| 
                    \tilde p_{l-r} \big(Z^\Phi(X) \big) 
            \big\|_{L_4 | X}
            \,
            \Big\|_{L_\nu}
            \;,\;
            \big\|
            \,
                    \tilde p_{l-r} (Z') 
            \,
            \big\|_{L_\nu}
    \Big\}
    \Big)
    \;,
\end{align*}
where, for $s \in \Z$,  we have denoted 
\begin{align*}
    \tilde p_s(x_1, \ldots, x_{nd}) 
    \coloneqq 
    \begin{cases}
        \big(\frac{1}{\sqrt{nd}} \msum_{i=1}^{nd} x_i\big)^s
        & \text{ if } s \geq 0\;,
        \\
        0 & \text{ otherwise.}
    \end{cases}
\end{align*}
\end{lemma}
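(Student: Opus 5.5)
The plan is to exploit the fact that the symmetric polynomial $p_\kappa$ is a polynomial in the single linear statistic $S(\bx) \coloneqq \frac{1}{\sqrt{nd}} \sum_{i \leq nd} x_i$. Summing over all ordered multi-indices gives $\sum_{i_1, \ldots, i_l} x_{i_1} \cdots x_{i_l} = (\sum_i x_i)^l$. Hence
\begin{align*}
    p_\kappa(\bx) \;=\; \alpha_0 + \msum_{l=1}^\kappa \alpha_l \, \tilde p_l(\bx)\;.
\end{align*}
Every coordinate derivative then has the closed form
\begin{align*}
    \partial^r_{(i,j)} \tilde p_l(\bx) \;=\; l(l-1)\cdots(l-r+1)\, (nd)^{-r/2}\, \tilde p_{l-r}(\bx)\;,
\end{align*}
which vanishes when $l < r$; this matches the convention $\tilde p_s = 0$ for $s < 0$. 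The prefactor $(nd)^{r/2}$ in \eqref{eq:defn:varphi} cancels exactly. Since the right-hand side does not depend on the coordinate $(i,j)$, the supremum over $i,j$ reduces to a supremum over the interpolation paths $\gamma \in \cC$ only. By the triangle inequality over $l$ and Minkowski's inequality for $\|\,\|\argdot\|_{L_4|X}\|_{L_\nu}$, it then suffices to bound, for each $l$, the quantity $\|\,\|\tilde p_{l-r}(\gamma(\Phi_1(X), X', Z'))\|_{L_4|X}\|_{L_\nu}$ uniformly over $\gamma$.

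Next I will evaluate $S$ along a path. By linearity, for $\gamma_{s,\theta;i,j}$,
\begin{align*}
    S(\gamma) \;=\; \theta\big(\sqrt{s}\,S(\Phi_1(X)) + \sqrt{1-s}\,S(Z')\big) + (1-\theta)\big(\sqrt{s}\,S^{(-)}(\Phi_1(X)) + \sqrt{1-s}\,S^{(-)}(Z')\big)\;.
\end{align*}
Here $S^{(-)}$ is the same average computed with the coordinates in the dependency neighbourhood $\cN_{ij}$ set to zero. Since $\sqrt{s}, \sqrt{1-s} \leq 1$ and $S(\gamma)$ is a convex combination in $\theta$, convexity of $|\argdot|^k$ gives $|a+b|^k \leq 2^{k-1}(|a|^k + |b|^k)$. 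This yields $|S(\gamma)|^{l-r} \leq 2^{l}\max\{\cdots\}$ over the endpoint quantities, which is where the factor $2^l$ in the statement should come from. For $\gamma'$ the same argument applies with $X'$ in place of $\Phi_1(X)$; there the inner $L_4|X$ norm is trivial because $X'$ and $Z'$ are independent of $X$, so only the plain $\|\argdot\|_{L_\nu}$ remains. I would identify the $Z^\Phi(X)$ entry in the bound as the conditional Gaussian surrogate appearing on the $\Phi_1$-side path, so the four endpoint terms in the maximum are exactly the four endpoints visited.

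The main obstacle is the $\theta$-truncated piece $S^{(-)}$, because the stated bound involves only full-vector quantities. My first attempt is to write $S^{(-)}(\bw) = S(\bw) - \frac{1}{\sqrt{nd}}\sum_{(a,b) \in \cN_{ij}} w_{ab}$. The correction has at most $N_{\rm dep}$ terms, each with bounded $L_4$ moment by \Cref{asst:moment:bounded}, so it is $O(N_{\rm dep}/\sqrt{nd})$, which is negligible under \Cref{asst:local:dependence}. This would absorb it into the $2^l$ constant or into a lower-order term. A cleaner alternative is to observe that $S^{(-)}$ is itself an average of a vector whose law is dominated by the full one through a conditional Jensen argument (conditioning on the neighbourhood coordinates, which are independent of the rest and have mean zero under \Cref{asst:zero:mean}). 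Jensen gives $\|S^{(-)}\|_{L_p} \leq \|S\|_{L_p}$ directly, and I would try this first.

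A second point to check is the combinatorial factor $l!/(l-r)!$, which for $r \leq 3$ is at most $l^3$. I expect this either to be absorbed into $\alpha_{\max}$, so that the statement's constant should be read as $l^r 2^l$, or to be avoided by the paper's normalisation. I would verify which convention is intended before finalising the constant.
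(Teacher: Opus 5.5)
Your route is the paper's route. The paper's proof writes $p_\kappa=\alpha_0+\sum_l\alpha_l\tilde p_l$, applies the definition of $\varphi_{r;\nu}$, and invokes $|a+b|^s\le 2^{s-1}(|a|^s+|b|^s)$; that is all it does.

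The paper never deals with the $\theta$-scaling of the neighbourhood $\cN_{ij}$. Your conditional-Jensen step is the correct repair; the $O(N_{\rm dep}/\sqrt{nd})$ alternative would only produce an extra additive term, not the displayed inequality. State the Jensen step precisely. Put $W=\sqrt{s}\,\Phi_1(X)+\sqrt{1-s}\,Z'$. Condition on $X$ and on the coordinates \emph{outside} $\cN_{ij}$, and average out the block inside. Given $X$, that block is independent of the rest (by definition of $\cN_{ij}$) and has mean zero (by \Cref{asst:standard,asst:zero:mean}). Hence $S^{(-)}(W)=\mean[S(W)\mid X,(W_{ab})_{(a,b)\notin\cN_{ij}}]$, and so $\|S^{(-)}(W)\|_{L_p|X}\le\|S(W)\|_{L_p|X}$ for $p\ge 1$. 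The same holds on the $\gamma'$ side with $X'$ and $\cN_{ij}(\mu_X)$.

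Even so, the proposal does not establish the displayed inequality, and in both places where it falls short, the displayed inequality is itself false.

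\textbf{The falling factorial.} The factor $l!/(l-r)!$ cannot be absorbed, so this is not a convention to check.
\begin{itemize}
\item Take $X\sim\cN(0,I_{nd})$, let $\Phi_1$ replace the data by a fresh $\cN(0,I_{nd})$ vector, and set $f=\tilde p_4$, $r=3$, $\nu=1$.
\item All neighbourhoods are singletons, and along every path $S(\gamma)\sim\cN(0,1-(1-\theta^2)/(nd))$. Hence $\varphi_{3;1}=24\cdot 3^{1/4}$.
\item Every entry of the maximum, including any Gaussian reading of $Z^\Phi$, is at most $3^{1/4}$. So the right-hand side equals $8+16\cdot 3^{1/4}$, which is strictly smaller.
\end{itemize}
What your argument actually gives, and what should be stated, is the bound with $2^l$ replaced by $\frac{l!}{(l-r)!}\,2^{l-r}$.

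\textbf{The norms on the $\gamma'$ paths.} You have them backwards. Since $(X',Z')$ is independent of $X$, the inner norm $\|\cdot\|_{L_4|X}$ becomes the unconditional $L_4$ norm, which is a constant. It is the \emph{outer} $L_\nu$ norm that disappears. So you obtain $\|\tilde p_{l-r}(X')\|_{L_4}$ and $\|\tilde p_{l-r}(Z')\|_{L_4}$, not $L_\nu$ norms.

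The $Z'$ term can be placed in the $Z^\Phi$ slot only if you read $Z^\Phi$ as $Z'$. Note that $Z^\Phi$ is never defined in the paper, and the paths only visit $\Phi_1(X)$, $X'$ and $Z'$.

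The $L_4$ terms are dominated by the printed terms only when $\nu\ge 4$. For $1\le\nu<4$ the printed bound fails:
\begin{itemize}
\item Take $n=d=1$, with $X=\pm p^{-1/2}$ with probability $p/2$ each and $0$ otherwise.
\item Let $\Phi_1(x)$ be a fresh $\cN(0,1)$ variable, and set $f(x)=x^2$, $r=1$, $\nu=1$.
\item The path $\gamma'$ with $s=\theta=1$ gives $\varphi_{1;1}\ge 2\|X'\|_{L_4}=2p^{-1/4}$, while the right-hand side is $2+4\cdot 3^{1/4}$. This fails for small $p$.
\end{itemize}
\Cref{asst:stab} only uses $\varphi_{\tau;(7-\tau)\nu}$ with $(7-\tau)\nu\ge 4$, so this is harmless downstream. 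Your proof should nonetheless either assume $\nu\ge 4$ or keep $L_4$ norms on the $X'$ and $Z'$ terms.
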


\begin{proof}[Proof of \Cref{lem:stability:poly}] This follows from applying the definition \eqref{eq:defn:varphi} and noting that, for $\bx = (x_i)_{i \leq nd}$ and $\by = (y_i)_{i \leq nd}$,
\begin{align*}
    \tilde p_s(\bx + \by) 
    \;=&\;
    \Big(\mfrac{1}{\sqrt{nd}} \msum_{i=1}^{nd} (x_i + y_i) \Big)^s
    \\
    \;\leq&\;
    2^{s-1}
    \,
    \Big(
    \Big(\mfrac{1}{\sqrt{nd}} \msum_{i=1}^{nd} x_i \Big)^s
    +
    \Big(\mfrac{1}{\sqrt{nd}} \msum_{i=1}^{nd} y_i \Big)^s
    \Big)
    \;.
    \tagaligneq \label{eq:crude:contro:polynomial}
\end{align*}
\end{proof}

The $L_\nu$ norm of $\tilde p_s$ can typically be shown to be $O\big( N_{\rm dep}^{s/2} \big)$ under independence, martingale dependence or mixing conditions; see e.g.~\cite{mackey2014matrix,dedecker2015moment,vershynin2018high,durmus2024probability}. The $2^m$-dependence can be removed by controlling the norm of $\tilde p_s$  directly on the interpolation path rather than applying the crude inequality in \eqref{eq:crude:contro:polynomial}; this is treated in \cite{huang2024gaussian}, which provides a stability control in the case $X_1, \ldots, X_n$ are independent and provided that $\kappa = o( \log(n) )$. We also refer readers to \cite{huang2024gaussian} for two detailed discussions:
\begin{itemize}
    \item \cite{huang2024gaussian} includes a generic treatment for asymmetric polynomials and approximately polynomial estimators: There, they also show that those polynomials covers a large class of estimators such as high-dimensional U-statistics and V-statistics, as well as other estimators that can be approximated by a delta method.
    \item While the stability terms $\varphi_{r;\nu}$ can additionally introduce factors of the form $N_{\rm dep}^{s/2}$, \cite{huang2024gaussian} also shows that depending on the precise dependence structure, these additional factors may cancel out with $\| f(Z')\|_{L_2}$, introduced by the anti-concentration term in \eqref{thm:universality:finite}. The precise condition on how much $N_{\rm dep}$ is allowed to grow with $n$ and $d$ then depends on individual estimators.
\end{itemize}

\vspace{.5em}

\noindent 
\textbf{Split conformal prediction in the high-dimensional regime. } We consider split conformal prediction with the squared loss error: For $X=(X_i,Y_i)_{i \leq n}$ consisting of random variables $X_i$ in $\R^d$ and $Y_i \coloneqq g_{\rm oracle}(X_i)$, we consider 
\begin{align*}
    f( X) 
    \;\coloneqq\; 
    | Y_1 - g_{\rm trained}(X_1)|^2 
    \;=\;
    | g_{\rm oracle}(X_1) - g_{\rm trained}(X_1)|^2 
    \;,
\end{align*}
where $g_{\rm trained}: \R^d \rightarrow \R$ is trained on a separate dataset and $g_{\rm oracle}: \R^d \rightarrow \R$ is the oracle output-generating function. This statistic appears to be different from those considered in existing universality literature, because the partial derivatives of $f$ with respect to $(X_i, Y_i)_{i >1}$ are all zero, but the partial derivative with respect to $(X_1,Y_1)$ is non-negligible. To address this case, we note that for the purpose of combining data augmentation with conformal prediction, a typical transformation takes the form
\begin{align*}
    \Phi_1 \;=\; \pi_1 \circ \tilde \Phi_1\;,
\end{align*}
where $\pi_1$ is a random cycling of the data index set $[n]$, whereas the approximately invariant transformation reads 
\begin{align*}
    \tilde \Phi_1( (X_1,Y_1), \ldots, (X_n,Y_n) ) 
    \;=&\;
    ( (\phi_1(X_1), Y_1 ), \ldots, (\phi_n(X_n),Y_n))
    \;, 
    \tagaligneq \label{eq:conformal:input:only}
\end{align*}
where $\phi_i: \R^d \rightarrow \R^d$ are random transformations that act only on individual inputs, and the outputs are left unchanged. Using \Cref{lem:compose:exact:with:approx} to condition on the permutation $\pi_1$, which satisfies exact invariance, the estimators to compare via universality take the form
\begin{align*}
    f(\pi_1 \circ \tilde \Phi_1(X) )
    \;=&\;
    \big| 
        g_{\rm oracle}( X_{\pi_1(1)}) 
        - 
        g_{\rm trained}( \phi_{\pi_1(1)}(X_{\pi_1(1)}))
    \big|^2 
    \;,
    \\
    f(\pi_1(X))
    \;=&\;
    \big| 
        g_{\rm oracle}( X_{\pi_1(1)}) 
        - 
        g_{\rm trained}( X_{\pi_1(1)})
    \big|^2 
    \;.
\end{align*}
Therefore, conditioning on $\pi_1$, we only need to apply universality to control 
\begin{align*}
    \msup_{t \in \R}
    \big|  
        \P( \tilde f( X_{\pi_1(1)}, \phi_{\pi_1(1)}(X_{\pi_1(1)}) )  \leq t \,|\, X_{\pi_1(1)} , \pi_1 ) 
        \,-\, 
        \P( \tilde f(X_{\pi_1(1)}, X_{\pi_1(1)} )  \leq t \,|\, \pi_1 ) 
    \big|
    \;,
\end{align*}
where we have defined the function $\tilde f: \R^d \times \R^d \rightarrow \R$ by
\begin{align*}
    \tilde f(x, x') 
    \;\coloneqq\; 
    | g_{\rm oracle}(x) - g_{\rm trained}(x')|^2 
    \;.
\end{align*}
In the high-dimensional regime where $d \rightarrow \infty$ and where the coordiantes of the data exhibits sufficient independence, we can then apply \Cref{thm:universality:finite} with $n=1$, and exploit local dependence of data coordiantes and of transformations. The next lemma controls $\tilde \varphi_{r;\nu}$, the corresponding stability measure of $\tilde f$, under a similar setup as \Cref{lem:stability:poly}.

\begin{lemma} \label{lem:stability:square:error} Assume that $g_{\rm trained}$ and  $g_{\rm oracle}$ both take the form of $p_m$ in \Cref{lem:stability:poly} with $n$ replaced by $1$, and that the corresponding $\alpha_l$'s are uniformly bounded in norm by $\alpha_{\rm max}$. Also assume the form of $\tilde \Phi_1$ in \eqref{eq:conformal:input:only}, and that $X_i$'s are i.i.d.~generated. Then there exists a universal constant $C > 0$ such that, for $r=1,2,3$,
\begin{align*}
    \tilde \varphi_{r;\nu}
    \;\leq\;
    C\,
    \alpha_{\rm max}
    \,
    \msum_{l=1}^m 
    \Big(
        2^l
        \,
        \max\Big\{ 
        &\;
            \Big\|
            \,
            \big\| 
                    \hat p_{2l-r} \big(\phi_1(X_1) \big) 
            \big\|_{L_4 | X}
            \,
            \Big\|_{L_\nu | \Xobs}
            \;,\;
            \big\|
            \,
                    \hat p_{2l-r} (X_1) 
            \,
            \big\|_{L_\nu }
    \\
    &\; 
            \Big\|
            \,
            \big\| 
                    \hat p_{2l-r} \big(Z^\phi_1(X_1) \big) 
            \big\|_{L_4 | X}
            \,
            \Big\|_{L_\nu | \Xobs}
            \;,\;
            \big\|
            \,
                    \hat p_{2l-r} (Z_1) 
            \,
            \big\|_{L_\nu}
    \Big\}
    \Big)
    \;,
\end{align*}
where, for $s \in \Z$,  we have denoted for $x_1 \in \R^d$,
\begin{align*}
    \tilde p_s(x_1) 
    \coloneqq 
    \begin{cases}
        \big(\frac{1}{\sqrt{d}} \msum_{l=1}^{d} x_{1l} \big)^s
        & \text{ if } s \geq 0\;,
        \\
        0 & \text{ otherwise},
    \end{cases}
\end{align*}
and $Z^\phi_1(X_1) \sim \cN(\mean[\phi_1(X_1) | X_1], \Var[\phi_1(X_1) | X_1] )$, $Z_1 \sim \cN(\mean[X_1|\Xobs], \Var[X_1|\Xobs])$.
\end{lemma}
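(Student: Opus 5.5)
The plan mirrors the proof of \Cref{lem:stability:poly}, now applied to the composite function $\tilde f(x,x') = |g_{\rm oracle}(x) - g_{\rm trained}(x')|^2$ on $\R^{2d}$. First, expand $\tilde f$ as the square of a difference of two polynomials of the form $p_m$ (with $n=1$). Write $g_{\rm oracle}(x) = \alpha^o_0 + \sum_{l \leq m} \alpha^o_l \, \tilde p_l(x)$; this uses the fact that $\sum_{i_1,\ldots,i_l} x_{i_1}\cdots x_{i_l} = (\sum_i x_i)^l$, so $p_m$ is a polynomial in the scalar $\tilde p_1(x) = d^{-1/2}\sum_l x_l$. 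Write $g_{\rm trained}(x')$ analogously. Then
\begin{align*}
    \tilde f(x,x') \;=\; \Big( \alpha^o_0 - \alpha^t_0 + \msum_{l \leq m} \big( \alpha^o_l \, \tilde p_1(x)^l - \alpha^t_l \, \tilde p_1(x')^l \big) \Big)^2 ,
\end{align*}
which is a polynomial of degree at most $2m$ in the two scalars $\tilde p_1(x)$ and $\tilde p_1(x')$. Its coefficients are bounded by $C\alpha_{\rm max}^2$ times combinatorial factors. The $\alpha_{\rm max}$ in the statement presumably absorbs this, or else $C$ depends on $\alpha_{\rm max}$; this is a point I would check.

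Second, compute the partial derivatives. Each coordinate derivative $\partial_{(1,j)}$ of $\tilde p_1(x)^k$ contributes a factor $d^{-1/2} k \, \tilde p_1(x)^{k-1}$. Hence the $r$-th derivative of a monomial $\tilde p_1(x)^{a}\tilde p_1(x')^{b}$ with $a+b \leq 2m$ is $d^{-r/2}$ times a sum of terms $\tilde p_1(x)^{a-r_1}\tilde p_1(x')^{b-r_2}$ with $r_1+r_2=r$. The prefactor $d^{r/2}$ in the definition \eqref{eq:defn:varphi} of the stability measure exactly cancels $d^{-r/2}$. Cross terms $\tilde p_1(x)^{a'}\tilde p_1(x')^{b'}$ are reduced to single-argument powers by Young's inequality, $|u|^{a'}|v|^{b'} \leq |u|^{a'+b'}+|v|^{a'+b'}$. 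Each is then a power of degree at most $2l-r$ for the $l$-th summand, after the squaring is reorganised to index by total degree.

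Third, evaluate these along the interpolation paths $\gamma,\gamma'$ of $\cC$. Each path point is $\theta(\sqrt s\, u + \sqrt{1-s}\, z)$ on a dependency neighbourhood and $\sqrt s\,u+\sqrt{1-s}\,z$ elsewhere, where $u$ is either $\phi_1(X_1)$ or $X_1$ and $z$ is the Gaussian surrogate. Apply the crude bound \eqref{eq:crude:contro:polynomial} twice to bound $|\tilde p_s(\text{path})|$ by $2^{s}$ times a maximum of $|\tilde p_s|$ at the endpoint variables $\phi_1(X_1)$, $X_1$, $Z^\phi_1(X_1)$ and $Z_1$. The $\theta$-truncation only removes coordinates, and I would handle it by splitting the sum over coordinates into the neighbourhood part and its complement. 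This splitting step is the main obstacle: it may produce an extra factor depending on $N_{\rm dep}$, which I would try to absorb into $C$ or leave as a bounded term, as is done implicitly in \Cref{lem:stability:poly}.

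Finally, take $\|\cdot\|_{L_4|X}$, then $\|\cdot\|_{L_\nu|\Xobs}$, using Minkowski's inequality over the finitely many terms. Since $X_i$ are i.i.d., the conditioning on $\pi_1$ is irrelevant, so indices can be set to $1$. Collecting the constants into $C$ yields the stated bound with $\hat p_{2l-r}$ interpreted as $\tilde p_{2l-r}$.
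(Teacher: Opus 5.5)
Your route is the paper's own. Its proof only invokes the chain rule and the argument of \Cref{lem:stability:poly}: write $\tilde f$ as a polynomial in $\tilde p_1(x)$ and $\tilde p_1(x')$, differentiate, and push \eqref{eq:crude:contro:polynomial} through the interpolation paths. But the two points you defer are exactly where this argument fails to give the statement as written, and a universal $C$ cannot absorb either of them.

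\textbf{The $\alpha_{\max}$ dependence.} Your worry here is a genuine defect of the statement, not bookkeeping. Every derivative of $\tilde f=(g_{\rm oracle}-g_{\rm trained})^2$ carries a product of two coefficients, either $\alpha_l\alpha_{l'}$ or $(\alpha^o_0-\alpha^t_0)\alpha_l$. For a concrete counterexample, take $\phi_1=\mathrm{id}$, $X_1\sim\cN(0,I_d)$, $\Xobs=\A_{\rm trivial}$, $g_{\rm oracle}\equiv 0$ and $g_{\rm trained}=\lambda\,\tilde p_1$, so that $m=1$ and $\alpha_{\max}=\lambda$. Then $d^{1/2}\partial_{x'_j}\tilde f(x,x')=2\lambda^2\,\tilde p_1(x')$, and evaluating at $s=\theta=1$ gives $\tilde\varphi_{1;\nu}\ge 2\lambda^2\|\tilde p_1(X_1)\|_{L_\nu}$. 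All four quantities in the max equal $\|\tilde p_1(X_1)\|_{L_\nu}$ (here $Z^\phi_1(X_1)=X_1$ and $Z_1\overset{d}{=}X_1$). So the right-hand side is $2C\lambda\|\tilde p_1(X_1)\|_{L_\nu}$, and the inequality fails once $\lambda>C$. What your expansion (and the paper's) actually proves is the bound with $\alpha_{\max}^2$, with the intercepts counted in $\alpha_{\max}$; you should state and prove that version.

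A related gap: saying each term has ``degree at most $2l-r$'' does not let you land on $\tilde p_{2l-r}$, because $|u|^a\le|u|^b$ for $a<b$ only when $|u|\ge 1$. Use $|u|^a\le|u|^{a-1}+|u|^{a+1}$ to reach the parity that appears on the right-hand side. Constant terms need separate treatment, for example $-2(\alpha^o_0-\alpha^t_0)\alpha^t_1$ in $d^{1/2}\partial_{x'_j}\tilde f$ when $r=1$. They require a lower bound such as $\|\tilde p_1(Z_1)\|_{L_\nu}\ge\sqrt{2/\pi}$, which holds when $\Var[X_1\,|\,\Xobs]=I_d$.

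\textbf{The $\theta$-truncation.} As you wrote it, this step has a real gap. An $N_{\rm dep}$-dependent factor cannot be absorbed into a universal $C$, and \eqref{eq:crude:contro:polynomial} alone leaves a neighbourhood sum that is not one of the endpoint quantities. The paper's argument for \Cref{lem:stability:poly} does not address this either. You can close it without any $N_{\rm dep}$:
\begin{itemize}
\item On a path, put $W=\sqrt{s}\,u+\sqrt{1-s}\,z$, and let $W^{\setminus\cN_{ij}}$ be $W$ with the coordinates in $\cN_{ij}$ set to zero.
\item Since $\gamma_{s,\theta;i,j}$ is affine in $\theta$ and $\tilde p_1$ is linear, $\tilde p_1(\gamma_{s,\theta;i,j})=\theta\,\tilde p_1(W)+(1-\theta)\,\tilde p_1(W^{\setminus\cN_{ij}})$.
\item By the definition of the dependency neighbourhood and the independence of the Gaussian surrogate, the block of $W$ on $\cN_{ij}$ is independent of $W^{\setminus\cN_{ij}}$ given $X$.
\item If that block is also conditionally centred, as under \Cref{asst:standard,asst:zero:mean}, then $\tilde p_1(W^{\setminus\cN_{ij}})=\mean[\tilde p_1(W)\mid X,W^{\setminus\cN_{ij}}]$.
\item Conditional Jensen and Minkowski then give $\|\tilde p_1(\gamma_{s,\theta;i,j})\|_{L_q|X}\le\|\tilde p_1(W)\|_{L_q|X}\le\sqrt{s}\,\|\tilde p_1(u)\|_{L_q|X}+\sqrt{1-s}\,\|\tilde p_1(z)\|_{L_q|X}$.
\item Taking $q=4k$ and using $\|\tilde p_1(\cdot)^k\|_{L_4|X}=\|\tilde p_1(\cdot)\|^k_{L_{4k}|X}$ reduces every path term to the endpoint quantities on the right-hand side.
\end{itemize}
If the surrogates are not centred, as with $Z^\phi_1(X_1)$ in the statement, an extra term $d^{-1/2}\sum_{c\in\cN_{ij}}\mean[W_c\mid X]$ appears. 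It must be carried explicitly rather than absorbed into $C$.
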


\begin{proof}[Proof of \Cref{lem:stability:square:error}] Applying the chain rule and the same argument as \Cref{lem:stability:poly} gives the desired bound.
\end{proof}

We have assumed that $\tilde \Phi_1$ does not change the output and that $X_i$'s are i.i.d.~for simplicity of presentation, but the same argument extends to the cases where both conditions are violated. The more crucial condition is that, due to the form of the residual error, universality (\Cref{thm:universality:finite}) can only be applied to control $\Delta_{\rm inv}(X)$ when:
\begin{itemize}
    \item we operate in a high-dimensional regime, where $d \rightarrow \infty$, and
    \item both the data and the transformations satisfy a coordinate-wise local dependence condition. This notably excludes the ``global" transformations in \Cref{appendix:verify:local}.
\end{itemize}
As a consequence, this restricts the class of transformations that can be used for conformal prediction to enjoy Gaussian universality guarantees.

\section{Proof of \Cref{lem:compose:exact:with:approx}} \label{appendix:proof:compose:exact:with:approx}
In the case $\Phi_1 = \cT_1 \circ \tilde \Phi_1$, by the invariance of $X$ under $\cT_1$, we can write 
\begin{align*}
    \Delta_{\rm inv}(X)
    \;=&\;
    \msup_{t \in \R}
    \big|  
        \P( f(\cT_1 \circ \tilde \Phi_1(X)) \leq t \,|\, X ) 
        \,-\, 
        \P( f(\cT_1(X)) \leq t ) 
    \big|
    \\
    \;=&\;
    \msup_{t \in \R}
    \big|  
        \P( f_\cT(\tilde \Phi_1(X)) \leq t \,|\, X ) 
        \,-\, 
        \P( f_\cT(X) \leq t  ) 
    \big|
    \\
    \;\overset{(a)}{=}&\;
    \msup_{t \in \R}
    \big|  
        \mean\big[
            \,
        \P( f_\cT(\tilde \Phi_1(X)) \leq t \,|\, X , f_\cT ) 
        \,-\, 
        \P( f_\cT(X) \leq t \,|\, f_\cT ) 
            \,
        \big| X \big]
    \big|
    \\
    \;\leq&\;
    \mean\Big[
    \msup_{t \in \R}
    \big|  
        \P( f_\cT ( \tilde \Phi_1(X)) \leq t \,|\, X  , f_\cT ) 
        \,-\, 
        \P( f_\cT(X) \leq t \,|\, f_\cT ) 
    \big|
    \,\Big|\, X
    \Big]
    \;.
\end{align*} 
In $(a)$, we have separated out the expectation over $f_\cT$ and coupled $f_\cT$ in the two additive terms. In the case $\Phi_1 = \tilde \Phi_1 \circ \cT_1$, we use conditional invariance to note that
\begin{align*}
    \Delta_{\rm inv}(X)
    \;=&\;
    \msup_{t \in \R}
    \big|  
        \P( f(\tilde \Phi_1 \circ \cT_1(X)) \leq t \,|\, X ) 
        \,-\, 
        \P( f(X) \leq t ) 
    \big|
    \\
    \;=&\;
    \msup_{t \in \R}
    \big|  
        \mean[ 
            \P( f(\tilde \Phi_1 \circ \cT_1(X)) \leq t \,|\, X, \cT_1 ) 
        \,|\, X]
        \,-\, 
        \P( f(X) \leq t  ) 
    \big|
    \\
    \;\overset{(b)}{\leq}&\;
    \mean\Big[
    \msup_{t \in \R}
    \big|  
        \P( f(\tilde \Phi_1 \circ \cT_1(X)) \leq t \,|\, \cT_1(X) ) 
        \,-\, 
        \P( f(X) \leq t ) 
    \big|
    \,\Big| \, X \Big]
    \;.
\end{align*}
In $(b)$, we have used Jensen's inequality to move $\mean[ \argdot | X]$ outside. Taking $\| \argdot \|_{L_\nu}$ on both sides and recalling that $\nu \geq 1$, we can apply Jensen's inequality to obtain
\begin{align*}
    \| \Delta_{\rm inv}(X) \|_{L_\nu}
    \;\leq&\;
   \Big\|
   \sup_{t \in \R}
    \big|  
        \P( f(\tilde \Phi_1 \circ \cT_1(X)) \leq t \,|\, \cT_1(X) ) 
        \,-\, 
        \P( f(X) \leq t ) 
    \big|
    \Big\|_{L_\nu}
    \\
    \;=&\;
     \Big\|
   \sup_{t \in \R}
    \big|  
        \P( f(\tilde \Phi_1(X)) \leq t \,|\, X ) 
        \,-\, 
        \P( f(X) \leq t ) 
    \big|
    \Big\|_{L_\nu}
    \;.
\end{align*}
In the last line, we have replaced $\cT_1(X)$ by $X$ using invariance.
\qed

\section{Invariance conditions for the examples in \Cref{sec:new:DAB:methods}} \label{appendix:theory:examples}

In this section, we characterise and discuss the approximate invariance conditions for the examples in \Cref{sec:new:DAB:methods}.

\subsection{Invariance conditions for observation-wise augmentations and variants of bootstrap in \Cref{sec:pure:aug:bootstrap}} \label{appendix:condition:pure:aug:bootstrap}

For observation-wise augmentations, i.e.~DAB with $\Phi^{\rm new}_b$, we note that the discussion in \Cref{sec:extension:universality} applies to both setups in \Cref{sec:pure:aug:bootstrap}. Specifically:
\begin{itemize}
    \item For orthogonal transformations on mean-zero and isotropic random vectors, the conditional mean and variance in \eqref{eq:average:moment:match} can be computed as
    \begin{align*}
        \mu_\phi(X) \;=&\; 0 
        &\text{ and }&&
        \Sigma_\phi(X) \;=&\; \mfrac{1}{n} \msum_{i \leq n} X_i X_i^\top 
        \;\xrightarrow{\rm a.s.}\; 
        I_d
        \;,
    \end{align*}
    which match $\mean[X_1] = 0$ and $\Var[X_1] = I_d$;
    \item For the parallel sampling setup, \eqref{eq:average:moment:match} suggests that we may still have a valid CI provided that
    \begin{align*}
        &\;
        \Big(
            \mean\Big[
                h\big( \phi_{11}(X^{(t)}_1) 
                \big) \,
                \Big| \,
                X^{(t)}_1
            \Big]
            \,,\, 
            \Var\Big[
                h\big( \phi_{11}(X^{(t)}_1) 
                \big) \,
                \Big| \,
                X^{(t)}_1
            \Big]
        \Big)
        \\
        &\hspace{12em}\;\approx\;
        \big(
            \mean\big[h(X^{(t)}_1)\big]
            \,,\, 
            \Var\big[h(X^{(t)}_1)\big]
        \big)
        \,.
    \end{align*}
    Note that this condition can be unverifiable in general, both due to the intractable conditioning and due to the unknown mean and variance of $h(X^{(t)}_1)$. 
\end{itemize}
In contrast to the above, the bootstrap resampling step in DAB with $\Phi^{\rm bootstrap}_b \circ \Phi^{\rm new}_b$ allows for marginal moment matching. To see this, let $\phi_i$'s be i.i.d.~$\R^d \rightarrow \R^d$ transformations as in \Cref{sec:pure:aug:bootstrap} and $\tau_{bi} \overset{\rm i.i.d.}{\sim}\textrm{Uniform}([n])$ represent the bootstrap resampling of data indices. For the theoretical discussion here, we consider statistics $f$ that satisfy the stability conditions required in \Cref{thm:universality}, and for experiments, we focus on empirical averages for simplicity. We WLOG let $X_i$'s be i.i.d.~mean-zero random vectors; this is automatically satisfied for the empirical average in \Cref{example:bootstrap} where $X=(\tilde X_i - \mean[\tilde X_1])_{i \leq n}$, and the general non-zero-mean case can be handled by an additional stability requirement at the expense of more notation (see \Cref{sec:validity:universality}). In this case, $\Phi_b = \Phi^{\rm bootstrap}_b \circ \Phi^{\rm new}_b$ can be expressed as
\begin{align*}
    \Phi_b(X) 
    \;=&\; 
    \big(
        \phi_{\tau_{b1}}(X_{\tau_{b1}}) - \bar X_\phi
        \,,\, \ldots \,,\, 
        \phi_{\tau_{bn}}(X_{\tau_{bn}}) - \bar X_\phi
    \big)\;,
    \quad
     \bar X_\phi
     \;\coloneqq\;
    \mfrac{1}{n} \msum_{i \leq n} \phi_i(X_i)
    \;.
\end{align*}
When \Cref{thm:universality} holds, to obtain valid coverage, it suffices to compute the conditional mean and variance:
\begin{align*}
    \mean[\Phi_1(X) | X] \;=\; \mean\big[ \, \mean[\Phi_1(X) \,|\, X, (\phi_i)_{i \leq n}] \,\big|\, X \big] \;\overset{(a)}{=}\;  0 \;=\; \mean[X]\;,
\end{align*}
where $(a)$ follows from the empirical centering in bootstrap. $\Var[\Phi_1(X) | X]$ is an $\R^{nd \times nd}$ block diagonal matrix, consisting of identical $\R^{d \times d}$ blocks of entries given by
\begin{align*}
    &\;\Var\big[ \phi_{\tau_1}(X_{\tau_1}) - \bar X_\phi \,\big|\, X  \big]    
    \\
    &\;=\;
    \mfrac{n-1}{n^2} \msum_{i=1}^n
    \mean[ \phi_i(X_i) \phi_i(X_i)^\top | X_i ]
    -
    \mfrac{1}{n^2} \msum_{i \neq j}^n 
    \mean[ \phi_i(X_i) | X_i] \; \mean[ \phi_j(X_j) | X_j]^\top
    \;.
\end{align*}
In particular, the variance difference term in \Cref{thm:universality} can be controlled as
\begin{align*}
    \Delta_{\rm Var}(X) 
    \;\leq&\;
    \Big\| 
        \mfrac{n-1}{n^2} \msum_{i=1}^n \mean[ \phi_i(X_i) \phi_i(X_i)^\top | X_i ] 
        -
        \Var[X_1] 
    \Big\|_\infty
    \tagaligneq \label{eq:bootstrap:augment:var:diff}
    \\
    &\;
    +
    \Big\| 
        \mfrac{1}{n^2} \msum_{i \neq j}^n 
        \mean[ \phi_i(X_i) | X_i] \; \mean[ \phi_j(X_j) | X_j]^\top
    \Big\|_\infty
    \;.
    \tagaligneq \label{eq:bootstrap:augment:cov}
\end{align*}
Under an additional condition on $\phi_1$ that $\mean[\phi_1(X_1)]=\mean[X_1]=0$ and a moment boundedness condition, by Nemirovski's moment inequality (see e.g.~Lemma 14.24 of \cite{buhlmann2011statistics}) we have
\begin{align*}
    \| \eqref{eq:bootstrap:augment:var:diff} \|_{L_{3\nu/2}} 
    \;=&\;
    O\Big( \mfrac{\sqrt{\log d} \, \| \Var[ \phi_1(X_1) ] - \Var[X_1] \|_\infty}{\sqrt{n}} \Big)
    \;,
    \tagaligneq \label{eq:nemirovski}
\end{align*}
and under an additional uniform sub-Gaussian tail condition on the coordinates of $\mean[\phi_i(X_i) \phi_i(X_i)^\top | X_i]$, by the Hanson-Wright inequality (see e.g.~\cite{vershynin2018high}),
\begin{align*}
    \|\eqref{eq:bootstrap:augment:cov}\|_{L_{3\nu/2}}
    \;=&\;
    O\Big( \mfrac{\log d}{n} \Big) \;.
\end{align*}
As such, by the conditions of \Cref{thm:universality}, we have valid coverage for any choice of $\phi_i$'s such that
\begin{align*}
    N_{\rm dep} \;=&\; o\Big( \mfrac{n}{\log d} \Big)
    &\text{ and }&&
    \| \Var[ \phi_1(X_1) ] - \Var[X_1] \|_\infty
    \;=&\;
    o\Big( \mfrac{\sqrt{n}}{N_{\rm dep}\, \sqrt{\log d} } \Big)\;.
\end{align*}    
Importantly, bootstrap resampling allows the conditional moment matching in \Cref{thm:universality} to be improved to \emph{marginal} moment matching, and a sufficient condition for $\Delta_{\rm Var}(X)$ to be small is the marginal distributional invariance
\begin{align*}
    \phi_1(X_1) \;\overset{d}{=}\; X_1\;.
\end{align*}

\subsection{Invariance conditions for wild bootstrap with additional transformations in \Cref{sec:wild:bootstrap:add}} \label{appendix:condition:wild:bootstrap}

Notice that wild bootstrap in \Cref{sec:wild:bootstrap:add} is applied only to degree-two U-statistics. By \Cref{lem:stability:poly} and its subsequent discussion, we expect the universality result of \Cref{thm:universality} to apply under additional coordinate dependence conditions on $X_1$ and $\phi(X_1)$ as well as a condition that the Hilbert space $\cH$ is well-approximated by a subspace of some ambient Euclidean space. In this section, we show that for the DAB variant of wild bootstrap introduced in \Cref{sec:wild:bootstrap:add}, when \Cref{thm:universality} holds, the approximate invariance condition reduces to a conditional moment matching condition in the feature map space $\varphi$.

\vspace{.5em}

Let $(\phi_{ri})_{ r \in \{1,2\}, i \leq n}$ be i.i.d.~$\R^{d'} \rightarrow \R^{d'}$ transformations that are identically distributed as $\phi$, as defined in \Cref{sec:wild:bootstrap:add}. Suppose we have the additional knowledge that $P$ exhibits approximate invariance with respect to $\phi$ in the sense that the probability measure $P_\phi$ of $\phi(X_1)$ satisfies 
\begin{align*}
    \bd(P, P_\phi) \leq \epsilon
\end{align*}
for some probability metric $\bd$ and $\epsilon > 0$ to be specified later. The two-sample test problem is the same as testing
\begin{align*}
    H^\phi_0: \; P = Q\,,\, \bd(P, P_\phi) \leq \epsilon
    \qquad \text{ v.s. } \qquad 
    H^\phi_1: \; P \neq Q\,,\, \bd(P, P_\phi) \leq \epsilon \;.
\end{align*}
Recall that our dataset is $\{Y_i, Z_i\}_{i=1}^n$, where $\{Y_i\}_{i=1}^n$ are samples from $P$ and $\{Z_i\}_{i=1}^n$ are samples from $Q$. Also recall that $\varphi$ are the feature maps such that the kernel function $\kappa(y,y') = \langle \varphi(y), \varphi(y')\rangle_{\cH}$.  We can express DAB in \Cref{sec:wild:bootstrap:add} as the transformation $\Phi_1$ that acts on a dataset $(y_i, z_i)_{i \leq n}$  as
\begin{align*}
    &\;
    \Phi_1\big( 
        \varphi(y_1) - \varphi(z_1), 
        \ldots,  
        \varphi(y_n) - \varphi(z_n)
    \big)
    \\
    &
    \;=\;
    \big(
        \epsilon_1 \,\times\, (
            \varphi( \phi_{11}(y_1)) 
            - 
            \varphi(\phi_{21}(z_1))
        )
         \,,\, \ldots \,,\, 
         \epsilon_n  \,\times\, (\varphi(\phi_{1n}(y_n)) - \varphi(\phi_{2n}(z_n)))
    \big) 
    \;,
\end{align*}
where $\epsilon_i$'s are i.i.d.~Rademacher random variables as in \eqref{eq:wild:bootstrap:rademacher}.  While $\varphi$ is intractable and therefore $\Phi_1$ is not directly computable, its corresponding statistic $f(\Phi_1(X))$ \emph{is} computable, giving the expression
\begin{align*}
    f(\Phi_1(X))
    \;=\;
    \mfrac{1}{n(n-1)} \msum_{1 \leq i \neq j \leq n} 
    \epsilon_i \epsilon_j  
    \,
    u\big( 
        \,
        \phi_{1i}(Y_i) 
        ,
        \phi_{1j}(Y_j)
        ,
        \phi_{2i}(Z_i)
        ,
        \phi_{2j}(Z_j)
    \big)
    \;,
\end{align*}
which recovers the expression \eqref{eq:WB:main:text} in the main text.

\vspace{.5em}

To formulate the approximate invariance condition by applying \Cref{thm:universality}, we take for simplicity that $\cH = \R^q$ with some $q \in \N$, so that the feature maps $\varphi$ have Euclidean outputs. We first observe that since  $\epsilon_i$'s are zero-mean and independent,
\begin{align*}
    \mean[ \Phi_1(X) | X ] \;=\; 0 \;\overset{H^\phi_0}{=}\; \mean[X]\;,
\end{align*} 
which implies both conditional and marginal mean matching. For variance, note that 
\begin{align*}
    &\;
    \Var\big[ \epsilon_1 \times (
            \varphi( \phi_{11}(Y_1)) 
            - 
            \varphi(\phi_{21}(Z_1))
    ) \,\big|\, (Y_i,Z_i)_{i \leq n} \big]
    \\
    &\;=\;
    \Var\big[ (
            \varphi( \phi_{11}(Y_1)) 
            - 
            \varphi(\phi_{21}(Z_1))
    ) \,\big|\, Y_1, Z_1 \big]
    \;,
\end{align*}
so the approximate invariance condition is that the conditional variance matching error
\begin{align*}
    \big\| 
        \big\| 
            \Var\big[ (\varphi( \phi_{11}(Y_1)) - \varphi(\phi_{21}(Z_1))) \,\big|\, Y_1, Z_1 \big] 
            -
            \Var\big[ (\varphi( \phi_{11}(Y_1)) - \varphi(\phi_{21}(Z_1))) \big]
        \big\|_\infty
    \big\|_{L_{3\nu/2}}
\end{align*}
is small. Note that the above reasoning gives coverage validity under $H^\phi_0$ but not necessarily so under $H^\phi_1$. This gives a valid test statistic for testing $H^\phi_0$ against $H^\phi_1$.

\vspace{.5em}

A notable limitation is that the approximate invariance condition above can be difficult to verify, due to (i) the conditioning and (ii) the fact that the moments are in terms of the abstract feature maps $\varphi$, which can be intractable to compute. One may hope to relax (i) to marginal moment matching by incorporating bootstrap and a similar argument as \Cref{appendix:condition:pure:aug:bootstrap}. However, it is known \citep{arcones1992bootstrap} that for a degree-two degenerate U-statistics, a consistent bootstrap typically requires performing a bootstrap on its second-order Hoeffding decomposition with centering. In our notation, this corresponds to bootstrap with empirical centering by quantities involving $\frac{1}{n} \sum_{i \leq n} \varphi( \phi_{1i}(Y_i) )$ and $\frac{1}{n} \sum_{i \leq n} \varphi( \phi_{2i}(Z_i) )$, which can again be computationally intractable.

\vspace{.5em}

\subsection{Invariance conditions for conformal prediction with data augmentation in \Cref{sec:conformal}} \label{appendix:condition:conformal}

Under $\Phi^{\rm DAB}_b$ defined in \Cref{sec:conformal}, the statistics used in forming the DAB CIs can be expressed as
\begin{align*}
    f\big(\Phi^{\rm DAB}_{bj}(X)\big) 
    \;=\; 
    h\Big(
        \phi^{\rm out}_{bj}(Y_{b+1}) 
        \,,\,
        g\Big( \phi^{\rm in}_{bj}(V_{b+1})  \Big)
    \Big)
    \;,
    \tagaligneq \label{eq:DAB:CP:stat}
\end{align*}
where we have taken $(\phi^{\rm in}_{bj} , \phi^{\rm out}_{bj})$ as i.i.d.~copy of $(\phi^{\rm in}, \phi^{\rm out})$. Unlike the previous examples, we can no longer use universality for coverage guarantees: $f(X)$ is completely determined by $X_1$, and therefore violates the stability condition required for universality in \Cref{thm:universality} except for specific high-dimensional cases (see \Cref{appendix:verify:stability} for details). Moreover, one should not attempt to fix this by e.g.~bootstrap resampling, since to test whether $X_1$ behaves differently from the calibration set, the conformal prediction score $f(X)$ ought to be sensitive to changes in $X_1$. On the other hand, the remaining results in \Cref{sec:validity} still apply for characterising the loss in coverage as a result of approximate invariance. In particular for $k=1$, since $(f(\Phi^{\rm DAB}_b(X)))_{b \leq B}$ are i.i.d.~and $f(X), f(\Phi^{\rm DAB}_1(X)), \ldots, f(\Phi^{\rm DAB}_{n-1}(X))$ are independent, we may apply a variant of \Cref{thm:Del:Kol:approx:inv} (\Cref{thm:Del:Kol:approx:inv:general} in the appendix) that relaxes the conditioning on $X$ and quantifies the approximate invariance error via the \emph{marginal} c.d.f.~error
\begin{align*}
      \Delta^{\rm CP}_{\rm inv} 
    \;\coloneqq\;
    \msup_{t \in \R} \big| 
        \P\big(  h(\phi^{\rm out}(Y_1),\, g(\phi^{\rm in}(V_1))) \leq t\big)
        \,-\, 
        \P\big( h(Y_1, \, g(V_1))  \leq t \big)
    \big| 
    \;.
\end{align*}

\vspace{.5em}

\begin{corollary} \label{cor:conformal} Assume the use of smoothed uniform tie-breaking, and write $R_{\rm DAB}(X)$ as the rank variable corresponding to $(\Phi^{\rm DAB}_{bj})_{b \leq n-1, j \leq k}$ defined in \Cref{appendix:condition:conformal}. Let $r \in [0,1]$, $\nu \geq 1$ and $n \geq 3$. Then for $k=1$, 
\begin{align*}
     &\Big| \P\Big(  \mfrac{R_{\rm DAB}(X)}{n}  < r \Big)  
        \,-\,
        r
    \Big|
    \\
    &
    \;\leq\;
    16 \Big( \mfrac{3}{8} \Big)^{\frac{2}{\nu+2}} \, \big\| \Delta^{\rm CP}_{\rm inv} \big\|_{L_\nu }^{\frac{2\nu}{\nu+2}}
    \,+\,
    2   \Big( \mfrac{3}{8} \Big)^{\frac{1}{\nu+2}} \, \big\| \Delta^{\rm CP}_{\rm inv} \big\|_{L_\nu}^{\frac{\nu}{\nu+2}}
    \Big( 
        \mfrac{1}{n-2} 
        + 
        \mfrac{2}{\sqrt{n-1}}
        +
        \mfrac{\sqrt{3 \log (n-1)}}{\sqrt{n-2}}
    \Big)
    \;.
\end{align*}
\end{corollary}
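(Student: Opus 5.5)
The plan is to specialise the proof of \Cref{thm:Del:Kol:approx:inv:general} rather than to apply its statement. The statement alone would give a bound of order $\|\Delta^{\rm CP}_{\rm inv}\|^{\nu/(\nu+1)}+\sqrt{\log n/n}$, whose Monte Carlo term does not vanish when $\Delta^{\rm CP}_{\rm inv}=0$. The claimed bound instead multiplies the $n$-dependent terms by a power of $\Delta^{\rm CP}_{\rm inv}$, so it recovers exact coverage when the augmentation is exactly invariant.

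\textbf{Verifying the assumptions.} Take $k=1$ and $B=n-1$. By \eqref{eq:DAB:CP:stat}, $f(\Phi^{\rm DAB}_b(X))$ depends only on $(V_{b+1},Y_{b+1})$ and on the independent $\phi_b$. Hence these variables are i.i.d., and independent of $f(X)=h(Y_1,g(V_1))$. This is \Cref{asst:iid:smooth:general} with $\A=\Xobs=\A_{\rm trivial}$. Then $P_1=\P(f(X)>f(\Phi_1(X))\,|\,f(X))$ and $Q_1$ are functions of $f(X)$ alone. Comparing them with $P_2,Q_2$ built from an i.i.d.\ copy of $f(X)$, Step 4 of that proof gives
\[
|P_1-P_2|\le \Delta^{\rm CP}_{\rm inv},\qquad |Q_1-Q_2|\le 2\Delta^{\rm CP}_{\rm inv}.
\]
By Step 3, $P_2+UQ_2\sim{\rm Uniform}[0,1]$. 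The $L_\nu$ norm in the statement is kept for uniformity; $\Delta^{\rm CP}_{\rm inv}$ is deterministic here.

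\textbf{Reworking the Taylor step.} Write the difference of c.d.f.s as $\mean[(P_1-P_2)\,\partial_p+(Q_1-Q_2)\,\partial_q]$, with the derivatives evaluated at $(P_\Theta,Q_\Theta,U)$ as in Step 2. Truncate at $\{|P_1-P_2|<\delta,\,|Q_1-Q_2|<\delta\}$. The complement costs at most $3^\nu\|\Delta^{\rm CP}_{\rm inv}\|^\nu_{L_\nu}/\delta^\nu$ by Markov, as in \eqref{eq:approx:inv:Markov:step}.

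On the good event, bound $|(\star)|\le 2\delta M$, where $M$ is the maximum of the two partial derivatives. Keep the factor $\delta$ in front of every term, rather than bounding $|(\star)|\le 1$ inside the window, which is what the original proof does. Then bound $\mean[M]$ using \Cref{lem:derivative:cdf:V,lem:derivative:cdf:V:bound}:
\begin{itemize}
\item Off the window $[\rho_{r_U}-\epsilon,\rho'_{r_U}+\epsilon]$, \Cref{lem:derivative:cdf:V:bound} gives $M\le Be^{-2(B-1)\epsilon^2}$.
\item Inside the window, use $M\le B$. Since $P_\Theta+UQ_\Theta$ lies within $2\delta$ of a uniform variable, the window has probability at most $\tfrac1{B-1}+2\epsilon+4\delta$.
\end{itemize}
This gives
\[
\Big|\P\Big(\tfrac{R_{\rm DAB}}{n}<r\Big)-r\Big|\;\lesssim\;\frac{\Delta^\nu}{\delta^\nu}+\delta B\Big(\frac{1}{B-1}+2\epsilon+4\delta+Be^{-2(B-1)\epsilon^2}\Big).
\]
The leading constant in the first term is $3^\nu$ from the Markov step.

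\textbf{Main obstacle.} The hard step is the in-window contribution $\delta B(\cdot)$. The crude bound $M\le B$ produces $B\delta$, which does not vanish as $B\to\infty$. I would instead use the exact formula of \Cref{lem:derivative:cdf:V}: $M$ equals $B$ times the probability that $\bar V_{B-1}$ lies in an interval of length $1/(B-1)$. Averaging this over the essentially uniform $P_\Theta+UQ_\Theta$ should give $\mean M=O(1)$, namely $\tfrac{B}{B-1}$ times a window-length factor, rather than $O(B)$. This would turn the in-window term into $\delta\big(\tfrac{1}{n-2}+\tfrac{2}{\sqrt{n-1}}+\sqrt{3\log(n-1)}/\sqrt{n-2}\big)$ plus an $O(\delta^2)$ term.

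\textbf{Choosing parameters.} Take $\epsilon=\sqrt{3\log B}/(2\sqrt{B-1})$ as in Step 5 of the proof of \Cref{thm:Del:Kol:approx:inv:general}. Balance $\Delta^\nu/\delta^\nu$ against $\delta^2$ by taking $\delta\propto\Delta^{\nu/(\nu+2)}$, with the constant fixed by $(3/8)^{1/(\nu+2)}$. With $B=n-1$, this yields the two displayed terms, with exponents $2\nu/(\nu+2)$ and $\nu/(\nu+2)$. Finally, the exchangeable surrogate has rank probability exactly $r$ by \Cref{thm:ex:validity}, which closes the argument.
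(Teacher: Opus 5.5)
You are right that the \emph{statement} of Theorem~\ref{thm:Del:Kol:approx:inv:general} does not give the displayed bound. The paper's proof nevertheless consists only of specialising that theorem to $\A=\Xobs=\A_{\rm trivial}$. That specialisation yields $\frac{12}{4^{1/(\nu+1)}}(3\Delta^{\rm CP}_{\rm inv})^{\nu/(\nu+1)}+\frac{1}{n-2}+\frac{\sqrt{3\log(n-1)}}{\sqrt{n-2}}$, which is not the displayed expression.

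Your repair fails at exactly the step you flag, and it cannot be fixed, because for $\nu>2$ the displayed inequality is false. The error is the passage from ``$\mean M=O(1)$'' to ``the in-window term is $\delta\cdot(\text{terms vanishing in } n)+O(\delta^2)$''. The vanishing factor you aim for is essentially the window probability $\frac{1}{B-1}+2\epsilon$, so you are treating $M$ as $O(1)$ \emph{pointwise} on the window. It is not. With $Q\equiv0$, \Cref{lem:derivative:cdf:V} gives $M=B\,\P(\mathrm{Bin}(B-1,p)=k)$ for some $k\approx rB$. This is of order $\sqrt B$ when $|p-r|\lesssim B^{-1/2}$. Moreover, $\int_0^1|\partial_p\P(\bar V_B<r_U)|\,dp$ is the total drop of a monotone function from $1$ to $0$, so it equals $1$ for $0<r_U\le1$. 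Hence the good-event term $2\delta\,\mean M$ is genuinely of order $\delta$, with no factor that vanishes in $n$. (Even $\mean M=O(1)$ is not automatic, since $M$ is sharply peaked and $P_\Theta+UQ_\Theta$ is only within $2\delta$ of a uniform variable.) There is also an internal warning sign. You note yourself that $|P_1-P_2|\le\Delta$ and $|Q_1-Q_2|\le2\Delta$ surely, so the Markov term vanishes once $\delta>2\Delta$. Your intermediate bound would then give $O(\Delta^2)+\Delta\cdot o_n(1)$ directly, without any balancing against $\Delta^\nu/\delta^\nu$.

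That rate is impossible. Take $g\equiv0$, $h(y,z)=y-z$ and $Y_i\sim{\rm Uniform}[0,1]$. Let $\phi^{\rm out}(y)=y$ with probability $1-\Delta$ and $\phi^{\rm out}(y)=-1$ with probability $\Delta$.
\begin{itemize}
\item Then $\Delta^{\rm CP}_{\rm inv}=\Delta$ is deterministic, so its $L_\nu$ norm is $\Delta$ for every $\nu$, and there are a.s.\ no ties.
\item Given $f(X)=w$, $R_{\rm DAB}(X)$ is distributed as $\mathrm{Bin}(n-1,\Delta+(1-\Delta)w)+U$.
\item So $R_{\rm DAB}(X)/n\to\Delta+(1-\Delta)f(X)$ almost surely, and $\P(R_{\rm DAB}(X)/n<r)\to(r-\Delta)/(1-\Delta)$ for $r\ge\Delta$. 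The coverage error therefore tends to $\Delta(1-r)/(1-\Delta)$.
\item The right-hand side tends to $16(3/8)^{2/(\nu+2)}\Delta^{2\nu/(\nu+2)}$, which is $o(\Delta)$ as $\Delta\to0$ when $\nu>2$.
\end{itemize}
For instance, $\nu=6$, $\Delta=10^{-4}$ and $r=\frac12$ give a bound of about $1.3\times10^{-5}$ against an error of about $5\times10^{-5}$. So no argument can deliver the displayed bound for $\nu>2$.

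Your instinct that the bound should be exact at $\Delta=0$ is sound, but the correct shape is linear in $\Delta$. Suppose $f(X)$ has a continuous law. Then $Q_1=Q_2=0$, $P_2\sim{\rm Uniform}[0,1]$, and $\Lambda(p)\coloneqq\P(\mathrm{Bin}(n-1,p)+U<rn)$ is non-increasing. Sandwiching $\Lambda(P_1)$ between $\Lambda((P_2+\Delta)\wedge1)$ and $\Lambda((P_2-\Delta)\vee0)$ gives $|\P(R_{\rm DAB}(X)/n<r)-r|\le2\Delta^{\rm CP}_{\rm inv}$ for every $n$. This needs no derivatives and has no Monte Carlo term.
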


\begin{proof}[Proof of \Cref{cor:conformal}] In the case $k=1$, we have that $\Phi^{\rm DAB}_{b1}(X)$ are independent across $1 \leq b \leq n-1$. Therefore taking $\Xobs = \A = \A_{\rm trivial}$ in \Cref{thm:Del:Kol:approx:inv:general} gives the desired bound.
\end{proof}

As with \Cref{thm:Del:Kol:approx:inv}, \Cref{cor:conformal} controls the coverage loss via the approximate invariance error $\Delta^{\rm CP}_{\rm inv}$. Since the error measures the difference between marginal c.d.f.s,
\begin{align*}
    \Delta^{\rm CP}_{\rm inv} \;=&\; 0
    &\text{ if and only if }&&
    h(\phi^{\rm out}_{11}(Y_1),\, g(\phi^{\rm in}_{11}(V_1)))
    \;\text{ equals }&\;
    h(Y_1, g(V_1))
    \;\text{ in distribution}
    \;.
\end{align*}
Observe that, when restricted to group transformations, DAB with \eqref{eq:DAB:CP:stat} is a special case of SymmPI \citep{dobriban2023symmpi}, which generalises conformal prediction to incorporate invariant group transformations. Our DAB variant of conformal prediction can therefore be interpreted as an extension of SymmPI in two ways:
\begin{proplist}
    \item When exact invariance holds in the sense that $\Delta^{\rm CP}_{\rm inv}=0$,  DAB extends SymmPI by allowing for non-group transformations;
    \item In view of \Cref{cor:conformal}, DAB also extends SymmPI to allow for only approximately invariant transformations, as long as $\Delta^{\rm CP}_{\rm inv}$ is small.
\end{proplist}

\section{Experimental details and additional empirical results} \label{appendix:experiments}

This section includes experiment details and additional experiments that complement those reported in \Cref{sec:new:DAB:methods}.

\vspace{.5em}

\subsection{Experiments for observation-wise augmentations and variants of bootstrap in \Cref{sec:pure:aug:bootstrap}} \label{appendix:experiments:pure:aug:bootstrap}

\begin{figure}[t]
  \centering
  \begin{tikzpicture}
        \coordinate (gaussian) at (0,0);
        \begin{scope}[shift={(gaussian)}]
            \node[inner sep=0pt] at (0,0)
                {\includegraphics[width=.7\textwidth]{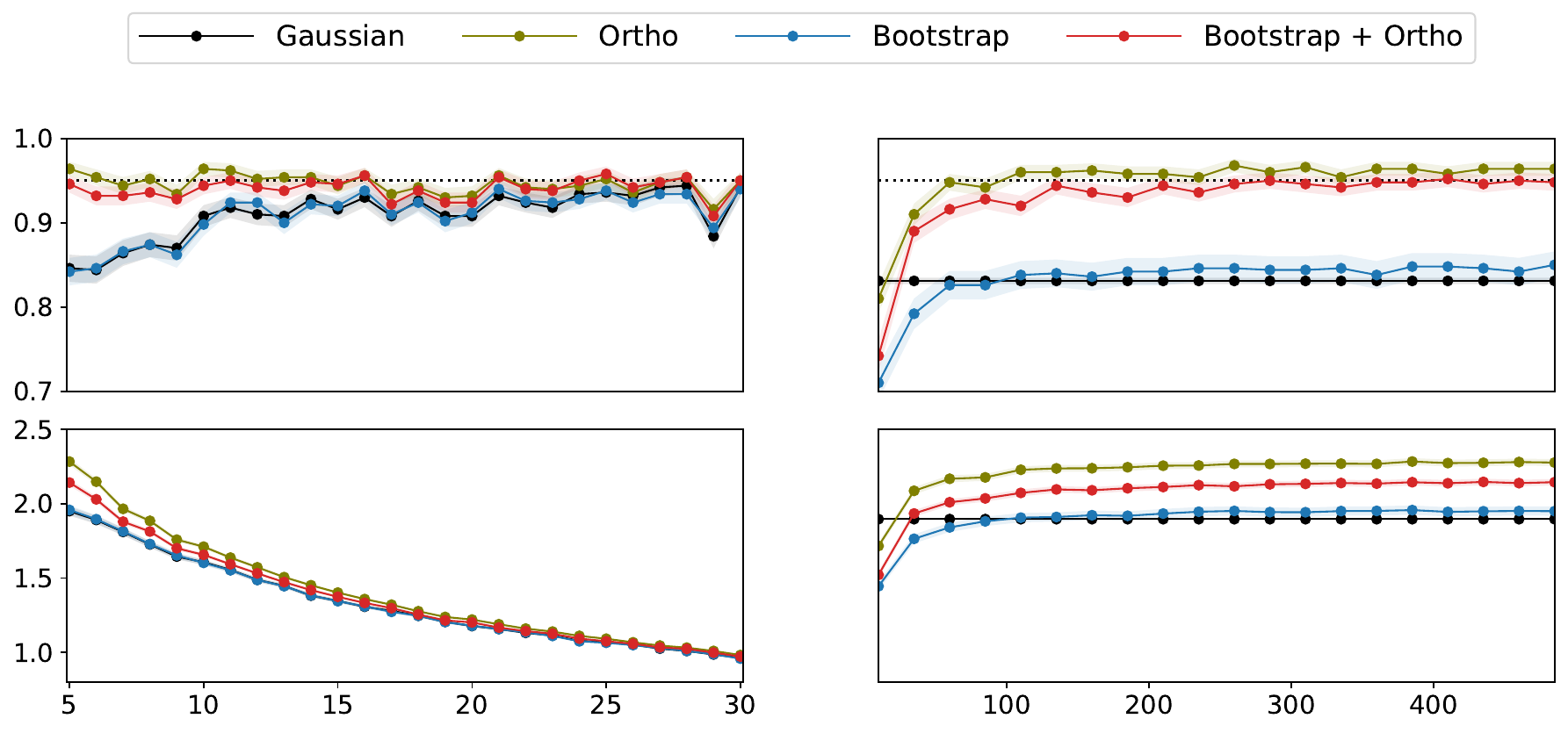}};
            
            \node[inner sep=0pt] at (-2.5,1.7){\scriptsize \textbf{Varying $n$, \;$B=100$, \, $1-\alpha=95\%$}};

            \node[inner sep=0pt] at (2.8,1.7){\scriptsize \textbf{Varying $B$, \;$n=5$, \, $1-\alpha=95\%$}};

            \node[inner sep=0pt,rotate=90] at (-5.4,0.65){\scriptsize coverage};

            \node[inner sep=0pt,rotate=90] at (-5.4,-1.2){\scriptsize CI length};

            \node[inner sep=0pt] at (-2.5,-2.6){\scriptsize $n$};

            \node[inner sep=0pt] at (2.8,-2.6){\scriptsize $B$};
        \end{scope}
  \end{tikzpicture} 
  \\[-.5em]
  \caption{Simulations with empirical averages of 2d Gaussians over 500 random trials, with varying $n$ or $B$. The setup is the same as \Cref{fig:bootstrap:simulate}(i), (ii) and (iii).
  }
  \label{fig:bootstrap:simulate:Gaussian:nk} 
\end{figure}
\begin{figure}[t]
  \centering
  \begin{tikzpicture}
        \coordinate (gaussian) at (0,0);
        \begin{scope}[shift={(gaussian)}]
            \node[inner sep=0pt] at (0,0)
                {\includegraphics[width=.7\textwidth]{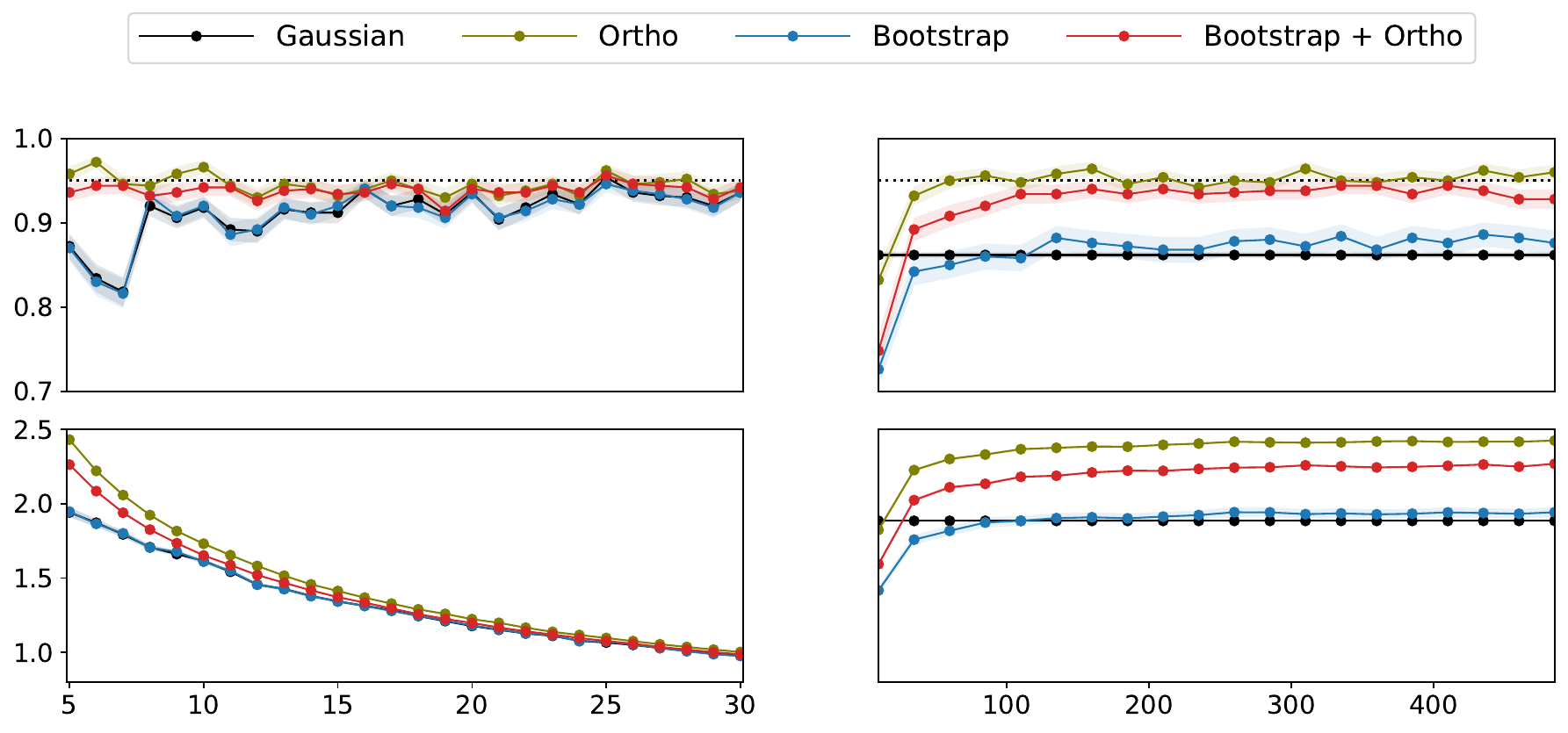}};
            
            \node[inner sep=0pt] at (-2.5,1.7){\scriptsize \textbf{Varying $n$, \;$B=100$, \, $1-\alpha=95\%$}};

            \node[inner sep=0pt] at (2.8,1.7){\scriptsize \textbf{Varying $B$, \;$n=5$, \, $1-\alpha=95\%$}};

            \node[inner sep=0pt,rotate=90] at (-5.4,0.65){\scriptsize coverage};

            \node[inner sep=0pt,rotate=90] at (-5.4,-1.2){\scriptsize CI length};

            \node[inner sep=0pt] at (-2.5,-2.6){\scriptsize $n$};

            \node[inner sep=0pt] at (2.8,-2.6){\scriptsize $B$};
        \end{scope}
  \end{tikzpicture} 
  \\[-.5em]
  \caption{Simulations with empirical averages of 2d Rademachers over 500 random trials, with varying $n$ or $B$. The setup is the same as \Cref{fig:bootstrap:simulate}(iv), (v) and (vi).
  }
  \label{fig:bootstrap:simulate:Rademacher:nk} 
  \vspace{-1em}
\end{figure}
\begin{figure}[t]
  \centering
  \begin{tikzpicture}
        \coordinate (gaussian) at (0,0);
        \begin{scope}[shift={(gaussian)}]
            \node[inner sep=0pt] at (0,0)
                {\includegraphics[width=.7\textwidth]{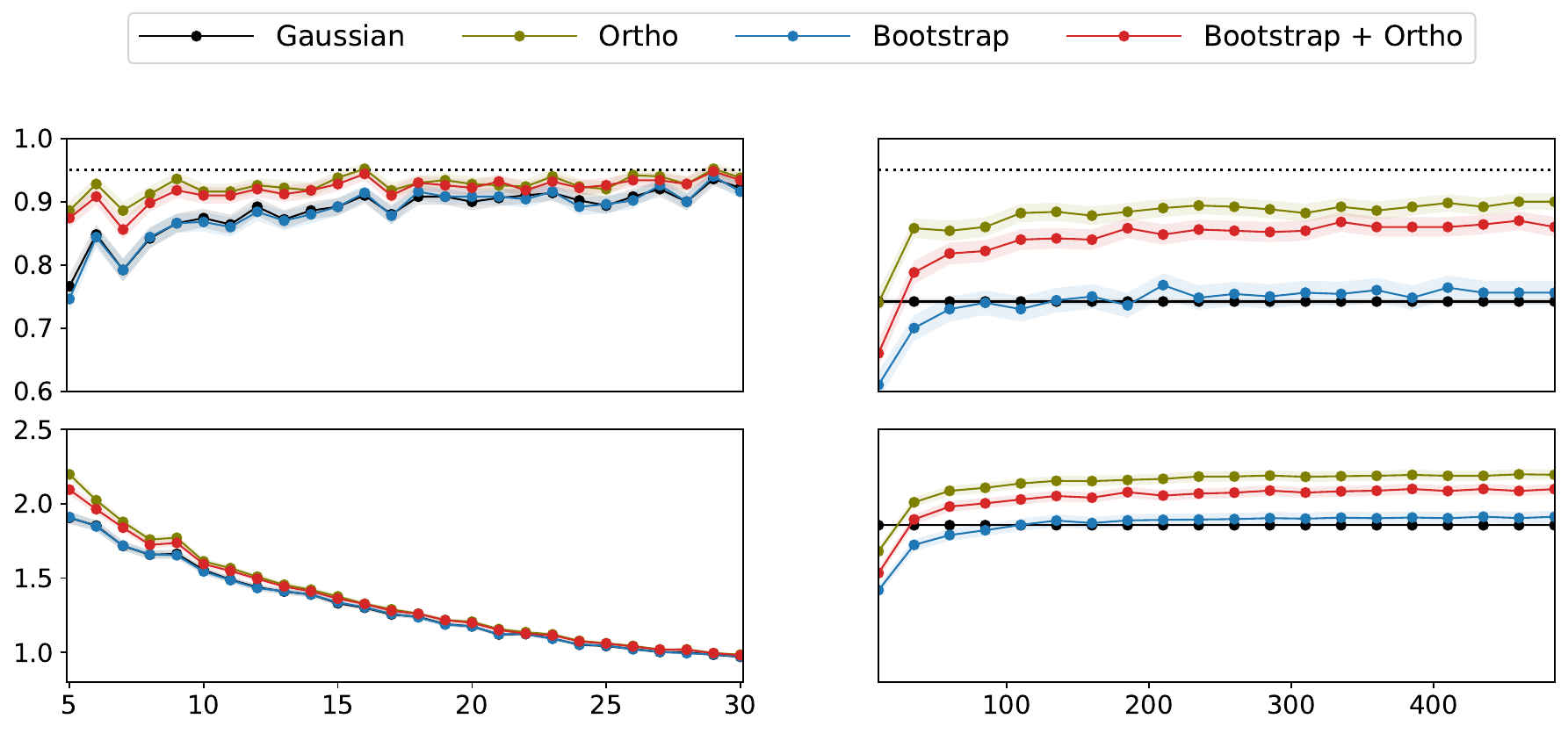}};
            
            \node[inner sep=0pt] at (-2.5,1.7){\scriptsize \textbf{Varying $n$, \;$B=100$, \, $1-\alpha=95\%$}};

            \node[inner sep=0pt] at (2.8,1.7){\scriptsize \textbf{Varying $B$, \;$n=5$, \, $1-\alpha=95\%$}};

            \node[inner sep=0pt,rotate=90] at (-5.4,0.65){\scriptsize coverage};

            \node[inner sep=0pt,rotate=90] at (-5.4,-1.2){\scriptsize CI length};

            \node[inner sep=0pt] at (-2.5,-2.6){\scriptsize $n$};

            \node[inner sep=0pt] at (2.8,-2.6){\scriptsize $B$};
        \end{scope}
  \end{tikzpicture} 
  \\[-.5em]
  \caption{Simulations with empirical averages of 2d centred Gammas over 500 random trials, with varying $n$ or $B$. The setup is the same as \Cref{fig:bootstrap:simulate}(iv), (v) and (vi).
  }
  \label{fig:bootstrap:simulate:Gamma:nk} 
\end{figure}
\begin{figure}[t]
  \centering
  \begin{tikzpicture}
        \coordinate (gaussian) at (0,0);
        \begin{scope}[shift={(gaussian)}]
            \node[inner sep=0pt] at (0,0)
                {\includegraphics[width=\textwidth]{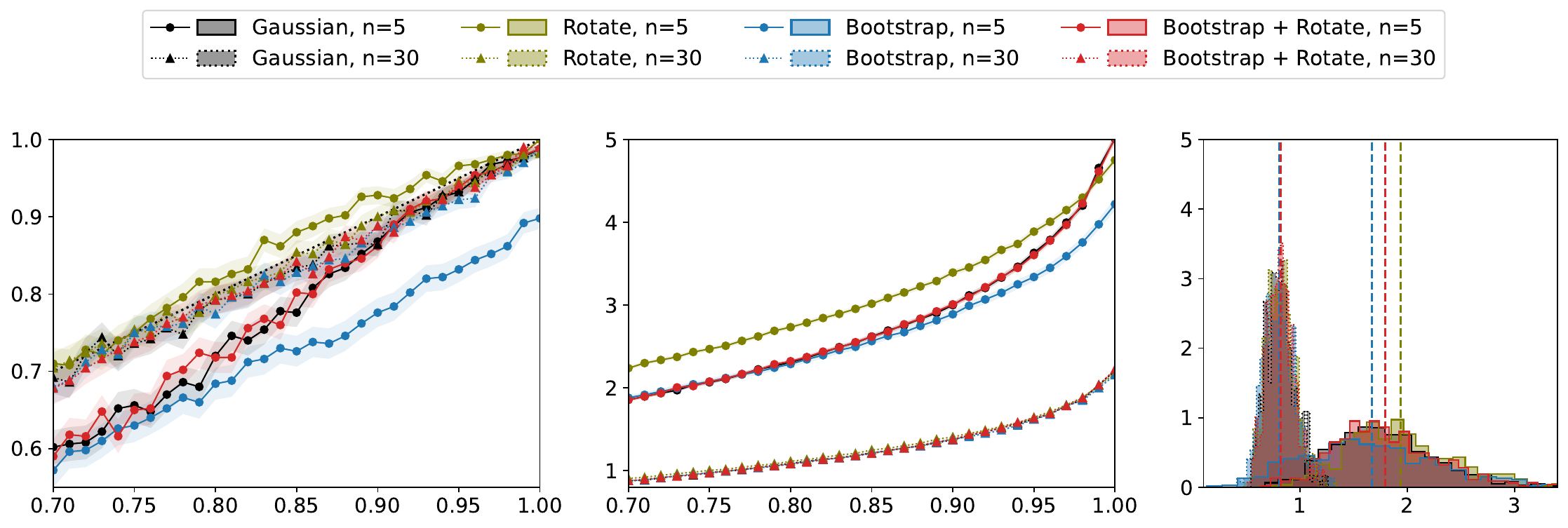}};
            
            \node[inner sep=0pt] at (-4.6,1.45){\scriptsize \textbf{(i) Empirical coverage}};
            \node[inner sep=0pt] at (0.8,1.45){\scriptsize \textbf{(ii) Confidence interval (CI) length}};
            \node[inner sep=0pt] at (5.5,1.45){\scriptsize \textbf{(iii) Distribution of CI lengths}};

            \node[inner sep=0pt] at (-4.6,-2.6){\scriptsize target coverage $1-\alpha$};
            \node[inner sep=0pt] at (0.8,-2.6){\scriptsize target coverage $1-\alpha$};
            \node[inner sep=0pt] at (5.6,-2.6){\scriptsize CI length};
        \end{scope}
  \end{tikzpicture} 
  \caption{Quality of the CIs for the $x$-component of the electron dipole moment of a FermiNet wavefunction trained for the Lithium atom, over 500 random trials. \textbf{(i)} Empirical v.s.~target coverage, where the dashed line $y=x$ indicates the desired coverage level. \textbf{(ii)} CI length~v.s. target coverage. \textbf{(iii)} Empirical distribution of CI lengths over the random trials. This follows the same setup as \Cref{fig:bootstrap:ferminet:nk}. 
  }
  \label{fig:bootstrap:ferminet} 
  \vspace{-1em}
\end{figure}

\noindent
\textbf{Orthogonal transformations for mean-zero and isotropic random vectors.} The setup in \Cref{fig:bootstrap:simulate} and \Cref{fig:bootstrap:simulate:Gaussian:nk,fig:bootstrap:simulate:Rademacher:nk,fig:bootstrap:simulate:Gamma:nk} is as described in \Cref{sec:pure:aug:bootstrap} with $d=2$ and three different data distributions:
\begin{itemize}[
    leftmargin=1em,
    labelwidth=0em,
    topsep=0.5em,
    partopsep=0em,
    itemsep=0em
]
    \item \emph{2d Gaussian. } $X_i \overset{\rm i.i.d.}{\sim} \cN(0, I_2)$;
    \item \emph{2d Rademacher. } $X_i$ are i.i.d.~2d random vectors and the two coordinates $X_{11}$ and $X_{12}$ are i.i.d.~Rademacher random variables;
    \item \emph{2d centred Gamma vectors. } $X_i$ are i.i.d.~2d random vectors and the two coordinates $X_{11}$ and $X_{12}$ are i.i.d.~distributed as 
    \begin{align*}
        W - \mean[W] \;, \qquad \text{ where } \qquad W \sim \Gamma(1,1)\;.
    \end{align*}
\end{itemize}

\noindent
\textbf{Statistics on parallel sampling from invariant distributions in AI-for-science.} In \Cref{fig:bootstrap:ferminet:nk,fig:bootstrap:ferminet}, we draw MCMC samples from a trained FermiNet \citep{pfau2020ab} that models the distribution of the $3$ electrons in a ground-state Lithium atom. The sampling algorithm is chosen as the default Metropolis-Hastings algorithm in FermiNet. For augmentations, we take the group of simultaneous rotations on all three electron positions,
\begin{align*}
    \G \;=\; \{ (g,g,g) \,|\, g \in \S\O_2 \} \;,
\end{align*}
where $\S\O_2$ denotes the group of simultaneous rotations in the $x$-$y$ plane about the $z$-axis. The test function of interest is the $x$-component of the electron dipole moment of the system: Given a three electron configuration $x=(x_a^\top, x_b^\top, x_c^\top)$ with $x_a,x_b,x_c \in \R^3$,
\begin{align*}
    h(x) \;=\; (x_a)_1 + (x_b)_1 + (x_c)_1\;.
\end{align*}
For a true sample of the three-electron configuration $X^* = ( (X^*_a)^\top, (X^*_b)^\top, (X^*_c)^\top)^\top$ from the ground-state Lithium atom, it is known that $\mean[(X^*_a)_1 + (X^*_b)_1 + (X^*_c)_1] = 0$ and that the joint three-electron distribution is invariant under $\G$,  although neither is guaranteed for the $t$-th step MCMC distribution $X^{(t)}$.

\subsection{Experiments for wild bootstrap with additional transformations in \Cref{sec:wild:bootstrap:add}}  \label{appendix:experiments:wb}

In all experiments in this section, the statistic used is the MMD U-statistic defined in \Cref{sec:wild:bootstrap:add} with the Radial Basis Function (RBF) kernel $\kappa(y,z) = \exp( - \frac{1}{2\gamma} \| y-z\|^2 )$, where $\gamma$ is chosen by the median heuristic (see \cite{garreau2017large} for an overview and the theoretical analysis of median heuristic).

\begin{figure}[t]
  \centering
  \begin{tikzpicture}
        \coordinate (rademacher) at (0,0);
        \begin{scope}[shift={(rademacher)}]
            \node[inner sep=0pt] at (-3.1,0)
                {\includegraphics[width=.69\textwidth]{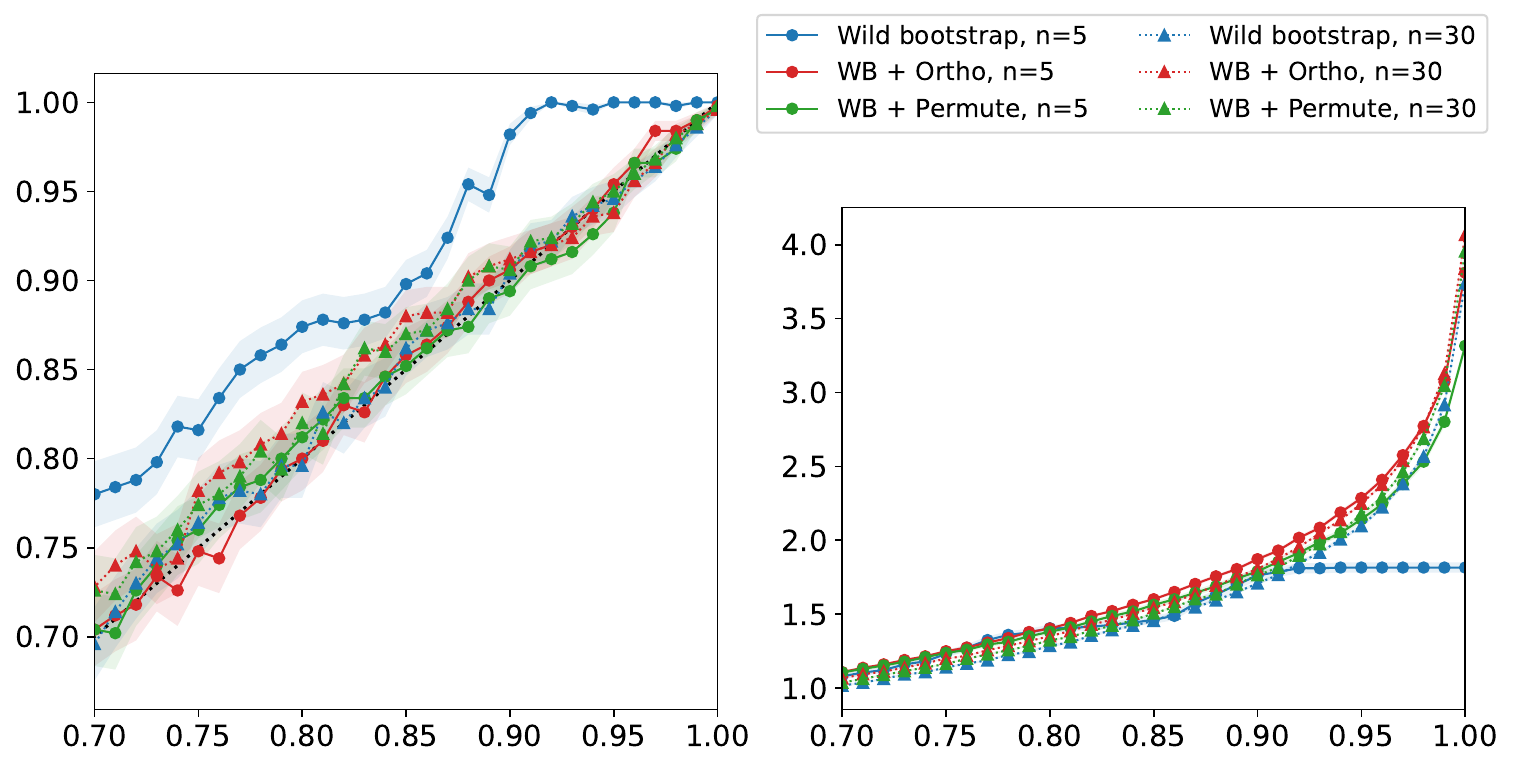}};

            \node[inner sep=0pt] at (4.6,1.25)
                {\includegraphics[width=.32\textwidth]{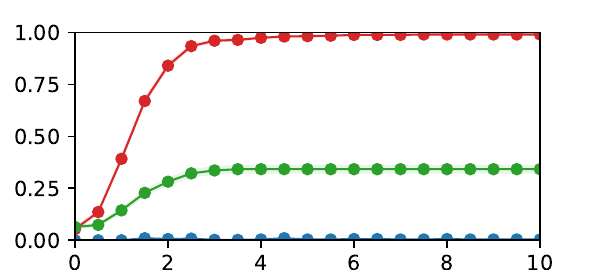}};

            \node[inner sep=0pt] at (4.6,-1.4)
                {\includegraphics[width=.32\textwidth]{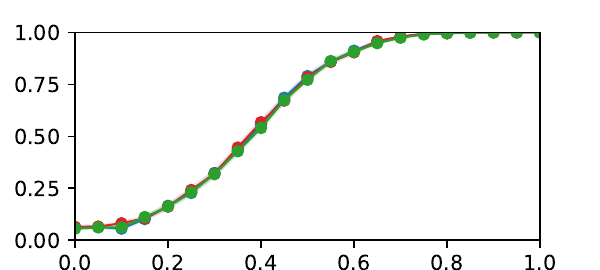}};
            
            \node[inner sep=0pt] at (-5.6,2.3){\scriptsize \textbf{(i) Empirical coverage}};
            \node[inner sep=0pt] at (-0.45,1.4){\scriptsize \textbf{(ii) Confidence interval (CI) length}};
            \node[inner sep=0pt] at (4.7,2.3){\scriptsize \textbf{(iii) Test power, $n=5$}};
            \node[inner sep=0pt] at (4.7,0){\scriptsize mean shift};
            \node[inner sep=0pt] at (4.7,-0.4){\scriptsize \textbf{(iv) Test power, $n=30$}};
            \node[inner sep=0pt] at (4.7,-2.7){\scriptsize mean shift};
 
            \node[inner sep=0pt] at (-5.6,-2.7){\scriptsize target coverage $1-\alpha$};
            \node[inner sep=0pt] at (-0.4,-2.7){\scriptsize target coverage $1-\alpha$};
        \end{scope}
  \end{tikzpicture} 
  \\[-.2em]
  \caption{MMD statistics with RBF kernel on 2d Gaussians over 500 random trials. \textbf{(i):} Empirical v.s. target coverage. \textbf{(ii):} CI length~v.s. target coverage. \textbf{(iii) and (iv):} Test power against mean shift. 
  }
  \label{fig:wb:simulate:gaussian} 
\end{figure}
\begin{figure}[t]
  \centering
  \begin{tikzpicture}
        \coordinate (rademacher) at (0,0);
        \begin{scope}[shift={(rademacher)}]
            \node[inner sep=0pt] at (-3.1,0)
                {\includegraphics[width=.69\textwidth]{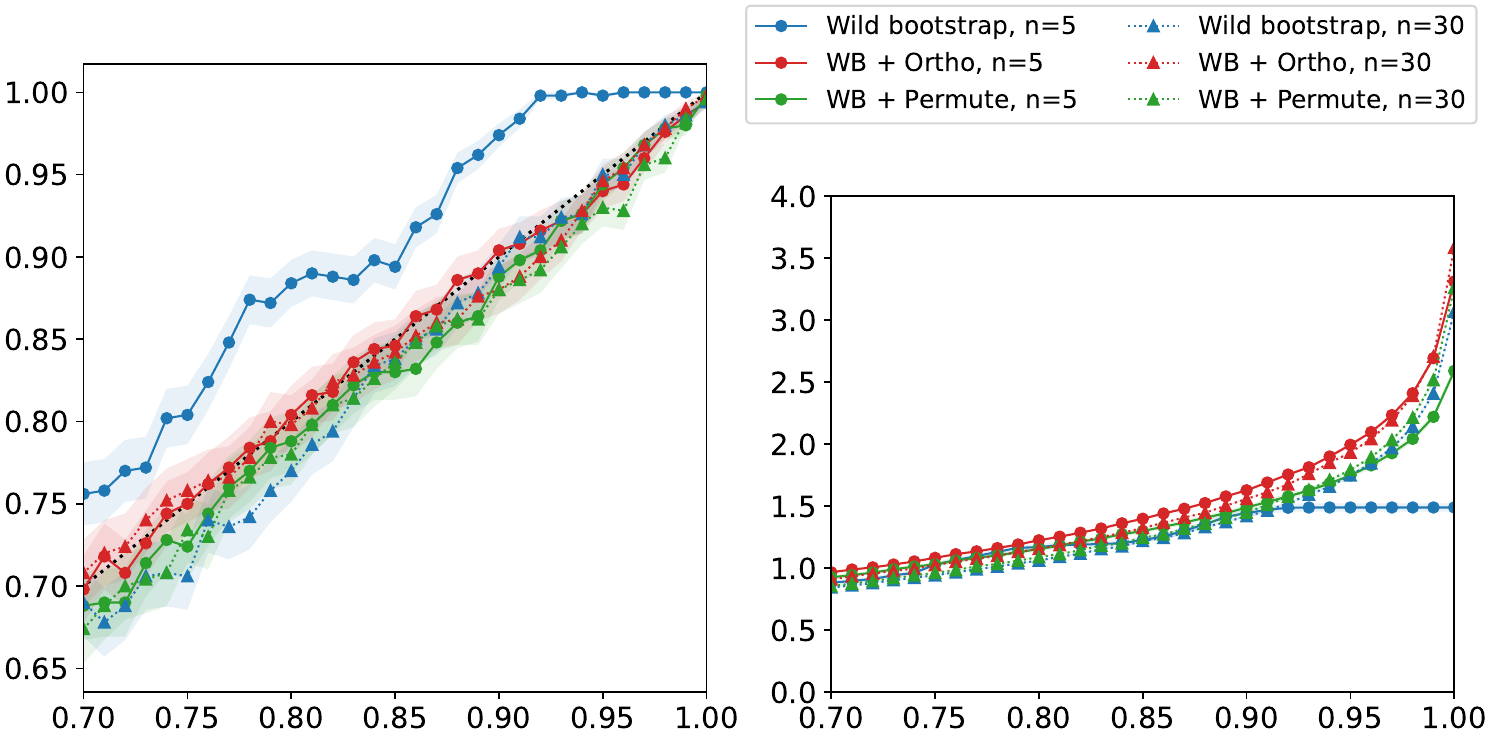}};

            \node[inner sep=0pt] at (4.6,1.25)
                {\includegraphics[width=.32\textwidth]{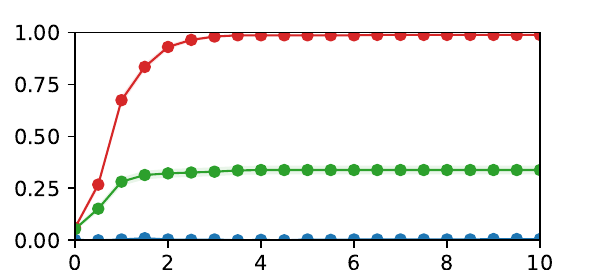}};

            \node[inner sep=0pt] at (4.6,-1.4)
                {\includegraphics[width=.32\textwidth]{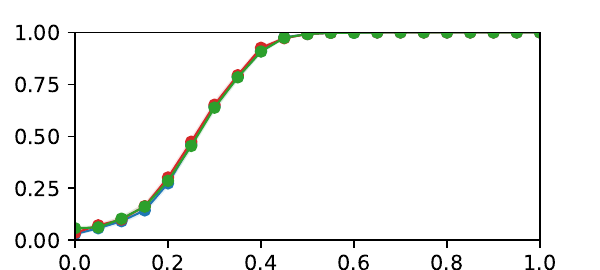}};
            
            \node[inner sep=0pt] at (-5.6,2.3){\scriptsize \textbf{(i) Empirical coverage}};
            \node[inner sep=0pt] at (-0.45,1.4){\scriptsize \textbf{(ii) Confidence interval (CI) length}};
            \node[inner sep=0pt] at (4.7,2.3){\scriptsize \textbf{(iii) Test power, $n=5$}};
            \node[inner sep=0pt] at (4.7,0){\scriptsize mean shift};
            \node[inner sep=0pt] at (4.7,-0.4){\scriptsize \textbf{(iv) Test power, $n=30$}};
            \node[inner sep=0pt] at (4.7,-2.7){\scriptsize mean shift};
 
            \node[inner sep=0pt] at (-5.6,-2.7){\scriptsize target coverage $1-\alpha$};
            \node[inner sep=0pt] at (-0.4,-2.7){\scriptsize target coverage $1-\alpha$};
        \end{scope}
  \end{tikzpicture} 
  \\[-.2em]
  \caption{MMD statistics with RBF kernel on 2d centred Gammas over 500 random trials. \textbf{(i):} Empirical v.s. target coverage. \textbf{(ii):} CI length~v.s. target coverage. \textbf{(iii) and (iv):} Test power against mean shift. 
  }
  \label{fig:wb:simulate:gamma} 
\end{figure}

\vspace{.5em}

\noindent 
\textbf{(a) 2d simulated data under mean shift.} In \Cref{fig:wb:simulate:rademacher,fig:wb:simulate:gaussian,fig:wb:simulate:gamma}, $P$ is one of the three 2d distributions defined in the first setup of \Cref{appendix:experiments:pure:aug:bootstrap}, i.e.~Gaussian, Rademacher and centred Gamma bivariate distributions. Plots \textbf{(i)} and \textbf{(ii)} of each figure show the quality of the DAB CI under the null. Plots \textbf{(iii)} and \textbf{(iv)} of each figure examines the quality of the DAB CI under mean shifts: $Q$ is the same distribution as $P$ except that each random vector is shifted by a deterministic vector $(\theta, \theta)$, and the x-axis of plot \textbf{(iii)} and \textbf{(iv)} gives the value of $\theta$. The augmentations used are (a) random orthogonal transformations, and (b) random permutations of the two coordinates.

\begin{figure}[t]
  \centering
  \begin{tikzpicture}
        \coordinate (gaussian) at (0,0);
        \begin{scope}[shift={(gaussian)}]
            \node[inner sep=0pt] at (0,0)
                {\includegraphics[width=.7\textwidth]{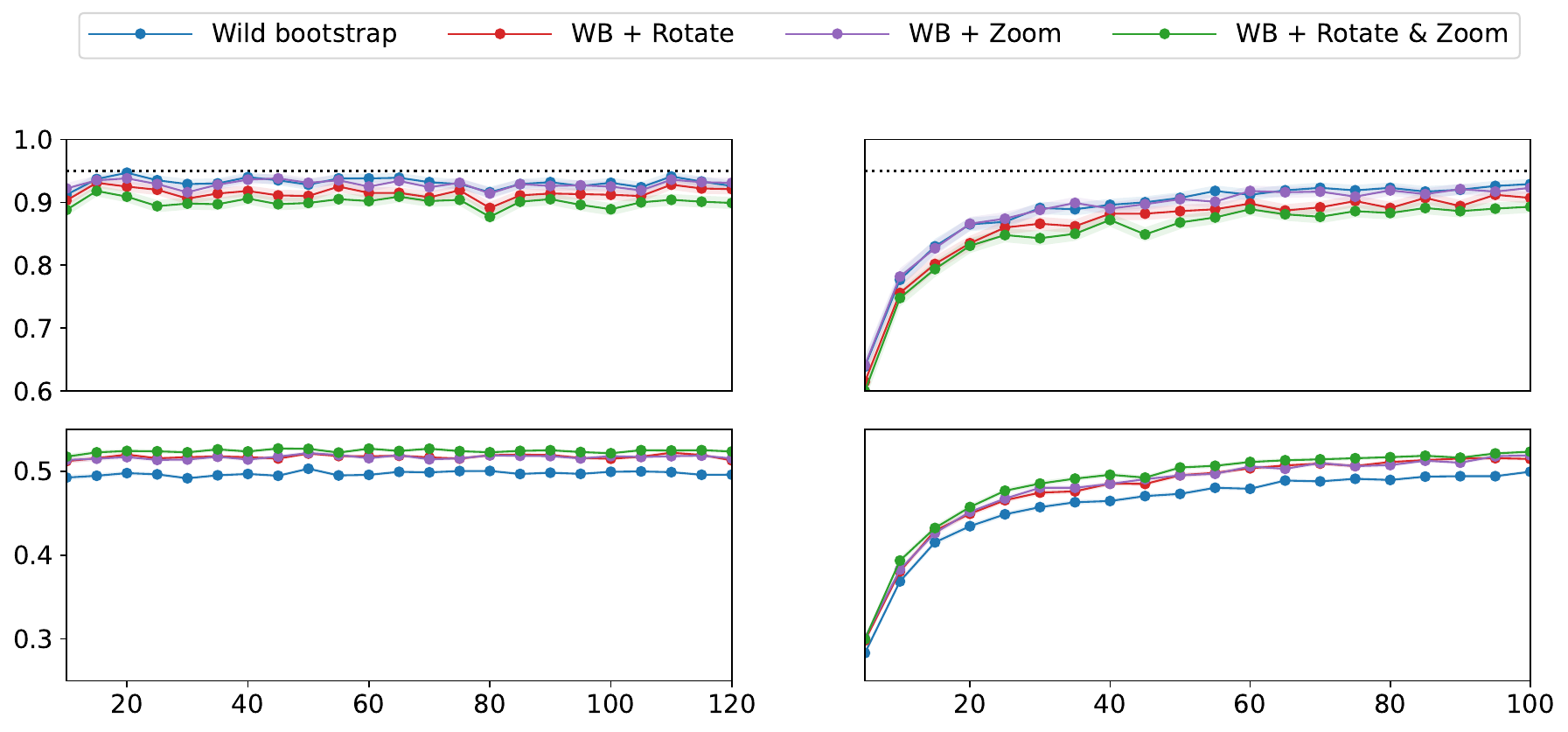}};
            
            \node[inner sep=0pt] at (-2.5,1.7){\scriptsize \textbf{Varying $n$, \;$B=100$, \, $1-\alpha=95\%$}};

            \node[inner sep=0pt] at (2.8,1.7){\scriptsize \textbf{Varying $B$, \;$n=50$, \, $1-\alpha=95\%$}};

            \node[inner sep=0pt,rotate=90] at (-5.4,0.65){\scriptsize coverage};

            \node[inner sep=0pt,rotate=90] at (-5.4,-1.2){\scriptsize CI length};

            \node[inner sep=0pt] at (-2.5,-2.6){\scriptsize $n$};

            \node[inner sep=0pt] at (2.8,-2.6){\scriptsize $B$};
        \end{scope}
  \end{tikzpicture} 
  \\[-.5em]
  \caption{Empirical coverage and CI length of DAB for MMD statistics with RBF kernel under the null of noiseless MNIST images, i.e.~$\sigma=0$. The experiments are over varying $n$, $B$ and over 1000 random draws of the dataset, with a fixed target coverage $1-\alpha=95\%$. See \Cref{fig:wb:mnist:alt} for results for testing against the alternative of noisy images.
  }
  \label{fig:wb:mnist:null} 
\end{figure}

\vspace{.5em}

\noindent
 \textbf{(b) Noiseless v.s. noisy MNIST images \citep{lecun1998gradient}.} In \Cref{fig:wb:mnist:alt,fig:wb:mnist:null}, $P$ is the empirical distribution of all MNIST images, whereas $Q$ is the empirical distribution of MNIST images corrupted with additive Gaussian noise that is pixel-wise i.i.d.~$\cN(0,\sigma^2)$. $n$ represents the number of random samples from each empirical distribution. All MNIST images have been downsized to $14 \times 14$. \Cref{fig:mnist:visual} illustrates a downsampled MNIST image, its noisy versions and its differently augmented versions. The augmentations used are (a) random rotation of images in pixel space with angles uniformly chosen from $[-15, 15]$ in degrees, (b) random zooming of images with ratio uniformly chosen from $[0.9,1.1]$, and (c) a composition of both random rotation and zooming of images.

\begin{figure}[t]
  \centering
  \begin{tikzpicture}
        \coordinate (cifargrid) at (0,0);
        \begin{scope}[shift={(cifargrid)}]
            \node[inner sep=0pt] (cifarodd) at (-3.8,0)
                {\includegraphics[width=.49\textwidth]{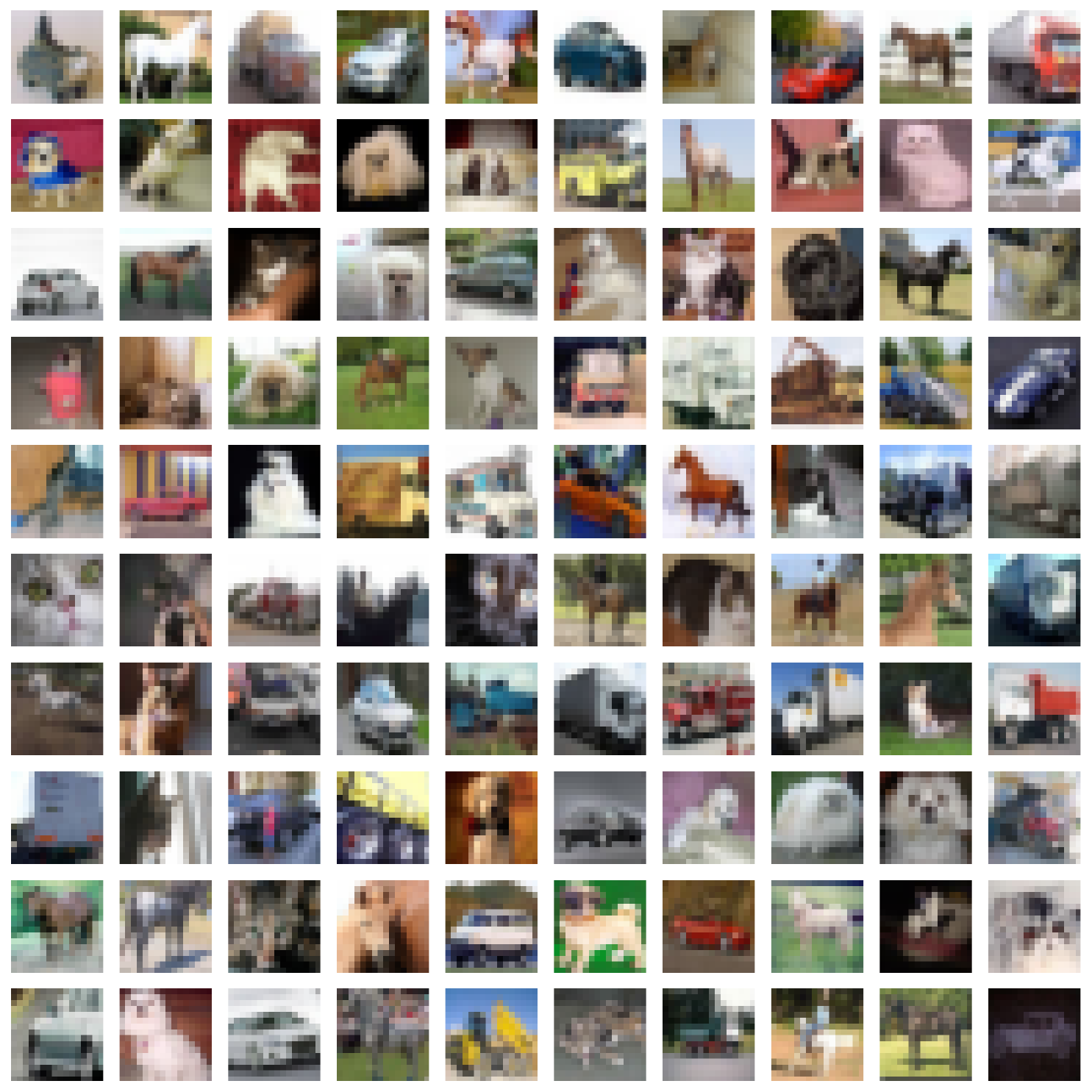}};
            \node[inner sep=0pt] (cifareven) at (3.8,0)
                {\includegraphics[width=.49\textwidth]{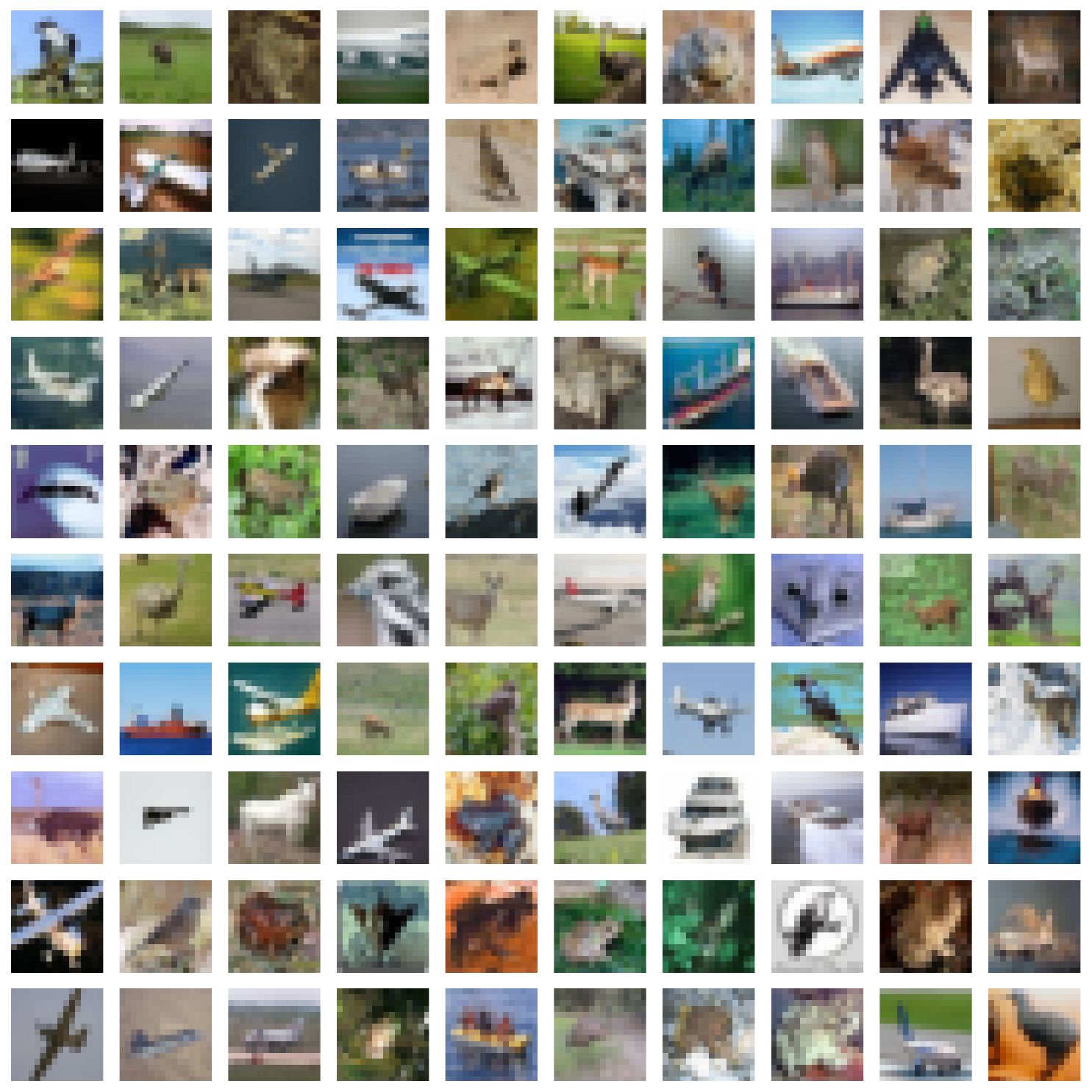}};

            \path (cifarodd.north) ++(0,0.2)
                node[inner sep=0pt]{\scriptsize \textbf{(i) Odd-numbered classes }};
            \path (cifareven.north) ++(0,0.2)
                node[inner sep=0pt]{\scriptsize \textbf{(ii) Even-numbered classes }};
        \end{scope}
  \end{tikzpicture}
  \\[-.2em]
  \caption{Randomly sampled CIFAR-10 images of odd-numbered classes and of even-numbered classes.  The odd-numbered classes are \texttt{'Automobile', 'Cat', 'Dog','Horse','Truck'}, and the even-numbered classes are 
  \texttt{'Airplane', 'Bird', 'Deer','Frog','Ship'}.
  } 
  \label{fig:cifar}
\end{figure}

\vspace{.5em}
\noindent 
\textbf{(c) Odd v.s.~even-numbered classes in CIFAR-10 images \cite{krizhevsky2009learning}.} In \Cref{fig:wb:cifar:null,fig:wb:cifar:alt}, $P$ is the empirical distribution over all CIFAR-10 images of odd-numbered classes and $Q$ is the empirical distribution over all CIFAR-10 images of even-numbered classes.

\begin{figure}[h!]
  \centering
  \begin{tikzpicture}
            \node[inner sep=0pt] at (0,0)
                {\includegraphics[width=.8\textwidth]{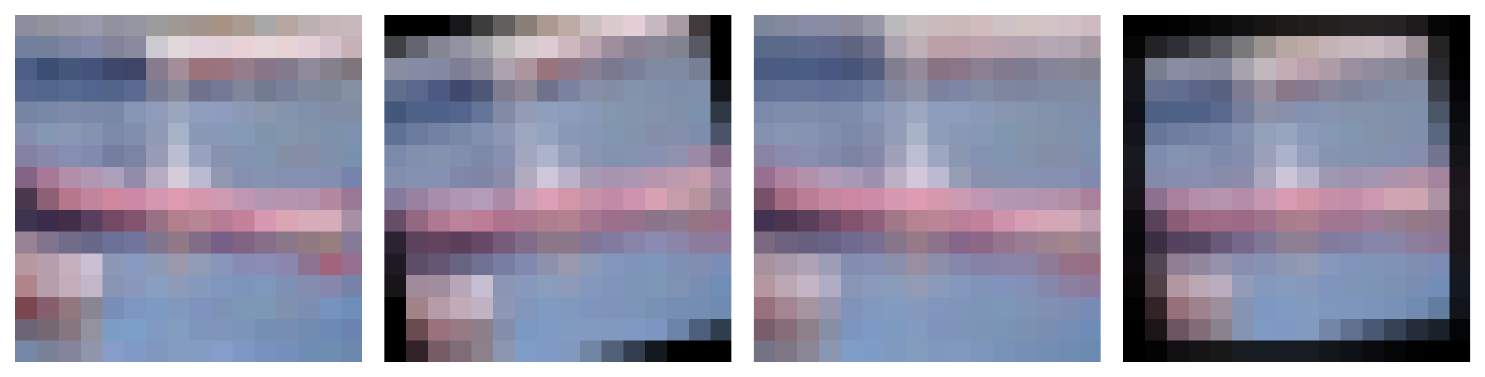}};

            \node[inner sep=0pt] at (-4.5,1.6){\scriptsize \textbf{(i) Original}};
            \node[inner sep=0pt] at (-1.5,1.6){\scriptsize \textbf{(ii) Rotate}};
            \node[inner sep=0pt] at (1.5,1.6){\scriptsize \textbf{(iii) Zoom}};
            \node[inner sep=0pt] at (4.5,1.6){\scriptsize \textbf{(iv) Rotate \& Zoom}};
  \end{tikzpicture}
  \\[-.2em]
  \caption{A randomly sampled CIFAR-10 image with the augmentations considered in our experiments. }
  \label{fig:cifar:visual}
\end{figure}
\begin{figure}[h!]
  \centering
  \begin{tikzpicture}
        \coordinate (gaussian) at (0,0);
        \begin{scope}[shift={(gaussian)}]
            \node[inner sep=0pt] at (0,0)
                {\includegraphics[width=.7\textwidth]{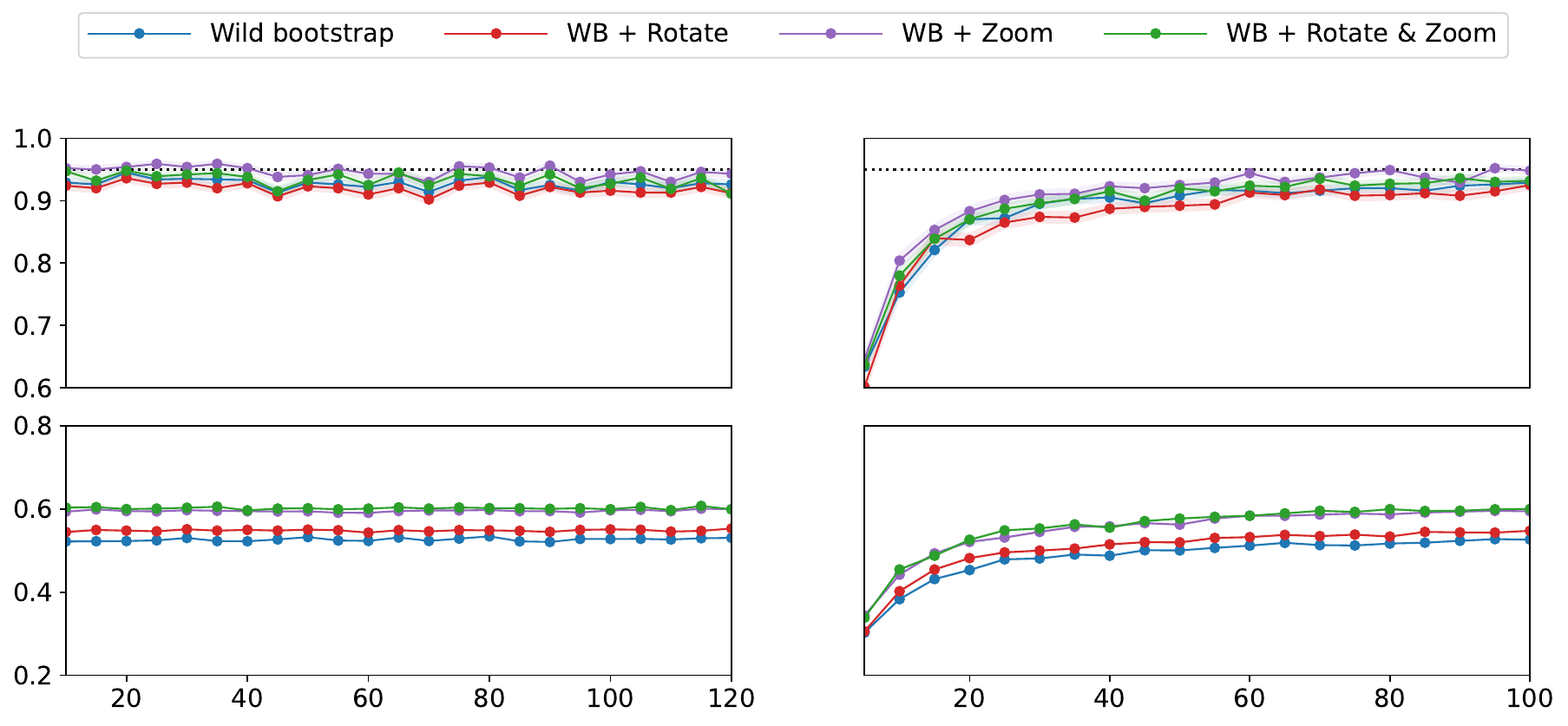}};
            
            \node[inner sep=0pt] at (-2.5,1.7){\scriptsize \textbf{Varying $n$, \;$B=100$, \, $1-\alpha=95\%$}};

            \node[inner sep=0pt] at (2.8,1.7){\scriptsize \textbf{Varying $B$, \;$n=30$, \, $1-\alpha=95\%$}};

            \node[inner sep=0pt,rotate=90] at (-5.4,0.65){\scriptsize coverage};

            \node[inner sep=0pt,rotate=90] at (-5.4,-1.2){\scriptsize CI length};

            \node[inner sep=0pt] at (-2.5,-2.5){\scriptsize $n$};

            \node[inner sep=0pt] at (2.8,-2.5){\scriptsize $B$};
        \end{scope}
  \end{tikzpicture} 
  \\[-.5em]
  \caption{Empirical coverage and CI length of DAB for MMD statistics with RBF kernel under the null of CIFAR images with odd-numbered classes, i.e.~(i) in \Cref{fig:cifar}. The experiments are over varying $n$, $B$ and over 1000 random draws of the dataset, with a fixed target coverage $1-\alpha=95\%$.
  }
  \label{fig:wb:cifar:null} 
\end{figure}
\begin{figure}[h!]
  \centering
  \begin{tikzpicture}
        \coordinate (gaussian) at (0,0);
        \begin{scope}[shift={(gaussian)}]
            \node[inner sep=0pt] at (0,0)
                {\includegraphics[width=.7\textwidth]{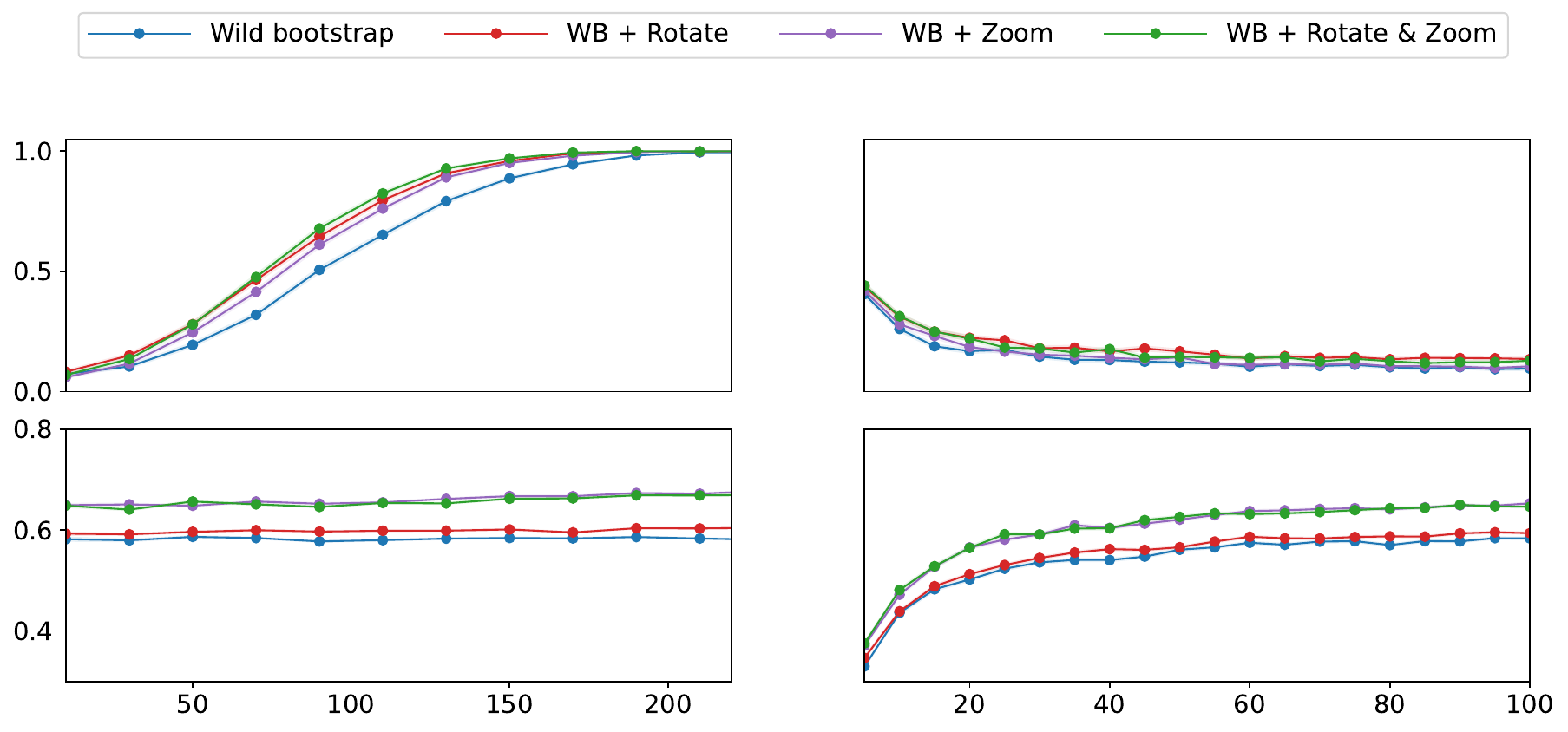}};
            
            \node[inner sep=0pt] at (-2.5,1.7){\scriptsize \textbf{Varying $n$, \;$B=100$}};

            \node[inner sep=0pt] at (2.8,1.7){\scriptsize \textbf{Varying $B$, \;$n=30$}};

            \node[inner sep=0pt,rotate=90] at (-5.4,0.65){\scriptsize power};

            \node[inner sep=0pt,rotate=90] at (-5.4,-1.2){\scriptsize CI length};

            \node[inner sep=0pt] at (-2.5,-2.5){\scriptsize $n$};

            \node[inner sep=0pt] at (2.8,-2.5){\scriptsize $B$};
        \end{scope}
  \end{tikzpicture} 
  \\[-.5em]
  \caption{Empirical power and CI length of DAB for MMD statistics with RBF kernel for testing CIFAR images of odd-numbered classes against those of even-numbered classes, i.e.~(i) against (ii) in \Cref{fig:cifar}. The experiments are over varying $n$, $B$ and over 1000 random draws of the dataset, with a fixed target Type-I error $\alpha=5\%$. 
  }
  \label{fig:wb:cifar:alt} 
\end{figure}

\noindent
\Cref{fig:cifar} illustrates some examples of the CIFAR-10 images of odd- and even-numbered classes. All CIFAR images are downsized to $16 \times 16$. \Cref{fig:cifar:visual} illustrates a downsampled CIFAR image and its differently augmented versions; the augmentations used are the same as those used for MNIST above. Note that DAB slightly outperforms wild bootstrap in terms of test power, but at the cost of giving a slightly wider confidence interval under the null.

\begin{figure}[t]
  \centering
  \begin{tikzpicture}
        \coordinate (gaussian) at (0,0);
        \begin{scope}[shift={(gaussian)}]
            \node[inner sep=0pt] at (0,0)
                {\includegraphics[width=.7\textwidth]{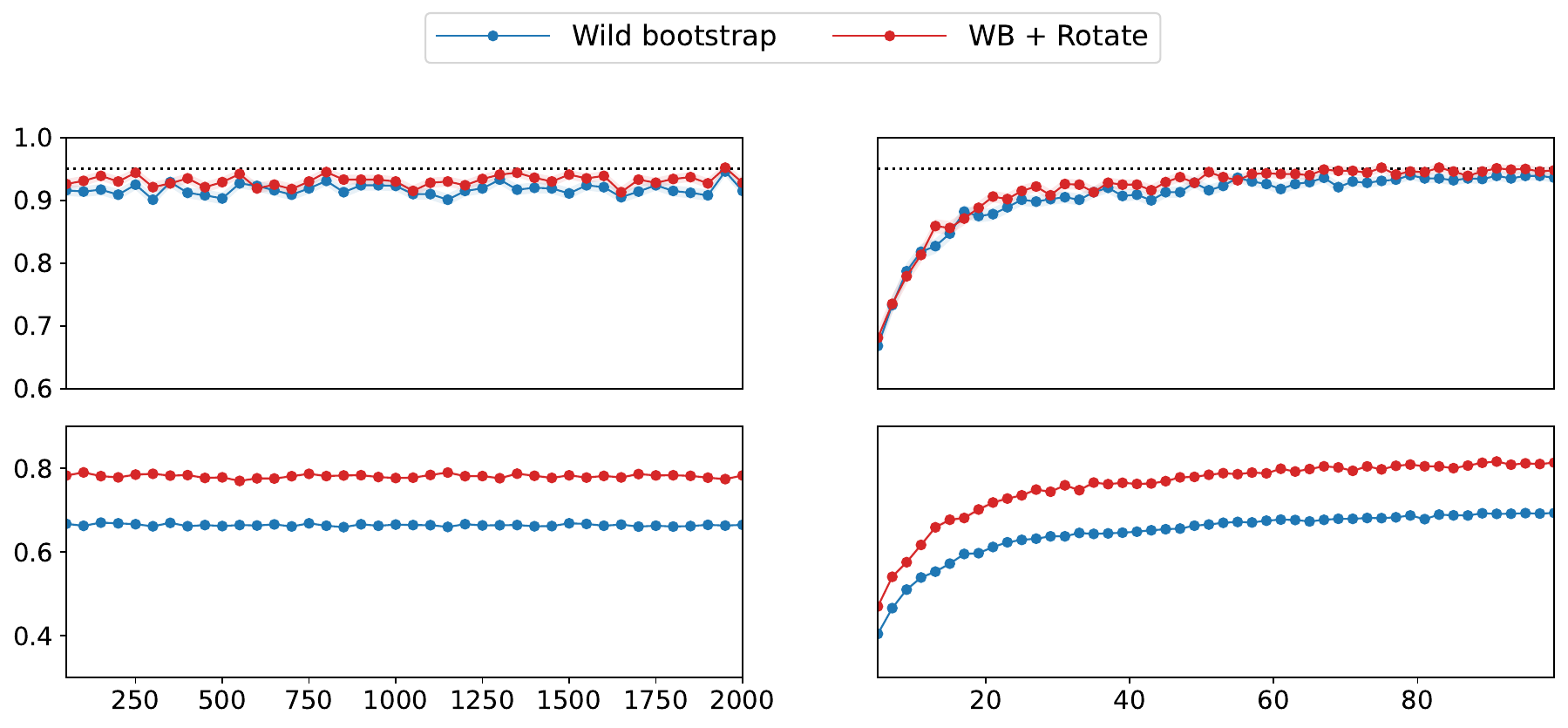}};
            
            \node[inner sep=0pt] at (-2.5,1.7){\scriptsize \textbf{Varying $n$, \;$B=50$, \, $1-\alpha=95\%$}};

            \node[inner sep=0pt] at (2.8,1.7){\scriptsize \textbf{Varying $B$, \;$n=2000$, \, $1-\alpha=95\%$}};

            \node[inner sep=0pt,rotate=90] at (-5.4,0.65){\scriptsize coverage};

            \node[inner sep=0pt,rotate=90] at (-5.4,-1.2){\scriptsize CI length};

            \node[inner sep=0pt] at (-2.5,-2.5){\scriptsize $n$};

            \node[inner sep=0pt] at (2.8,-2.5){\scriptsize $B$};
        \end{scope}
  \end{tikzpicture} 
  \\[-.5em]
  \caption{Empirical coverage and CI length of DAB for MMD statistics with RBF kernel under the null of background noise in the HIGGS dataset. The experiments are over varying $n$, $B$ and over 1000 random draws of the dataset, with a fixed target coverage $1-\alpha=95\%$. The augmentations used in DAB are simultaneous rotations of all azimuthal angles by an angle uniformly drawn from $[-90,90]$ in degrees.
  }
  \label{fig:wb:higgs:null} 
\end{figure}
\begin{figure}[t]
  \centering
  \begin{tikzpicture}
        \coordinate (gaussian) at (0,0);
        \begin{scope}[shift={(gaussian)}]
            \node[inner sep=0pt] at (0,0)
                {\includegraphics[width=.7\textwidth]{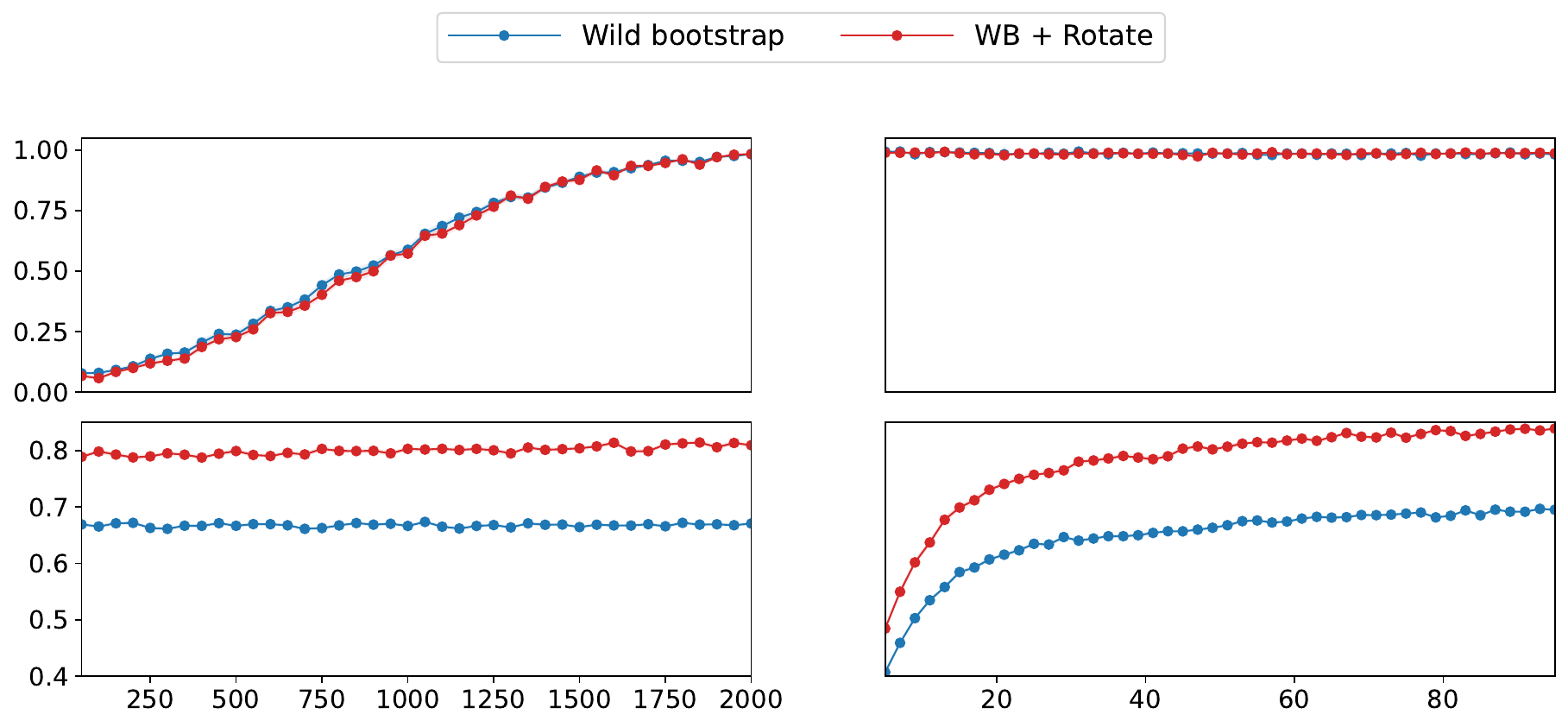}};
            
            \node[inner sep=0pt] at (-2.5,1.7){\scriptsize \textbf{Varying $n$, \;$B=50$, \, $1-\alpha=95\%$}};

            \node[inner sep=0pt] at (2.8,1.7){\scriptsize \textbf{Varying $B$, \;$n=2000$, \, $1-\alpha=95\%$}};

            \node[inner sep=0pt,rotate=90] at (-5.4,0.65){\scriptsize power};

            \node[inner sep=0pt,rotate=90] at (-5.4,-1.2){\scriptsize CI length};

            \node[inner sep=0pt] at (-2.5,-2.5){\scriptsize $n$};

            \node[inner sep=0pt] at (2.8,-2.5){\scriptsize $B$};
        \end{scope}
  \end{tikzpicture} 
  \\[-.5em]
  \caption{Empirical power and CI length of DAB for MMD statistics with RBF kernel for testing background noise against signal in the HIGGS dataset. The experiments are over varying $n$, $B$ and over 1000 random draws of the dataset, with a fixed target Type-I error $\alpha=5\%$.  The augmentations used in DAB are simultaneous rotations of all azimuthal angles by an angle uniformly drawn from $[-90,90]$ in degrees.
  }
  \label{fig:wb:higgs:alt} 
\end{figure}
\begin{figure}[t]
  \centering
  \begin{tikzpicture}
        \coordinate (gaussian) at (0,0);
        \begin{scope}[shift={(gaussian)}]
            \node[inner sep=0pt] at (0,0)
                {\includegraphics[width=.7\textwidth]{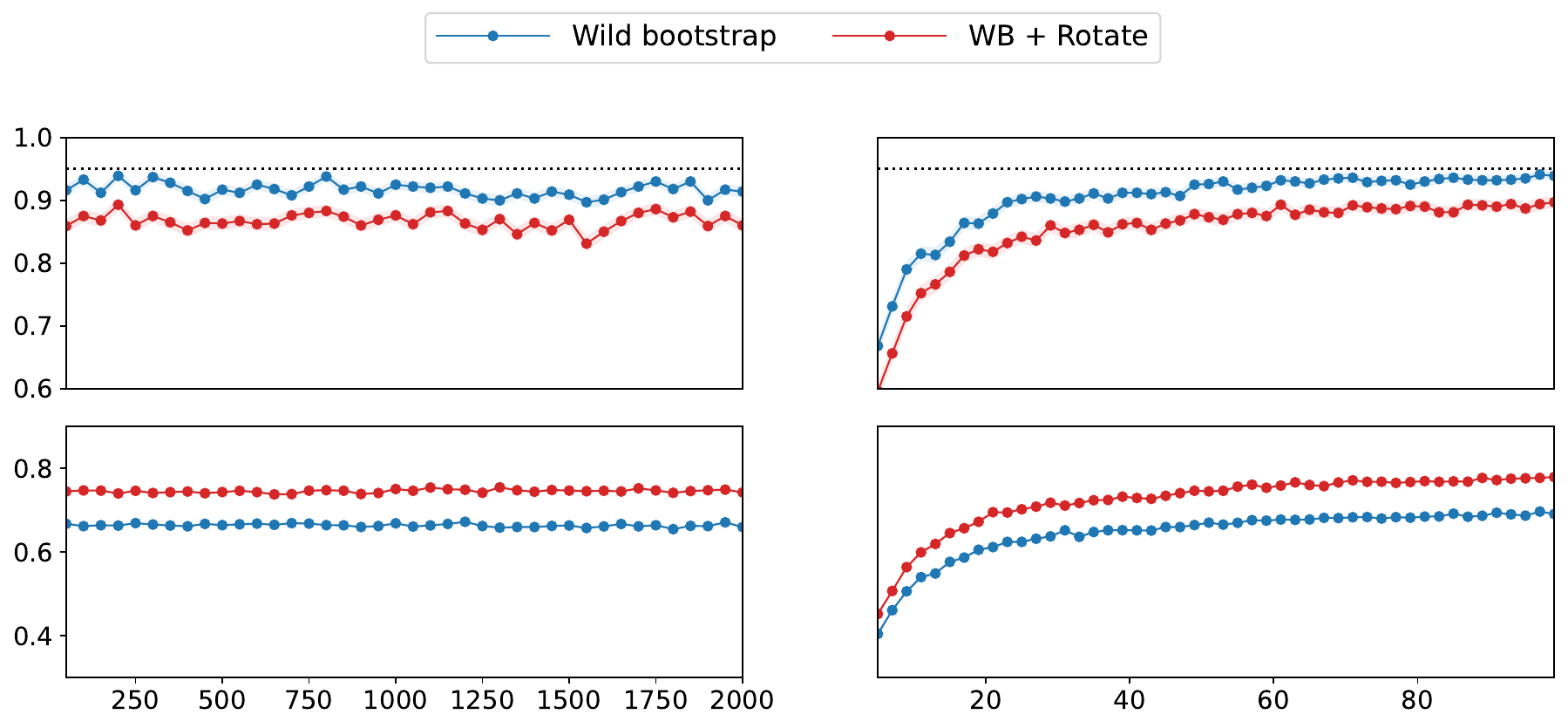}};
            
            \node[inner sep=0pt] at (-2.5,1.7){\scriptsize \textbf{Varying $n$, \;$B=50$, \, $1-\alpha=95\%$}};

            \node[inner sep=0pt] at (2.8,1.7){\scriptsize \textbf{Varying $B$, \;$n=2000$, \, $1-\alpha=95\%$}};

            \node[inner sep=0pt,rotate=90] at (-5.4,0.65){\scriptsize coverage};

            \node[inner sep=0pt,rotate=90] at (-5.4,-1.2){\scriptsize CI length};

            \node[inner sep=0pt] at (-2.5,-2.5){\scriptsize $n$};

            \node[inner sep=0pt] at (2.8,-2.5){\scriptsize $B$};
        \end{scope}
  \end{tikzpicture} 
  \\[-.5em]
  \caption{Empirical coverage and CI length of DAB for MMD statistics with RBF kernel under the null of background noise in the HIGGS dataset. Same setup as \Cref{fig:wb:higgs:null}, except that rotation angles are uniformly drawn from $[-180,180]$ in degrees.
  }
  \label{fig:wb:higgs:null:180} 
\end{figure}
\begin{figure}[t]
  \centering
  \begin{tikzpicture}
        \coordinate (gaussian) at (0,0);
        \begin{scope}[shift={(gaussian)}]
            \node[inner sep=0pt] at (0,0)
                {\includegraphics[width=.7\textwidth]{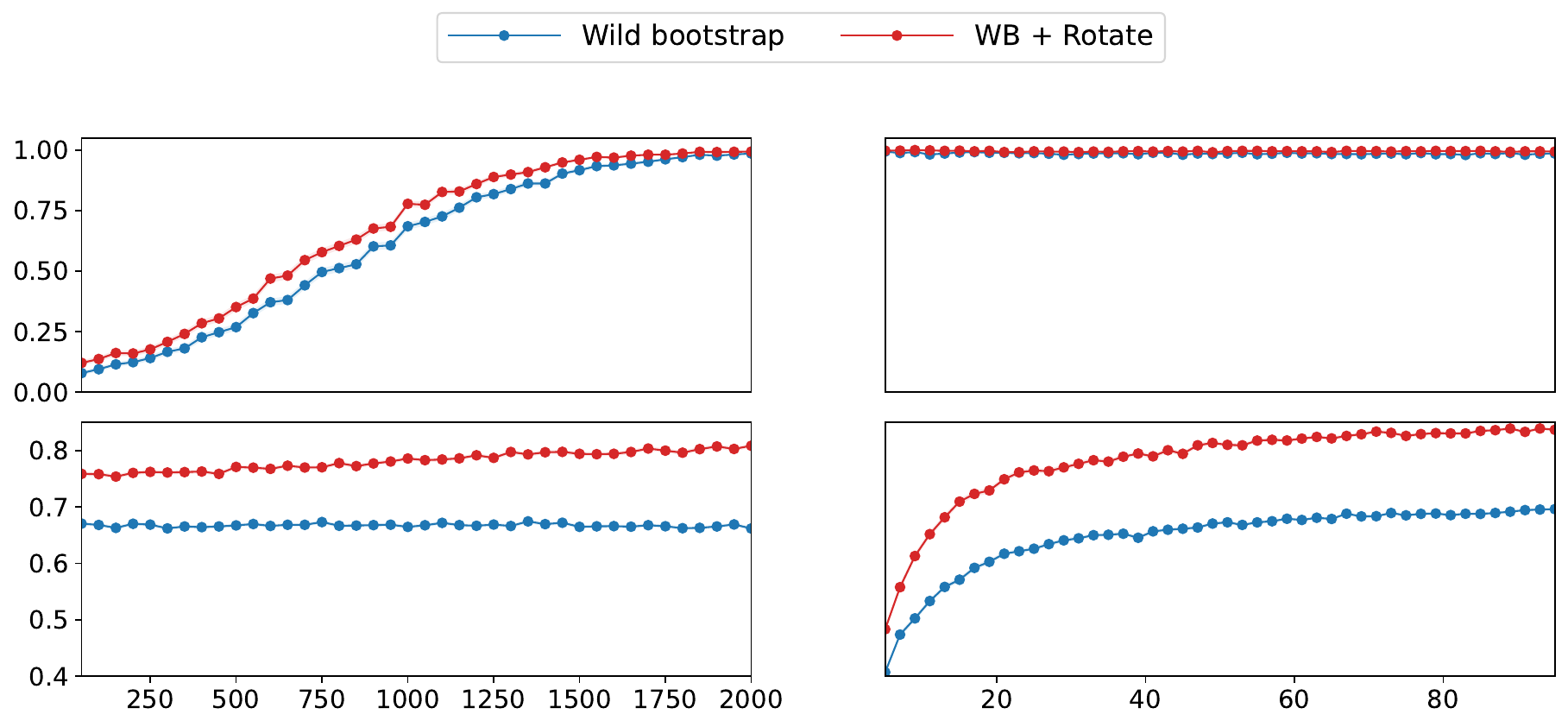}};
            
            \node[inner sep=0pt] at (-2.5,1.7){\scriptsize \textbf{Varying $n$, \;$B=50$, \, $1-\alpha=95\%$}};

            \node[inner sep=0pt] at (2.8,1.7){\scriptsize \textbf{Varying $B$, \;$n=2000$, \, $1-\alpha=95\%$}};

            \node[inner sep=0pt,rotate=90] at (-5.4,0.65){\scriptsize power};

            \node[inner sep=0pt,rotate=90] at (-5.4,-1.2){\scriptsize CI length};

            \node[inner sep=0pt] at (-2.5,-2.5){\scriptsize $n$};

            \node[inner sep=0pt] at (2.8,-2.5){\scriptsize $B$};
        \end{scope}
  \end{tikzpicture} 
  \\[-.5em]
  \caption{Empirical power and CI length of DAB for MMD statistics with RBF kernel for testing background noise against signal in the HIGGS dataset. Same setup as \Cref{fig:wb:higgs:alt}, except that rotation angles are uniformly drawn from $[-180,180]$ in degrees.
  }
  \label{fig:wb:higgs:alt:180} 
\end{figure}
 
\vspace{.5em}
\noindent 
\textbf{(d) Physical signal v.s.~noise in a HIGGS boson dataset \cite{higgs2014}.} In \Cref{fig:wb:higgs:null,fig:wb:higgs:alt}, $P$ is the empirical distribution of the 21-dimensional physical signal vectors in the HIGGS boson dataset, whereas $Q$ is the empirical distribution of the 21-dimensional background noise vectors in the same dataset. Note that we have only kept the 21 low-level features in the HIGGS boson dataset and left out the 7 high-level features that are synthesised from the 21 features. We observe that out of the 21 features, there are 6 features that are azimuthal angles, and we consider augmentations that simultaneously rotate all of them by the same angle. \Cref{fig:wb:higgs:null,fig:wb:higgs:alt} report results for DAB with rotations i.i.d.~uniformly drawn from $[-90,90]$ in degrees, where DAB has no visible difference from vanilla wild bootstrap except for yielding a larger confidence interval. \Cref{fig:wb:higgs:null:180,fig:wb:higgs:alt:180} report those for rotations uniformly drawn from $[-180,180]$ in degrees, where DAB suffers from worse coverage under the null but gains in test power compared to vanilla wild bootstrap. We conjecture that in both cases, there are not enough symmetry information to be exploited in the data for DAB to gain improvements in terms of coverage or power, and the increase in CI length in both cases is due to the additional randomisation injected by DAB.

\clearpage

\begin{figure}[h!]
  \centering
  \begin{tikzpicture}
        \node[inner sep=0pt] at (0,0)
            {\includegraphics[width=.9\textwidth]{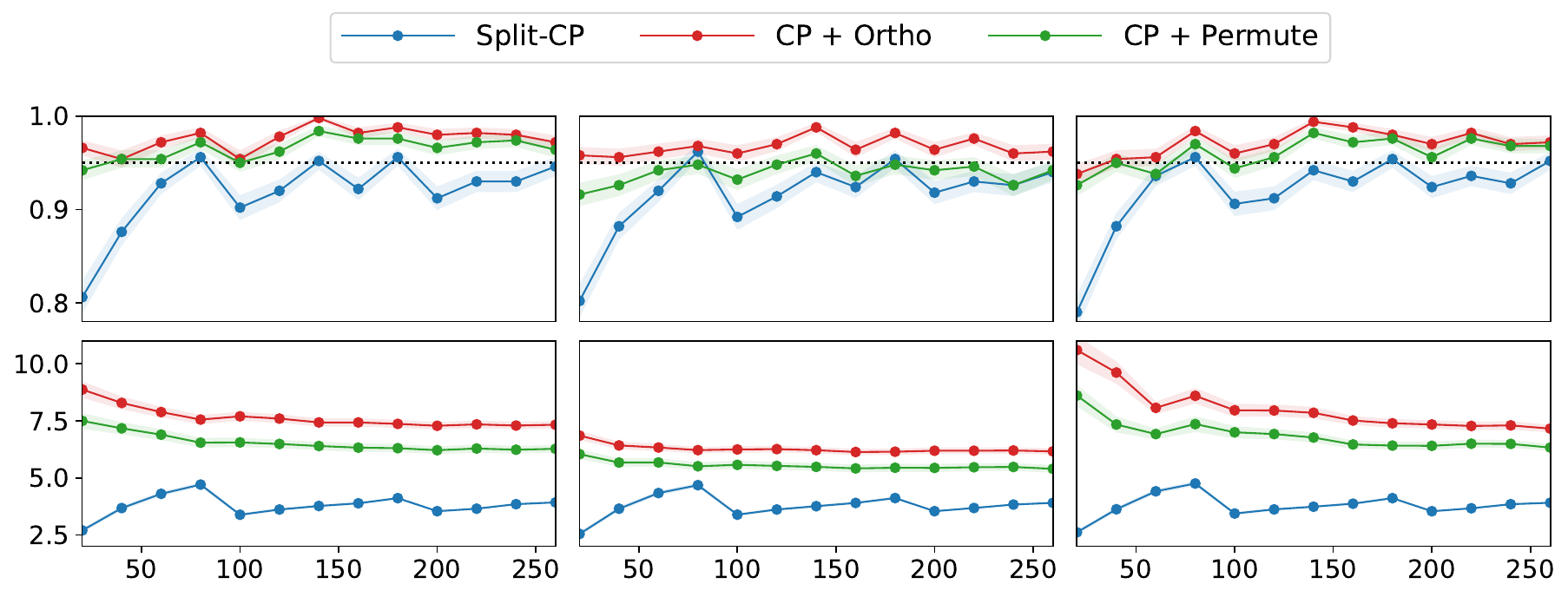}};
        \node[inner sep=0pt] at (-4,1.79){\scriptsize \textbf{Gaussian}};
        \node[inner sep=0pt] at (0.35,1.79){\scriptsize \textbf{Rademacher}};
        \node[inner sep=0pt] at (4.56,1.79){\scriptsize \textbf{Centred Gamma}};

        \node[inner sep=0pt] at (-4,-2.6){\scriptsize $n$};
        \node[inner sep=0pt] at (0.35,-2.6){\scriptsize $n$};
        \node[inner sep=0pt] at (4.5,-2.6){\scriptsize $n$};

        \node[inner sep=0pt,rotate=90] at (-6.75,0.7){\scriptsize coverage};

        \node[inner sep=0pt,rotate=90] at (-6.75,-1.3){\scriptsize CI length};

  \end{tikzpicture}
  \\[-.2em]
  \caption{DAB-variants of conformal prediction for predicting the outcome of a 2d linear model with linear regression, with $500$ random trials, $k=50$ and $1-\alpha=95\%$. Same setup as \Cref{fig:cp}(i) except that $V_i$'s can be 2d Gaussian, Rademacher or centred Gamma vectors.}
  \label{fig:cp:simulate:n}
\end{figure}
\begin{figure}[h!]
  \centering
  \begin{tikzpicture}
        \node[inner sep=0pt] at (0,0)
            {\includegraphics[width=.9\textwidth]{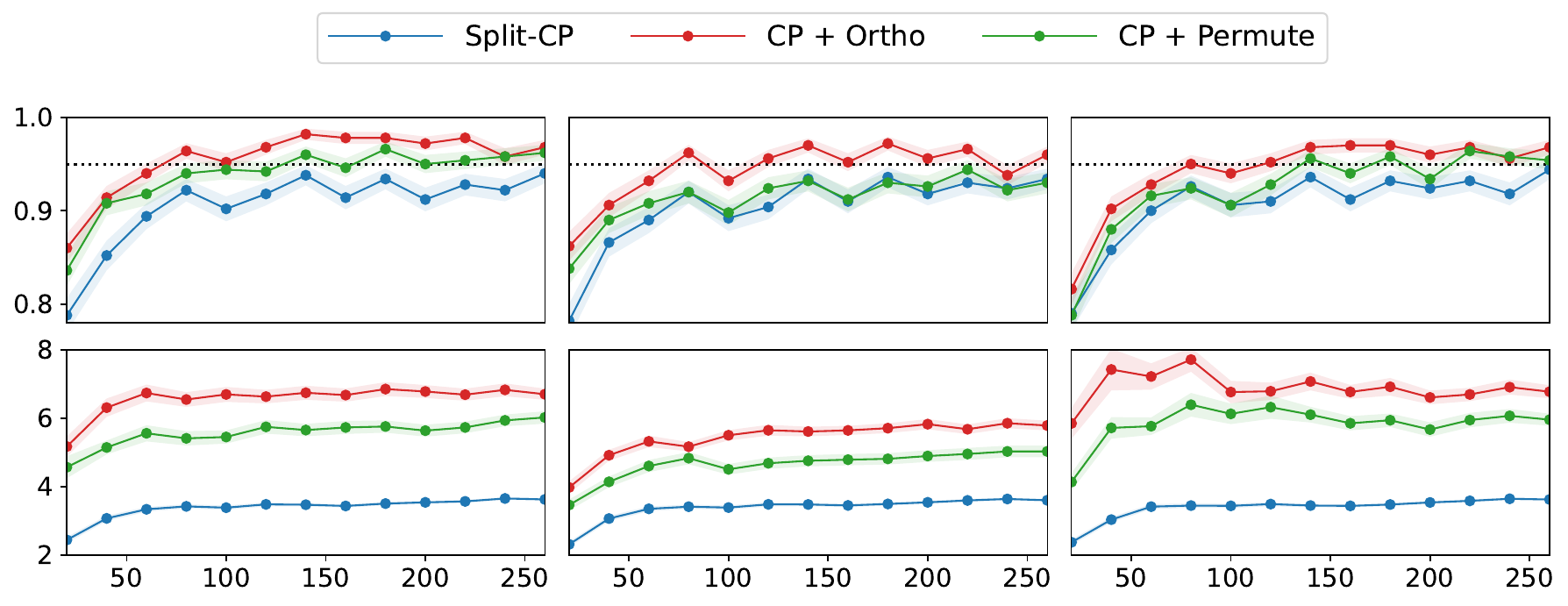}};
        \node[inner sep=0pt] at (-4,1.79){\scriptsize \textbf{Gaussian}};
        \node[inner sep=0pt] at (0.35,1.79){\scriptsize \textbf{Rademacher}};
        \node[inner sep=0pt] at (4.56,1.79){\scriptsize \textbf{Centred Gamma}};

        \node[inner sep=0pt] at (-4,-2.6){\scriptsize $n$};
        \node[inner sep=0pt] at (0.35,-2.6){\scriptsize $n$};
        \node[inner sep=0pt] at (4.5,-2.6){\scriptsize $n$};

        \node[inner sep=0pt,rotate=90] at (-6.75,0.7){\scriptsize coverage};

        \node[inner sep=0pt,rotate=90] at (-6.75,-1.3){\scriptsize CI length};

  \end{tikzpicture}
  \\[-.2em]
  \caption{Same setup as \Cref{fig:cp:simulate:n} except that $k=1$ was used.}
  \label{fig:cp:simulate:n:k1}
\end{figure}
\begin{figure}[t]
  \centering
  \begin{tikzpicture}
        \node[inner sep=0pt] at (0,0)
            {\includegraphics[width=.9\textwidth]{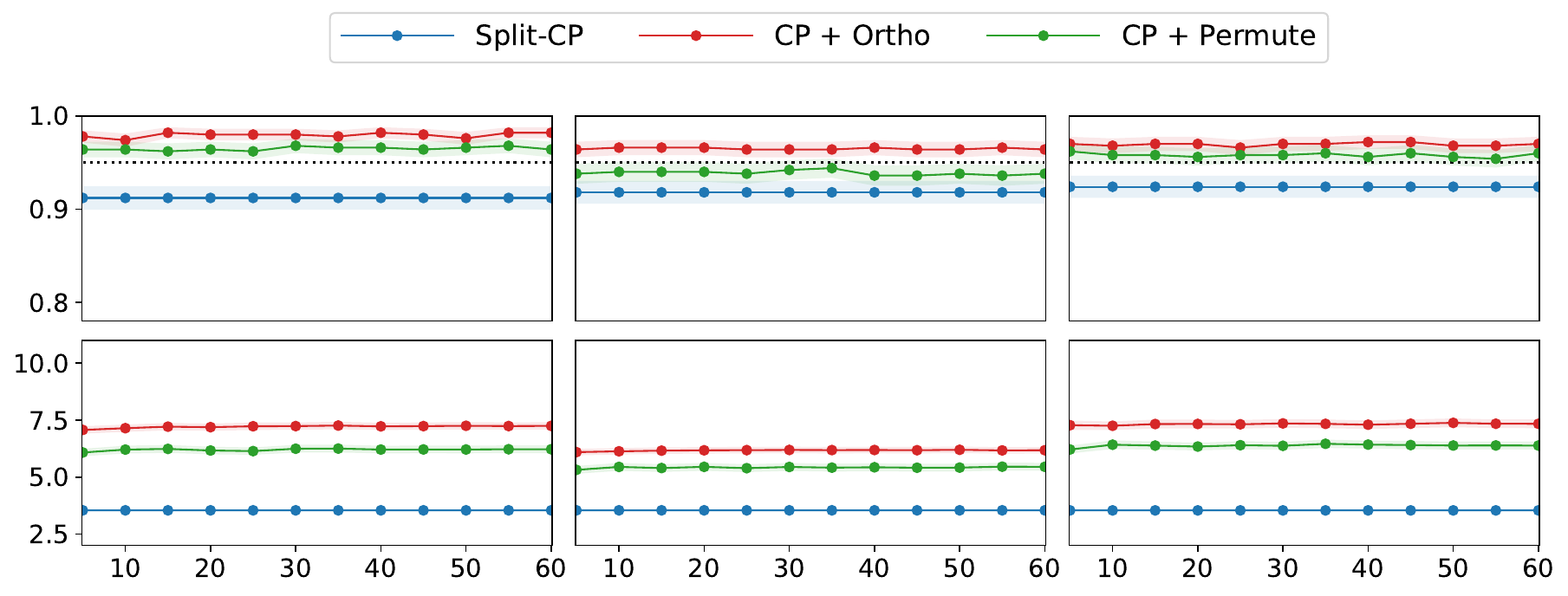}};
        \node[inner sep=0pt] at (-4,1.79){\scriptsize \textbf{Gaussian}};
        \node[inner sep=0pt] at (0.35,1.79){\scriptsize \textbf{Rademacher}};
        \node[inner sep=0pt] at (4.56,1.79){\scriptsize \textbf{Centred Gamma}};

        \node[inner sep=0pt] at (-4,-2.6){\scriptsize $k$};
        \node[inner sep=0pt] at (0.35,-2.6){\scriptsize $k$};
        \node[inner sep=0pt] at (4.5,-2.6){\scriptsize $k$};

        \node[inner sep=0pt,rotate=90] at (-6.75,0.7){\scriptsize coverage};

        \node[inner sep=0pt,rotate=90] at (-6.75,-1.3){\scriptsize CI length};

  \end{tikzpicture}
  \\[-.2em]
  \caption{DAB-variants of conformal prediction for predicting the outcome of a 2d linear model with linear regression, with $500$ random trials, $n=200$ and $1-\alpha=95\%$. Same setting as \Cref{fig:cp:simulate:n}, except that instead of $n$, we vary $k$, the number of additional augmentations on top of the cycling operations in conformal prediction; note that varying $k$ does not have an effect on \texttt{Split-CP}. }
  \label{fig:cp:simulate:k}
\end{figure}
\subsection{Experiments for conformal prediction with data augmentation in \Cref{sec:conformal}} \label{appendix:experiments:conformal}

For both split conformal prediction (CP) and its DAB variant, we always set the number of conformal prediction transformations $B= n-1$.

\vspace{.5em}

\noindent 
\textbf{(a) Linear regression on simulated data.} \Cref{fig:cp}(i) and \Cref{fig:cp:simulate:n,fig:cp:simulate:n:k1,fig:cp:simulate:k} performs conformal prediction (CP) and its DAB variants under the linear model with additive Gaussian noise and with one of the three 2d data distributions defined in \Cref{appendix:experiments:pure:aug:bootstrap}. The augmentations considered are the same as the wild bootstrap setup in (a) of \Cref{appendix:experiments:conformal}. The behaviours of CP and DAB are similar across the three data distributions: DAB has better coverage at small $n$ but is more conservative than CP at large $n$. The coverage improvement in \Cref{fig:cp}(i) and  \Cref{fig:cp:simulate:n} is visible with $k=50$; for comparison, we also report results with $k=1$ in \Cref{fig:cp:simulate:n:k1} and varying $k$ in \Cref{fig:cp:simulate:k} to show that the improvement mainly comes from using a larger $k$ to incorporate more augmentations.

\begin{figure}[t]
  \centering
  \begin{tikzpicture}
        \node[inner sep=0pt] at (0,0)
            {\includegraphics[width=\textwidth]{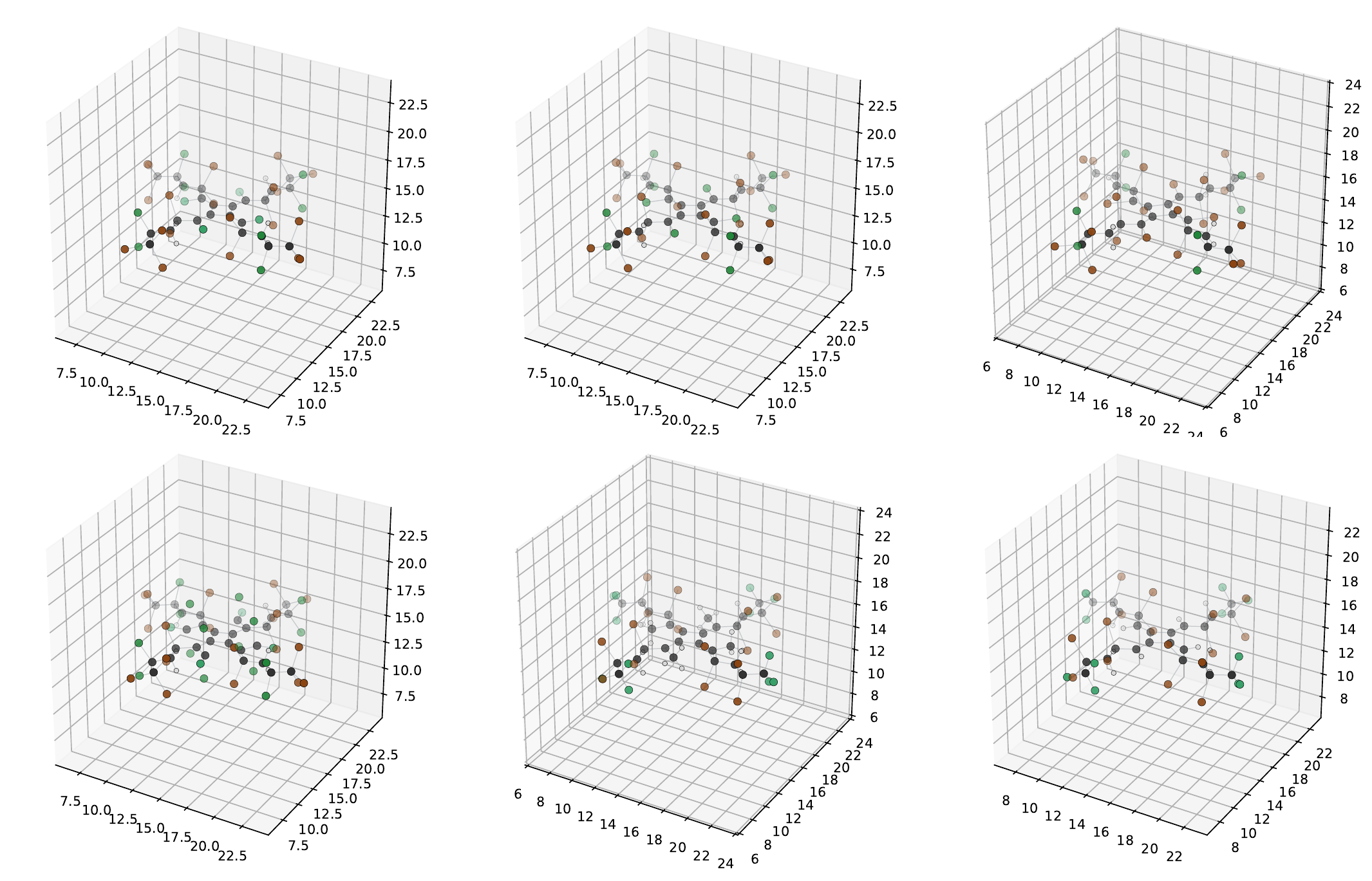}};
  \end{tikzpicture}
  \\[-.2em]
  \caption{Visualisation of selected molecules from the QM-symm database \citep{liang2019qm} with C4h point group symmetry. }
  \label{fig:amp:visual}
\end{figure}

\vspace{.5em}

\noindent 
\textbf{(b) Neural networks fitted on molecular data.} We consider the CP score function proposed by \cite{hu2022robust} applied to the GMP+SNN neural network (Gaussian multi-pole featurisation \cite{lei2022universal} followed by the SingleNN architecture \citep{liu2020singlenn}) and to the data set of C4h-symmetric molecules in the QM-sym database \citep{liang2019qm} (visualised in \Cref{fig:amp:visual}). Specifically:
\begin{itemize}[
    leftmargin=1em,
    labelwidth=0em,
    topsep=0.5em,
    partopsep=0em,
    itemsep=0em
]
    \item We first split the data set into the training set $X^{\rm train}$ and holdout set $X^{\rm holdout}$;  
    \item We train the GMP+SNN neural network on $X^{\rm train}$. Denote $\cX$ as the molecular input space. The trained network gives us a fitted predictor function $\hat E: \cX \rightarrow \R$ for the energy, as well as a fitted latent feature map $\hat g: \cX \rightarrow \R^{64}$ that represents the latent embedding in the last hidden layer of the network (see Figure 1 of \cite{hu2022robust});
    \item We uniformly sample $n-1$ molecule-energy pairs from  $X^{\rm holdout}$ as calibration data, denoted as $(X_2, E_2), \ldots, (X_n,E_n)$, as well as an additional data point $(X_1, E_1)$ as the test data point. The goal is to form a confidence interval for $E_1$;
    \item Let $n_{\rm train} = |X^{\rm train}|$. In the CP score function proposed by \cite{hu2022robust}, for any given $x \in \cX$, we write $X^{\rm train}_1(x), \ldots, X^{\rm train}_{n_{\rm train}}(x)$ as an enumeration of $X^{\rm train}$ ordered according to their Euclidean distance in the neural network feature space to $x$, i.e.~according to $d(x') = \| \hat g(x') - \hat g(x) \|$. This yields the extended dataset 
    \begin{align*}
        X \;=\; \big( X_i, E_i, X^{\rm train}_1(X_i), \ldots, X^{\rm train}_{n_{\rm train}}(X_i)  \big)_{1 \leq i \leq n}\;.
    \end{align*}
    The conformal score function then gives
    \begin{align*}
        f(X) \;=\; \mfrac{|  E_1 - \hat E( X_1 )|}{\frac{1}{M} \sum_{m=1}^M \| \hat g(X_1) - \hat g( X^{\rm train}_m(X_1) ) \| }\;,
    \end{align*}
    i.e.~the residual error in energy prediction is reweighted by the average Euclidean distance from $X_1$ to its $M$ closest neighbours. With $\Phi^{\rm CP}_b$ defined in \Cref{sec:conformal}, the transformed statistics to be ranked are
    \begin{align*}
        f(\Phi_b^{\rm CP}(X)) 
        \;=\; 
        \mfrac{|  E_{b+1} - \hat E( X_{b+1} )|}{\frac{1}{M} \sum_{m=1}^M \| \hat g(X_{b+1}) - \hat g( X^{\rm train}_m(X_{b+1}) ) \|^2 }
        \;\qquad\;
        \text{ for }
        1 \leq b \leq n-1\;.
    \end{align*}
\end{itemize}
The ranking of $X^{\rm train}$ for every input $X_i$ to find its $M$ closest neighbours is the main computational bottleneck in the above approach.

\vspace{.5em}

We propose a computationally more efficient DAB variant that, instead of finding $M$ closest neighbours to each $X_i$, just randomly samples $M$ points from the training dataset. Specifically, we consider $\Phi^{\rm DAB}_b$ defined in \Cref{sec:conformal} with the observation-wise augmentation $\phi$ given by 
\begin{align*}
    \phi(x_i, e_i, x'_1, \ldots, x'_{n_{\rm train}} )
    \;=\;
    (x_i, e_i, x'_{\pi(1)}, \ldots, x'_{\pi(n_{\rm train})} )\;,
\end{align*}
where $\pi$ is a uniform permutation of the index set $\{1, \ldots, n_{\rm train}\}$. In particular, $\pi$ removes any sorting in $(x'_1, \ldots, x'_{n_{\rm train}} )$. The corresponding statistic can be expressed as
\begin{align*}
    f(\Phi_{bj}^{\rm DAB}(X)) 
    \;=\; 
    \mfrac{|  E_{b+1} - \hat E( X_{b+1} )|}{\frac{1}{M} \sum_{m=1}^M \| \hat g(X_{b+1}) - \hat g( \tilde X^{\rm train}_{\pi_{bj}(m)} ) \|^2 }
    \;\qquad\;
    \text{ for }
    1 \leq b \leq n-1\;,
    1 \leq j \leq k\;,
\end{align*}
where we have enumerated $X^{\rm train} = (\tilde X_1, \ldots, \tilde X_{n_{\rm train}})$ and taken $\pi_{bj}$'s to be i.i.d.~copies of $\pi$. Note that we use $k=1$ in this setup to ensure computational efficiency.

\vspace{.5em}

\Cref{fig:cp}(ii) and \Cref{fig:amp:vary:n} report the experiments for comparing the quality of conformal prediction against DAB with $k=1$ as a function of the time taken for computation. For a similar coverage level and compute, we see that DAB achieves consistently smaller CI than conformal prediction. \Cref{fig:amp:vary:k} includes an ablation test with varying $k$, which again shows that DAB has similar coverage as conformal prediction but smaller CIs. Note however that for $k > 1$, the computational cost is multiplied by $k$.

\vspace{.5em}

We also remark that, while this is a molecular dataset with inherent geometric symmetries, the symmetries have already been incorporated in the neural network structure and cannot be used for DAB. We therefore exploit invariance structures in the conformal score function and use DAB to justify the incorporation of additional random transformations on top of conformal prediction.

\begin{figure}[h!]
  \centering
  \begin{tikzpicture}
        \node[inner sep=0pt] at (0,0)
            {\includegraphics[width=\textwidth]{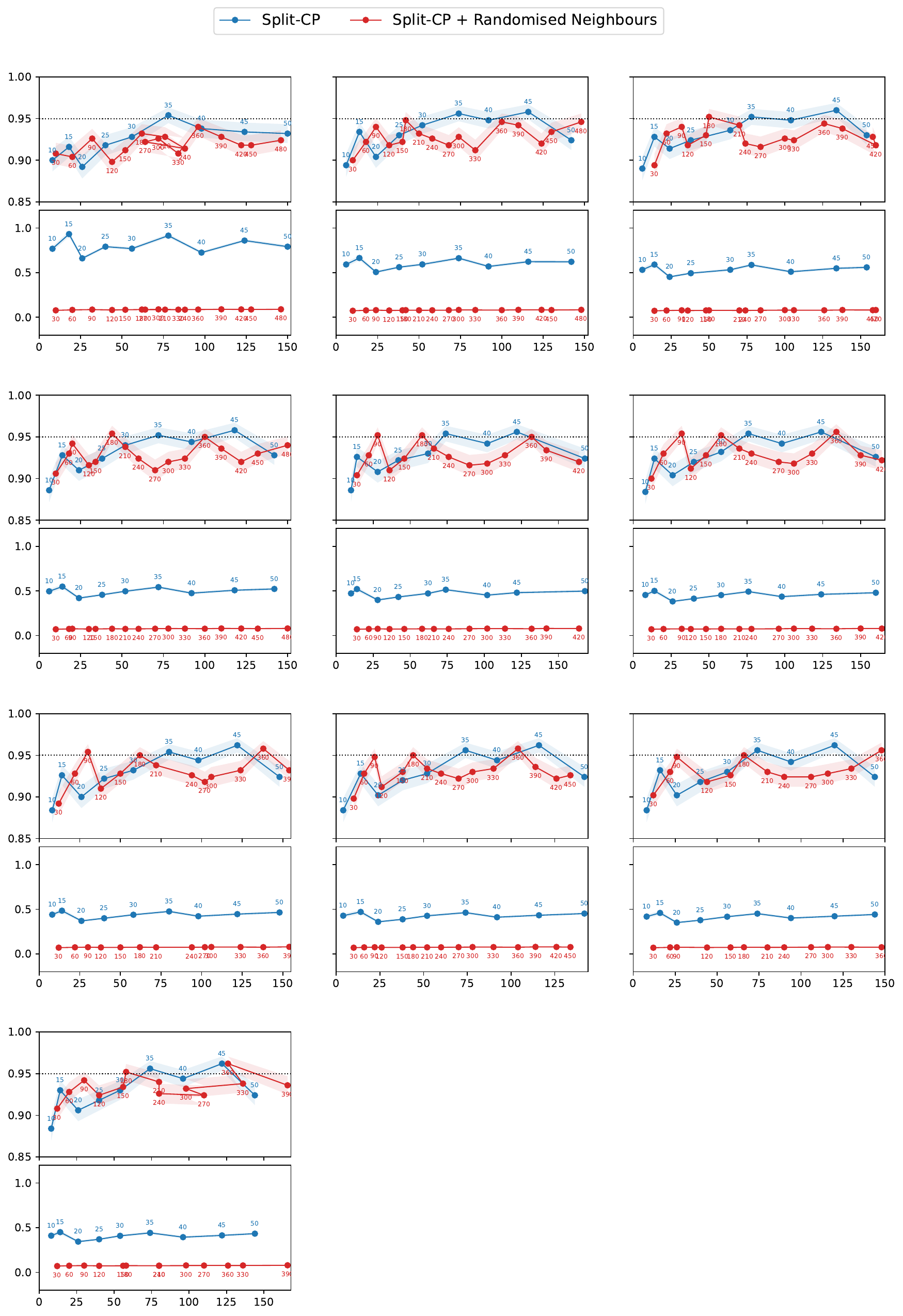}};
        \foreach \metric/\y in {coverage/8.65,{CI length}/6.40,coverage/3.35,{CI length}/1.10,coverage/-1.95,{CI length}/-4.20,coverage/-7.25,{CI length}/-9.50}{
            \node[inner sep=0pt,rotate=90] at (-7.55,\y){\scriptsize \metric};
        }
        \foreach \nn/\x/\y in {2/-4.60/10.,4/0.25/10.,6/5.05/10.,8/-4.60/4.65,10/0.25/4.65,12/5.05/4.65,14/-4.60/-0.65,16/0.25/-0.65,18/5.05/-0.65,20/-4.60/-5.95}{
            \node[inner sep=0pt] at (\x,\y){\scriptsize \textbf{\# neighbours = \nn}};
        }
  \end{tikzpicture}
  \\[-.2em]
  \caption{DAB-variants of conformal prediction for predicting the energy of a randomly chosen C4h-symmetric molecule from the QM-symm database \citep{liang2019qm} and with the conformal score function of \citep{hu2022robust}. Same setup as \cref{fig:cp}(ii), except that we also vary the number of neighbours used in both vanilla conformal prediction and conformal prediction with randomized neighbours. The $x$-axis is again GPU seconds per random trial, and the number next to each data point indicates the size of the calibration set used.}
  \label{fig:amp:vary:n}
\end{figure}
\begin{figure}[h!]
  \centering
  \begin{tikzpicture}
        \node[inner sep=0pt] at (0,0)
            {\includegraphics[width=\textwidth]{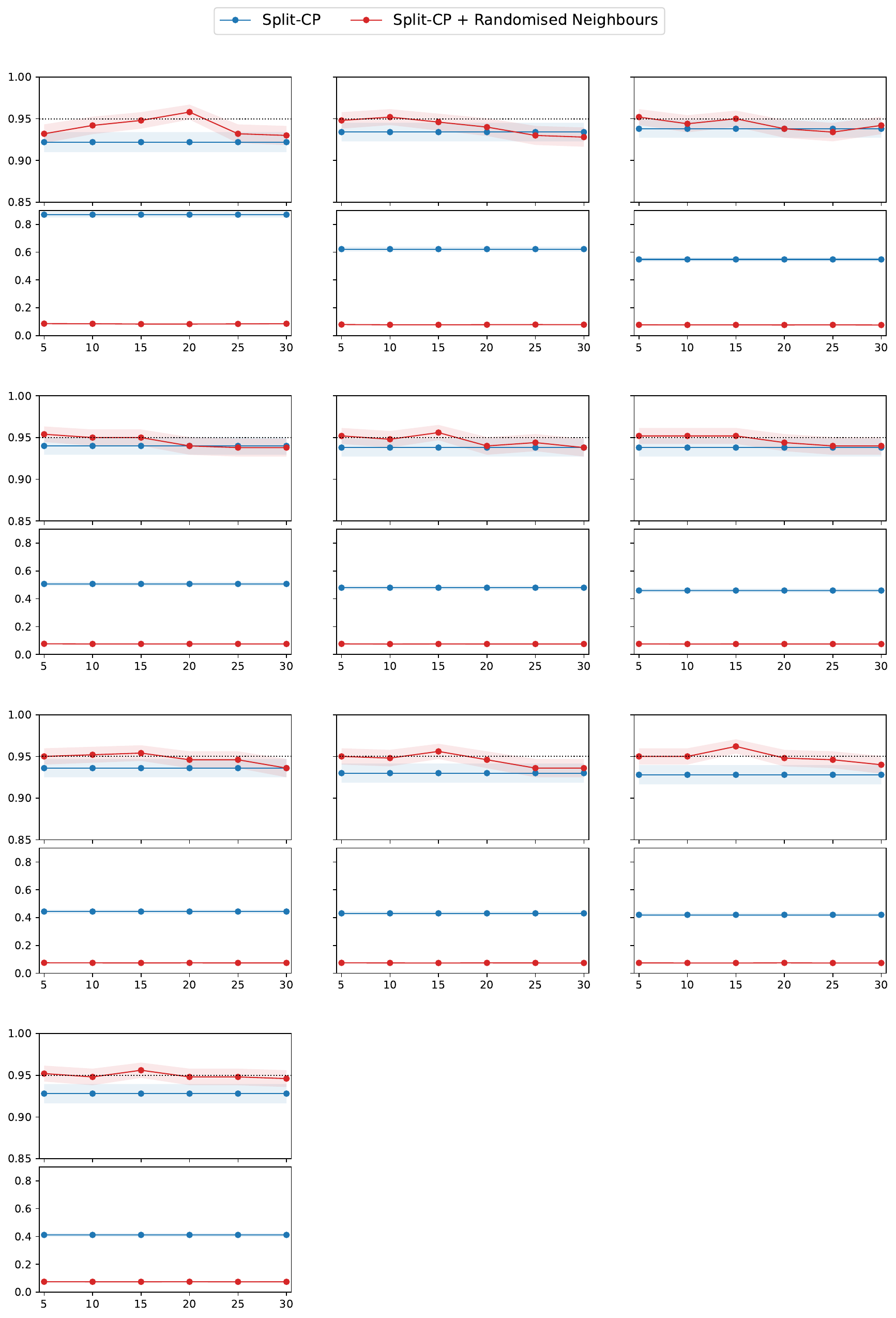}};
        \foreach \metric/\y in {coverage/8.65,{CI length}/6.40,coverage/3.35,{CI length}/1.10,coverage/-1.95,{CI length}/-4.20,coverage/-7.25,{CI length}/-9.50}{
            \node[inner sep=0pt,rotate=90] at (-7.55,\y){\scriptsize \metric};
        }
        \foreach \nn/\x/\y in {2/-4.60/10.15,4/0.25/10.15,6/5.05/10.15,8/-4.60/4.80,10/0.25/4.80,12/5.05/4.80,14/-4.60/-0.50,16/0.25/-0.50,18/5.05/-0.50,20/-4.60/-5.80}{
            \node[inner sep=0pt] at (\x,\y){\scriptsize \textbf{\# neighbours = \nn}};
        }
  \end{tikzpicture}
  \\[-.2em]
  \caption{DAB-variants of conformal prediction for predicting the energy of a randomly chosen C4h-symmetric molecule from the QM-symm database \citep{liang2019qm} and with the conformal score function of \citep{hu2022robust}. Same setup as \cref{fig:cp}(ii), except that we fix the calibration size of \texttt{Split-CP} as $n=30$ and that of \texttt{Split-CP+Randomised Neighbours} as $n=180$. The $x$-axis is now $k$, the number of additional augmentations on top of the cycling operations in conformal prediction; note that varying $k$ does not have an effect on \texttt{Split-CP}.
  }
  \label{fig:amp:vary:k}
\end{figure}

\clearpage

\vspace{.5em}

\noindent 
\textbf{(c) Language model conformal prediction \citep{kumar2023conformal} for the MMLU dataset \citep{hendrycks2020measuring} with a pretrained Qwen3-14B language model \citep{yang2025qwen3}.} We consider the CP method proposed by \cite{kumar2023conformal} applied to the multi-choice question answering dataset MMLU. Let $\cX$ be the text input space, $\cY$ be the set of finite choices, and $\Delta^{|\cY|}$ denote the $|\cY|$-dimensional probability simplex. Given a pre-trained large language model $\hat g: \cX \rightarrow \Delta^{|\cY|}$ and a dataset of question-answer pairs $X=(X_i, Y_i)_{i \leq n}$, the CP method in \cite{kumar2023conformal} amounts to ranking the statistics 
\begin{align*}
    f(X) \;=&\; 1 - \big( \hat g( X_1) \big)_{Y_1}
    &\text{ and }&&
    f(\Phi_b(X)) \;=&\; 1 - \big( \hat g( X_{b+1}) \big)_{Y_{b+1}}
    \;\text{ for $1 \leq b \leq n-1$\;,}
\end{align*}
where $( \hat g( X_1) )_{Y_1}$ reads out the predicted probability of $\hat g$ on $X_1$ at the location of the true label $Y_1$.

\vspace{.5em}

Our DAB variant of CP is motivated by the observation that the true question-label pair $(X_i, Y_i)$ is invariant under the permutation of choice numbering, as illustrated in \Cref{fig:mmlu:visualise} by renumbering choice $C$ as $D$ and choice $D$ as $C$. We therefore use $\Phi^{\rm DAB}_b$ defined in \Cref{sec:conformal} with the observation-wise augmentation $\phi$ given as a uniformly random permutation of the choice numbering, which is a random $\cX \times \cY \rightarrow \cX \times \cY$ that affects both the question text and the answer. 

\vspace{.5em}

\Cref{fig:mmlu:k1,fig:mmlu} report results with a subset of the MMLU (Massive Multitask Language Understanding) dataset by \cite{hendrycks2020measuring}, which consists of subject-specific multiple choice questions.  In more detail, we generate prediction probability vectors by a pretrained Qwen3-14B language model \citep{yang2025qwen3} for MMLU data under the first 6 subjects, namely (i) college computer science, (ii) formal logic, (iii) high school computer science, (iv) computer security, (v) machine learning and (vi) clinical knowledge. For each random trial, we then make $n$ random draws of the generated latents and perform conformal prediction for one of the data with the other $n-1$ calibration data points. \Cref{fig:mmlu:k1} is for $k=1$, whereas \Cref{fig:mmlu} performs an ablation test with varying $k$; in both setups, there is no visible improvement from DAB compared to conformal prediction. We conjecture that this is because the Qwen3-14B model is already close to being exactly invariant under question choice permutations, and there are not enough additional symmetries to be exploited by DAB.

\begin{figure}[t]
  \centering
  \begingroup
  \setlength{\fboxsep}{1.5pt}
  \newcommand{\recordlabel}[1]{{\setlength{\fboxsep}{1.5pt}\colorbox{blue!10}{\strut\textbf{\texttt{#1:}}}}}
  \newcommand{\recordhighlight}[2]{%
      {\setlength{\fboxsep}{0pt}\colorbox{yellow!25}{\recordlabel{#1} #2}}}
  \begin{tikzpicture}
        \node[
            draw,
            rounded corners=2pt,
            fill=gray!3,
            text width=6.9cm,
            inner sep=5pt,
            anchor=north
        ] (original) at (-3.75,0) {
            \raggedright\scriptsize
            \recordlabel{input}
            Any set of Boolean operators that is sufficient to represent all Boolean expressions is said to be complete. Which of the following is NOT complete?
            \\[.45em]
            \recordlabel{A} \{AND, NOT\}
            \\[.25em]
            \recordlabel{B} \{NOT, OR\}
            \\[.25em]
            \recordlabel{C} \{AND, OR\}
            \\[.25em]
            \recordlabel{D} \{NAND\}
            \\[.45em]
            \recordlabel{target} \textbf{C}
        };

        \node[
            draw,
            rounded corners=2pt,
            fill=gray!3,
            text width=6.9cm,
            inner sep=5pt,
            anchor=north
        ] (permuted) at (3.75,0) {
            \raggedright\scriptsize
            \recordlabel{input}
            Any set of Boolean operators that is sufficient to represent all Boolean expressions is said to be complete. Which of the following is NOT complete?
            \\[.45em]
            \recordlabel{A} \{AND, NOT\}
            \\[.25em]
            \recordlabel{B} \{NOT, OR\}
            \\[.25em]
            \recordhighlight{C}{\{NAND\}}
            \\[.25em]
            \recordhighlight{D}{\{AND, OR\}}
            \\[.45em]
            \recordhighlight{target}{\textbf{D}}
        };

        \node[inner sep=0pt, align=center, text width=6.9cm, anchor=north]
            at ([yshift=-.25cm]original.south)
            {\scriptsize \textbf{(i) An MMLU multiple-choice data record}};
        \node[inner sep=0pt, align=center, text width=6.9cm, anchor=north]
            at ([yshift=-.25cm]permuted.south)
            {\scriptsize \textbf{(ii) An option-permuted version of the same record}};
  \end{tikzpicture}
  \endgroup
  \\[-.2em]
  \caption{Example of a multiple-choice record from the MMLU dataset \citep{hendrycks2020measuring} and its option-permuted counterpart. The highlighted rows in (ii) show the exchange of options C and D and the corresponding update of the target label. }
  \label{fig:mmlu:visualise}
\end{figure}

\begin{figure}[t]
    \centering 
    \begin{tikzpicture}
        \node[inner sep=0pt] at (0,0)
            {\includegraphics[width=\textwidth]{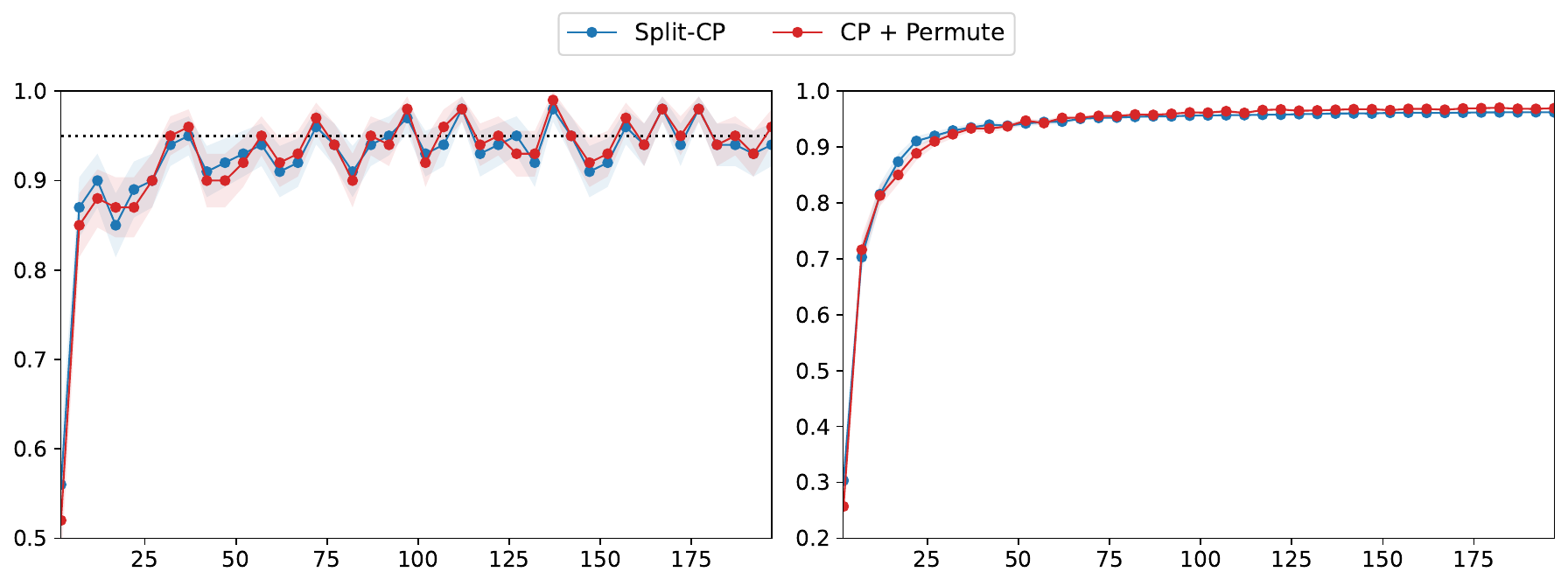}};
        \node[inner sep=0pt] at (-3.5,-2.8){\scriptsize $n$};
        \node[inner sep=0pt] at (4,-2.8){\scriptsize $n$};

        \node[inner sep=0pt] at (-3.5,2.2){\scriptsize coverage};

        \node[inner sep=0pt] at (4,2.2){\scriptsize CI length};

  \end{tikzpicture}
  \\[-.2em]
  \caption{DAB-variants of conformal prediction for predicting the answer of a multiple-choice question from the MMLU dataset \citep{hendrycks2020measuring} with a pretrained Qwen3-14B language model \citep{yang2025qwen3} and conformal score function from \cite{kumar2023conformal}. All experiments are over $100$ random seeds, $1-\alpha=95\%$, $k=1$ and varying $n$.}
  \label{fig:mmlu:k1}
\end{figure}

\begin{figure}[t]
    \centering 
    \begin{tikzpicture}
        \node[inner sep=0pt] at (0,0)
            {\includegraphics[width=.9\textwidth]{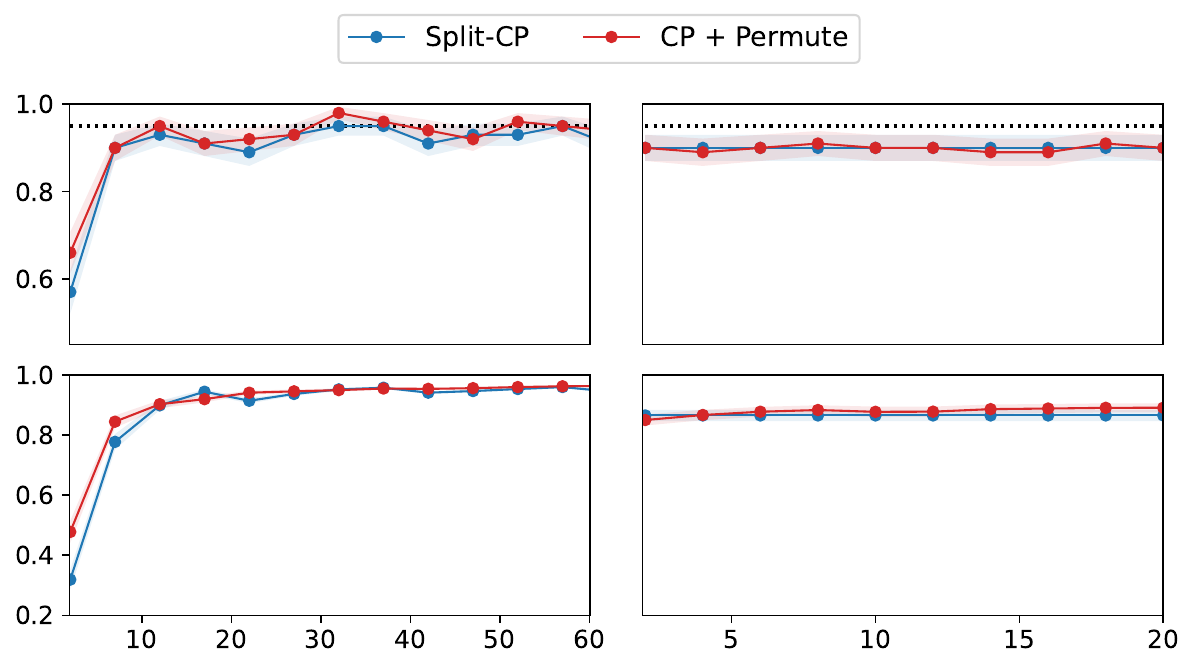}};
        \node[inner sep=0pt] at (-2.7,-3.8){\scriptsize $n$};
        \node[inner sep=0pt] at (3.7,-3.8){\scriptsize $k$};

        \node[inner sep=0pt,rotate=90] at (-6.9,1.2){\scriptsize coverage};

        \node[inner sep=0pt,rotate=90] at (-6.9,-1.8){\scriptsize CI length};

  \end{tikzpicture}
  \\[-.2em]
  \caption{Same setup as \Cref{fig:mmlu:k1}, except that the left plots are with $k=10$ and the right plots are with varying $k$.}
  \label{fig:mmlu}
\end{figure}

\end{document}